\documentclass[a4paper,11pt,reqno]{amsart}
\usepackage{amsmath}
\usepackage{amsfonts}
\usepackage{eucal}
\usepackage{amsthm}
\usepackage{color}

\usepackage{graphicx}
\numberwithin{equation}{section}


\newcommand{\ZM}{{\mathbb Z}}




\def\re{\mathbb{R}}
\def\co{\mathbb{C}}
\def\ze{\mathbb{Z}}

\def\S{\mathbb{S}}

\newcommand{\Slash}[1]{{\ooalign{\hfil#1\hfil\crcr\raise.167ex\hbox{/}}}}


\newtheorem{thm}{Theorem}
\newtheorem{lem}[thm]{Lemma}
\newtheorem{prop}[thm]{Proposition}

\theoremstyle{definition}

\newtheorem{ass}{Assumption}

\theoremstyle{remark}
\newtheorem{rem}{Remark\!}



\begin{document}

\title[The Noncommutative Index Theorem and the Periodic Table]{The Noncommutative Index Theorem 
and the Periodic Table for Disordered Topological Insulators and Superconductors}
\author[H. Katsura]{Hosho Katsura}
\author[T. Koma]{Tohru Koma}
\address[Katsura]{Department of Physics, Graduate School of Science, University of Tokyo, JAPAN}
\email{katsura@phys.s.u-tokyo.ac.jp}
\address[Koma]{Department of Physics, Gakushuin University, 
Mejiro, Toshima-ku, Tokyo 171-8588, JAPAN}
\email{tohru.koma@gakushuin.ac.jp}
\date{December 28, 2017, Revised Version}

\begin{abstract}
We study a wide class of topological free-fermion systems on a hypercubic lattice 
in spatial dimensions $d\ge 1$. 
When the Fermi level lies in a spectral gap or a mobility gap, the topological properties, e.g., 
the integral quantization of the topological invariant, 
are protected by certain symmetries of the Hamiltonian against disorder. 
This generic feature is characterized by a generalized index theorem 
which is a noncommutative analogue of the Atiyah-Singer index theorem. 
The noncommutative index defined in terms of a pair of projections gives a precise formula 
for the topological invariant in each symmetry class in any dimension ($d \ge 1$). 
Under the assumption on the nonvanishing spectral or mobility gap, 
we prove that the index formula reproduces Bott periodicity and all of the possible values of topological invariants 
in the classification table of topological insulators and superconductors. 
We also prove that the indices are robust against perturbations that do not break 
the symmetry of the unperturbed Hamiltonian. 
\end{abstract}

\maketitle 

\tableofcontents
\section{Introduction}
\label{sec:intro}

Topological insulators and superconductors are a wide class of free-fermion systems 
which show nontrivial topological quantities  \cite{HasanKane,QiZhang} 
in the sense that they are robust against any perturbations. 
Historically, the first example of such a system was the integer quantum Hall effect \cite{KawajiWakabayashi,KDP}, 
where the Hall conductance is quantized in integer multiples of the universal conductance. 
The integer is given by the Chern number which is the topological invariant of the system \cite{TKNN,Kohmoto}. 
Another example, called $\ze_2$ topological insulators, 
was later found by Kane and Mele \cite{KaneMele}. The key difference from the quantum Hall effect is 
that the topological properties of such a system is meaningful only when time-reversal symmetry is respected. 
This idea has been generalized to other types of topological insulators, 
in which their topological properties are robust against any perturbations 
that preserve certain discrete symmetries of the unperturbed Hamiltonian, 
provided the Fermi level is in a spectral gap or a mobility gap. 
The complete classification of all topological insulators and superconductors was worked out 
using the idea of random matrix theory \cite{RSFL} or K-theoretical approach \cite{KitaevPT}. 
In either case, the result obtained can be summarized in the same form of the ``periodic table" (see Table \ref{Ptable} below). 

In translationally invariant systems, the bulk topological invariant is typically written 
as an integral over the Brillouin zone, and is robust with respect to the variation of the parameters of the Hamiltonian 
unless the band gap closes. This picture is, however, not applicable to systems with disorder, 
where the Bloch momentum is no longer a good quantum number. 
Nevertheless, certain topological properties are expected to be robust against disorder, 
as is the case for the quantization of the Hall conductance in quantum Hall systems. 
So far, there have been two different approaches to tackle topological insulators with disorder. 

The first one was initiated by Niu, Thouless and Wu \cite{NTW}, 
which works for generic quantum Hall systems with disorder and/or interactions. 
In their approach, twisted boundary conditions at the boundaries are imposed 
and the existence of a spectral gap above the ground-state sector is assumed. 
Under these assumptions, they showed that the Hall conductance averaged over 
the boundary conditions is integrally or fractionally quantized, irrespective of disorder or interactions. 
Clearly, their result does not necessarily imply the quantization of the Hall conductance for a set of fixed twisted phases. 
But recently, it was proved that the Hall conductance exhibits the desired quantization irrespective of 
the values of the twisted phases for interacting lattice fermions \cite{HastingsMichalakis,Koma5}. 
However, its extension to interacting continuum systems still remains an open problem. 

The second approach is based on methods of noncommutative geometry \cite{Connes,CM,Higson1,Higson2}. 
The advantage is that it is applicable to the situation that the Fermi energy lies in a localization regime. 
Using the approach, the integral quantization of the Hall conductance 
is proved to be robust against disorder for noninteracting systems \cite{BVS,ASS,AG,ElSch,Koma2}. 
In particular, the emergence of the Hall conductance plateau of bulk order can be proved 
for varying the filling factor of the electrons rather than the Fermi level \cite{Koma2}.

The application of noncommutative geometry has been rather limited to the quantum Hall systems 
in two dimensions so far \cite{BVS,ASS,AG}. Recently, however, the extension to higher even dimensions has been  
done by \cite{PLB}. Besides,  it has been realized that this approach is applicable to 
other types of topological insulators. Among them, there are two classes of systems 
in which the topological invariant is proved to be robust against perturbations.

One class is topological insulators with chiral symmetry in odd dimensions. 
It was found in \cite{MHSP,PSB,PSBbook} that the methods of noncommutative geometry are applicable to 
proving the robustness of the integral quantization of the winding number related to the spin 
in odd dimensional Pauli-Dirac theory \cite{So,IM,Kitaev,SRFL,RSFL} with chiral symmetry. 
The winding number can be defined in terms of the sign function of the Hamiltonian. Following Connes' program \cite{Connes}, 
Prodan and Schulz-Baldes proved that the winding number is quantized to a nontrivial integer \cite{PSB}. 

The other class is topological insulators with time-reversal symmetry in two dimensions \cite{KaneMele,SB}. 
In the previous paper \cite{KatsuraKoma}, the authors proposed an alternative approach which is based on 
the method of a pair of projections introduced by Avron, Seiler and Simon \cite{ASS,ASS2}. 
Relying on their method, the noncommutative formula of the $\ze_2$ index for this class was obtained and
the robustness of the index against any time-reversal symmetric perturbation including disorder was proved. 

In this paper, we extend our previous approach to generic topological insulators and superconductors. 
As a result, we prove that the noncommutative index theorem gives a precise formula for the topological invariant for 
each element in the periodic table \cite{KitaevPT,SRFL,RSFL}. In other words, 
all of the indices in Table~\ref{Ptable} below can be explained by our noncommutative index theorem. 
The advantage of our approach is that it is not necessary for disordered systems 
to require the ergodicity of the probability measure \cite{EGS,Koma2}. 
Namely, the integral quantization of the topological invariant is proved 
to be robust against any symmetry preserving perturbation as long as the Fermi level lies 
in a spectral gap of the Hamiltonian or in a mobility gap. 

The present paper is organized as follows: In the next section, we define a generic tight-binding model 
on a hypercubic lattice in dimensions $d \ge 1$, and discuss the possible symmetries of the Hamiltonian. 
In Sec.~\ref{Sec:MainResults}, we present our main results in the form of the index theorem, 
which allows us to relate the analytic index to the topological invariant. 
The proofs of the integer-valued index theorems in even and odd dimensions are given in 
Sec.~\ref{sec:NosymmEvenD} and Sec.~\ref{sec:ChiralSymmOddD}, respectively. 
The detailed nature of the indices for all the classes of models in the periodic table are proved in Sec.~\ref{PeriodicTable}.     
In Sec.~\ref{HomoArg}, we show the robustness of the indices against perturbations using a homotopy argument. 
Our convention for the gamma matrices and their useful properties are summarized 
in Appendix~\ref{AppGamma}. Appendices~\ref{traceclassA}-\ref{proofTheorem:ChiralIndtheta}
are devoted to technical estimates. In Appendix~\ref{sec:LR}, we show 
that the chiral topological invariants can be interpreted as a linear-response coefficient in one dimensions. 
Appendices~\ref{EvenALO} and \ref{RelationIndices} are devoted to the proofs of useful propositions. 
In particular, we show the relation between the indices of classes DIII and AII in three dimensions 
in Appendix~\ref{RelationIndices}. 

\section{Tight-Binding Models on $\ze^d$}
\label{sec:Model}

We consider a tight-binding model of fermions on the $d$-dimensional hypercubic lattice $\ze^d$ with $d\ge 1$. 
The Hamiltonian $H$ reads
\begin{equation}
(H\varphi)_\alpha(x):=\sum_{y}\sum_{\beta}t_{x,y}^{\alpha,\beta}\varphi_\beta(y)\quad \mbox{for \ } x\in\ze^d,
\label{H}
\end{equation}
where the subscript $\beta$ of the wavefunction $\varphi_\beta\in \ell^2(\ze^d,\co^M)$ 
is the internal degree of freedom with the dimension $M$ such as spin or orbital; 
the hopping amplitudes $t_{x,y}^{\alpha,\beta}$ are complex numbers which satisfy 
the Hermitian conditions, 
$$
t_{y,x}^{\beta,\alpha}=\left(t_{x,y}^{\alpha,\beta}\right)^\ast.
$$
We note that the above Hamiltonian includes a tight-binding model on a more general graph, 
because it can be embedded into the model on $\ze^d$ with suitably chosen hopping amplitudes.
We assume that the hopping amplitudes are of finite range, 
and that all of the strengths are uniformly bounded as 
$$
\left|t_{y,x}^{\beta,\alpha}\right|\le t_0 
$$
with some positive constant $t_0$. Throughout the present paper, we require the following assumption: 

\begin{ass}
\label{Assumption} 
The Fermi level $E_{\rm F}$ lies in the spectral gap of the Hamiltonian $H$ 
or in the localization regime. More precisely, we assume that the resolvent $(E_{\rm F}-H)^{-1}$ 
exponentially decays with distance as    
\begin{equation}
\label{decayresolvent}
\sup_{\varepsilon>0}\left\Vert\chi_{\{x\}}(E_{\rm F}+i\varepsilon-H)^{-1}\chi_{\{y\}}\right\Vert
\le \mathcal{C}_0\exp[-|x-y|/\xi_0],
\end{equation}
where $\chi_{\{x\}}$ is the characteristic function of the site $x$, i.e., 
\begin{equation}
\label{chix}
\chi_{\{x\}}(x'):=\begin{cases} 1, & \text{$x=x'$};\\
0, & \text{$x\ne x'$},
\end{cases}
\end{equation}
and the positive constants, $\mathcal{C}_0$ and $\xi_0$, depend only on the parameters of the tight-binding model. 
\end{ass}

\begin{rem}
The bound (\ref{decayresolvent}) holds for random Hamiltonians with probability one as in Theorems~2.4 and 2.5 in \cite{CH}.
Instead of (\ref{decayresolvent}), we can assume the Aizenman-Molchanov bound \cite{AM}. 
In this case, the corresponding statements about the index theorem are proved along the lines of \cite{Koma2}. 
The bound (\ref{decayresolvent}) yields that the matrix elements of  
the Fermi projection $P_{\rm F}$ also decays exponentially with distance as in (\ref{decayPF}) below. 
\end{rem}

We can treat random models without making any assumption on
the distribution because we assume only the exponential decay of the resolvent for a fixed configuration.
For the same reason, our method is applicable to deterministic models,
e.g., tight-binding models on Sierpinski carpet \cite{Sierpinski} or Penrose tiling \cite{Penrose}.
Actually, by removing appropriate bonds from the bonds of the square lattice,
the Sierpinski carpet can be constructed by starting from the square
lattice. As for the Penrose tiling, this can be embedded into
the two-dimensional plane. Therefore, by using the Euclidean distance,
we can define the Dirac operator in Sec.~\ref{Sec:MainResults} below for both cases. 
As a consequence, we can apply our method to the corresponding models. 

We can also treat superconductors which are described by the Bogoliubov-de Gennes Hamiltonian 
which is given by a quadratic form of creation and annihilation fermion operators. 
Then, the matrix elements of the quadratic form define the corresponding Hamiltonian (\ref{H}). 
Namely, it is reduced to that for a single-body tight-binding Hamiltonian.  
The ``Fermi energy" is set to $E_{\rm F}=0$ by definition. 

\subsection{Time-Reversal Transformation}

As usual, we define the complex conjugate for the present wavefunctions, $\varphi\in\ell^2(\ze^d,\co^M)$, by 
$$
(\overline{\varphi})_\alpha(x)=\overline{\varphi_\alpha(x)}\quad \mbox{for \ } x\in\ze^d,
$$
where $\alpha$ is an index of the internal degree of freedom. 
We introduce a time-reversal transformation $\Theta$ for the wavefunctions as 
\begin{equation}
\label{TRS}
\varphi^{\Theta}:=\Theta\varphi=U^{\Theta}\overline{\varphi}\quad \mbox{for \ }\varphi\in\ell^2(\ze^d,\co^M).
\end{equation}
Here, for simplicity, we assume that $U^\Theta$ is a unitary operator which can be written in 
a direct sum as $U^\Theta=\bigoplus_i U_i^\Theta$ with a period on the lattice $\ze^d$, 
where the unitary operator $U_i^\Theta$ acts on a local state on the $i$-th unit cell with a compact support. 
Since all of the states on some cells can be identified with an internal degree of freedom at a single site 
of a lattice, the unitary operator $U^\Theta$ can be taken to be independent of the lattice site. 

If a time-reversal transformation $\Theta$ satisfies
$$
\Theta^2\varphi=-\varphi\quad \mbox{for any \ } \varphi\in\ell^2(\ze^d,\co^M), 
$$
then we say that $\Theta$ is an odd time-reversal transformation. In the case where
$$
\Theta^2\varphi=\varphi\quad \mbox{for any \ } \varphi\in\ell^2(\ze^d,\co^M), 
$$
we say that $\Theta$ is an even time-reversal transformation.

Let $\mathcal{A}$ be an operator on the Hilbert space $\ell^2(\ze^d,\co^M)$. We say that 
the operator $\mathcal{A}$ is time-reversal symmetric with respect to $\Theta$ 
if the following condition holds: 
$$
\Theta(\mathcal{A}\varphi)=\mathcal{A}\varphi^{\Theta}\quad \mbox{for any \ }\varphi\in\ell^2(\ze^d,\co^M).
$$
If the present Hamiltonian $H$ satisfies this condition, we say that the Hamiltonian $H$ is 
time-reversal symmetric with respect to $\Theta$.  
In addition, if $\Theta$ is an odd time-reversal transformation, we say that the Hamiltonian $H$ is 
odd time-reversal symmetric. 

\subsection{Particle-Hole Transformation}
\label{sec:ChiralModel}

Next, we introduce a particle-hole transformation $\Xi$ for the wavefunctions as 
\begin{equation}
\varphi^{\Xi}:=\Xi\varphi=U^{\Xi}\overline{\varphi}\quad \mbox{for \ }\varphi\in\ell^2(\ze^d,\co^M),
\end{equation}
where $U^\Xi$ is a unitary operator, 
similarly to the case of the time-reversal transformation $\Theta$. 
In the same way as in the time-reversal transformation, one can define odd and even particle-hole transformations. 
The only difference is that the following condition for the Hamiltonian $H$ is required: 
\begin{equation}
\label{PHconditionH}
\Xi(H\varphi)=-H\varphi^{\Xi}\quad \mbox{for any \ } \varphi\in\ell^2(\ze^d,\co^M), 
\end{equation}
in which case we say that the Hamiltonian $H$ is particle-hole symmetric with respect to $\Xi$. 

The condition (\ref{PHconditionH}) implies that, if a positive energy, $E>0$, is in the spectrum of the Hamiltonian $H$, 
then the corresponding negative energy, $-E$, is also in the spectrum of $H$.
Namely, the upper energy bands are mapped to the lower energy bands by the particle-hole transformation $\Xi$. 
This is, in fact, a built-in symmetry of the Bogoliubov-de Gennes Hamiltonians.
For a class of Hamiltonians with this symmetry, we set $E_{\rm F}=0$
and deal with the following two situations: 
(i) There is a spectral gap between the upper and lower bands in 
the spectrum of the particle-hole symmetric Hamiltonian $H$. 
(ii) Consider the situation that the Fermi level $E_{\rm F}=0$ lies in 
a localization regime. If $E_{\rm F}=0$ is an eigenvalue of the Hamiltonian $H$, 
we cannot handle the resolvent $(E_{\rm F}-H)^{-1}$. 
Fortunately, it is well known that the corresponding event does not occur with probability one 
in many of random systems. In particular, a matrix-valued random potential \cite{Bourgain,KMM,GM,ChapStolz} 
meets the present conditions. 
In fact, for such random potentials, the decay bound (\ref{decayresolvent}) in Assumption~\ref{Assumption} 
holds with probability one \cite{CH}, provided that both an initial decay estimate for the resolvent and a Wegner estimate 
for the density of states hold. This decay bound implies that $E_{\rm F}$ is not an eigenvalue of $H$ with probability one. 
Thus, we assume that $E_{\rm F}=0$ is not an eigenvalue of the particle-hole symmetric Hamiltonian $H$ with 
probability one with respect to configurations of the random quantities of the system. 
In such a system, the matrix elements of the Fermi projection $P_{\rm F}$ also decays exponentially with distance 
as in (\ref{decayPF}) below. 

\subsection{Chiral Transformation}

We say that the Hamiltonian $H$ is chiral symmetric with respect to a chiral operator $S$ 
if the following conditions hold: 
\begin{itemize}
\item $S$ is self-adjoint, i.e., $S^\ast=S$ 
\item $S$ is unitary, i.e., $S^2=1$ 
\item The Hamiltonian $H$ is transformed by $S$ as   
\begin{equation}
\label{chiralcondition}
SHS=-H. 
\end{equation}
\end{itemize}
Because of $S^2=1$, there is no property about even-oddness in this case.  
The condition (\ref{chiralcondition}) implies that the 
positive-energy bands are mapped to the corresponding negative-energy ones, 
as in the case of the particle-hole symmetric Hamiltonian.  
Therefore, we set $E_{\rm F}=0$, and deal with the situation that 
the matrix elements of the Fermi projection $P_{\rm F}$ 
decays exponentially with distance as in (\ref{decayPF}) below. 

\section{Main Results}
\label{Sec:MainResults}

In this paper, we prove that the noncommutative index theorem gives the desired index of 
the topological invariant for 
each element in the periodic table (Table~\ref{Ptable}).   
The analysis of the nature of the indices reproduces the results which were claimed in previous works \cite{KitaevPT,SRFL,RSFL}.

In Table~\ref{Ptable}, possible topological insulators and superconductors are sorted out according 
to their symmetries and spatial dimensions $d$. 
The first column refers to the Cartan-Altland-Zirnbauer (CAZ) 
classification scheme \cite{AZ} for symmetry of the Hamiltonian $H$. 
The columns, TRS, PHS and CHS, denote the character of the Hamiltonian $H$ in relation to time-reversal, particle-hole and 
chiral symmetries, respectively. If the Hamiltonian $H$ does not have 
the corresponding symmetry, then we write $0$. We write $\pm 1$ for even- and oddness for symmetry, respectively. 
Since a chiral operator $S$ always satisfies $S^2=1$, we write $1$ for chiral symmetry classes. 
The symbols $\ze$ and $\ze_2$ indicate whether the model belonging to an entry takes an integer-valued 
or a $\ze_2$-valued  index, respectively. 
The symbol $2\ze$ denotes that the integer-valued index is always even. 
The empty entries denote the case where the corresponding model has no index.
Due to the periodicity with the period $8$ with respect to the spatial dimension $d$, we omit the cases for 
nine or higher dimensions. This fact which is often called ``Bott periodicity" will be proved in Sec.~\ref{PeriodicTable}. 
Roughly speaking, the periodicity $8$ is a consequence of the periodicity of the Dirac-Clifford algebra. 

\begin{table}
\begin{center}
\begin{tabular}{cccc||c c c c c c c c}
\hline\hline
\multicolumn{4}{c||}{Symmetry} & \multicolumn{8}{|c}{Spatial dimension $d$}\\
\hline
\multicolumn{1}{c|}{$\!$CAZ$\!$} & $\!$TRS$\!$ & $\!$PHS$\!$ & $\!$CHS$\!$ &  $1$ & $2$ & $3$ & $4$ & $5$ &  $6$ & $7$ & $8$ \\
\hline
\multicolumn{1}{c|}{A} &  $0$ &$0$&$0$ &  & $\ZM$ & & $\ZM$ & & $\ZM$  & & $\ZM$\\
\multicolumn{1}{c|}{AIII}  & $0$&$0$&$ 1$ & $\ZM$ & & $\ZM$ & & $\ZM$ & & $\ZM$  & \\
\hline
\multicolumn{1}{c|}{AI} & $+1$&$0$&$0$ &  & & & $2\,\ZM$ & & $\ZM_2$ & $\ZM_2$ & $\ZM$\\
\multicolumn{1}{c|}{BDI} & $+1$&$+1$&$1$ &  $\ZM$ & & & & $2\,\ZM$ & & $\ZM_2$ & $\ZM_2$ \\
\multicolumn{1}{c|}{D} & $0$ &$+1$&$0$ &  $\ZM_2$ & $\ZM$ & & & & $2\,\ZM$ & & $\ZM_2$ \\
\multicolumn{1}{c|}{DIII} & $-1$&$+1$&$1$ &  $\ZM_2$ & $\ZM_2$ & $\ZM$ & & & & $2\,\ZM$ & \\
\multicolumn{1}{c|}{AII} & $-1$&$0$&$0$ &  & $\ZM_2$ & $\ZM_2$ & $\ZM$ & & & & $2\,\ZM$ \\
\multicolumn{1}{c|}{CII} & $-1$&$-1$&$1$ &  $2\,\ZM$ & & $\ZM_2$ & $\ZM_2$ & $\ZM$ & & & \\
\multicolumn{1}{c|}{C} & $0$ &$-1$&$0$ & & $2\,\ZM$ & & $\ZM_2$ & $\ZM_2$ & $\ZM$  & & \\
\multicolumn{1}{c|}{CI} & $+1$&$-1$&$1$ &  & & $2\,\ZM$ & & $\ZM_2$ & $\ZM_2$ & $\ZM$ & \\
\hline\hline
\end{tabular}
\bigskip
\caption{\sl The periodic table of topological insulators and superconductors, 
listing the possible values of the indices as the strong topological invariants 
for the CAZ classes in dependence of the spatial dimension $d$.}
\label{Ptable}
\end{center}
\end{table}

\subsection{Index Theorems in Even Dimensions}

Let us consider first the Hamiltonian (\ref{H}) on $\ze^d$ 
with even dimensions, $d=2n$, ($n=1,2,\ldots$). 
We write $P_{\rm F}$ for the projection onto the states 
whose energy is less than the Fermi energy $E_{\rm F}$, and refer to it as the Fermi projection. 
By using a contour integral in the complex plane, it can be written as 
\begin{equation}
\label{PFcontour}
P_{\rm F}=\frac{1}{2\pi i}\oint dz \frac{1}{z-H}.
\end{equation}

Let $\gamma^{(1)},\gamma^{(2)},\ldots,\gamma^{(2n)},\gamma^{(2n+1)}$ be 
the gamma matrices acting on the $2^n$-dimensional Hilbert space $\co^{2^n}$ 
which represents an 
auxiliary degrees of freedom.  
The gamma matrices obey the anticommutation relations, 
\begin{equation}
\label{gammaACR} 
\gamma^{(i)}\gamma^{(j)}+\gamma^{(j)}\gamma^{(i)}=2\delta_{ij},
\end{equation}
for $i,j\in\{1,2,\ldots,2n+1\}$ with the Kronecker delta $\delta_{ij}$. 
The concrete expression for the gamma matrices and their basic properties are given in 
Appendix~\ref{AppGamma}. 
We write $\gamma=(\gamma^{(1)},\gamma^{(2)},\ldots,\gamma^{(2n)})$, i.e., $\gamma$ is a $2n$-component vector 
whose $i$-th component is given by the gamma matrix $\gamma^{(i)}$. 
We denote the Euclidean distance on $\re^d$ by  
$$
|y|:=\left[\sum_{j=1}^{2n}(y^{(j)})^2\right]^{1/2}\quad \mbox{for \ }
y=(y^{(1)},y^{(2)},\ldots,y^{(2n)})\in\re^d.
$$
We introduce a Dirac operator as \cite{Connes,CM,PLB}
\begin{equation}
\label{defDiracEven}
D_a(x):=\frac{1}{|x-a|}(x-a)\cdot \gamma\quad \mbox{for \ }x\in\ze^d,
\end{equation}
where we have used the shorthand notation, 
$$
(x-a)\cdot \gamma=\sum_{i=1}^{2n}(x^{(i)}-a^{(i)})\gamma^{(i)},
$$
and the $2n$-component vector $a=(a^{(1)},a^{(2)},\ldots,a^{(2n)})$ satisfies $a\in\re^d\backslash\ze^d$. 
The Dirac operator $D_a$ acts on the Hilbert space $\ell^2(\ze^d,\co^M)\otimes\co^{2^n}$,  
where the position operator $x$ and the gamma matrix $\gamma^{(i)}$ is identified with 
$x\otimes 1$ and $1\otimes \gamma^{(i)}$, 
respectively. 
The Hamiltonian $H$ of (\ref{H}) is identified 
with $H\otimes 1$ on $\ell^2(\ze^d,\co^M)\otimes\co^{2^n}$. 
The Dirac operator of (\ref{defDiracEven}) is a noncommutative analogue \cite{Connes,CM} 
of the partial differential operators in the Atiyah-Singer index theorem.

In order to describe our noncommutative index theorems, we introduce the current operator $J_a^{(j)}$ 
in the $j$-th direction with the kink at $a^{(j)}$ by \cite{Koma5}
\begin{equation}
\label{currentJaj}
J_a^{(j)}:=i[H,\vartheta_a^{(j)}],
\end{equation}
where 
\begin{equation}
\label{varthetaaj}
\vartheta_a^{(j)}(x):=\vartheta(x^{(j)}-a^{(j)})\ \ \mbox{for \ } x\in\ze^d 
\end{equation}
with the step function, 
\[
\vartheta(b):=\begin{cases} 1, & \text{$b\ge 0$};\\
0, & \text{$b<0$},
\end{cases}
\]
for $b\in\re$. 

\begin{thm}
\label{NonCommuIntThmEven}
In even dimensions, the following relation between the integer-valued index and the topological invariant 
(generalized Chern number) is valid:   
\begin{multline*}
\frac{1}{2}\Bigl\{{\rm dim}\;{\rm ker}\;[\gamma^{(2n+1)}(P_{\rm F}-D_aP_{\rm F}D_a)-1]\\
-{\rm dim}\;{\rm ker}\;[\gamma^{(2n+1)}(P_{\rm F}-D_aP_{\rm F}D_a)+1]\Bigr\}\\
=\frac{(-1)^{n-1}}{n!(2\pi i)^n}\sum_\sigma(-1)^\sigma 
\oint dz_1\oint dz_2\cdots \oint dz_{2n}{\rm Tr}\; 
P_{\rm F}\frac{1}{z_1-H}J_a^{(\sigma_1)}\frac{1}{z_1-H}\\
\times \frac{1}{z_2-H}J_a^{(\sigma_2)}\frac{1}{z_2-H}\cdots\frac{1}{z_{2n}-H}J_a^{(\sigma_{2n})}\frac{1}{z_{2n}-H},
\end{multline*}
where ${\rm dim}\;{\rm ker}\; O$ stands for the dimension of the kernel of an operator $O$, ${\rm Tr}\;$ denotes 
the trace of operator on the Hilbert space $\ell^2(\ze^d,\co^M)$, and $\sigma$ is a permutation given by 
$$
\sigma=\left(\begin{matrix}1,2,\ldots,2n\cr \sigma_1,\sigma_2,\ldots,\sigma_{2n}\cr \end{matrix}\right).
$$
with the signature $(-1)^\sigma$. Further, the integer-valued index is continuous with respect to 
the norm of any perturbation if the perturbed Hamiltonian has the same symmetry as that of the unperturbed Hamiltonian. 
In other words, as long as the Fermi level $E_{\rm F}$ lies in a spectral gap or a mobility gap, 
the integer-valued index is robust against any perturbation which preserves the same symmetry 
as that of the unperturbed Hamiltonian.   
\end{thm}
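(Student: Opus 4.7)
The plan is to establish the claimed identity via a graded version of the Avron--Seiler--Simon pair-of-projections index theorem applied to $P_{\rm F}$ and $D_aP_{\rm F}D_a$, and to deduce the robustness statement from a homotopy argument together with the integer-valued character of the left-hand side.

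The setup is as follows. Since $D_a^2=1$ on $\ell^2(\ze^d,\co^M)\otimes\co^{2^n}$ by the anticommutation relations \eqref{gammaACR}, the operator $Q_a:=D_aP_{\rm F}D_a$ is again an orthogonal projection. Because $D_a$ is linear in $\gamma^{(1)},\dots,\gamma^{(2n)}$ it anticommutes with $\gamma^{(2n+1)}$, while $P_{\rm F}\otimes 1$ commutes with every gamma matrix; hence $P_{\rm F}-Q_a$ commutes with $\gamma^{(2n+1)}$ and the $\pm 1$-eigenspaces of $\gamma^{(2n+1)}(P_{\rm F}-Q_a)$ are well defined. Using Assumption~\ref{Assumption} I would then verify that $P_{\rm F}-Q_a=[P_{\rm F},D_a]D_a$ lies in the Schatten ideal $\mathcal{S}_{2n+1}$, by combining the exponential off-diagonal decay of the kernel of $P_{\rm F}$ with the $|x-a|^{-1}$-type decay of the matrix elements of $[x^{(j)},D_a]$. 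Granted this, the graded Avron--Seiler--Simon formula for a pair of projections whose difference lies in $\mathcal{S}_{2n+1}$ yields
\begin{equation*}
\tfrac{1}{2}\bigpare{\dim\ker[\gamma^{(2n+1)}(P_{\rm F}-Q_a)-1]-\dim\ker[\gamma^{(2n+1)}(P_{\rm F}-Q_a)+1]}=c_n\,\trace\,\gamma^{(2n+1)}(P_{\rm F}-Q_a)^{2n+1}
\end{equation*}
for an explicit normalization constant $c_n$.

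The principal work is then to rewrite this right-hand side as the contour-integral Chern character appearing in the theorem. Expressing every $P_{\rm F}$ through \eqref{PFcontour} and expanding $(P_{\rm F}-Q_a)^{2n+1}$ using $D_a^2=1$ and the cyclicity of the trace, each factor $P_{\rm F}-Q_a$ produces a commutator $[D_a,(z-H)^{-1}]=-(z-H)^{-1}[D_a,H](z-H)^{-1}$. The trace over $\co^{2^n}$ of $\gamma^{(2n+1)}\gamma^{(i_1)}\cdots\gamma^{(i_{2n})}$ is proportional to the Levi--Civita symbol $\epsilon^{i_1\cdots i_{2n}}$, which supplies the signed sum over permutations $\sigma$ in the statement. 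After a deformation argument the angular Dirac phases $(x^{(j)}-a^{(j)})/|x-a|$ appearing in $[D_a,H]$ can be replaced by the coordinate step functions $\vartheta_a^{(j)}$, turning each $-i[D_a,H]$ into $\sum_j\gamma^{(j)}J_a^{(j)}$; after the Levi--Civita contraction this produces the stated formula.

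For robustness I would run a standard homotopy: given a continuous family $H_s$, $s\in[0,1]$, sharing the same symmetry and satisfying Assumption~\ref{Assumption} uniformly in $s$, the Fermi projection $P_{\rm F}(s)$ is continuous in operator norm and $P_{\rm F}(s)-D_aP_{\rm F}(s)D_a$ is continuous in $\mathcal{S}_{2n+1}$. The graded pair-of-projections index is continuous under such Schatten-norm perturbations, and being integer-valued it must be constant along the path. The main obstacle lies in the second paragraph above: turning the single trace $\trace\,\gamma^{(2n+1)}(P_{\rm F}-Q_a)^{2n+1}$ into the symmetrized sum of $2n$-fold contour integrals with currents $J_a^{(j)}$. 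This requires careful bookkeeping of (i) the placement of resolvents relative to the commutators so that after applying resolvent identities only one $P_{\rm F}$ survives in the outermost position, (ii) the gamma-matrix trace identities that collapse everything onto a single Levi--Civita symbol, and (iii) the passage from the spherically symmetric Dirac phases to the orthogonal step functions used in the definition of $J_a^{(j)}$, which is the bridge between the geometric Connes--Chern approach and the current-based approach of \cite{Koma5}. All the convergence and interchange issues in these manipulations are controlled by the Schatten-ideal bound on $P_{\rm F}-Q_a$ established at the outset.
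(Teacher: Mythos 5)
Your opening and closing steps coincide with the paper's: the operator $A=\gamma^{(2n+1)}(P_{\rm F}-D_aP_{\rm F}D_a)$, the Schatten bound $P_{\rm F}-D_aP_{\rm F}D_a\in\mathcal{S}_{2n+1}$ (the paper's Appendix on trace class), the graded Avron--Seiler--Simon pairing via $B=\gamma^{(2n+1)}(1-P_{\rm F}-D_aP_{\rm F}D_a)$ giving $\tfrac12{\rm Tr}\,A^{2n+1}=\tfrac12[\dim\ker(A-1)-\dim\ker(A+1)]$, and the norm-continuity homotopy for robustness are all exactly what the paper does. The problem is the middle step, which you yourself flag as the principal work but then dispose of with the claim that ``all the convergence and interchange issues \dots are controlled by the Schatten-ideal bound on $P_{\rm F}-Q_a$ established at the outset.'' That claim is false, and the paper explicitly demonstrates why: only the $(2n+1)$-th power of $P_{\rm F}-D_aP_{\rm F}D_a$ is trace class, whereas the intermediate expressions you need---products of $2n$ commutators with a single $P_{\rm F}$ in front, i.e.\ ${\rm Tr}\,\gamma^{(2n+1)}P_{\rm F}[P_{\rm F},D_a]^{2n}$---are in general \emph{not} trace class, and the paper shows that the corresponding sum, integrated over the Dirac point $a$ without a cutoff, diverges linearly in the cutoff radius (the ${\rm Vol}[\mathfrak{S}_j(a)]\sim r^{-(2n-1)}$ estimate). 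The result genuinely ``depends on how to sum up the sequence,'' so a formal expansion plus cyclicity of the trace does not define the right-hand side.

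Concretely, what is missing from your plan is the entire regularization apparatus that makes the passage from the Dirac-phase expression to the step-function/current expression legitimate: the radial cutoff $\chi_R^a$ and the spatial cutoff $\chi_\Lambda$ with the $1/|\Lambda|$ normalization, the averaging of the index over the location $a$ (using its $a$-independence, which you do not establish either---in the paper it follows from compactness of $\mathfrak{D}_{a'}-\mathfrak{D}_a$), the two double-limit interchange lemmas, the computation of the gamma trace as a signed sum of oriented simplex volumes with the cancellation between $a$ and its reflection $a^\ast$ controlling the error to $O(1/R)$, and finally the separate lemma that the step-function index ${\rm Ind}^{(2n)}(\vartheta_a,P_{\rm F})$ is independent of the kink position (the paper's Appendix E), which is what licenses replacing the position operator $X^{(j)}$ by a single step function $\vartheta_a^{(j)}$. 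Your proposed direct replacement of ``each $-i[D_a,H]$'' by ``$\sum_j\gamma^{(j)}J_a^{(j)}$'' is not a valid operator identity and only emerges after all of the above; note also that expanding all $2n+1$ factors of $P_{\rm F}-Q_a$ into commutators would leave $2n+2$ gamma matrices in the trace, so you first need the reduction ${\rm Tr}\,\gamma^{(2n+1)}(P_{\rm F}-Q_a)^{2n+1}=2\,{\rm Tr}\,\gamma^{(2n+1)}P_{\rm F}(P_{\rm F}-Q_a)^{2n}$ before the Levi--Civita identity applies. As it stands the proposal has the right skeleton but a genuine gap at its analytic core.
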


The proof of the index formula for the most typical class A, which has no symmetry, is given in Sec.~\ref{sec:NosymmEvenD}, 
and for the rest of the classes that have various symmetries, the proofs are given in Sec.~\ref{PTableEvenD}.   
The continuity of their indices is proved by combining the argument of a certain supersymmetric structure \cite{ASS2} 
at the end of Sec.~\ref{NoSymEvenDInd} with the homotopy argument in Sec~\ref{HomoArg}.   

\begin{rem}
(i) The noncommutative index Theorem~\ref{NonCommuIntThmEven} gives the desired $\ze$ or $2\ze$ indices 
for the corresponding elements in Table~\ref{Ptable}. In particular, 
the classes with $2\ze$ index \cite{RSFL,TeoKane,ShiozakiSato,GSB} have a structure similar to 
the Kramers doublet in the spectrum of 
the operator $\gamma^{(2n+1)}(P_{\rm F}-D_aP_{\rm F}D_a)$ as we show in Sec.~\ref{PTableEvenD}. 
\smallskip

\noindent
(ii) In two or higher even dimensions, Prodan, Leung and Bellissard \cite{PLB} discussed 
the conditions under which both the quantization and homotopy invariance of the noncommutative Chern number hold 
against perturbations. 
Their underlying assumption which they call, strong disorder, is slightly stronger than ours. In addition, they required that 
the probability measure for disorder is ergodic with respect to spatial translations.   
On the other hand, our approach is based on the method of a pair of projections which was introduced 
by Avron, Seiler and Simon \cite{ASS,ASS2}. 
The advantage of our approach is that \cite{EGS,Koma2} 
it is not necessary for disordered systems to require the ergodicity of the probability measure. 
Namely, the integral quantization of the topological invariant is proved to be robust against any perturbation 
as long as the Fermi level lies in a spectral gap of the Hamiltonian or in a localization regime. 
\end{rem}

\begin{thm}
\label{thm:z2even}
In even spatial dimensions, 
for the entries specified by the symbol $\ze_2$ in Table~\ref{Ptable}, 
the integer-valued index given in Theorem~\ref{NonCommuIntThmEven} vanishes identically. 
However, the following $\ze_2$ index may be nonvanishing: 
$$
{\rm Ind}_2^{(2n)}(D_a,P_{\rm F})
:=\frac{1}{2}{\rm dim}\;{\rm ker}\;[\gamma^{(2n+1)}(P_{\rm F}-D_aP_{\rm F}D_a)-1]\ \ \mbox{\rm modulo}\ 2.
$$
More precisely, this $\ze_2$ index is continuous with respect to 
the norm of any perturbation if the perturbed Hamiltonian has the same symmetry as that of the unperturbed Hamiltonian. 
In other words, as long as the Fermi level lies in a spectral gap of the Hamiltonian or in a localization regime, 
the $\ze_2$ index is robust against any perturbation which preserves the same symmetry as that of the unperturbed Hamiltonian.  
\end{thm}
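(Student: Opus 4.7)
The strategy I would follow is to extend the pair-of-projections method of Avron, Seiler and Simon \cite{ASS,ASS2}, used in the authors' own two-dimensional analysis \cite{KatsuraKoma}, to each of the $\ze_2$ entries in Table~\ref{Ptable} in even dimensions. Two things must be proved: first, that the integer index of Theorem~\ref{NonCommuIntThmEven} vanishes identically on these entries, so that $\dim\ker[\gamma^{(2n+1)}(P_{\rm F}-D_aP_{\rm F}D_a)-1]$ and $\dim\ker[\gamma^{(2n+1)}(P_{\rm F}-D_aP_{\rm F}D_a)+1]$ coincide; and second, that $\frac{1}{2}\dim\ker[\gamma^{(2n+1)}(P_{\rm F}-D_aP_{\rm F}D_a)-1]$ is an integer whose parity is robust under any perturbation that preserves the discrete symmetries of the unperturbed Hamiltonian.

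Write $A:=\gamma^{(2n+1)}(P_{\rm F}-D_aP_{\rm F}D_a)$. The first step is to lift each relevant physical symmetry ($\Theta$, $\Xi$, or a composition) to an antiunitary operator on the extended Hilbert space $\ell^2(\ze^d,\co^M)\otimes\co^{2^n}$ by tensoring the physical antiunitary with a suitable element of the Clifford algebra generated by $\gamma^{(1)},\ldots,\gamma^{(2n+1)}$. The Clifford factor is chosen so that the lift has a controlled commutation relation with $D_a$ and with $\gamma^{(2n+1)}$ while preserving the conjugation relation with $P_{\rm F}$ imposed by the physical symmetry. For each $\ze_2$ entry I would then verify by direct computation that a composite antiunitary lift $\tilde T$ anticommutes with $A$ and squares to $-\mathbf{1}$. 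The anticommutation yields an antiunitary bijection $\ker(A-1)\to\ker(A+1)$, which immediately forces the two kernel dimensions to be equal and hence the integer index of Theorem~\ref{NonCommuIntThmEven} to vanish.

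For the $\ze_2$ statement, I would extract a second antiunitary operator $\tilde T'$ from the same symmetry data with the property that $\tilde T'$ preserves $\ker(A-1)$ and satisfies $(\tilde T')^2=-\mathbf{1}$ on it. A standard Kramers argument then forces $\dim\ker(A-1)$ to be even, so that $\frac{1}{2}\dim\ker(A-1)$ is a non-negative integer. To promote this integer to an invariant modulo $2$, I would feed the double antiunitary structure into the homotopy argument of Sec.~\ref{HomoArg}: along a continuous deformation of the symmetric Hamiltonian that keeps the spectral gap or mobility gap at $E_{\rm F}$ open, eigenvalues of $A$ can cross the thresholds $\pm 1$ only in configurations compatible with both pairings $\tilde T$ and $\tilde T'$, which forces such crossings to occur in multiples of four. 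Consequently $\frac{1}{2}\dim\ker(A-1)\bmod 2$ is unchanged, and the norm continuity with respect to symmetry-preserving perturbations is inherited from the continuity arguments already developed for the integer-valued index.

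The main obstacle I anticipate is the Clifford-algebra bookkeeping carried out entry by entry. The choice of Clifford factor in the lift, and the resulting signs in $\tilde T^2$ and $(\tilde T')^2$, depend jointly on the parity of $n$, on the type (TRS, PHS, or their product) of the relevant symmetry, and on its even/oddness. Verifying that the double antiunitary structure with both squares equal to $-\mathbf{1}$ emerges precisely on the rows and columns of Table~\ref{Ptable} marked $\ze_2$, and collapses to a single (non-pairing) structure on the $\ze$ or $2\ze$ entries, is where Bott periodicity must be reproduced. Once this combinatorial step is handled uniformly across the periodic table, the vanishing of the integer index and the robustness of $\mathrm{Ind}_2^{(2n)}(D_a,P_{\rm F})$ fall out as direct consequences of the pair-of-projections framework combined with the homotopy machinery of Sec.~\ref{HomoArg}.
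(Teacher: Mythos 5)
Your overall architecture --- lifting the physical antiunitaries by Clifford factors $C_\pm$, checking signs entry by entry across the table, and feeding the result into the homotopy argument of Sec.~\ref{HomoArg} --- matches the paper's, and your treatment of the vanishing of the integer index (an antiunitary lift anticommuting with $A$ gives a bijection $\ker(A-1)\to\ker(A+1)$) is correct; note only that $\tilde T^2=-\mathbf{1}$ plays no role in that step. The gap is in the mechanism you propose for the $\ze_2$ robustness. You posit a second antiunitary $\tilde T'$ that preserves $\ker(A-1)$ and squares to $-\mathbf 1$ there, and you invoke a Kramers argument on the kernel. Two problems. First, in these classes no such globally defined $\tilde T'$ is available: the lifted symmetry anticommutes with $A$ (either because it anticommutes with $\gamma^{(2n+1)}$ while commuting with $P_{\rm F}-D_aP_{\rm F}D_a$, as for AII in $d=2$, or because it sends $P_{\rm F}\mapsto 1-P_{\rm F}$ while commuting with $\gamma^{(2n+1)}$, as for C in $d=4$), and composing with $D_a$ or $\gamma^{(2n+1)}$, both of which commute with $A$, cannot change that sign. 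The only way to manufacture something commuting with $A$ is to compose with the partner operator $B=\gamma^{(2n+1)}(1-P_{\rm F}-D_aP_{\rm F}D_a)$, and since $B^2=1-A^2$ this operator vanishes precisely on $\ker(A\mp 1)$, so it cannot furnish a Kramers structure on the kernel. Second, the evenness of $\dim\ker(A-1)$ that your Kramers argument would deliver is automatic and is not the point: in the $\gamma^{(2n+1)}$-eigenbasis $A=\mathrm{diag}\bigl(\mathfrak A,\,-(P_{\rm F}-\mathfrak D_aP_{\rm F}\mathfrak D_a^\ast)\bigr)$ with $\mathfrak A=P_{\rm F}-\mathfrak D_a^\ast P_{\rm F}\mathfrak D_a$, the vanishing of the integer index already gives $\dim\ker(A-1)=2\,\dim\ker(\mathfrak A-1)$ --- this is why the theorem carries the factor $\tfrac12$ --- and the parity to be protected is that of the single-block quantity $\dim\ker(\mathfrak A-1)$.

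What the paper actually proves (Lemma~\ref{lem:orthoAII} for AII in $d=2$ and its analogues for C in $d=4$, AI in $d=6$, etc.) is that on each eigenspace of $A$ with $0<|\lambda|<1$ there are \emph{two} maps to the $-\lambda$ eigenspace, the linear $B$ and an antiunitary such as $\tilde\Theta_\pm D_a$, both commuting with $\gamma^{(2n+1)}$ so that they act within the block, and with orthogonal images, $\langle B\varphi,\tilde\Theta_\pm D_a\varphi\rangle=0$. Equivalently, the composite antilinear map $B\tilde\Theta_\pm D_a$ preserves each interior eigenspace and squares to $-(1-\lambda^2)$ there: the Kramers structure lives on the \emph{interior} spectrum, not on the kernel. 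This forces every eigenvalue of $\mathfrak A$ in $(-1,0)\cup(0,1)$ to have even multiplicity, so under a norm-continuous symmetry-preserving deformation eigenvalues can join or leave $\ker(\mathfrak A-1)$ only in pairs, and $\dim\ker(\mathfrak A-1)\bmod 2$ is invariant. Your ``crossings in multiples of four'' for $\dim\ker(A-1)$ is the right conclusion, but it follows from block-wise even multiplicity, not from the two structures you describe: a Kramers doublet on $\ker(A-1)$ together with the $\pm\lambda$ pairing only constrains crossings of $\dim\ker(A-1)$ to multiples of two, which is compatible with a flip of the parity of $\tfrac12\dim\ker(A-1)$. To close the gap you need the orthogonality lemma involving $B$ and the restriction to a single $\gamma^{(2n+1)}$ block.
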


The proof is given in Sec.~\ref{PTableEvenD}. In particular, the continuity of the $\ze_2$ indices is 
proved at the end of the case of AII class in two dimensions by relying on the homotopy argument in Sec~\ref{HomoArg}.
A numerical demonstration of the above formula was presented in \cite{AKK}.

\begin{rem}
(i) Since the $\ze_2$ index is given by the dimension of the kernel of the operator, 
one might think that the $\ze_2$ index may be written in terms of an integral of some connection, 
similarly to the Chern number. In fact, 
the $\ze_2$ index which was defined by Kane and Mele 
can be written in terms of an integral of the same connection as that of the Chern number 
over one-half of the Brillouin zone 
for translationally invariant systems \cite{FuKane,MooreBalents,EssinMoore}. 
(See also related articles \cite{FuKane2,Roy,FukuiHatsugai,LeeRyu}.) 
In \cite{KatsuraKoma}, the authors showed that, for Kane-Mele model without disorder, the $\ze_2$ index 
in the form of an integral coincides with the dimension of the kernel of an operator. 
This index of the operator coincides with that in Theorem~\ref{thm:z2even}, too. 
\smallskip

\noindent
(ii)  Schulz-Baldes \cite{SB,GSB} defined 
$\ze_2$-indices for general odd symmetric Fredholm operators 
and treated the $\ze_2$ index for the Kane-Mele model as an example. 
(See also \cite{FukuiFujiwara,DNSB} for related articles.)
But his approach is different from ours. In fact, he defined the $\ze_2$ index 
by the parity of ${\rm dim}\; {\rm ker}\; \mathfrak{T}$ for the Fredholm operator $\mathfrak{T}$. 
As we will show in Sec.~\ref{NoSymEvenDInd} below, 
the dimension of the kernel of $\mathfrak{T}$ coincides with ours. 
On the one hand, the operator $\mathfrak{T}$ is noncompact and contains the essential spectrum.
On the other hand, our operator, say $\mathfrak{A}$, has only the discrete spectrum 
with finite multiplicity except for zero, because $\mathfrak{A}$ is compact 
which follows from the fact that $\mathfrak{A}^m$ with some positive integer $m$ is trace class. 
Therefore, a homotopy argument for $\mathfrak{A}$ is much easier to handle than that for $\mathfrak{T}$. 
It should be noted that 
the homotopy argument is indispensable for defining the $\ze_2$ index for disordered systems. 
\smallskip

\noindent
(iii)  Hastings and Loring \cite{HL1,HL2,LH,Loring} proposed an alternative noncommutative approach to 
the $\ze_2$ index of topological insulators.  

\end{rem}

\subsection{Index Theorems in Odd Dimensions}

Let us consider the Hamiltonian $H$ of (\ref{H}) on $\ze^d$ 
with odd dimensions, $d=2n+1$, ($n=0,1,2,\ldots$). 
In this case, we set 
\begin{equation}
\label{gammaOddDim}
\gamma=(\gamma^{(1)},\gamma^{(2)},\ldots,\gamma^{(2n)},\gamma^{(2n+1)}).
\end{equation}
The corresponding Dirac operator is given by  
\begin{equation}
\label{defDiracOdd}
D_a(x):=\frac{1}{|x-a|}(x-a)\cdot \gamma
\end{equation}
for $x=(x^{(1)},x^{(2)},\ldots,x^{(2n+1)})\in\ze^{2n+1}$ 
and $a=(a^{(1)},a^{(2)},\ldots,a^{(2n+1)})\in\re^d\backslash\ze^d$. 
Here, $|y|$ denotes the Euclidean distance given by 
$$
|y|=\left[\sum_{j=1}^{2n+1}(y^{(j)})^2\right]^{1/2} \ \mbox{for \ } y=(y^{(1)},y^{(2)},\ldots,y^{(2n+1)})\in\re^{2n+1}.
$$

Consider first the case that the Hamiltonian $H$ has no chiral symmetry. 

\begin{thm}
\label{thm:Z2oddNoS}
Assume that the Hamiltonian $H$ has no chiral symmetry in odd spatial dimensions. 
Then, all of the nontrivial indices for the models in the classes are given by $\ze_2$ index as in Table~\ref{Ptable}.  
Actually, the integer-valued index vanishes identically. 
However, the models preserve the $\ze_2$ index given by 
\begin{equation}
\label{Z2oddNoS}
{\rm Ind}_2^{(2n+1)}(D_a,P_{\rm F})
:={\rm dim}\;{\rm ker}\;[(P_{\rm F}-D_aP_{\rm F}D_a)-1]\ \ \mbox{\rm modulo}\ 2. 
\end{equation}
in the sense that the parity ($\ze_2$ index) is continuous with respect to 
the norm of any perturbation if the perturbed Hamiltonian has the same symmetry as that of the unperturbed Hamiltonian. 
In other words, as long as the Fermi level lies in the spectral gap of the Hamiltonian or in the localization regime, 
the $\ze_2$ index is robust against any perturbation which preserves the same symmetry as that of the unperturbed Hamiltonian.
\end{thm}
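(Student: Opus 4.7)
The plan is to combine the involutive structure of $D_a$ with a class-specific antilinear symmetry. Set $\mathcal{X}_a:=P_{\rm F}-D_aP_{\rm F}D_a$. The anticommutation relations (\ref{gammaACR}) give $((x-a)\cdot\gamma)^2=|x-a|^2$, so the Dirac operator satisfies $D_a^\ast=D_a$ and $D_a^2=1$; hence $D_aP_{\rm F}D_a$ is again an orthogonal projection and $(P_{\rm F},D_aP_{\rm F}D_a)$ is a pair of projections in the Avron--Seiler--Simon sense. The exponential decay of $P_{\rm F}$ implied by Assumption~\ref{Assumption}, together with the estimates in Appendix~\ref{traceclassA}, shows that a sufficiently high power of $\mathcal{X}_a$ is trace class, so $\mathcal{X}_a$ is compact and its spectrum in $[-1,1]\setminus\{0\}$ is discrete with finite multiplicities.

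The next step is the identity $D_a\mathcal{X}_a D_a=D_aP_{\rm F}D_a-P_{\rm F}=-\mathcal{X}_a$, so conjugation by the self-adjoint unitary $D_a$ intertwines $\mathcal{X}_a$ with $-\mathcal{X}_a$. In particular $D_a$ restricts to a linear bijection of $\ker(\mathcal{X}_a-1)$ onto $\ker(\mathcal{X}_a+1)$; the two finite-dimensional kernels thus have the same dimension, and the integer-valued index $\dim\ker(\mathcal{X}_a-1)-\dim\ker(\mathcal{X}_a+1)$ vanishes identically. The only topological information that can survive is then the parity of $\dim\ker(\mathcal{X}_a-1)$, which is exactly the definition (\ref{Z2oddNoS}).

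For the $\ze_2$ robustness, I would consider a continuous path $H(t)$, $t\in[0,1]$, preserving the symmetry of the given class. As long as the spectral/mobility gap at $E_{\rm F}$ is maintained, $P_{\rm F}(t)$ and hence $\mathcal{X}_a(t)$ depend norm-continuously on $t$, and the discrete eigenvalues of $\mathcal{X}_a(t)$ away from $0$ vary continuously. The parity of $\dim\ker(\mathcal{X}_a(t)-1)$ can change only when eigenvalues cross $+1$, and by the $D_a$-supersymmetry each crossing at $+1$ is mirrored by one at $-1$. To forbid an odd number of eigenvalues from crossing, I would, for each of the four non-chiral odd-dimensional entries D ($d=1$), AII ($d=3$), C ($d=5$), AI ($d=7$), lift the physical TRS $\Theta$ or PHS $\Xi$ to an antilinear operator $\mathcal{J}$ on $\ell^2(\ze^d,\co^M)\otimes\co^{2^n}$ by tensoring with an explicit antiunitary on $\co^{2^n}$ chosen so that $\mathcal{J}D_a\mathcal{J}^{-1}=\pm D_a$ and $\mathcal{J}P_{\rm F}\mathcal{J}^{-1}$ equals $P_{\rm F}$ or $1-P_{\rm F}$ (as dictated by the class). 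In each case the resulting composite, possibly multiplied by $D_a$, commutes with $\mathcal{X}_a$ and squares to $-1$, producing a Kramers-type doubling of the spectral flow into $\pm 1$ and hence an even change in $\dim\ker(\mathcal{X}_a(t)-1)$.

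The main obstacle is precisely this case-by-case construction of $\mathcal{J}$: one must select the antiunitary on $\co^{2^n}$ so that its interplay with each $\gamma^{(i)}$ and with the physical $\Theta$ or $\Xi$ produces the correct square $-1$ in exactly the four $\ze_2$ slots picked out by the Bott pairing of the Cartan--Altland--Zirnbauer label with the dimension. The dependence of the rank $2^n$ of the gamma fiber on $d=2n+1$ is what ultimately reproduces the periodicity pattern of Table~\ref{Ptable}. Once $\mathcal{J}$ is in hand for each class, the general homotopy argument of Sec.~\ref{HomoArg} closes the loop and yields the claimed continuity of the $\ze_2$ index under symmetry-preserving perturbations.
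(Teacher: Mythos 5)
Your first two steps are sound and coincide with the paper's own argument in Sec.~\ref{sec:NoSS}: the compactness of $\mathcal{X}_a=P_{\rm F}-D_aP_{\rm F}D_a$ follows from the trace-class estimate of Appendix~\ref{traceclassA}, and the relation $D_a\mathcal{X}_aD_a=-\mathcal{X}_a$ does force ${\rm dim}\,{\rm ker}\,(\mathcal{X}_a-1)={\rm dim}\,{\rm ker}\,(\mathcal{X}_a+1)$, so the integer-valued index vanishes identically. The gap is in your robustness mechanism. For each of the four nontrivial slots (D in $d=1$, AII in $d=3$, C in $d=5$, AI in $d=7$) you propose an antiunitary $\mathcal{J}$ that \emph{commutes} with $\mathcal{X}_a$ and satisfies $\mathcal{J}^2=-1$. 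But such an operator produces a Kramers degeneracy of \emph{every} eigenspace of $\mathcal{X}_a$, including ${\rm ker}\,(\mathcal{X}_a-1)$ itself; then ${\rm dim}\,{\rm ker}\,(\mathcal{X}_a-1)$ is always even and the parity (\ref{Z2oddNoS}) is identically zero. That is exactly the mechanism by which the paper rules out a $\ze_2$ index for AII in $d=1$, AI in $d=3,5$, C in $d=1,3$, D in $d=5,7$, and AII in $d=7$. In the four slots that do carry a nontrivial $\ze_2$ index, the extended antiunitary ($\tilde{\Xi}$ for D in $d=1$, $\tilde{\Theta}_-=C_-\Theta$ for AII in $d=3$, $\tilde{\Xi}_-$ for C in $d=5$, $\tilde{\Theta}_-$ for AI in $d=7$) squares to $+1$, not $-1$: your sign assignment is inverted relative to the one that actually works, and as stated your construction would prove the opposite of the claim.

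What is missing is the second operator of the Avron--Seiler--Simon pair, $B=1-P_{\rm F}-D_aP_{\rm F}D_a$, with $A^2+B^2=1$ and $AB+BA=0$. Since $B^2=1-A^2$, the map $B$ is injective on eigenspaces with $0<|\lambda|<1$ but vanishes identically on ${\rm ker}\,(A\mp1)$; this is precisely what allows one to force even multiplicity of the interior eigenvalues without constraining $\lambda=\pm1$. Concretely, the paper fixes eigenvectors with $\tilde{\Sigma}\varphi=\varphi$ for a suitable composite $\tilde{\Sigma}$ commuting with $A$ and squaring to $+1$ (Proposition~\ref{prop:EAUT}), and then derives orthogonality relations such as $\langle D_a\varphi,B\varphi\rangle=0$ (AII, $d=3$) or $\langle B\varphi,\tilde{\Xi}\varphi\rangle=0$ together with $\langle D_a\varphi,B\varphi\rangle=0$ (D, $d=1$); these yield a nondegenerate antisymmetric pairing on each eigenspace with $0<|\lambda|<1$, hence even multiplicity there, so that under the norm-continuous deformation of Sec.~\ref{HomoArg} eigenvalues can enter or leave $\lambda=1$ only in pairs. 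Without $B$ your argument cannot distinguish ``multiplicities even away from $\pm1$'' (which protects the parity) from ``multiplicities even everywhere'' (which trivializes it).
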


The proof is given in Sections~\ref{sec:NoSS}, \ref{sec:AIAII} and \ref{sec:CD}. The continuity of the $\ze_2$ indices 
can be proved in the same way as in the proof of Theorem~\ref{thm:z2even}.

\begin{rem} (i) For numerical calculations of the $\ze_2$ index in three-dimensional systems with disorder, 
see \cite{Guo,LeungProdan}. They used a method of twisted boundary conditions \cite{NTW,KaneMele}. 
For a scattering matrix approach to the $\ze_2$ index in three-dimensional systems with disorder, see \cite{FHA,SbBr} 
and a related article \cite{FHAB}. A numerical demonstration of the present formula (\ref{Z2oddNoS}) 
was presented in \cite{AKK}.   
\end{rem}

Next, consider the case where the Hamiltonian $H$ has chiral symmetry with a chiral operator $S$. 
As mentioned in Sec.~\ref{sec:ChiralModel}, we deal with the following two situations: 
(i) There is a spectral gap between the upper and lower bands in 
the spectrum of the chiral symmetric Hamiltonian $H$. (ii) When the Fermi level $E_{\rm F}=0$ lies in 
a localization regime, we assume that $E_{\rm F}=0$ is not an eigenvalue of the chiral Hamiltonian $H$ with 
probability one with respect to configurations of the random quantities of the system.  
In such a system, the resolvent $(E_{\rm F}-H)^{-1}$ decays exponentially with distance. 
We write $P_\pm$ for the spectral projections onto the upper and the lower bands, respectively.  
Using the two projections $P_\pm$, we introduce a unitary operator, 
\begin{equation}
\label{U}
U:=P_+-P_-.
\end{equation}
We also define a projection operator,
\begin{equation}
\label{projectionchiral}
\mathcal{P}_{\rm D}:=\frac{1}{2}(1+D_a),
\end{equation} 
in terms of the Dirac operator $D_a$. 

\begin{thm}
\label{NonCommuThmOdd}
Assume that the Hamiltonian $H$ has chiral symmetry with a chiral operator $S$ in odd dimensions. 
Then, the following relation between the integer-valued index and the topological invariant 
(generalized Chern number) is valid:
\begin{multline*}
\frac{1}{2}\Bigl\{{\rm dim}\;{\rm ker}\;[S(\mathcal{P}_{\rm D}-U\mathcal{P}_{\rm D}U)-1]\\
-{\rm dim}\;{\rm ker}\;[S(\mathcal{P}_{\rm D}-U\mathcal{P}_{\rm D}U)+1]\Bigr\}\\
=\frac{i^{n}}{(2n+1)!!\cdot 2\cdot \pi^{n+1}}\sum_\sigma(-1)^\sigma 
\oint dz_1\oint dz_2\cdots \oint dz_{2n+1}{\rm Tr}\; 
SU\frac{1}{z_1-H}\\ 
\times J_a^{(\sigma_1)}\frac{1}{z_1-H}
\frac{1}{z_2-H}J_a^{(\sigma_2)}\frac{1}{z_2-H}\cdots\frac{1}{z_{2n+1}-H}J_a^{(\sigma_{2n+1})}\frac{1}{z_{2n+1}-H}. 
\end{multline*}
Further, the integer-valued index is continuous with respect to 
the norm of any perturbation if the perturbed Hamiltonian has the same symmetry as that of the unperturbed Hamiltonian. 
Namely, as long as the Fermi level lies in a spectral gap of the Hamiltonian or in a localization regime, 
the integer-valued index 
is robust against any perturbation which preserves the same symmetry as that of the unperturbed Hamiltonian. 
\end{thm}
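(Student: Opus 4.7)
The plan is to adapt the pair-of-projections strategy already deployed for Theorem~\ref{NonCommuIntThmEven} to the chiral odd-dimensional setting, with the self-adjoint unitary $U=P_+-P_-$ (which satisfies $U^2=1$) playing the role that $P_{\rm F}$ played in even dimensions. The chiral condition $SHS=-H$ together with $S^2=1$ forces $SP_\pm S=P_\mp$, hence the key algebraic relations $\{S,U\}=0$ and $[S,\mathcal{P}_{\rm D}]=0$, so that $S$ commutes with the difference of projections
\[
X:=\mathcal{P}_{\rm D}-U\mathcal{P}_{\rm D}U.
\]
Using $U^2=1$ one finds immediately the useful identity $X=\tfrac12\,U[U,D_a]$, which will be the starting point for all trace-class estimates.

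First I would reduce the left-hand side to a trace. Since both $\mathcal{P}_{\rm D}$ and $U\mathcal{P}_{\rm D}U$ are projections, the Avron--Seiler--Simon analysis shows that the spectrum of $X$ is symmetric about $0$ except for the eigenvalues $\pm 1$. A direct decomposition of $\ker(SX\pm 1)$ into joint eigenspaces of the commuting pair $(S,X)$, using that $S$ is a self-adjoint unitary with eigenvalues $\pm 1$, then yields
\[
\tfrac12\bigl[\dim\ker(SX-1)-\dim\ker(SX+1)\bigr]=\tfrac12\,{\rm Tr}\bigl(SX^{2n+1}\bigr),
\]
valid once $X^{2n+1}$ is trace class. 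The latter is verified by writing $[U,D_a]=-2[P_{\rm F},D_a]$, inserting the contour representation (\ref{PFcontour}), using the resolvent identity $[(z-H)^{-1},\vartheta_a^{(j)}]=(z-H)^{-1}J_a^{(j)}(z-H)^{-1}$, and combining with the exponential decay (\ref{decayresolvent}) from Assumption~\ref{Assumption}. The resulting Schatten-class bounds are exactly of the type established for $\gamma^{(2n+1)}(P_{\rm F}-D_aP_{\rm F}D_a)$ in Sec.~\ref{sec:NosymmEvenD} and the technical appendices, shifted by one dimension.

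Next I would expand ${\rm Tr}(SX^{2n+1})$ into the Chern-type integral. Substituting $X=\tfrac12 U[U,D_a]$ and using cyclicity of the trace together with $\{S,U\}=0$ and $U^2=1$, all but one of the $U$'s can be moved through $S$ and collapsed against a neighbour, leaving one factor of $SU$ together with $2n+1$ nested commutators involving $P_{\rm F}$ and $D_a$. Writing $D_a=\sum_j n_a^{(j)}\gamma^{(j)}$ with $n_a^{(j)}(x)=(x^{(j)}-a^{(j)})/|x-a|$, the problem reduces to evaluating (i) the gamma-trace on the auxiliary factor $\co^{2^n}$, where the totally antisymmetric identity of Appendix~\ref{AppGamma} produces the signed permutation sum $\sum_\sigma(-1)^\sigma$ and the factor $i^n$, and (ii) an angular integral of $\prod_j n_a^{(\sigma_j)}$ on $\re^{2n+1}$, which contributes the factor $(2n+1)!!\cdot 2\pi^{n+1}$ up to signs. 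Collecting these with the overall $\tfrac12$-powers yields the claimed coefficient $i^n/[(2n+1)!!\cdot 2\pi^{n+1}]$, and each commutator $[(z_j-H)^{-1},\vartheta_a^{(\sigma_j)}]$ reorganises into the stated resolvent-current product inside the contour integrals.

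The robustness of the integer-valued index under symmetry-preserving perturbations is then a corollary of the homotopy argument of Sec.~\ref{HomoArg}: by the Schatten-class bound above, $SX$ is compact with essential spectrum $\{0\}$, so the multiplicities $\dim\ker(SX\mp 1)$ can only change by eigenvalue crossings at $\pm 1$ which, since $SX$ remains self-adjoint and $\|X\|\le 1$ along any homotopy preserving the chiral symmetry and Assumption~\ref{Assumption}, cancel in the signed combination. The principal technical obstacle I anticipate is the bookkeeping in the expansion step: converting the formal expansion of ${\rm Tr}(SX^{2n+1})$ into the manifestly permutation-symmetric contour integral requires simultaneously managing the ordering of the $2n+1$ resolvent commutators, the gamma-matrix trace combinatorics, and the angular integration on $\re^{2n+1}$, and pinning down the numerical prefactor $i^n/[(2n+1)!!\cdot 2\pi^{n+1}]$ exactly will be the delicate part.
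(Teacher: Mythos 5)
Your overall architecture (pair of projections, supersymmetric pairing via the partner operator $B$, reduction to a Fredholm index, volume-of-simplex expansion of the trace, homotopy argument for robustness) matches the paper's, but your trace-class step contains a genuine gap that the paper spends much of Sec.~\ref{sec:ChiralSymmOddD} working around. In even dimensions $d=2n$ the operator $P_{\rm F}-D_aP_{\rm F}D_a$ lies in the Schatten class of order $p=2n+1$ precisely because $p>d$: the estimate in Appendix~\ref{traceclassA} requires summing $|u+b-a|^{-p}$ over $u\in\ze^d$, which converges only for $p>d$. ``Shifting by one dimension'' therefore does \emph{not} give you that $X^{2n+1}$ is trace class when $d=2n+1$; the exponent now equals the dimension and the corresponding sum diverges logarithmically. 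This is exactly why the paper (i) defines the chiral index through the regularized power ${\rm Tr}\,S(\mathcal{P}_{\rm D}-U\mathcal{P}_{\rm D}U)^{2n+1+2\epsilon}$ with $\epsilon>0$, (ii) proves the identity ${\rm Ind}^{(2n+1,\epsilon)}(D_a,S,U;R)=I_\pm(\epsilon,R)$ with $I_\pm(\epsilon,R)=\frac12{\rm Tr}\,SD_a(P_\pm-D_aP_\pm D_a)^{2n+2+2\epsilon}\chi_R^a$, whose power $2n+2>d$ restores genuine trace-class control and permits the limit $\epsilon\downarrow 0$, and (iii) inserts the cutoff $\chi_R^a$ because $\int dv(a)\,{\rm Vol}[\mathfrak{S}_j(a)]$ is only conditionally convergent; the paper explicitly notes that the expansion is ill defined without the cutoff and that this point was overlooked in the proof of Theorem~4.3 of \cite{PSB}. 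None of these devices appears in your proposal, and without them your starting identity $\frac12{\rm Tr}(SX^{2n+1})=\frac12\bigl[\dim\ker(SX-1)-\dim\ker(SX+1)\bigr]$ is not justified, since the left-hand side has not been shown to exist.

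The remaining steps of your sketch (the gamma-trace combinatorics producing $\sum_\sigma(-1)^\sigma$ and the factor $i^n$, the angular integral yielding $(2n+1)!!\cdot 2\pi^{n+1}$, the passage from $D_a$ to the step functions $\vartheta_a^{(j)}$ and then to the resolvent--current contour integrals, and the homotopy argument) do follow the paper's route. But they all sit downstream of the regularization above: the interchanges of the limits $R\nearrow\infty$, $\Lambda\nearrow\ze^d$ and $\epsilon\downarrow 0$ are the content of Lemmas~\ref{doubleLimChiralInd} and \ref{ChiralIndLimOmegaLambda} together with the uniform bound (\ref{tildeChiralIndDiff}), and they cannot be carried out on the unregularized ${\rm Tr}(SX^{2n+1})$. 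To repair the proposal you must either adopt the paper's $\epsilon$-regularization and the passage to the $(2n+2)$-power quantity $I_\pm$, or supply an independent argument that the relevant $(2n+1)$-fold products are trace class and that the $a$-integrals converge absolutely --- which, at the critical exponent, they do not.
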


The proof of the index formula for the most typical class AIII, which has only chiral symmetry, 
is given in Sec.~\ref{sec:ChiralSymmOddD}. For the rest of the classes that have various symmetries, 
the proofs are given in Sec.~\ref{sec:SSO}. The continuity of the indices can be proved in 
the same way as in the proof of Theorem~\ref{NonCommuIntThmEven}.

\begin{rem}
(i) In \cite{MHSP}, it was shown numerically for disordered models that the noncommutative analogue 
of the Chern number for odd-dimensional systems with chiral symmetry is quantized to an integer.
See also \cite{SFP}. 
The mathematical proof was given by Prodan and Schulz-Baldes in \cite{PSB}. 
In addition to the assumption about the spectral gap or localization, they required that 
the probability measure for disorder is ergodic with respect to spatial translations. 
For the advantage of our approach, see Remark~(ii) of Theorem~\ref{NonCommuIntThmEven}.  
\smallskip

\noindent
(ii) As we will show in Sec.~\ref{subsec:ChiralIndex} below, the index in Theorem~\ref{NonCommuThmOdd} 
is equal to the index of a Fredholm operator which is given by a pairing of the Dirac operator and 
a unitary operator. Therefore, Theorem~\ref{NonCommuThmOdd} is closely related to the local index formula  
in Corollary~II.1 in \cite{CM} although the two expressions for the index formula seem to be totally different. 
See also \cite{Higson1,Higson2}. 
\end{rem}

\begin{thm}
Assume that the Hamiltonian $H$ has chiral symmetry with a chiral operator $S$ in odd dimensions. 
Then, for the entries specified by the symbol $\ze_2$ in Table~\ref{Ptable}, 
the integer-valued index given in Theorem~\ref{NonCommuThmOdd} vanishes identically. 
However, the following $\ze_2$ index may be nonvanishing:
$$
{\rm Ind}_2^{(2n)}(D_a,S,U)
:=\frac{1}{2}{\rm dim}\;{\rm ker}\;[S(\mathcal{P}_{\rm D}-U\mathcal{P}_{\rm D}U)-1]\ \ \mbox{\rm modulo}\ 2.
$$
More precisely, this $\ze_2$ index is continuous with respect to 
the norm of any perturbation if the perturbed Hamiltonian has the same symmetry as that of the unperturbed Hamiltonian. 
In other words, as long as the Fermi level lies in a spectral gap of the Hamiltonian or in a localization regime, 
the $\ze_2$ index is robust against any perturbation which preserves the same symmetry as that of the unperturbed Hamiltonian.  
\end{thm}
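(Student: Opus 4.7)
The plan is to adapt the strategy of Theorem~\ref{thm:z2even} for the even-dimensional $\ze_2$ indices to the odd-dimensional chiral setting. Set $\mathfrak{A}:=S(\mathcal{P}_{\rm D}-U\mathcal{P}_{\rm D}U)$. Because $S^\ast=S=S^{-1}$, $U^\ast=U=U^{-1}$ and $\mathcal{P}_{\rm D}$ is a self-adjoint projection, $\mathfrak{A}$ is self-adjoint; the trace-class estimates proved in Sec.~\ref{sec:ChiralSymmOddD} in the course of proving Theorem~\ref{NonCommuThmOdd} give that a sufficiently high odd power of $\mathcal{P}_{\rm D}-U\mathcal{P}_{\rm D}U$ is trace class, so $\mathfrak{A}$ is compact, its spectrum is discrete away from $0$ with finite multiplicity, $\dim\ker(\mathfrak{A}\mp 1)<\infty$, and $+1$ is an isolated eigenvalue whenever the kernel is nontrivial.

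\emph{Step 1: vanishing of the integer index.} For each $\ze_2$ entry under consideration (DIII in $d=1$, CII in $d=3$, CI in $d=5$, BDI in $d=7$) the CAZ class supplies a time-reversal operator $\Theta$ and a particle-hole operator $\Xi$ whose product equals the chiral operator $S$ up to a phase. Combining $\Theta$ or $\Xi$ with a suitable gamma-matrix factor in the Dirac operator $D_a$ yields an antiunitary operator $\mathcal{K}_1$ on $\ell^2(\ze^d,\co^M)\otimes\co^{2^n}$ that anticommutes with $\mathfrak{A}$. Then $\varphi\mapsto\mathcal{K}_1\varphi$ is an antiunitary bijection $\ker(\mathfrak{A}-1)\to\ker(\mathfrak{A}+1)$, which forces the two kernels to have the same dimension and the integer index of Theorem~\ref{NonCommuThmOdd} to vanish identically.

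\emph{Step 2: well-definedness of the $\ze_2$ index.} A second antiunitary operator $\mathcal{K}_2$, again assembled from $\Theta$, $\Xi$ and the gamma matrices, is constructed so that it commutes with $\mathfrak{A}$ and satisfies $\mathcal{K}_2^2=-1$. The Kramers pairing induced by $\mathcal{K}_2$ then forces $\dim\ker(\mathfrak{A}-1)$ to be even in each $\ze_2$ entry, so $\frac{1}{2}\dim\ker(\mathfrak{A}-1)$ is an integer and ${\rm Ind}_2^{(2n)}(D_a,S,U)$ is well defined. The case-by-case construction of $\mathcal{K}_1$ and $\mathcal{K}_2$ reduces to finite-dimensional linear algebra on the fibre $\co^M\otimes\co^{2^n}$, paralleling the gamma-matrix bookkeeping of Sec.~\ref{PeriodicTable}.

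\emph{Step 3: continuity.} Given a symmetry-preserving perturbation $H_1$ of $H_0$, form the linear interpolation $H_t=(1-t)H_0+tH_1$, $t\in[0,1]$, and the associated family $\mathfrak{A}_t$. By Assumption~\ref{Assumption} and the resolvent expansion employed in Sec.~\ref{HomoArg}, for sufficiently small perturbations the Fermi level stays in a common spectral or mobility gap and $\mathfrak{A}_t$ is norm-continuous in $t$; a finite covering of $[0,1]$ reduces the general case to this one. Because $+1$ is isolated in the spectrum of $\mathfrak{A}_0$, $\dim\ker(\mathfrak{A}_t-1)$ can change only at values of $t$ where eigenvalues cross $+1$, and the $\mathcal{K}_2$-Kramers structure forces such crossings to involve an even number of eigenvalues; the $\mathcal{K}_1$-antisymmetry further ties the flow at $+1$ to the flow at $-1$. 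Combining the two symmetries shows that $\dim\ker(\mathfrak{A}_t-1)$ changes in multiples of four, so $\frac{1}{2}\dim\ker(\mathfrak{A}_t-1)\bmod 2$ is preserved.

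The main obstacle is Step 3. Ordinary Kramers pairing only yields a two-fold control on crossings at $+1$, which by itself would flip the parity of $\frac{1}{2}\dim\ker(\mathfrak{A}_t-1)$ at each event. The missing factor of two must be extracted from the joint action of $\mathcal{K}_1$ and $\mathcal{K}_2$: the key technical claim is that whenever a Kramers pair approaches $+1$, its $\mathcal{K}_1$-image is itself a Kramers pair that also approaches $+1$, rather than escaping to $-1$. Verifying this requires a careful local perturbative analysis of the spectral projection of $\mathfrak{A}_t$ near the eigenvalue $1$ that is simultaneously compatible with both antiunitary operators, and this is the step where the particular structure of each $\ze_2$ entry has to be checked in detail.
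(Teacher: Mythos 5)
There is a genuine gap, and it sits exactly where you flag it: Step~3. Your robustness argument needs $\dim\ker(\mathfrak{A}_t-1)$ to change in multiples of four, and you admit you cannot establish the ``key technical claim'' that the $\mathcal{K}_1$-image of a Kramers pair approaching $+1$ also approaches $+1$ rather than $-1$. Indeed no such claim is needed, because the factor $\frac{1}{2}$ in the index does not come from a Kramers degeneracy at all. Since $A=S(\mathcal{P}_{\rm D}-U\mathcal{P}_{\rm D}U)$ commutes with $S$, the kernel of $A-1$ splits into the two chiral sectors, $\ker(A-1)=\ker(\mathcal{P}_{\rm D}-\mathcal{U}^\ast\mathcal{P}_{\rm D}\mathcal{U}-1)\oplus\ker(\mathcal{P}_{\rm D}-\mathcal{U}\mathcal{P}_{\rm D}\mathcal{U}^\ast+1)$, and the Fredholm analysis of Sec.~\ref{subsec:ChiralIndex} identifies both summands with $\ker\mathfrak{T}_\chi$ (up to a unitary) once the integer index vanishes. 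Hence $\frac{1}{2}\dim\ker(A-1)=\dim\ker\mathfrak{T}_\chi$ is automatically an integer, and the $\ze_2$ index is the parity of the kernel dimension of the \emph{sector-restricted} operator. This also shows why your Step~2 is misdirected: if a sector-preserving antiunitary $\mathcal{K}_2$ with $\mathcal{K}_2^2=-1$ commuting with $\mathfrak{A}$ existed, it would force $\dim\ker\mathfrak{T}_\chi$ itself to be even and the $\ze_2$ index would vanish identically, contradicting the existence of nontrivial phases; an operator that exchanges the sectors gives no Kramers constraint within a sector.

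The paper's actual mechanism (Sec.~\ref{sec:SSO}) is: within one $S$-sector, every eigenvalue $\lambda$ of $A$ with $0<|\lambda|<1$ has even multiplicity. This follows because there are \emph{two} maps sending $\lambda\mapsto-\lambda$ that commute with $S$, namely $B=S(1-\mathcal{P}_{\rm D}-U\mathcal{P}_{\rm D}U)$ and the class-dependent antiunitary ($\tilde{\Theta}U$ in $d=1$, $\tilde{\Theta}_-$ in $d=3$, etc.), and the case-by-case computation yields the orthogonality $\langle B\varphi,\tilde{\Theta}_-\varphi\rangle=0$ (resp.\ $\langle B\varphi,\tilde{\Theta}U\varphi\rangle=0$). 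Using the self-adjointness of $B$ this says $\varphi\perp B\tilde{\Theta}_-\varphi$, with both vectors eigenvectors of eigenvalue $\lambda$ in the same sector. Consequently eigenvalues can join or leave $\ker(\mathfrak{T}_\chi$-type$-1)$ only in pairs under a norm-continuous deformation (Sec.~\ref{HomoArg}), and the parity is preserved — no ``multiple of four'' statement, and no local perturbative analysis of the spectral projection near $+1$, is required. To repair your proof you would need to replace Steps~2 and~3 by this sector decomposition plus the even-multiplicity lemma; as written, the proposal does not close.
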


The proofs are given in Sec.~\ref{sec:SSO}. The continuity of the $\ze_2$ indices can be proved 
in the same way as in the proof of Theorem~\ref{thm:z2even}.

\begin{rem}
(i) Recently, Gro{\ss}mann and Schulz-Baldes \cite{GSB} investigated index pairings of projections and unitaries  
for understanding the indices in the periodic table. Their projection operators which are associated to 
the Dirac operator is slightly different from ours. In consequence, they succeeded in understanding 
part of the indices in the periodic table in terms of their classifying scheme. 
\smallskip

\noindent
(ii) Roughly speaking, one of the central issues of K-theoretic approaches 
to topological insulators and superconductors is   
to determine an algebraic structure of homotopically equivalent classes of gapped Hamiltonians. 
It is believed that 
there exists a one-to-one correspondence between topological phases of such gapped systems 
and abelian groups, $\ze, 2\ze, \ze_2$, in the periodic table. 
For related K-theoretic approaches to 
the periodic table for disordered systems, 
see \cite{Thiang,BCR,Kellendonk1,Kubota,BKR,Kellendonk2}.  
\end{rem}

\section{No Symmetry in Even Dimensions}
\label{sec:NosymmEvenD}
\subsection{Index}
\label{NoSymEvenDInd}

Let us first consider the present models in even dimensions, $d=2n$, $n=1,2,\ldots$. 
To begin with, we introduce  
\begin{equation}
\label{Ind(2n)DPA}
{\rm Ind}^{(2n)}(D_a,P_{\rm F}):=\frac{1}{2}\;{\rm Tr}\; A^{2n+1}
\end{equation}
with the operator, 
\begin{equation}
\label{AEvenD}
A=\gamma^{(2n+1)}(P_{\rm F}-D_aP_{\rm F}D_a). 
\end{equation}
In Appendix~\ref{traceclassA}, we show that 
the operator $A^{2n+1}$ in the right-hand side is trace class, 
and hence the quantity, ${\rm Ind}^{(2n)}(D_a,P_{\rm F})$, is well defined.   

Firstly, we show that the quantity, ${\rm Ind}^{(2n)}(D_a,P_{\rm F})$, is equal to 
the integer-valued index in Theorem~\ref{NonCommuIntThmEven} in the preceding section.  
For this purpose, we introduce another operator, 
$$
B=\gamma^{(2n+1)}(1-P_{\rm F}-D_aP_{\rm F}D_a).
$$
Note that the Dirac operator $D_a$ satisfies $(D_a)^2=1$ from the definition (\ref{defDiracEven}). 
{From} the anticommutation relation (\ref{gammaACR}), the Dirac operator $D_a$ anticommutes with $\gamma^{(2n+1)}$ as  
\begin{equation}
\label{anticommGammaD}
\{\gamma^{(2n+1)},D_a\}=0.
\end{equation}
By using these relations  and $(\gamma^{(2n+1)})^2=1$, 
one can easily show that the above two operators, $A$ and $B$, satisfy \cite{ASS2} 
\begin{equation}
\label{A2B21}
A^2+B^2=1
\end{equation}
and 
\begin{equation}
\label{anticommuABBA1}
AB+BA=0. 
\end{equation}
In addition, one has 
\begin{equation}
\label{ABcommugamma}
[A,\gamma^{(2n+1)}]=0\quad\mbox{and}\quad[B,\gamma^{(2n+1)}]=0.
\end{equation}
Following the argument of \cite{ASS2}, we show that, if the eigenvalue $\lambda$ of $A$ satisfies 
$\lambda\in(-1,0)\cup(0,1)$, then $\lambda$ and $-\lambda$ come in pairs of eigenvalues of $A$.   
Besides, their eigenvectors are mapped to each other by the operator $B$. 
Let $\varphi$ be an eigenvector of $A$ with eigenvalue $\lambda\in(0,1)$, i.e., $A\varphi=\lambda\varphi$. 
Then, the anticommutation relation (\ref{anticommuABBA1}) yields 
$$
AB\varphi=-BA\varphi=-\lambda B\varphi.
$$
Similarly, from the first relation (\ref{A2B21}), one has 
$$
B^2\varphi=(1-A^2)\varphi=(1-\lambda^2)\varphi.
$$
These imply that the operation $B$ induces an invertible map from an eigenvector $\varphi$ with eigenvalue 
$\lambda\in(0,1)$ to that with $-\lambda$. Thus, there exists a one-to-one correspondence between the vectors 
$\varphi$ and $B\varphi$, and their eigenvalues come in pairs $\pm\lambda$, provided $0<|\lambda|<1$. 
Combining these observations with the fact that $A^{2n+1}$ is trace class, one has \cite{ASS2,Simon} 
$$
{\rm Tr}\; A^{2n+1}={\rm dim}\;{\rm ker}(A-1)-{\rm dim}\;{\rm ker}\;(A+1). 
$$
Thus, we obtain the left-hand side of the integer-valued index Theorem~\ref{NonCommuIntThmEven}. 

We shall show that the quantity, ${\rm Ind}^{(2n)}(D_a,P_{\rm F})$, takes an integer value only. 
{From} the anticommutation relation (\ref{anticommGammaD}), one has the expression, 
\begin{equation}
\label{Da}
D_a=\left(\begin{matrix}0 & \mathfrak{D}_a^\ast \cr \mathfrak{D}_a & 0\cr\end{matrix}\right),
\end{equation}
in the basis which diagonalizes $\gamma^{(2n+1)}$ as 
$$
\gamma^{(2n+1)}=\left(\begin{matrix}1 & 0 \cr 0 & -1 \cr \end{matrix}\right).
$$ 
Here, $\mathfrak{D}_a$ is a unitary operator because $(D_a)^2=1$. 

Note that 
$$
D_aP_{\rm F}D_a=\left(\begin{matrix}0 & \mathfrak{D}_a^\ast \cr \mathfrak{D}_a & 0\cr\end{matrix}\right)
\left(\begin{matrix}P_{\rm F} & 0 \cr 0 & P_{\rm F}\cr\end{matrix}\right)
\left(\begin{matrix}0 & \mathfrak{D}_a^\ast \cr \mathfrak{D}_a & 0\cr\end{matrix}\right)
=\left(\begin{matrix}\mathfrak{D}_a^\ast P_{\rm F}\mathfrak{D}_a & 0\cr 0 
& \mathfrak{D}_a P_{\rm F}\mathfrak{D}_a^\ast \cr \end{matrix}\right).
$$
Therefore, one has 
$$
A=\left(\begin{matrix}P_{\rm F}-\mathfrak{D}_a^\ast P_{\rm F}\mathfrak{D}_a & 0\cr 
0 & -(P_{\rm F}-\mathfrak{D}_a P_{\rm F}\mathfrak{D}_a^\ast)\cr
\end{matrix}\right).
$$
Substituting this into the right-hand side of (\ref{Ind(2n)DPA}), we have 
\begin{equation}
\label{diagoInd2n}
{\rm Ind}^{(2n)}(D_a,P_{\rm F})=\frac{1}{2}\;{\rm Tr}
\left(\begin{matrix}(P_{\rm F}-\mathfrak{D}_a^\ast P_{\rm F}\mathfrak{D}_a)^{2n+1} & 0\cr 
0 & -(P_{\rm F}-\mathfrak{D}_a P_{\rm F}\mathfrak{D}_a^\ast)^{2n+1}\cr
\end{matrix}\right).
\end{equation}
Combining the above argument with the commutation relations (\ref{ABcommugamma}), one has 
\begin{multline*}
{\rm Tr}\>(P_{\rm F}-\mathfrak{D}_a^\ast P_{\rm F}\mathfrak{D}_a)^{2n+1}\\
={\rm dim}\>{\rm ker}\>(P_{\rm F}-\mathfrak{D}_a^\ast P_{\rm F}\mathfrak{D}_a-1) 
-{\rm dim}\>{\rm ker}\>(P_{\rm F}-\mathfrak{D}_a^\ast P_{\rm F}\mathfrak{D}_a+1) 
\end{multline*}
on the subspace. Since one has 
$$
-1\le P_{\rm F}-\mathfrak{D}_a^\ast P_{\rm F}\mathfrak{D}_a\le 1,
$$ 
one obtains that 
$$
(P_{\rm F}-\mathfrak{D}_a^\ast P_{\rm F}\mathfrak{D}_a)\varphi=\varphi
$$
if and only if $(1-P_{\rm F})\varphi=0$ and $P_{\rm F}\mathfrak{D}_aP_{\rm F}\varphi=0$. 
Therefore, the following relation is valid: \cite{ASS2} 
$$
{\rm ker}\;(P_{\rm F}-\mathfrak{D}_a^\ast P_{\rm F}\mathfrak{D}_a-1)={\rm ker}\;\mathfrak{T},
$$
where $\mathfrak{T}$ is the Fredholm operator given by 
$$
\mathfrak{T}=P_{\rm F}\mathfrak{D}_a P_{\rm F}+(1-P_{\rm F}).
$$
Similarly, one obtains 
$$
{\rm ker}\;(P_{\rm F}-\mathfrak{D}_a^\ast P_{\rm F}\mathfrak{D}_a+1)=\mathfrak{D}_a^\ast\>{\rm ker}\;\mathfrak{T}^\ast
$$
by using the fact that $\mathfrak{D}_a$ is a unitary operator. 
In consequence, these observations yield 
\begin{multline*}
{\rm Tr}\>(P_{\rm F}-\mathfrak{D}_a^\ast P_{\rm F}\mathfrak{D}_a)^{2n+1}\\
={\rm dim}\>{\rm ker}\>(P_{\rm F}-\mathfrak{D}_a^\ast P_{\rm F}\mathfrak{D}_a-1) 
-{\rm dim}\>{\rm ker}\>(P_{\rm F}-\mathfrak{D}_a^\ast P_{\rm F}\mathfrak{D}_a+1)\\
={\rm dim}\>{\rm ker}\>\mathfrak{T}-{\rm dim}\>{\rm ker}\>\mathfrak{T}^\ast  
\end{multline*}
on the subspace. Similarly, one has 
$$
{\rm Tr}\>(P_{\rm F}-\mathfrak{D}_a P_{\rm F}\mathfrak{D}_a^\ast)^{2n+1}
={\rm dim}\>{\rm ker}\>\mathfrak{T}^\ast-{\rm dim}\>{\rm ker}\>\mathfrak{T}. 
$$
Substituting these into the expression (\ref{diagoInd2n}) of the index, we have 
$$
{\rm Ind}^{(2n)}(D_a,P_{\rm F})
={\rm dim}\>{\rm ker}\; \mathfrak{T}-{\rm dim}\>{\rm ker}\; \mathfrak{T}^\ast 
$$
Thus, the index, ${\rm Ind}^{(2n)}(D_a,P_{\rm F})$, is equal to the index of the Fredholm operator $\mathfrak{T}$. 

Next, we show that the index, ${\rm Ind}^{(2n)}(D_a,P_{\rm F})$, is independent of 
the location of $a$ of the Dirac operator $D_a$. We write 
$$
\mathfrak{T}'=P_{\rm F}\mathfrak{D}_{a'}P_{\rm F}+(1-P_{\rm F})
$$
with $a'\ne a$. Then, one has 
$$
\mathfrak{T}'-\mathfrak{T}=P_{\rm F}(\mathfrak{D}_{a'}-\mathfrak{D}_a)P_{\rm F}.
$$
Therefore, it is sufficient to show the operator $\mathfrak{D}_{a'}-\mathfrak{D}_a$ is compact. 
Actually, as is well known, if the difference between two Fredholm operators is compact, 
then the two Fredholm operators give the same value of the index.   
{From} the definition (\ref{defDiracEven}) of the Dirac operator $D_a$, one has 
$$
\mbox{all the matrix elements of } \left[D_{a'}(x)-D_a(x)\right]\sim \frac{1}{|x-a|}
$$
for a large $|x-a|$. On the other hand, the $(2,1)$-component of the matrix representation (\ref{Da}) 
for the Dirac operator $D_a$ is written as 
$$
\mathfrak{D}_a=\left(\begin{matrix}0,1 \cr\end{matrix}\right)D_a
\left(\begin{matrix}1 \cr 0\cr\end{matrix}\right) 
$$
in terms of the two vectors. Clearly, one has 
$$
\mathfrak{D}_{a'}-\mathfrak{D}_a=\left(\begin{matrix}0,1 \cr\end{matrix}\right)(D_{a'}-D_a)
\left(\begin{matrix}1 \cr 0\cr\end{matrix}\right). 
$$
These observations imply that $(\mathfrak{D}_{a'}-\mathfrak{D}_a)^{2n+1}$ is trace class, 
and hence $\mathfrak{D}_{a'}-\mathfrak{D}_a$ is compact.  
Consequently, the index is independent of the location $a$. 

The rest of this section is devoted to the proof that 
the integer-valued index is continuous with respect to the norm of any perturbations. 
We write $\mathfrak{A}=P_{\rm F}-\mathfrak{D}_a^\ast P_{\rm F}\mathfrak{D}_a$. 
If the eigenvalues of $A$ change continuously 
under a continuous variation of the parameters of the Hamiltonian $H$, 
then the difference, ${\rm dim}\;{\rm ker}\;(\mathfrak{A}-1)-{\rm dim}\;{\rm ker}\;(\mathfrak{A}+1)$, must be invariant 
under the deformation of the Hamiltonian. 
Namely, we have two possibilities: (i) Some eigenvectors of $\mathfrak{A}$ are lifted from the sector spanned 
by the eigenvectors of $\mathfrak{A}$ with the eigenvalue $\lambda=1$; (ii) some eigenvectors of $\mathfrak{A}$ 
with eigenvalue $\lambda \ne 1$ become degenerate with the eigenvectors of $\mathfrak{A}$ with the eigenvalue $\lambda=1$. 
In both cases, due to the map $B$, the same number of the corresponding eigenvectors with the opposite value $-\lambda$ of eigenvalue  
are simultaneously lifted from or become degenerate with the sector spanned 
by the eigenvectors of $\mathfrak{A}$ with the eigenvalue $\lambda=-1$. 
Therefore, the difference, ${\rm dim}\;{\rm ker}\;(\mathfrak{A}-1)-{\rm dim}\;{\rm ker}\;(\mathfrak{A}+1)$, 
of the dimensions between the two sectors with $\lambda=1$ and $\lambda=-1$ is invariant under 
perturbations. 
In consequence, it is enough to prove the continuity of the eigenvalues of the operator $A$ 
under deformation of the Hamiltonian \cite{RSB,Koma2}, in order to establish that the integer-valued index is 
robust against generic perturbations. This is proved in Sec.~\ref{HomoArg}.

\subsection{Topological Invariant}
\label{TopoInvEven}

In this section, we show that the index defined by (\ref{Ind(2n)DPA}) is equal to 
the topological invariant, i.e., the Chern number.   
By using the anticommutation relation (\ref{anticommGammaD}) and $(\gamma^{(2n+1)})^2=1$, 
the index can be written as 
\begin{equation}
\label{Ind(2n)DP}
{\rm Ind}^{(2n)}(D_a,P_{\rm F})=\frac{1}{2}\;{\rm Tr}\;\gamma^{(2n+1)}(P_{\rm F}-D_aP_{\rm F}D_a)^{2n+1}.
\end{equation}
For the purpose of the present section, we introduce an approximate index as 
$$
{\rm Ind}^{(2n)}(D_a,P_{\rm F};R)
:=\frac{1}{2}\;{\rm Tr}\;\gamma^{(2n+1)}(P_{\rm F}-D_aP_{\rm F}D_a)^{2n+1}\chi_R^a,
$$
where $\chi_R^a$ is the cutoff function given by 
\begin{equation}
\label{chiRa}
\chi_R^a(x):=\begin{cases} 1, & \text{$|x-a|\le R$};\\
0, & \text{otherwise}
\end{cases}
\end{equation}
with a large positive $R$. 
Since the operator $(P_{\rm F}-D_aP_{\rm F}D_a)^{2n+1}$ is trace class, one has 
$$
{\rm Ind}^{(2n)}(D_a,P_{\rm F})=\lim_{R\nearrow\infty}{\rm Ind}^{(2n)}(D_a,P_{\rm F};R).
$$

Let $\Omega\subset\re^d$ be a $d$-dimensional rectangular region whose center of gravity is the origin of $\re^d$. 
Further, we introduce 
\begin{equation}
\label{IndDaPFOmegaR}
{\rm Ind}^{(2n)}(D_a,P_{\rm F};\Omega,R):=\frac{1}{2|\Omega|}\int_\Omega dv(a) 
\;{\rm Tr}\;\gamma^{(2n+1)}(P_{\rm F}-D_aP_{\rm F}D_a)^{2n+1}\chi_R^a,
\end{equation}
where $|\Omega|$ denotes the volume of the region $\Omega$, and $v$ is the usual Lebesgue measure.  
Since the index is independent of $a$ as we showed in the above, we have 
$$
\lim_{R\nearrow\infty}{\rm Ind}^{(2n)}(D_a,P_{\rm F};\Omega,R)
={\rm Ind}^{(2n)}(D_a,P_{\rm F}).
$$
Clearly, 
$$
\lim_{\Omega\nearrow\re^d}\lim_{R\nearrow\infty}{\rm Ind}^{(2n)}(D_a,P_{\rm F};\Omega,R)
={\rm Ind}^{(2n)}(D_a,P_{\rm F}).
$$
The order of this double limit is interchangeable as follows: 

\begin{lem}
We have the relation,  
\label{lem:DlimtROmegaInd}
$$
\lim_{R\nearrow\infty}\lim_{\Omega\nearrow\re^d}{\rm Ind}^{(2n)}(D_a,P_{\rm F};\Omega,R)
={\rm Ind}^{(2n)}(D_a,P_{\rm F}).
$$
\end{lem}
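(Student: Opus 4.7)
The plan is to upgrade the pointwise convergence $\lim_{R\to\infty} f_R(a) = 2\cdot {\rm Ind}^{(2n)}(D_a,P_{\rm F})$ (established in Sec.~\ref{NoSymEvenDInd}) to uniform convergence in $a$, where
$$f_R(a) := {\rm Tr}\,\gamma^{(2n+1)}(P_{\rm F}-D_aP_{\rm F}D_a)^{2n+1}\chi_R^a.$$
The limiting value is a constant independent of $a$, so once uniform convergence is available, for every sufficiently large $R$ the integrand $f_R(a)$ lies within a uniform error of $2\cdot {\rm Ind}^{(2n)}(D_a,P_{\rm F})$, and consequently $\frac{1}{2|\Omega|}\int_\Omega f_R(a)\,dv(a)$ stays within that same error of ${\rm Ind}^{(2n)}(D_a,P_{\rm F})$ for \emph{every} $\Omega$. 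This simultaneously guarantees that the inner limit $\Omega\nearrow\re^d$ exists (for all large $R$) and that the outer limit $R\nearrow\infty$ returns ${\rm Ind}^{(2n)}(D_a,P_{\rm F})$.

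Writing the tail of the defining series as $2\cdot{\rm Ind}^{(2n)}(D_a,P_{\rm F}) - f_R(a) = \sum_{x:|x-a|>R} T(x,a)$, with $T(x,a):={\rm tr}\,[\gamma^{(2n+1)}(P_{\rm F}-D_aP_{\rm F}D_a)^{2n+1}](x,x)$ and ${\rm tr}$ denoting the fibrewise trace on $\co^M\otimes\co^{2^n}$, the problem is reduced to proving a uniform pointwise decay estimate of the form $|T(x,a)|\le C|x-a|^{-(d+1)}$ for $|x-a|$ large. Summing such an estimate on the complement of the ball of radius $R$ in $\ze^d$ gives $|2\cdot{\rm Ind}^{(2n)}-f_R(a)|\le C'/R$ uniformly in $a$, so that
$$\Bigl|\frac{1}{2|\Omega|}\int_\Omega f_R(a)\,dv(a) - {\rm Ind}^{(2n)}(D_a,P_{\rm F})\Bigr|\le \frac{C'}{2R}$$
for every rectangular $\Omega$ and every large $R$. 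Letting first $\Omega\nearrow\re^d$ and then $R\nearrow\infty$ delivers the claim.

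To prove the pointwise decay of $T(x,a)$, I would start from the factorisation $(P_{\rm F}-D_aP_{\rm F}D_a)(y,y') = P_{\rm F}(y,y')\otimes [1-D_a(y)D_a(y')]$, and use the identity $1-D_a(y)D_a(y') = D_a(y)[D_a(y)-D_a(y')]$ which follows from $D_a(y)^2 = 1$ (a consequence of (\ref{gammaACR}) and (\ref{defDiracEven})), together with the geometric estimate
$$\|D_a(y)-D_a(y')\|\le \frac{C|y-y'|}{\min(|y-a|,|y'-a|)}$$
and the exponential decay $\|P_{\rm F}(y,y')\|\le C e^{-|y-y'|/\xi_0}$ guaranteed by Assumption~\ref{Assumption} (see (\ref{decayPF})). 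Expanding $[(P_{\rm F}-D_aP_{\rm F}D_a)^{2n+1}](x,x)$ as a sum over intermediate sites $y_1,\ldots,y_{2n}$, I would separate two regimes. In regime (A), all $|y_j-x|\le |x-a|/2$, so $\min(|y_{j-1}-a|,|y_j-a|)\ge |x-a|/2$ and every one of the $2n+1$ factors contributes a gain $|x-a|^{-1}$; the residual sum, written in terms of differences $\Delta_j = y_j - y_{j-1}$ subject to $\sum_j\Delta_j = 0$, converges by the exponential weights and yields a bound $C|x-a|^{-(2n+1)} = C|x-a|^{-(d+1)}$. In regime (B), some $|y_j-x|>|x-a|/2$, and then $\sum_k|y_{k-1}-y_k|\ge |y_j-x|$ combined with the chain $\prod_k e^{-|y_{k-1}-y_k|/\xi_0}$ gives a factor $e^{-|x-a|/(2\xi_0)}$, which is smaller than any inverse polynomial. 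The two regimes together yield $|T(x,a)|\le C|x-a|^{-(d+1)}$ uniformly in $a$.

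The main obstacle is the combinatorial bookkeeping in regime (B): the factor $1-D_a(y)D_a(y')$ by itself does not gain anything when one of $y,y'$ happens to be near $a$, so the uniform $|x-a|$-decay must be extracted solely from the chain of exponentials of $P_{\rm F}$, and one has to verify that the $(2n+1)$-fold product is dominated as claimed even when some intermediate sites drift far from $x$. These estimates are of the same kind as those underlying the trace-class statement for $(P_{\rm F}-D_aP_{\rm F}D_a)^{2n+1}$ in Appendix~\ref{traceclassA}, so no genuinely new input is needed---only a refinement in which the dependence on $|x-a|$ is made explicit.
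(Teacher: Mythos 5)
Your proposal is correct and follows essentially the same route as the paper's proof in Appendix~C: the paper's Lemma~\ref{lem:TrAnRbound} is exactly your uniform-in-$a$ tail bound $\bigl|{\rm Tr}\,\gamma^{(2n+1)}(P_{\rm F}-D_aP_{\rm F}D_a)^{2n+1}(1-\chi_R^a)\bigr|\le{\rm Const.}/R$, proved by the same two-regime split of the intermediate sites (exponential decay of the $P_{\rm F}$-chain when some site drifts away, and a $1/|x-a|$ gain per commutator factor otherwise, using (\ref{commuPD}), (\ref{diffDabound}) and (\ref{decayPF})). The only cosmetic difference is that you delimit the regimes by proximity to $x$ while the paper uses proximity to $a$; combined with the $a$-independence of the index and averaging over $\Omega$ as in (\ref{IndRIndinfty}), the argument is the same.
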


The proof is given in Appendix~\ref{Appendix:DlimtROmegaInd}. 

Let $\Lambda\subset\ze^d$ be a finite lattice such that $\Lambda$ satisfies 
$\Lambda\subset \Omega$ and $|\Lambda|=|\Omega|$, where $|\Lambda|$ denotes 
the number of the sites in the lattice $\Lambda$.  
We introduce another approximate index as 
\begin{equation}
\label{IndDaPFLambdaR}
\widetilde{{\rm Ind}}^{(2n)}(D_a,P_{\rm F};\Lambda,R):=
\frac{1}{2|\Lambda|}\int_{\re^d}dv(a)\>{\rm Tr}\> \gamma^{(2n+1)}
(P_{\rm F}-D_aP_{\rm F}D_a)^{2n+1}\chi_R^a\chi_\Lambda,
\end{equation}
where
\[
\chi_\Lambda(u):=\begin{cases} 1, & \text{$u\in \Lambda$};\\
0, & \text{otherwise}.
\end{cases}
\]
Here, we stress that the integral with respect to $a$ over $\re^d$ is well defined 
because of the two cutoff function $\chi_R^a$ and $\chi_\Lambda$. 
Then, we have: 

\begin{lem}
\label{lem:IndtildeInd}
For any fixed $R$, the following two limits coincide with each other as 
\begin{equation}
\lim_{\Omega\nearrow\re^d}{\rm Ind}^{(2n)}(D_a,P_{\rm F};\Omega,R)=
\lim_{\Lambda\nearrow\ze^d}\widetilde{{\rm Ind}}^{(2n)}(D_a,P_{\rm F};\Lambda,R)
\end{equation}
in the sense of the accumulation points.
\end{lem}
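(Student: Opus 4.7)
The strategy is to rewrite both quantities in a common form by expanding the operator trace into a sum of matrix elements over lattice sites and interchanging the order of the $a$-integral with this sum; the two expressions will then differ only by a boundary term that is negligible in the limit.

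Set
\[
F(a,x):=\mathrm{tr}_{\mathrm{int}}\bigl[\gamma^{(2n+1)}(P_{\rm F}-D_aP_{\rm F}D_a)^{2n+1}\bigr](x,x),
\]
where $\mathrm{tr}_{\mathrm{int}}$ denotes the trace over the $M$-dimensional internal space and $(x,x)$ is the diagonal matrix element at the site $x\in\ze^d$. Because $\|P_{\rm F}-D_aP_{\rm F}D_a\|\le 2$, $\|\gamma^{(2n+1)}\|=1$ and $D_a$ is unitary (after the $\gamma^{(2n+1)}$-grading of \eqref{Da}), the diagonal matrix element is bounded in operator norm, so $F$ admits a uniform bound $|F(a,x)|\le C$ independent of $a$ and $x$.

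First I would insert this identity in both approximate indices. Using the support property of $\chi_R^a$ and interchanging the $a$-integration with the (absolutely convergent) $x$-summation gives
\[
\mathrm{Ind}^{(2n)}(D_a,P_{\rm F};\Omega,R)=\frac{1}{2|\Omega|}\sum_{x\in\ze^d}\int_{\Omega\cap B_R(x)}F(a,x)\,dv(a),
\]
\[
\widetilde{\mathrm{Ind}}^{(2n)}(D_a,P_{\rm F};\Lambda,R)=\frac{1}{2|\Lambda|}\sum_{x\in\Lambda}\int_{B_R(x)}F(a,x)\,dv(a),
\]
where $B_R(x):=\{a\in\re^d:|a-x|\le R\}$. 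Next I would split the sites into three groups: the \emph{bulk} $\Lambda_{\mathrm{bulk}}:=\{x\in\Lambda:B_R(x)\subset\Omega\}$, on which the two integrands coincide and yield $\int_{B_R(x)}F(a,x)\,dv(a)$; the \emph{inner boundary} $\Lambda_{\mathrm{bdy}}:=\Lambda\setminus\Lambda_{\mathrm{bulk}}$; and the \emph{outer boundary} $\{x\in\ze^d\setminus\Omega:B_R(x)\cap\Omega\ne\emptyset\}$. Using $|\Lambda|=|\Omega|$ and the uniform bound on $F$, the difference between the two expressions is therefore controlled, for fixed $R$, by
\[
\frac{1}{|\Omega|}\cdot C\cdot v(B_R)\cdot \#\bigl\{x\in\ze^d:\mathrm{dist}(x,\partial\Omega)\le R\bigr\}\ \lesssim\ \frac{R\,|\partial\Omega|}{|\Omega|}.
\]

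For rectangular regions $\Omega\subset\re^d$, $|\partial\Omega|/|\Omega|\to 0$ as $\Omega\nearrow\re^d$, so the boundary term vanishes in the limit. Along any sequence $\Omega_j\nearrow\re^d$ of rectangular regions the two numerical sequences therefore share the same accumulation points, which is precisely the claimed identity. The main subtlety is not analytic but bookkeeping: one must verify that the bulk/boundary decomposition cleanly cancels the dominant contribution and that the remaining boundary layer (of volume $\sim R|\partial\Omega|$) is integrable against the bounded density $F$; this is where the a priori bound on diagonal matrix elements, and the fact that $R$ is held fixed while $\Omega\nearrow\re^d$, are indispensable.
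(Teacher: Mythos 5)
Your proof is correct and follows essentially the same route as the paper's Appendix~D argument: both identify the common bulk contribution (the integral over $\Omega$ against $\chi_R^a\chi_\Lambda$, which in your language is the sum over bulk sites), bound the diagonal matrix elements of $\gamma^{(2n+1)}(P_{\rm F}-D_aP_{\rm F}D_a)^{2n+1}$ uniformly by a constant, and control the remaining boundary-layer contribution by ${\rm Const.}\,R^{d+1}|\partial\Omega|/|\Omega|$, which vanishes as $\Omega\nearrow\re^d$ for fixed $R$. The only cosmetic difference is that you organize the estimate as a bulk/boundary partition of lattice sites rather than as the paper's operator-level splitting $\chi_R^a=\chi_R^a\chi_\Lambda+\chi_R^a(1-\chi_\Lambda)$ together with $\int_{\re^d}=\int_\Omega+\int_{\re^d\setminus\Omega}$.
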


The proof is given in Appendix~\ref{proof:lem:IndtildeInd}. 

Since $\chi_\Lambda$ is trace class, one has  
\begin{multline*}
\widetilde{{\rm Ind}}^{(2n)}(D_a,P_{\rm F};\Lambda,R)=
\frac{1}{2|\Lambda|}\int_{\re^d}dv(a)\>{\rm Tr}\> \gamma^{(2n+1)}
(P_{\rm F}-D_aP_{\rm F}D_a)^{2n+1}\chi_R^a\chi_\Lambda\\
=\frac{1}{2|\Lambda|}\int_{\re^d}dv(a)\>{\rm Tr}\> \gamma^{(2n+1)}
P_{\rm F}(P_{\rm F}-D_aP_{\rm F}D_a)^{2n}\chi_R^a\chi_\Lambda\\
-\frac{1}{2|\Lambda|}\int_{\re^d}dv(a)\>{\rm Tr}\> \gamma^{(2n+1)}
D_aP_{\rm F}D_a(P_{\rm F}-D_aP_{\rm F}D_a)^{2n}\chi_R^a\chi_\Lambda.
\end{multline*}
The second term in the right-hand side is written 
\begin{align*}
&\frac{1}{2|\Lambda|}\int_{\re^d}dv(a)\>{\rm Tr}\> \gamma^{(2n+1)}
D_aP_{\rm F}D_a(P_{\rm F}-D_aP_{\rm F}D_a)^{2n}\chi_R^a\chi_\Lambda\\
&=\frac{1}{2|\Lambda|}\int_{\re^d}dv(a)\>{\rm Tr}\> \gamma^{(2n+1)}
D_aP_{\rm F}(P_{\rm F}-D_aP_{\rm F}D_a)^{2n}D_a\chi_R^a\chi_\Lambda\\
&=\frac{1}{2|\Lambda|}\int_{\re^d}dv(a)\>{\rm Tr}\> D_a\gamma^{(2n+1)}
D_aP_{\rm F}(P_{\rm F}-D_aP_{\rm F}D_a)^{2n}\chi_R^a\chi_\Lambda\\
&=-\frac{1}{2|\Lambda|}\int_{\re^d}dv(a)\>{\rm Tr}\> \gamma^{(2n+1)}
P_{\rm F}(P_{\rm F}-D_aP_{\rm F}D_a)^{2n}\chi_R^a\chi_\Lambda,
\end{align*}
where we have used 
$$
D_a(P_{\rm F}-D_aP_{\rm F}D_a)=-(P_{\rm F}-D_aP_{\rm F}D_a)D_a,
$$
the property of the trace, the anticommutation relation (\ref{anticommGammaD}), and $D_a^2=1$. 
Therefore, we have 
\begin{equation}
\label{tildeIndDaPFLambdaR}
\widetilde{{\rm Ind}}^{(2n)}(D_a,P_{\rm F};\Lambda,R)=
\frac{1}{|\Lambda|}\int_{\re^d}dv(a)\>{\rm Tr}\> \gamma^{(2n+1)}
P_{\rm F}(P_{\rm F}-D_aP_{\rm F}D_a)^{2n}\chi_R^a\chi_\Lambda.
\end{equation}
Here, we stress the following: Since the operator $(P_{\rm F}-D_aP_{\rm F}D_a)^{2n}$ may not be trace class, 
we need the cutoff $\chi_R^a$. Instead of the cutoff function, an assumption of strong disorder 
about localization length was required in \cite{PLB}. 

In order to handle the right-hand side of (\ref{tildeIndDaPFLambdaR}), we introduce the complete orthonormal system of 
wavefunctions $\zeta_u^\alpha$ which are defined by  
\begin{equation}
\label{ONSzeta}
\zeta_u^\alpha:=\chi_{\{u\}}\otimes \Phi^\alpha,\quad \mbox{for \ } u\in\ze^d,\ \alpha=1,2,\ldots,d_{\rm s},   
\end{equation}
where $\chi_{\{u \}}$ is the characteristic function (\ref{chix}) of the single lattice site $u$, 
and the wavefunctions $\Phi^\alpha$ for the degree of freedom for spin or orbital 
are an orthonormal basis whose dimension of the Hilbert space is given by $d_{\rm s}$. 
Note that 
$$
P_{\rm F}-D_aP_{\rm F}D_a=[P_{\rm F},D_a]D_a=-D_a[P_{\rm F},D_a].
$$
By using this identity, $D_a^2=1$ and the identity (\ref{commuPD}) in Appendix~\ref{traceclassA}, 
the right-hand side of the index of (\ref{tildeIndDaPFLambdaR}) is written as  
\begin{align}
&\hspace{-0.3cm}\widetilde{{\rm Ind}}^{(2n)}(D_a,P_{\rm F};\Lambda,R)\nonumber\\ \nonumber 
&=\frac{1}{|\Lambda|}\int_{\re^d}dv(a)\>{\rm Tr}\> \gamma^{(2n+1)}
P_{\rm F}(P_{\rm F}-D_aP_{\rm F}D_a)^{2n}\chi_R^a\chi_\Lambda\\ \nonumber
&=\frac{(-1)^n}{|\Lambda|}\int_{\re^d}dv(a)\>{\rm Tr}\> \gamma^{(2n+1)}
P_{\rm F}[P_{\rm F},D_a]^{2n}\chi_R^a\chi_\Lambda\\ \nonumber 
&=\frac{(-1)^n}{|\Lambda|}\int_{\re^d}dv(a)\>\sum_{u_1,\ldots,u_{2n+1}}\sum_{\alpha_1,\ldots,\alpha_{2n+1}}
\langle\zeta_{u_{2n+1}}^{\alpha_{2n+1}},P_{\rm F}\zeta_{u_1}^{\alpha_1}\rangle\\ \nonumber 
&\times \langle\zeta_{u_1}^{\alpha_1},P_{\rm F}\zeta_{u_2}^{\alpha_2}\rangle \cdots 
\langle\zeta_{u_{2n}}^{\alpha_{2n}},P_{\rm F}\zeta_{u_{2n+1}}^{\alpha_{2n+1}}\rangle
\chi_R^a(u_{2n+1})\chi_\Lambda(u_{2n+1})\\ \nonumber 
&\times{\rm tr}^{(\gamma)}\gamma^{(2n+1)}
\left(\frac{u_1-a}{|u_1-a|}-\frac{u_2-a}{|u_2-a|}\right)\cdot\gamma\\ 
&\times \left(\frac{u_2-a}{|u_2-a|}-\frac{u_3-a}{|u_3-a|}\right)\cdot\gamma
\cdots \left(\frac{u_{2n}-a}{|u_{2n}-a|}-\frac{u_{2n+1}-a}{|u_{2n+1}-a|}\right)\cdot\gamma,
\label{tildeInd}
\end{align}
where the trace with respect to the gamma matrices is denoted by ${\rm tr}^{(\gamma)}$. 

We write 
\begin{align}
\mathcal{T}^{(\gamma)}&={\rm tr}^{(\gamma)}\gamma^{(2n+1)}
\left(\frac{u_1-a}{|u_1-a|}-\frac{u_2-a}{|u_2-a|}\right)\cdot\gamma 
\left(\frac{u_2-a}{|u_2-a|}-\frac{u_3-a}{|u_3-a|}\right)\cdot\gamma\nonumber\\ 
&\qquad\qquad\qquad\cdots \left(\frac{u_{2n}-a}{|u_{2n}-a|}-\frac{u_{2n+1}-a}{|u_{2n+1}-a|}\right)\cdot\gamma
\label{tracegamma}
\end{align}
for short. Following \cite{PLB}, we calculate this quantity, $\mathcal{T}^{(\gamma)}$ in the expression (\ref{tildeInd}) 
of the index. To begin with, let us list useful identities for the gamma matrices:\cite{PSB} 
\begin{itemize}
\item ${\rm tr}^{(\gamma)}\gamma^{(j_1)}\gamma^{(j_2)}\cdots\gamma^{(j_{2m+1})}=0$ if $m<n$;
\item ${\rm tr}^{(\gamma)}\gamma^{(\sigma_1)}\gamma^{(\sigma_2)}\cdots\gamma^{(\sigma_{2n+1})}=(-1)(-2i)^n(-1)^\sigma$ for  
the permutation, 
\begin{equation}
\label{permutationsigma}
\sigma=\left(\begin{matrix}1,2,\ldots,2n+1\cr \sigma_1,\sigma_2,\ldots,\sigma_{2n+1}\cr \end{matrix}\right).
\end{equation}
\end{itemize}
Using these identities, we have 
\begin{multline*}
\mathcal{T}^{(\gamma)}=\sum_{\ell=1}^{2n+1}(-1)^{\ell-1}{\rm tr}^{(\gamma)}\gamma^{(2n+1)}\frac{u_1-a}{|u_1-a|}\cdot 
\gamma\cdots 
\underline{\frac{u_\ell-a}{|u_\ell-a|}\cdot\gamma}\cdots\frac{u_{2n+1}-a}{|u_{2n+1}-a|}\cdot\gamma\\
=(-2i)^n\sum_{\ell=1}^{2n+1}(-1)^{\ell}\sum_\sigma(-1)^\sigma \frac{u_1^{(\sigma_1)}-a^{(\sigma_1)}}{|u_1-a|}
\frac{u_2^{(\sigma_2)}-a^{(\sigma_2)}}{|u_2-a|}\\
\cdots\frac{u_{\ell-1}^{(\sigma_{\ell-1})}-a^{(\sigma_{\ell-1})}}{|u_{\ell-1}-a|}
\frac{u_{\ell+1}^{(\sigma_{\ell})}-a^{(\sigma_{\ell})}}{|u_{\ell+1}-a|}
\cdots\frac{u_{2n+1}^{(\sigma_{2n})}-a^{(\sigma_{2n})}}{|u_{2n}-a|},
\end{multline*}
where the underline means that the factor is omitted, and the permutation $\sigma$ is given by 
(\ref{permutationsigma}). 
The determinant of the matrix which consists of the $2n$ vectors, $u_1-a,\cdots,\underline{u_\ell-a},\cdots,u_{2n+1}-a$, 
in the right-hand side is written 
\begin{align*}
&{\rm det}\left(\frac{u_1-a}{|u_1-a|},
\cdots,\underline{\frac{u_\ell-a}{|u_\ell-a|}},
\cdots,\frac{u_{2n+1}-a}{|u_{2n+1}-a|}\right)\\
&=
\sum_\sigma(-1)^\sigma \frac{u_1^{(\sigma_1)}-a^{(\sigma_1)}}{|u_1-a|}
\cdots\frac{u_{\ell-1}^{(\sigma_{\ell-1})}-a^{(\sigma_{\ell-1})}}{|u_{\ell-1}-a|}
\frac{u_{\ell+1}^{(\sigma_\ell)}-a^{(\sigma_\ell)}}{|u_{\ell+1}-a|}
\cdots\frac{u_{2n+1}^{(\sigma_{2n})}-a^{(\sigma_{2n})}}{|u_{2n+1}-a|}.
\end{align*}
According to \cite{Stein}, the determinant is equal to $(2n)!$ times the oriented volume of the simplex, 
$$
\left[0,\frac{u_1-a}{|u_1-a|},
\cdots,\underline{\frac{u_\ell-a}{|u_\ell-a|}},
\cdots,\frac{u_{2n+1}-a}{|u_{2n+1}-a|}\right].
$$
Therefore, the quantity, $\mathcal{T}^{(\gamma)}$, can be written as  
\begin{multline*}
\mathcal{T}^{(\gamma)}\\=(-2i)^n(2n)!\sum_{\ell=1}^{2n+1}(-1)^{\ell}{\rm Vol}
\left[0,\frac{u_1-a}{|u_1-a|},
\cdots,\underline{\frac{u_\ell-a}{|u_\ell-a|}},
\cdots,\frac{u_{2n+1}-a}{|u_{2n+1}-a|}\right],
\end{multline*}
where ${\rm Vol}[\cdots]$ denotes the oriented volume of the simplex. 
We note that 
\begin{align*}
&{\rm Vol}\left[0,\frac{u_1-a}{|u_1-a|},\cdots,\underline{\frac{u_\ell-a}{|u_\ell-a|}},
\cdots,\frac{u_{2n+1}-a}{|u_{2n+1}-a|}\right]\\
&={\rm Vol}\left[a,a+\frac{u_1-a}{|u_1-a|},\cdots,\underline{a+\frac{u_\ell-a}{|u_\ell-a|}},
\cdots,a+\frac{u_{2n+1}-a}{|u_{2n+1}-a|}\right]\\
&=(-1)^{\ell-1}{\rm Vol}\left[a+\frac{u_1-a}{|u_1-a|},\cdots,a,\cdots,a+\frac{u_{2n+1}-a}{|u_{2n+1}-a|}\right],
\end{align*}
where $a$ is located at the $\ell$-th position. We write 
$$
\mathfrak{S}_\ell(a)=\left[a+\frac{u_1-a}{|u_1-a|},\cdots,a,\cdots,a+\frac{u_{2n+1}-a}{|u_{2n+1}-a|}\right]
$$
for the simplex in the above right-hand side. By using these relations, $\mathcal{T}^{(\gamma)}$ can be expressed as 
\begin{equation}
\label{tracegammaVol}
\mathcal{T}^{(\gamma)}=-(-2i)^n(2n)!\sum_{\ell=1}^{2n+1}{\rm Vol}[\mathfrak{S}_\ell(a)] 
\end{equation}
in terms of the volume of the simplex $\mathfrak{S}_\ell(a)$. In order to calculate the right-hand side, 
we introduce the simplex,  
$$
\mathfrak{S}:=[u_1,u_2,\ldots,u_{2n+1}].
$$
If $a\in\mathfrak{S}$, then the orientation of the simplex $\mathfrak{S}_\ell(a)$ is the same as that of $\mathfrak{S}$. 
We write $\mathfrak{B}(a)$ for the unit ball centered at $a$, and define the intersection $\mathfrak{B}_\ell(a)$ between 
the ball and a large simplex as 
$$
\mathfrak{B}_\ell(a):=\lim_{L\nearrow\infty}\mathfrak{B}(a)\cap 
\left[a+L\frac{u_1-a}{|u_1-a|},\cdots,a,\cdots,a+L\frac{u_{2n+1}-a}{|u_{2n+1}-a|}\right]. 
$$ 
Here, we define the orientation of $\mathfrak{B}_\ell(a)$ by the orientation of the simplex in 
the right-hand side.  

Fix $u_{2n+1}$, and consider 
\begin{equation}
\label{frakI}
\mathfrak{I}:=\int_{\re^d}dv(a)\chi_R^a(u_{2n+1})\sum_{\ell=1}^{2n+1}{\rm Vol}[\mathfrak{S}_\ell(a)].
\end{equation}
We decompose this into two parts as 
$$
\mathfrak{I}=\mathfrak{I}_1+\mathfrak{I}_2
$$
with 
$$
\mathfrak{I}_1:=\int_{\re^d}dv(a)\chi_R^a(u_{2n+1})
\sum_{\ell=1}^{2n+1}\left\{{\rm Vol}[\mathfrak{S}_\ell(a)]-{\rm Vol}[\mathfrak{B}_\ell(a)]\right\}
$$
and 
$$
\mathfrak{I}_2:=\int_{\re^d}dv(a)\chi_R^a(u_{2n+1})
\sum_{\ell=1}^{2n+1}{\rm Vol}[\mathfrak{B}_\ell(a)].
$$
{From} (\ref{tracegamma}), (\ref{tracegammaVol}), and (\ref{frakI}), 
the index of (\ref{tildeInd}) can be decomposed into two parts as 
\begin{equation}
\label{tildeInd2}
\widetilde{{\rm Ind}}^{(2n)}(D_a,P_{\rm F};\Lambda,R) 
=-{(2i)^n(2n)!}(\mathfrak{L}_1+\mathfrak{L}_2)
\end{equation}
with 
\begin{multline}
\mathfrak{L}_j=\frac{1}{|\Lambda|}\sum_{u_1,\ldots,u_{2n+1}}\sum_{\alpha_1,\ldots,\alpha_{2n+1}}
\langle\zeta_{u_{2n+1}}^{\alpha_{2n+1}},P_{\rm F}\zeta_{u_1}^{\alpha_1}\rangle
\langle\zeta_{u_1}^{\alpha_1},P_{\rm F}\zeta_{u_2}^{\alpha_2}\rangle \\
\cdots 
\langle\zeta_{u_{2n}}^{\alpha_{2n}},P_{\rm F}\zeta_{u_{2n+1}}^{\alpha_{2n+1}}\rangle
\chi_\Lambda(u_{2n+1})\mathfrak{I}_j
\end{multline}
for $j=1,2$. 

Consider first the contribution of $\mathfrak{L}_2$. 
As shown in \cite{PLB}, the following holds: 
$$
\sum_{\ell=1}^{2n+1}{\rm Vol}[\mathfrak{B}_\ell(a)]
=\begin{cases} \pi^n/n! & \text{if $a$ inside $\mathfrak{S}$};\\
0 & \text{if $a$ outside $\mathfrak{S}$}.
\end{cases}
$$
Here, $\pi^n/n!$ is the volume of the $2n$-dimensional unit ball. 
Therefore, one has 
$$
\mathfrak{I}_2=\frac{\pi^n}{n!}{\rm Vol}[\mathfrak{S}^{(R)}],
$$
where $\mathfrak{S}^{(R)}:=\mathfrak{S}\cap \{u|\> |u-u_{2n+1}|\le R\}$. 
This implies that the contribution of $\mathfrak{L}_2$ 
in the index can be shown to converge to some value in the limit $R\nearrow\infty$ in the same way as in 
the proof of Lemma~\ref{lem:DlimtROmegaInd}. In particular, the correction is exponentially small in a finite large $R$. 

Next, we estimate the contribution of $\mathfrak{L}_1$ in the index of (\ref{tildeInd2}) for a large $R$. 
Let us consider ${\rm Vol}[\mathfrak{S}_\ell(a)]-{\rm Vol}[\mathfrak{B}_\ell(a)]$. 
Let $a^\ast$ be the inversion of $a$ relative to the center of 
the facet $u_1,\ldots,\underline{u_\ell},\ldots,u_{2n+1}$ of the simplex $\mathfrak{S}$. 
Then, the oriented volume ${\rm Vol}[\mathfrak{S}_\ell(a^\ast)]-{\rm Vol}[\mathfrak{B}_\ell(a^\ast)]$ 
has the opposite sign to that of ${\rm Vol}[\mathfrak{S}_\ell(a)]-{\rm Vol}[\mathfrak{B}_\ell(a)]$. 
Therefore, these two contributions cancel out each other in the integral of $\mathfrak{I}_1$ 
if both $a$ and $a^\ast$ are inside the cutoff region $\{u|\> |u-u_{2n+1}|\le R\}$. 
We can find $s$ and $s'$ such that $0<s<s'<1$, and that 
$a^\ast$ is inside the cutoff region $\{u|\> |u-u_{2n+1}|\le R\}$ 
for $u_i$ satisfying $|u_i-u_{2n+1}|\le sR$, $i=1,2,\ldots,\underline{\ell},\ldots,2n$, and for $a$ 
satisfying $|a-u_{2n+1}|\le s'R$. If some $u_i$ satisfies $|u_i-u_{2n+1}|\ge sR$, then 
the corresponding contribution can be estimated in the same way as in 
the proof of Lemma~\ref{lem:DlimtROmegaInd}, and the correction is exponentially small in a finite large $R$. 
Therefore, it is sufficient to treat the case that $a$ satisfies $|a-u_{2n+1}|\ge s'R$. 

Let us consider the triangle which consists of three points, $a$, $u_i$ and $u_j$. 
We write $r=|a-u_{2n+1}|$, and assume $r=|a-u_{2n+1}|\ge s'R$ with the above positive constant $s'$. 
Further, from the above observation, we assume that $|u_i-u_j|=o(R)$, where $o(\cdots)$ denotes the small order. 
We write $\theta=\angle(u_i,a,u_j)$ for the angle at the vertex $a$ in the triangle. 
For a large $R$, the angle $\theta$ behaves as 
$$
\theta\sim {|u_i-u_j|}/{r}. 
$$
Therefore, as in \cite{PLB}, one can show  
$$
\left|{\rm Vol}[\mathfrak{S}_\ell(a)]-{\rm Vol}[\mathfrak{B}_\ell(a)]\right|
\le {\rm Const.}\left[\frac{\max_{i,j}\{|u_i-u_j|\}}{r}\right]^{2n+1}.
$$
When integrating the upper bound with respect to $a$ over the cutoff region, the integrated value 
does not necessarily converge to a finite value in the limit $R\nearrow\infty$ 
because $|u_i-u_j|$ may tend to infinity as $r$ goes to infinity.  
But, from Assumption~\ref{Assumption}, one has 
\begin{equation}
\label{decayPF}
\left|\langle \zeta_u^\alpha,P_{\rm F}\zeta_v^\beta\rangle\right|\le{\rm Const.}e^{-\kappa|u-v|}
\end{equation}
with a positive constant $\kappa$. Clearly, the following bound holds:   
$$
|u_i-u_j|\exp[-\kappa'|u_i-u_j|]\le{\rm Const.}
$$
for a positive constant $\kappa'$. Therefore, one has 
\begin{align*}
&\int_{s'R\le r\le R} dv(a)\left\{\frac{\max_{i,j}\{|u_i-u_j|\exp[-\kappa'|u_i-u_j|]\}}{r}\right\}^{2n+1}\\
&\le {\rm Const.}\int_{s'R}^\infty dr r^{2n-1}\frac{1}{r^{2n+1}}
\le{{\rm Const.}}/{R}.
\end{align*}
Combining these observations with the argument in the proof of Lemma~\ref{lem:DlimtROmegaInd}, 
the contribution of $\mathfrak{L}_1$ in the index can be estimated as  
\begin{equation}
\label{L1bound}
|\mathfrak{L}_1|\le{\rm Const.}/{R}. 
\end{equation}
The constant in the right-hand side does not depend on the lattice $\Lambda$. 
Consequently, we obtain
\begin{multline}
\label{tildeIndRinfty}
\lim_{R\nearrow\infty}\widetilde{{\rm Ind}}^{(2n)}(D_a,P_{\rm F};\Lambda,R)\\
=-\frac{(2\pi i)^n(2n)!}{n!|\Lambda|}\sum_{u_1,\ldots,u_{2n}}\sum_{u_{2n+1}\in\Lambda}\sum_{\alpha_1,\ldots,\alpha_{2n+1}}
\langle\zeta_{u_{2n+1}}^{\alpha_{2n+1}},P_{\rm F}\zeta_{u_1}^{\alpha_1}\rangle \\ 
\times \langle\zeta_{u_1}^{\alpha_1},P_{\rm F}\zeta_{u_2}^{\alpha_2}\rangle 
\cdots 
\langle\zeta_{u_{2n}}^{\alpha_{2n}},P_{\rm F}\zeta_{u_{2n+1}}^{\alpha_{2n+1}}\rangle {\rm Vol}[\mathfrak{S}].
\end{multline}
{From} the above argument, (\ref{tildeInd}) and (\ref{tracegammaVol}), 
a reader might think that the index $\widetilde{{\rm Ind}}^{(2n)}(D_a,P_{\rm F};\Lambda,R)$ 
is vanishing in the limit $R\nearrow\infty$ because ${\rm Vol}[\mathfrak{S}_j(a)]$ and 
${\rm Vol}[\mathfrak{S}_j(a^\ast)]$ cancel out each other in the limit $R\nearrow\infty$. 
But one has 
$$
{\rm Vol}[\mathfrak{S}_j(a)]\sim \frac{1}{r^{2n-1}}\quad 
\mbox{for a large \ } r=\min_{\ell\ne j}\{|u_\ell-a|\}.
$$
Immediately, 
$$
\int_{\re^d}dv(a)\chi_R^a(u_{2n+1}){\rm Vol}[\mathfrak{S}_j(a)]
\sim \int_{R_0}^R dr r^{2n-1}\frac{1}{r^{2n-1}}=R-R_0
$$
with some positive constant $R_0$. Thus, the expression (\ref{tildeInd}) of the index is  
ill defined without the cutoff function $\chi_R^a$. Namely, the result strongly depends on  
how to sum up the sequence. In order to avoid this difficulty, we have introduced the cutoff function $\chi_R^a$. 

We write 
$$
\tilde{I}(\Lambda,R)=\widetilde{{\rm Ind}}^{(2n)}(D_a,P_{\rm F};\Lambda,R)
$$
and
$$
\tilde{I}(\Lambda,\infty)=\lim_{R\nearrow\infty}\widetilde{{\rm Ind}}^{(2n)}(D_a,P_{\rm F};\Lambda,R)
$$
for short. Then, the above result (\ref{L1bound}) implies 
\begin{equation}
\label{tildeIbound}
|\tilde{I}(\Lambda,R)-\tilde{I}(\Lambda,\infty)|\le {{\rm Const.}}/{R}
\end{equation}
for a large $R$, where the constant in the right-hands side does not depend on the lattice $\Lambda$. 
Further, we obtain:

\begin{lem}
The following relations are valid: 
\begin{align}
{\rm Ind}^{(2n)}(D_a,P_{\rm F})&=\lim_{R\nearrow\infty}\lim_{\Lambda\nearrow\ze^d}
\widetilde{{\rm Ind}}^{(2n)}(D_a,P_{\rm F};\Lambda,R)\nonumber\\  
&=\lim_{\Lambda\nearrow\ze^d}\lim_{R\nearrow\infty}
\widetilde{{\rm Ind}}^{(2n)}(D_a,P_{\rm F};\Lambda,R).
\label{IndtildeInd}
\end{align}
\end{lem}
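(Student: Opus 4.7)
The plan is to deduce both iterated limits from the two preceding lemmas together with the uniform-in-$\Lambda$ decay estimate (\ref{tildeIbound}), with the second equality obtained via a standard Moore--Osgood type swap.

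For the first identity
$\lim_{R\nearrow\infty}\lim_{\Lambda\nearrow\ze^d}\tilde{I}(\Lambda,R)={\rm Ind}^{(2n)}(D_a,P_{\rm F})$,
I would simply concatenate what has already been proved: for each fixed $R$, Lemma~\ref{lem:IndtildeInd} identifies the accumulation points of $\tilde{I}(\Lambda,R)$ as $\Lambda\nearrow\ze^d$ with those of ${\rm Ind}^{(2n)}(D_a,P_{\rm F};\Omega,R)$ as $\Omega\nearrow\re^d$, and then Lemma~\ref{lem:DlimtROmegaInd} yields $\lim_{R\nearrow\infty}\lim_{\Omega\nearrow\re^d}{\rm Ind}^{(2n)}(D_a,P_{\rm F};\Omega,R)={\rm Ind}^{(2n)}(D_a,P_{\rm F})$. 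This gives the first of the two stated equalities with no additional work.

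For the second identity
$\lim_{\Lambda\nearrow\ze^d}\lim_{R\nearrow\infty}\tilde{I}(\Lambda,R)={\rm Ind}^{(2n)}(D_a,P_{\rm F})$,
formula (\ref{tildeIndRinfty}) already supplies the inner limit $\tilde{I}(\Lambda,\infty):=\lim_{R\nearrow\infty}\tilde{I}(\Lambda,R)$ for every finite $\Lambda$, and the uniform bound (\ref{tildeIbound}),
$|\tilde{I}(\Lambda,R)-\tilde{I}(\Lambda,\infty)|\le \mathrm{Const}/R$ with constant independent of $\Lambda$, is exactly the hypothesis needed to commute the two limits. Given $\varepsilon>0$, I would pick $R_\varepsilon$ so large that $\mathrm{Const}/R_\varepsilon<\varepsilon/2$; then, using the first identity, choose $\Lambda_\varepsilon$ with $|\tilde{I}(\Lambda,R_\varepsilon)-{\rm Ind}^{(2n)}(D_a,P_{\rm F})|<\varepsilon/2$ for all $\Lambda\supseteq\Lambda_\varepsilon$. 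Adding the two estimates with a triangle inequality yields $|\tilde{I}(\Lambda,\infty)-{\rm Ind}^{(2n)}(D_a,P_{\rm F})|<\varepsilon$, proving $\lim_{\Lambda\nearrow\ze^d}\tilde{I}(\Lambda,\infty)={\rm Ind}^{(2n)}(D_a,P_{\rm F})$.

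The genuinely delicate point is not the swap itself but the fact that the constant in (\ref{tildeIbound}) is independent of $\Lambda$. This uniformity is what was secured in the preceding cancellation argument, namely by pairing the contributions of $a$ and its reflected partner $a^\ast$ so that $\mathrm{Vol}[\mathfrak{S}_\ell(a)]-\mathrm{Vol}[\mathfrak{B}_\ell(a)]$ cancels against its counterpart at $a^\ast$, with the residual simplex estimate then absorbed by the $\Lambda$-independent exponential decay (\ref{decayPF}) of the Fermi projection matrix elements. Were the constant in (\ref{tildeIbound}) allowed to depend on $|\Lambda|$, the $R\nearrow\infty$ limit would drift with the system size and the interchange of limits would not survive — so it is the uniformity of that bound, rather than the double-limit manipulation, that is the real content of the lemma.
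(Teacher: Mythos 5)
Your proposal is correct and follows essentially the same route as the paper: the first equality by concatenating Lemmas~\ref{lem:DlimtROmegaInd} and~\ref{lem:IndtildeInd}, and the second by a triangle-inequality argument whose engine is the $\Lambda$-uniform bound (\ref{tildeIbound}). The only cosmetic difference is that you fold the paper's two terms $|\tilde{I}(\Lambda,R)-I(\Omega,R)|+|I(\Omega,R)-{\rm Ind}^{(2n)}(D_a,P_{\rm F})|$ into a single appeal to the already-established first equality (which tacitly requires $R_\varepsilon$ large enough that the outer limit there is within $\varepsilon/2$ as well, a harmless extra condition on $R_\varepsilon$).
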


\begin{proof}
The first equality of (\ref{IndtildeInd}) follows from Lemmas~\ref{lem:DlimtROmegaInd} 
and \ref{lem:IndtildeInd}.

In order to prove the second equality, we write 
$$
I(\Omega,R)={\rm Ind}^{(2n)}(D_a,P_{\rm F};\Omega,R)
$$
for short. For any given small $\epsilon>0$, any large $\Lambda$, and any large $\Omega$, 
we can find a large $R>0$ such that 
$$
|\tilde{I}(\Lambda,R)-\tilde{I}(\Lambda,\infty)|\le {{\rm Const.}}/{R}<\epsilon/3
$$
from (\ref{tildeIbound}), and that 
$$
|I(\Omega,R)-{\rm Ind}^{(2n)}(D_a,P_{\rm F})|\le{{\rm Const.}}/{R}<\epsilon/3
$$
from (\ref{IndRIndinfty}) and the estimate for the correction in Appendix~\ref{Appendix:DlimtROmegaInd}. 
For a fixed $R$ satisfying these conditions, we can find a large $\Lambda$ and a large $\Omega$ such that 
$$
|\tilde{I}(\Lambda,R)-I(\Omega,R)|<\epsilon/3
$$
as in the proof of Lemma~\ref{lem:IndtildeInd} in Appendix~\ref{proof:lem:IndtildeInd}. From these inequalities, 
we obtain  
\begin{align*}
|\tilde{I}(\Lambda,\infty)-{\rm Ind}^{(2n)}(D_a,P_{\rm F})|&\le |\tilde{I}(\Lambda,\infty)-\tilde{I}(\Lambda,R)|
+|\tilde{I}(\Lambda,R)-I(\Omega,R)|\\
&+|I(\Omega,R)-{\rm Ind}^{(2n)}(D_a,P_{\rm F})|\le \epsilon.
\end{align*}
This implies the second equality. 
\end{proof}

{From} this lemma, the index ${\rm Ind}^{(2n)}(D_a,P_{\rm F})$ can be derived from  
taking the limit $\Lambda\nearrow\ze^d$ for the expression of the right-hand side of (\ref{tildeIndRinfty}).  
The volume ${\rm Vol}[\mathfrak{S}]$ of the simplex in the expression is written 
\begin{align}
(2n)!{\rm Vol}[\mathfrak{S}]&={\rm det}[u_1-u_{2n+1},u_2-u_{2n+1},\ldots,u_{2n}-u_{2n+1}]\nonumber\\ \nonumber 
&={\rm det}[u_1,u_2,\ldots,u_{2n}]\\ \nonumber 
&\qquad\qquad\qquad-\sum_{\ell=1}^{2n}{\rm det}[u_1,u_2,\ldots,u_{\ell-1},u_{2n+1},u_{\ell+1},\ldots,u_{2n}]\\ \nonumber 
&=\sum_{\ell=1}^{2n}(-1)^{\ell-1}{\rm det}[u_1,u_2,\ldots,\underline{u_\ell},\ldots,u_{2n+1}]\\ \nonumber 
&={\rm det}[u_1-u_2,u_2-u_3,\ldots,u_{2n-1}-u_{2n},u_{2n}-u_{2n+1}]\\  
&=\sum_\sigma(-1)^\sigma\bigl(u_1^{(\sigma_1)}-u_2^{(\sigma_1)}\bigr)\bigl(u_2^{(\sigma_2)}-u_3^{(\sigma_2)}\bigr)
\cdots\bigl(u_{2n}^{(\sigma_{2n})}-u_{2n+1}^{(\sigma_{2n})}\bigr).
\label{Volsimplexdet}
\end{align}
In order to rewrite the right-hand side of (\ref{tildeIndRinfty}), 
we define the position operator $X=(X^{(1)},\ldots,X^{(2n)})$ by 
\begin{equation}
\label{X}
(X^{(\ell)}\varphi)_\alpha(x)=x^{(\ell)}\varphi_\alpha(x)\quad \mbox{for } x\in \ze^d,
\end{equation}
where $\varphi_\alpha\in\ell^2(\ze^d,\co^M)$ is a wavefunction. Combining this with the above observation, 
we have:

\begin{thm}
The index can be written as 
\begin{multline}
{\rm Ind}^{(2n)}(D_a,P_{\rm F})\\
=-\frac{(2\pi i)^n}{n!}\lim_{\Lambda\nearrow\ze^d}\frac{1}{|\Lambda|}\sum_\sigma (-1)^\sigma{\rm Tr}\> 
\chi_\Lambda P_{\rm F}[X^{(\sigma_1)},P_{\rm F}]\cdots[X^{(\sigma_{2n})},P_{\rm F}].
\end{multline}
\end{thm}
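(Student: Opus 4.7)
The proposal is to begin from equation (\ref{tildeIndRinfty}), which already supplies the closed form of $\lim_{R\nearrow\infty}\widetilde{{\rm Ind}}^{(2n)}(D_a,P_{\rm F};\Lambda,R)$ as a finite multiple sum of matrix elements of $P_{\rm F}$ against the oriented simplex volume ${\rm Vol}[\mathfrak{S}]$. By the second equality in (\ref{IndtildeInd}), taking $\Lambda\nearrow\ze^d$ of this expression yields exactly ${\rm Ind}^{(2n)}(D_a,P_{\rm F})$, so the whole task is to rewrite the expression in (\ref{tildeIndRinfty}) as $-(2\pi i)^n/n!$ times $(1/|\Lambda|)\sum_\sigma(-1)^\sigma{\rm Tr}\,\chi_\Lambda P_{\rm F}[X^{(\sigma_1)},P_{\rm F}]\cdots[X^{(\sigma_{2n})},P_{\rm F}]$ and then pass to the limit.

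For the rewriting, I would substitute the antisymmetrized determinant form of the simplex volume from (\ref{Volsimplexdet}), namely $(2n)!\,{\rm Vol}[\mathfrak{S}]=\sum_\sigma(-1)^\sigma(u_1^{(\sigma_1)}-u_2^{(\sigma_1)})\cdots(u_{2n}^{(\sigma_{2n})}-u_{2n+1}^{(\sigma_{2n})})$, so that the combinatorial prefactor $(2\pi i)^n(2n)!/n!$ collapses to $(2\pi i)^n/n!$. The key observation is then that, from the definition (\ref{X}) of the position operator $X^{(j)}$, the identity $\langle\zeta_{u}^{\alpha},[X^{(j)},P_{\rm F}]\zeta_{v}^{\beta}\rangle=(u^{(j)}-v^{(j)})\langle\zeta_{u}^{\alpha},P_{\rm F}\zeta_{v}^{\beta}\rangle$ holds on the orthonormal basis (\ref{ONSzeta}). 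Pairing each difference $u_i^{(\sigma_i)}-u_{i+1}^{(\sigma_i)}$ with the adjacent factor $\langle\zeta_{u_i}^{\alpha_i},P_{\rm F}\zeta_{u_{i+1}}^{\alpha_{i+1}}\rangle$ converts the integrand into a chain of commutator matrix elements, and the interior sums over $u_1,\ldots,u_{2n}$ and $\alpha_1,\ldots,\alpha_{2n}$ collapse by completeness, giving $\langle\zeta_{u_{2n+1}}^{\alpha_{2n+1}},P_{\rm F}[X^{(\sigma_1)},P_{\rm F}]\cdots[X^{(\sigma_{2n})},P_{\rm F}]\zeta_{u_{2n+1}}^{\alpha_{2n+1}}\rangle$. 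The remaining sum over $u_{2n+1}\in\Lambda$ and $\alpha_{2n+1}$, constrained by $\chi_\Lambda$, is precisely ${\rm Tr}\,\chi_\Lambda P_{\rm F}[X^{(\sigma_1)},P_{\rm F}]\cdots[X^{(\sigma_{2n})},P_{\rm F}]$, and the $\Lambda\nearrow\ze^d$ limit then delivers the asserted formula.

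The main technical point that must be checked is absolute convergence of the rearrangements and the trace class property of the relevant operators. This is precisely where Assumption~\ref{Assumption}, through the exponential decay estimate (\ref{decayPF}), does its work: each matrix element $\langle\zeta_u^\alpha,P_{\rm F}\zeta_v^\beta\rangle$ decays like $e^{-\kappa|u-v|}$, so each commutator kernel $\langle\zeta_u^\alpha,[X^{(j)},P_{\rm F}]\zeta_v^\beta\rangle$ decays like $|u-v|e^{-\kappa|u-v|}$, which is still summable. Consequently the product $P_{\rm F}[X^{(\sigma_1)},P_{\rm F}]\cdots[X^{(\sigma_{2n})},P_{\rm F}]$ has a kernel that decays exponentially in differences of coordinates, and multiplication by $\chi_\Lambda$ yields a trace class operator whose trace equals the iterated sum built from the matrix elements. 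This justifies the collapse by completeness term-by-term and legitimates exchanging the order of the $u_i$-summations.

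I expect the actual obstacle to be purely bookkeeping: matching the combinatorial factor in the antisymmetrized volume against the prefactor of (\ref{tildeIndRinfty}) with the correct overall sign, and tracking the indices so that the commutators appear in the order $[X^{(\sigma_1)},P_{\rm F}],\ldots,[X^{(\sigma_{2n})},P_{\rm F}]$ rather than some cyclic shift. Because of the cyclicity of the trace this does not affect the final value, and once the substitution of (\ref{Volsimplexdet}) is in place the identification is straightforward. No new analytic input beyond the exponential decay of $P_{\rm F}$ and the already-established limit formula (\ref{IndtildeInd}) is required.
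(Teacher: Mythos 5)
Your proposal is correct and follows essentially the same route as the paper: the paper obtains this theorem precisely by substituting the determinant expansion (\ref{Volsimplexdet}) of $(2n)!\,{\rm Vol}[\mathfrak{S}]$ into (\ref{tildeIndRinfty}), recognizing each factor $(u_i^{(\sigma_i)}-u_{i+1}^{(\sigma_i)})\langle\zeta_{u_i}^{\alpha_i},P_{\rm F}\zeta_{u_{i+1}}^{\alpha_{i+1}}\rangle$ as a matrix element of $[X^{(\sigma_i)},P_{\rm F}]$, resumming to a trace against $\chi_\Lambda$, and invoking the second equality of (\ref{IndtildeInd}) for the $\Lambda\nearrow\ze^d$ limit. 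Your bookkeeping of the prefactor $(2\pi i)^n(2n)!/n!\to(2\pi i)^n/n!$ and your appeal to the decay bound (\ref{decayPF}) for the convergence of the rearranged sums match the paper's (largely implicit) justification.
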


The $j$-th component of the position operator $X$ is approximated by 
$$
\hat{X}^{(j)}:=\sum_{a^{(j)}=-\tilde{L}+1}^{\tilde{L}}\vartheta_a^{(j)}-\tilde{L},
$$
with a large integer $\tilde{L}>0$, where $\vartheta_a^{(j)}$ is given by (\ref{varthetaaj}).  
Substituting this into the expression of the index, we have 
\begin{multline}
{\rm Ind}^{(2n)}(D_a,P_{\rm F})\\
=-\frac{(2\pi i)^n}{n!}\lim_{\Lambda\nearrow\ze^d}\frac{1}{|\Lambda|}
\sum_{a^{(1)}}\cdots \sum_{a^{(2n)}}
\sum_\sigma (-1)^\sigma{\rm Tr}\> 
\chi_\Lambda P_{\rm F}[\vartheta_a^{(\sigma_1)},P_{\rm F}]\\
\cdots[\vartheta_a^{(\sigma_{2n})},P_{\rm F}]
\end{multline}
because the matrix elements of $P_{\rm F}$ decay exponentially with large distance. 

We define an index as 
$$
{\rm Ind}^{(2n)}(\vartheta_a,P_{\rm F}):=-\frac{(2\pi i)^n}{n!}
\sum_\sigma (-1)^\sigma{\rm Tr}\> P_{\rm F}[\vartheta_a^{(\sigma_1)},P_{\rm F}]
\cdots[\vartheta_a^{(\sigma_{2n})},P_{\rm F}].
$$

\begin{thm}
The equality between the two indices is valid as 
\label{thm:IndDaIndtheta}
$$
{\rm Ind}^{(2n)}(D_a,P_{\rm F})={\rm Ind}^{(2n)}(\vartheta_a,P_{\rm F}).
$$
\end{thm}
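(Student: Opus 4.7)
My approach is to reduce the averaged expression defining ${\rm Ind}^{(2n)}(D_a,P_{\rm F})$ to the single-point trace defining ${\rm Ind}^{(2n)}(\vartheta_a,P_{\rm F})$, in three steps.

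\emph{Step 1 (trace class).} For fixed $a\in\re^d\setminus\ze^d$ and each permutation $\sigma$, I would show that the operator $P_{\rm F}[\vartheta_a^{(\sigma_1)},P_{\rm F}]\cdots[\vartheta_a^{(\sigma_{2n})},P_{\rm F}]$ is trace class. The matrix element of $[\vartheta_a^{(j)},P_{\rm F}]$ at sites $x,y$ equals $P_{\rm F}(x,y)(\vartheta_a^{(j)}(y)-\vartheta_a^{(j)}(x))$, which vanishes unless $x^{(j)}$ and $y^{(j)}$ straddle $a^{(j)}$. Combined with the exponential decay (\ref{decayPF}) of $P_{\rm F}$, iterating through the $2n$ commutators forces the kernel of the product to be exponentially concentrated near the point $a\in\re^d$, so that $F_\sigma(a):={\rm Tr}\> P_{\rm F}\prod_i[\vartheta_a^{(\sigma_i)},P_{\rm F}]$ is absolutely convergent.

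\emph{Step 2 (independence of $a$).} I would then show that the antisymmetrized quantity $G(a):=\sum_\sigma(-1)^\sigma F_\sigma(a)$ is independent of $a$. Since $\vartheta_a^{(j)}(x)$, for $x\in\ze^d$, depends only on $\lfloor a^{(j)}\rfloor$, the function $G(a)$ is piecewise constant on $\re^d\setminus\ze^d$. When $a^{(j)}$ crosses an integer $m$, $[\vartheta_a^{(j)},P_{\rm F}]$ jumps by $-[\chi_{\{\cdot^{(j)}=m\}},P_{\rm F}]$, where $\chi_{\{\cdot^{(j)}=m\}}$ is the projector onto the hyperplane $x^{(j)}=m$. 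Using the decomposition $\chi_{\{\cdot^{(j)}=m\}}=\vartheta_m^{(j)}-\vartheta_{m+1}^{(j)}$ together with trace cyclicity, the jump in $G(a)$ reduces to a boundary term in which one of the $2n$ commutators is localized on a $(d-1)$-dimensional hyperplane. After antisymmetrization over the remaining indices, the resulting expression is an antisymmetric product of $2n-1$ commutators in the $2n-1$ directions transverse to $j$, which is the analogue of a Chern density in odd spatial dimension and vanishes by a dimensional-parity argument.

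\emph{Step 3 (collapsing the average).} By trace cyclicity and the kernel localization from Step 1,
\[
{\rm Tr}\>\chi_\Lambda P_{\rm F}\prod_i[\vartheta_{a'}^{(\sigma_i)},P_{\rm F}]=F_\sigma(a')+O(e^{-c\,{\rm dist}(a',\partial\Lambda)})
\]
for $a'$ deep inside $\Lambda$, and is uniformly bounded for $a'$ within a fixed neighborhood of $\partial\Lambda$. Summing over $a'$ with sign $(-1)^\sigma$ and applying Step 2,
\[
\sum_{a'}\sum_\sigma(-1)^\sigma{\rm Tr}\>\chi_\Lambda P_{\rm F}\prod_i[\vartheta_{a'}^{(\sigma_i)},P_{\rm F}]=|\Lambda|\,G(a)+O(|\partial\Lambda|).
\]
Dividing by $|\Lambda|$ and taking $\Lambda\nearrow\ze^d$ yields the desired equality ${\rm Ind}^{(2n)}(D_a,P_{\rm F})={\rm Ind}^{(2n)}(\vartheta_a,P_{\rm F})$.

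The main obstacle is Step 2, namely the verification that the jumps in $G(a)$ at integer-coordinate hyperplanes cancel after antisymmetrization. This requires a careful dimensional-reduction argument combined with the telescoping decomposition of $[\chi_{\{\cdot^{(j)}=m\}},P_{\rm F}]$; the analogous cancellation in the clean case follows from Bott periodicity, but in the disordered setting one must work directly with the trace-class estimates from Step 1.
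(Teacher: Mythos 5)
Your overall architecture coincides with the paper's proof in Appendix~\ref{AppenProofthm:IndDaIndtheta}: your Step 1 is the paper's Lemma~\ref{TrvarthetachiLambda} (absolute convergence and exponential localization of the single-kink trace near $a$, using that the kernel of $[\vartheta_a^{(j)},P_{\rm F}]$ is supported where $u^{(j)}$ and $v^{(j)}$ straddle $a^{(j)}$), your Step 2 is the paper's Lemma~\ref{lem:IndependKinkInd} (independence of the kink position), and your Step 3 is the same collapsing of the spatial average using those two facts.

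The genuine gap is in Step 2, which carries the entire nontrivial content of the theorem. You reduce the independence of $G(a)$ to the vanishing of its jump when $a^{(j)}$ crosses an integer, and then assert that the jump vanishes ``by a dimensional-parity argument,'' explicitly flagging this as the main obstacle without carrying it out. As stated, the reduction is not available: after replacing $[\vartheta_a^{(j)},P_{\rm F}]$ by $[\chi_{\{x^{(j)}=m\}},P_{\rm F}]$ in the slot with $\sigma_k=j$, the resulting trace is \emph{not} the odd noncommutative Chern character of a $(2n-1)$-dimensional system --- the Fermi projection does not restrict to the hyperplane $x^{(j)}=m$ and the expression still contains a genuine commutator in the $j$-th direction --- so there is no lower-dimensional index whose parity you can invoke. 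The paper closes exactly this gap by an elementary algebraic computation: it sets $\Delta\vartheta_a^{(1)}=(\vartheta_{a'}^{(1)}-\vartheta_a^{(1)})\chi_\Gamma$ (the cutoff $\chi_\Gamma$ makes this trace class, so that cyclicity of the trace is legitimate), uses $P_{\rm F}[\Delta\vartheta_a^{(1)},P_{\rm F}]=-P_{\rm F}\Delta\vartheta_a^{(1)}(1-P_{\rm F})$ and $[\Delta\vartheta_a^{(1)},P_{\rm F}]P_{\rm F}=(1-P_{\rm F})\Delta\vartheta_a^{(1)}P_{\rm F}$ together with $P_{\rm F}[\vartheta_a^{(i)},P_{\rm F}]P_{\rm F}=0$ and $(1-P_{\rm F})[\vartheta_a^{(i)},P_{\rm F}](1-P_{\rm F})=0$ to insert an alternating $P_{\rm F}/(1-P_{\rm F})$ structure, adds the two cyclic placements of the perturbed commutator to obtain $-{\rm Tr}\,\Delta\vartheta_a^{(1)}[\vartheta_a^{(2)},P_{\rm F}]\cdots[\vartheta_a^{(2n)},P_{\rm F}]$, and then shows that after antisymmetrization over $\sigma$ only two extreme terms of the expanded product survive and cancel each other by cyclicity and the commutativity of the multiplication operators. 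To complete your proof you should either reproduce this algebraic cancellation or supply an actual argument for the parity claim; as written, Step 2 is an assertion rather than a proof.
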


The proof is given in Appendix~\ref{AppenProofthm:IndDaIndtheta}. 

In order to derive a useful expression of the index ${\rm Ind}^{(2n)}(\vartheta_a,P_{\rm F})$, 
we recall the expression (\ref{PFcontour}) of the projection $P_{\rm F}$ onto the Fermi sea 
in terms of the contour integral. 
Using this expression, the commutator in the index ${\rm Ind}^{(2n)}(\vartheta_a,P_{\rm F})$ is written   
$$
[\vartheta_a^{(j)},P_{\rm F}]=\frac{1}{2\pi i}\oint dz 
\left[\vartheta_a^{(j)}\frac{1}{z-H}-\frac{1}{z-H}\vartheta_a^{(j)}\right]. 
$$
The integrand in the right-hand side is computed as 
\begin{multline*}
\vartheta_a^{(j)}\frac{1}{z-H}-\frac{1}{z-H}\vartheta_a^{(j)}\\
=\frac{1}{z-H}(z-H)\vartheta_a^{(j)}\frac{1}{z-H}
-\frac{1}{z-H}\vartheta_a^{(j)}(z-H)\frac{1}{z-H}\\
=-\frac{1}{z-H}[H,\vartheta_a^{(j)}]\frac{1}{z-H}.
\end{multline*}
Therefore, one has 
$$
[\vartheta_a^{(j)},P_{\rm F}]=\frac{i}{2\pi i}\oint dz \frac{1}{z-H}J_a^{(j)}\frac{1}{z-H},
$$
where $J_a^{(j)}$ is the current operator which is given by (\ref{currentJaj}). 
By using the expression for the commutator, we have 
\begin{multline*} 
{\rm Ind}^{(2n)}(\vartheta_a,P_{\rm F})\\=\frac{(-1)^{n-1}}{n!(2\pi i)^n}\sum_\sigma(-1)^\sigma 
\oint dz_1\oint dz_2\cdots \oint dz_{2n}{\rm Tr}\; 
P_{\rm F}\frac{1}{z_1-H}J_a^{(\sigma_1)}\frac{1}{z_1-H}\\
\times \frac{1}{z_2-H}J_a^{(\sigma_2)}\frac{1}{z_2-H}\cdots\frac{1}{z_{2n}-H}J_a^{(\sigma_{2n})}\frac{1}{z_{2n}-H}.
\end{multline*}
This right-hand side is nothing but the generalized Chern number 
in the noncommutative index Theorem~\ref{NonCommuIntThmEven}.

It is often useful to express the Chern number in terms of finite-volume quantities, 
and hence we want to approximate this right-hand side by the operators on the finite-volume lattice $\Lambda$.  
We assume that the corresponding finite system with the periodic boundary condition 
has a spectral gap or a localization regime between the upper and lower bands, too. 
Then, relying on the locality of the current operator $J_a^{(j)}$ and Assumption~\ref{Assumption}, 
the above right-hand side can be approximated by 
the corresponding finite systems on the finite lattice $\Lambda$.  
We write $\varphi_{\pm,j}$ for the energy eigenvector of the corresponding finite-volume Hamiltonian 
with the eigenvalue $E_{\pm,j}$, where the subscript $\pm$ denotes the index for the upper and lower bands, 
respectively.  
Consequently, we obtain the relation between the index and the topological invariant: 
\begin{multline}
\label{currentChern} 
{\rm Ind}^{(2n)}(\vartheta_a,P_{\rm F})\\=\frac{(-1)^{n-1}(2\pi i)^n}{n!}\lim_{\Lambda\nearrow\ze^{2n}}
\sum_\sigma(-1)^\sigma \sum_{j_1,j_2,\ldots,j_{2n}} 
\frac{\langle \varphi_{-,j_1},J_a^{(\sigma_1)}\varphi_{+,j_2}\rangle}{E_{-,j_1}-E_{+,j_2}}\\
\times\frac{\langle \varphi_{+,j_2},J_a^{(\sigma_2)}\varphi_{-,j_3}\rangle}{E_{-,j_3}-E_{+,j_2}}
\cdots\frac{\langle \varphi_{-,j_{2n-1}},J_a^{(\sigma_{2n-1})}\varphi_{+,j_{2n}}\rangle}{E_{-,j_{2n-1}}-E_{+,j_{2n}}}
\frac{\langle \varphi_{+,j_{2n}},J_a^{(\sigma_{2n})}\varphi_{-,j_1}\rangle}{E_{-,j_1}-E_{+,j_{2n}}}. 
\end{multline}
This is an extension of the Hall conductance formula which leads to the Chern number 
in two dimensions \cite{Kohmoto}.

\subsection{Translationally Invariant Systems in Even Dimensions}

As an example, let us consider a translationally invariant Hamiltonian $H_\Lambda$ on 
the hypercubic box $\Lambda$ with the side length $L$. 
The energy eigenvectors of the Hamiltonian $H_\Lambda$ can be written as  $\varphi_{\pm,{\bf k}}$ 
with the energy eigenvalue $E_{\pm,{\bf k}}$ 
in terms of the momentum ${\bf k}:=(k^{(1)},k^{(2)},\ldots,k^{(2n)})$.   
We write 
$$
J^{(j)}:=\sum_{a_j}J_a^{(j)}. 
$$
Then, the index of (\ref{currentChern}) is written as 
\begin{multline*} 
{\rm Ind}^{(2n)}(\vartheta_a,P_{\rm F})\\=\lim_{\Lambda\nearrow\ze^{2n}}\frac{(-1)^{n-1}(2\pi i)^n}{n!L^{2n}}\sum_\sigma(-1)^\sigma
\sum_{{\bf k}} 
\frac{\langle \varphi_{-,{\bf k}},J^{(\sigma_1)}\varphi_{+,{\bf k}}\rangle}{E_{-,{\bf k}}-E_{+,{\bf k}}}\\
\times\frac{\langle \varphi_{+,{\bf k}},J^{(\sigma_2)}\varphi_{-,{\bf k}}\rangle}{E_{-,{\bf k}}-E_{+,{\bf k}}}
\cdots\frac{\langle \varphi_{-,{\bf k}},J^{(\sigma_{2n-1})}\varphi_{+,{\bf k}}\rangle}{E_{-,{\bf k}}-E_{+,{\bf k}}}
\frac{\langle \varphi_{+,{\bf k}},J^{(\sigma_{2n})}\varphi_{-,{\bf k}}\rangle}{E_{-,{\bf k}}-E_{+,{\bf k}}}\\
=\frac{(-1)^{n-1}i^n}{n!(2\pi)^{n}}\sum_\sigma(-1)^\sigma
\int dk_1dk_2\cdots dk_{2n} 
\frac{\langle \varphi_{-,{\bf k}},J^{(\sigma_1)}\varphi_{+,{\bf k}}\rangle}{E_{-,{\bf k}}-E_{+,{\bf k}}}\\
\times\frac{\langle \varphi_{+,{\bf k}},J^{(\sigma_2)}\varphi_{-,{\bf k}}\rangle}{E_{-,{\bf k}}-E_{+,{\bf k}}}
\cdots\frac{\langle \varphi_{-,{\bf k}},J^{(\sigma_{2n-1})}\varphi_{+,{\bf k}}\rangle}{E_{-,{\bf k}}-E_{+,{\bf k}}}
\frac{\langle \varphi_{+,{\bf k}},J^{(\sigma_{2n})}\varphi_{-,{\bf k}}\rangle}{E_{-,{\bf k}}-E_{+,{\bf k}}}.
\end{multline*}
As usual, we write $H({\bf k})$ for the Fourier transform of the Hamiltonian, and   
$$
P_{\rm F}({\bf k})=\frac{1}{2\pi i}\oint dz \frac{1}{z-H({\bf k})}
$$
for the projection onto the Fermi sea with the momentum ${\bf k}$. Then, one has 
$$
\langle \varphi_{\pm,{\bf k}},J^{(j)}\varphi_{\mp,{\bf k}}\rangle
=\langle\tilde{\varphi}_{\pm,{\bf k}},J^{(j)}({\bf k})\tilde{\varphi}_{\mp,{\bf k}}\rangle, 
$$
where $\tilde{\varphi}_{\pm,{\bf k}}$ is an eigenvector of $H({\bf k})$, and 
$$
J^{(j)}({\bf k}):=\frac{\partial}{\partial k^{(j)}}H({\bf k}). 
$$
Note that 
\begin{align*}
\Bigl\langle\tilde{\varphi}_{\pm,{\bf k}},\frac{\partial P_{\rm F}({\bf k})}{\partial k^{(j)}}
\tilde{\varphi}_{\mp,{\bf k}}\Bigr\rangle
&=\frac{-1}{2\pi i}\oint dz \Bigl\langle\tilde{\varphi}_{\pm,{\bf k}},\frac{1}{z-H({\bf k})}J^{(j)}({\bf k})
\frac{1}{z-H({\bf k})}\tilde{\varphi}_{\mp,{\bf k}}\Bigr\rangle\\
&=-\frac{\Bigl\langle\tilde{\varphi}_{\pm,{\bf k}},J^{(j)}({\bf k})
\tilde{\varphi}_{\mp,{\bf k}}\Bigr\rangle}{ E_{-,{\bf k}}-E_{+,{\bf k}} }.
\end{align*}
{From} these observations, one obtains 
\begin{multline}
\label{IndDelPeven} 
{\rm Ind}^{(2n)}(\vartheta_a,P_{\rm F})\\
=\frac{(-1)^{n-1}i^n}{n!(2\pi)^{n}}\sum_\sigma(-1)^\sigma 
\int dk^{(1)}dk^{(2)}\cdots dk^{(2n)}\; {\rm Tr}\; P_{\rm F}({\bf k})\\
\times \frac{\partial P_{\rm F}({\bf k})}{\partial k^{(\sigma_1)}}
\frac{\partial P_{\rm F}({\bf k})}{\partial k^{(\sigma_2)}}
\cdots\frac{\partial P_{\rm F}({\bf k})}{\partial k^{(\sigma_{2n})}}.
\end{multline}

More concretely, let us consider the Hamiltonian $H$ which is given by \cite{QWZ}
\begin{align*}
(H\varphi)(x)&=\sum_{j=1}^{2n} \frac{t_{\rm s}^{(j)}}{2i}
[\varphi(x+e^{(j)})-\varphi(x-e^{(j)})]\gamma^{(j)}\\
&+\Big\{m_0+\sum_{j=1}^{2n}\frac{t_{\rm c}^{(j)}}{2}[\varphi(x+e^{(j)})+\varphi(x-e^{(j)})]
\Big\}\gamma^{(2n+1)}, 
\end{align*}
where $t_{\rm s}^{(j)}$, $t_{\rm c}^{(j)}$ and $m_0$ are real constants, $e^{(j)}$ 
are the unit vectors in the $j$-th direction, and $\gamma^{(j)}$ are the gamma matrices.  
The Fourier transform is 
$$
H({\bf k})=\gamma\cdot\mathcal{E}({\bf k})=\sum_{i=1}^{2n+1} \gamma^{(i)}\mathcal{E}^{(i)}({\bf k}),
$$
where we have written 
$\mathcal{E}({\bf k}):=(\mathcal{E}^{(1)}({\bf k}),\mathcal{E}^{(2)}({\bf k}),\ldots,\mathcal{E}^{(2n+1)}({\bf k}))$ 
with\hfill\break $\mathcal{E}^{(j)}({\bf k})=t_{\rm s}^{(j)}\sin k^{(j)}$ for $j=1,2,\ldots,2n$, 
and 
$$
\mathcal{E}^{(2n+1)}({\bf k})=m_0+\sum_{j=1}^{2n}t_{\rm c}^{(j)}\cos k^{(j)}.
$$
{From} the properties of the gamma matrices, one has 
$$
H({\bf k})^2=|\mathcal{E}({\bf k})|^2\quad 
\mbox{with} \quad 
|\mathcal{E}({\bf k})|=\sqrt{\sum_{i=1}^{2n+1} [\mathcal{E}^{(i)}({\bf k})]^2}.
$$
Therefore, if $|\mathcal{E}({\bf k})|$ is nonvanishing for all ${\bf k}$, 
then there appears a nonvanishing spectral gap between the upper and lower bands. 
In the following, we consider such a situation. 

The projection onto the lower band can be written as 
$$
P_{\rm F}({\bf k})=\frac{1}{2}\Bigl[1-\frac{1}{|\mathcal{E}({\bf k})|}\gamma\cdot\mathcal{E}({\bf k})\Bigr].
$$
We define the unit vector $n({\bf k})$ with the $j$-th component, 
$$
n^{(j)}({\bf k}):=\frac{1}{|\mathcal{E}({\bf k})|}\mathcal{E}^{(j)}({\bf k}), 
$$
for $j=1,2,\ldots,2n+1$. Substituting these into the right-hand side of (\ref{IndDelPeven}) and 
using the properties of the gamma matrices, we obtain  
\begin{multline*} 
{\rm Ind}^{(2n)}(\vartheta_a,P_{\rm F})
=\frac{(-1)^{n-1}(2n)!}{n!2^{2n+1}\pi^n}\sum_\tau (-1)^\tau 
\int dk^{(1)}dk^{(2)}\cdots dk^{(2n)}\;
n^{(\tau_1)}({\bf k})\\
\times\frac{\partial n^{(\tau_2)}({\bf k})}{\partial k^{(1)}}
\frac{\partial n^{(\tau_3)}({\bf k})}{\partial k^{(2)}}
\cdots\frac{\partial n^{(\tau_{2n+1})}({\bf k})}{\partial k^{(2n)}}.
\end{multline*}
Here, the quantity in the right-hand side is equal to the surface area, $\left|\S^{2n}\right|$, 
of $2n$-dimensional unit sphere, $\S^{2n}$, multiplied by 
the winding number $\nu\in\ze$ as 
\begin{multline*}
\sum_\tau (-1)^\tau 
\int dk^{(1)}dk^{(2)}\cdots dk^{(2n)}\;
n^{(\tau_1)}({\bf k})\\
\times\frac{\partial n^{(\tau_2)}({\bf k})}{\partial k^{(1)}}
\frac{\partial n^{(\tau_3)}({\bf k})}{\partial k^{(2)}}
\cdots\frac{\partial n^{(\tau_{2n+1})}({\bf k})}{\partial k^{(2n)}}
=\nu \left|\S^{2n}\right|.
\end{multline*}
Since the surface area is given by 
$$
\left|\S^{2n}\right|={n!2^{2n+1}\pi^n}/{(2n)!},
$$
the index can be written in terms of the winding number $\nu$ 
as 
$$
{\rm Ind}^{(2n)}(\vartheta_a,P_{\rm F})=(-1)^{n-1}\nu\in\ze.
$$

\section{Chiral Symmetry in Odd Dimensions}
\label{sec:ChiralSymmOddD}

\subsection{Chiral Index}
\label{subsec:ChiralIndex}

Consider the Hamiltonian $H$ of (\ref{H}) on $\ze^d$ with odd dimensions, $d=2n+1$, $n=0,1,2,\ldots$. 
In the present section, we assume that the Hamiltonian $H$ is chiral symmetric with respect to a chiral operator $S$. 
Then, the condition (\ref{chiralcondition}) for chiral symmetry implies that, if a positive energy, $E>0$, is in 
the spectrum of the Hamiltonian $H$, then the corresponding negative energy, $-E$, is also in the spectrum of $H$. 
Therefore, the upper energy bands are mapped to the lower energy bands by the chiral operator $S$.  

As mentioned in Sec.~\ref{sec:ChiralModel}, we deal with the following two cases: 
(i) There is a spectral gap between the upper and lower bands in 
the spectrum of the chiral symmetric Hamiltonian $H$. (ii) When the Fermi level $E_{\rm F}=0$ lies in 
a localization regime, $E_{\rm F}=0$ is not an eigenvalue of the chiral Hamiltonian $H$ with 
probability one with respect to configurations of the random quantities of the system.  
In such a system, the matrix elements of the projection $P_{\rm F}$ onto the Fermi sea 
exponentially decay with large distance as in (\ref{decayPF}). 
 
In both of the situations, we can define the projections onto the upper and lower bands, 
and write, respectively, $P_\pm$ for the two projections. 
In terms of the projections, the unitary operator of (\ref{U}) is written as $U:=P_+-P_-$. 
Clearly, one has $U^\ast=U$ and $U^2=1$. Further, one has 
\begin{equation}
\label{anticommuteSU}
SUS=-U
\end{equation}
by the chiral condition (\ref{chiralcondition}) between the Hamiltonian $H$ and the chiral operator $S$. 

For simplicity, we assume that the Dirac operator $D_a$ of (\ref{defDiracOdd}) 
commutes with the chiral operator $S$, i.e., $[S,D_a]=0$. 
Actually, when the chiral operator $S$ is written in 
a direct sum as $S=\bigoplus_i S_i$ with a period on the lattice $\ze^d$, 
where the operator $S_i$ acts on a local state on the $i$-th unit cell, 
all of the states on some cells can be identified with an internal degree of freedom at a single site 
of a lattice. Therefore, the chiral operator $S$ can be taken to be independent of the lattice site for 
the lattice. This implies $[S,D_a]=0$. 
 
In order to define the chiral index, we recall the expression (\ref{projectionchiral}) of the projection operator,  
$\mathcal{P}_{\rm D}=(1+D_a)/2$. Let $\epsilon>0$. Then, the operator, 
$$
\left[(\mathcal{P}_{\rm D}-U\mathcal{P}_{\rm D}U)^2\right]^\epsilon
=:(\mathcal{P}_{\rm D}-U\mathcal{P}_{\rm D}U)^{2\epsilon},
$$
is well defined because $\mathcal{P}_{\rm D}-U\mathcal{P}_{\rm D}U$ is self adjoint. 
We define the chiral index by 
\begin{equation}
\label{Ind2n+1epsilonchiral}
{\rm Ind}^{(2n+1,\epsilon)}(D_a,S,U):=\frac{1}{2}{\rm Tr}\; 
S(\mathcal{P}_{\rm D}-U\mathcal{P}_{\rm D}U)^{2n+1+2\epsilon}.
\end{equation}

First, we show that the index is well defined, and takes the value of integer in Theorem~\ref{NonCommuThmOdd}. 
Note that one has 
$$
\mathcal{P}_{\rm D}-U\mathcal{P}_{\rm D}U=U[U,\mathcal{P}_{\rm D}]
=\frac{1}{2}U[U,D_a]=-U[P_-,D_a]
$$
from the definitions of $U$ and $\mathcal{P}_{\rm D}$ and $U^2=1$. 
Therefore, one can show that the operator $(\mathcal{P}_{\rm D}-U\mathcal{P}_{\rm D}U)^{2n+1+2\epsilon}$ is 
of trace class for $\epsilon>0$ in the same way as in Appendix~\ref{traceclassA}. 
Similarly to the case of even dimensions, we set 
$$
A=S(\mathcal{P}_{\rm D}-U\mathcal{P}_{\rm D}U)
\quad\mbox{and}\quad 
B=S(1-\mathcal{P}_{\rm D}-U\mathcal{P}_{\rm D}U).
$$
Then, one has $[S,A]=0$ {from} (\ref{anticommuteSU}), $S^2=1$ and $[S,\mathcal{P}_{\rm D}]=0$. 
Therefore, one has 
$$
{\rm Ind}^{(2n+1,\epsilon)}(D_a,S,U)=\frac{1}{2}{\rm Tr}\; A^{2n+1+2\epsilon}
$$
{from} the definition (\ref{Ind2n+1epsilonchiral}) of the chiral index. 
Further, since we have $[S,B]=0$, the pair of two operators, $A$ and $B$, 
satisfies the same relations $A^2+B^2=1$ {and} $AB+BA=0$ as in (\ref{A2B21}) and (\ref{anticommuABBA1}). 
Thus, in the same way as in the case of even dimensions, if an eigenvalue $\lambda$ of the operator $A$ 
satisfies $0<|\lambda|<1$, then $\lambda$ and $-\lambda$ come in pairs with the same multiplicity. 
Therefore, these eigenvalues are vanishing for taking the trace. In consequence, we obtain 
\begin{multline*}
{\rm Ind}^{(2n+1,\epsilon)}(D_a,S,U)\\
=\frac{1}{2}
\left\{{\rm dim}\;{\rm ker}\;[S(\mathcal{P}_{\rm D}-U\mathcal{P}_{\rm D}U)-1]
-{\rm dim}\;{\rm ker}\;[S(\mathcal{P}_{\rm D}-U\mathcal{P}_{\rm D}U)+1]\right\}.
\end{multline*}
This is equal to the integer-valued index of the noncommutative index Theorem~\ref{NonCommuThmOdd}. 

Next, let us show that the chiral index takes a value of integer. 
{From} the relation (\ref{anticommuteSU}), one has the expression, 
\begin{equation}
\label{Umatu}
U=\left(\begin{matrix}0 & \mathcal{U}^\ast \cr \mathcal{U} & 0\cr\end{matrix}\right),
\end{equation}
in the basis which diagonalizes the chiral operator $S$ as 
$$
S=\left(\begin{matrix}1 & 0 \cr 0 & -1 \cr \end{matrix}\right).
$$ 
Then, one has 
\begin{equation}
\label{PD-UPDUmat}
\mathcal{P}_{\rm D}-U\mathcal{P}_{\rm D}U=\left(\begin{matrix}
\mathcal{P}_{\rm D}-\mathcal{U}^\ast \mathcal{P}_{\rm D}\mathcal{U} & 0 \cr
0 & \mathcal{P}_{\rm D}-\mathcal{U}\mathcal{P}_{\rm D}\mathcal{U}^\ast \cr
\end{matrix}\right).
\end{equation}
Further, 
\begin{align*}
&{\rm Tr}\; S(\mathcal{P}_{\rm D}-U\mathcal{P}_{\rm D}U)^{2n+1+2\epsilon}\\
&={\rm Tr}\;\left(\begin{matrix}
(\mathcal{P}_{\rm D}-\mathcal{U}^\ast \mathcal{P}_{\rm D}\mathcal{U})^{2n+1+2\epsilon} & 0 \cr
0 & -(\mathcal{P}_{\rm D}-\mathcal{U}\mathcal{P}_{\rm D}\mathcal{U}^\ast)^{2n+1+2\epsilon} \cr
\end{matrix}
\right).
\end{align*}
Therefore, in the same way as in Section~\ref{NoSymEvenDInd}, we have 
\begin{equation}
\label{IndOddFredT}
{\rm Ind}^{(2n+1,\epsilon)}(D_a,S,U)={\rm dim}\;{\rm ker}\; \mathfrak{T}_\chi
-{\rm dim}\;{\rm ker}\;\mathfrak{T}_\chi^\ast,
\end{equation}
where the Fredholm operator is given by 
$$
\mathfrak{T}_\chi:=\mathcal{P}_{\rm D}\mathcal{U}\mathcal{P}_{\rm D}+1-\mathcal{P}_{\rm D}.
$$
Thus, the chiral index is independent of the parameter $\epsilon$. In the following, 
we write ${\rm Ind}^{(2n+1)}(D_a,S,U)$ for the chiral index by dropping  
the superscript $\epsilon$. 

The chiral index ${\rm Ind}^{(2n+1)}(D_a,S,U)$ is independent of the location of $a$ of 
the Dirac operator $D_a$, too. In order to show this fact, we write 
$$
\mathfrak{T}_\chi'=\mathcal{P}_{\rm D}'\mathcal{U}\mathcal{P}_{\rm D}'+(1-\mathcal{P}_{\rm D}')
$$
with the projection, 
$$
\mathcal{P}_{\rm D}'=\frac{1}{2}(1+D_{a'}). 
$$
Since $\mathfrak{T}_\chi'=\mathfrak{T}_\chi+(\mathfrak{T}_\chi'-\mathfrak{T}_\chi)$, 
it is sufficient to show that $\mathfrak{T}_\chi'-\mathfrak{T}_\chi$ is compact. 
Note that 
\begin{align*}
\mathfrak{T}_\chi'-\mathfrak{T}_\chi&=\mathcal{P}_{\rm D}'\mathcal{U}\mathcal{P}_{\rm D}'+(1-\mathcal{P}_{\rm D}')
-\mathcal{P}_{\rm D}\mathcal{U}\mathcal{P}_{\rm D}-(1-\mathcal{P}_{\rm D})\\
&=\mathcal{P}_{\rm D}'\mathcal{U}\mathcal{P}_{\rm D}'-\mathcal{P}_{\rm D}\mathcal{U}\mathcal{P}_{\rm D}'
+\mathcal{P}_{\rm D}\mathcal{U}\mathcal{P}_{\rm D}'-\mathcal{P}_{\rm D}\mathcal{U}\mathcal{P}_{\rm D}
-(\mathcal{P}_{\rm D}'-\mathcal{P}_{\rm D})\\
&=(\mathcal{P}_{\rm D}'-\mathcal{P}_{\rm D})\mathcal{U}\mathcal{P}_{\rm D}'+\mathcal{P}_{\rm D}\mathcal{U}
(\mathcal{P}_{\rm D}'-\mathcal{P}_{\rm D})
-(\mathcal{P}_{\rm D}'-\mathcal{P}_{\rm D}).
\end{align*}
Therefore, it is enough to show that $\mathcal{P}_{\rm D}'-\mathcal{P}_{\rm D}$ is compact. 
By definition, one has 
$$
\mathcal{P}_{\rm D}'-\mathcal{P}_{\rm D}=\frac{1}{2}(D_{a'}-D_a)\sim {\rm Const.}\frac{1}{|x-a'|}
\times\{\mbox{gamma matrix}\}
$$
for a large $|x-a'|$. This implies that $(\mathcal{P}_{\rm D}'-\mathcal{P}_{\rm D})^{2n+2}$ is trace class. 
Consequently, $\mathcal{P}_{\rm D}'-\mathcal{P}_{\rm D}$ is compact.

\subsection{Topological Invariant}
\label{Sec:ChiralInv}

In order to obtain the expression of the chiral index in terms of the topological invariant, 
we introduce an approximate index as 
$$
{\rm Ind}^{(2n+1,\epsilon)}(D_a,S,U;R):=\frac{1}{2}
{\rm Tr}\; S(\mathcal{P}_{\rm D}-U\mathcal{P}_{\rm D}U)^{2n+1+2\epsilon}\chi_R^a,
$$
where the cutoff function $\chi_R^a$ is given by (\ref{chiRa}). Clearly, one has 
$$
{\rm Ind}^{(2n+1)}(D_a,S,U)=\lim_{R\nearrow\infty} {\rm Ind}^{(2n+1,\epsilon)}(D_a,S,U;R). 
$$

Note that 
$$
\mathcal{P}_{\rm D}-U\mathcal{P}_{\rm D}U=U[U,\mathcal{P}_{\rm D}]=-[U,\mathcal{P}_{\rm D}]U
$$
and 
$$
[U,\mathcal{P}_{\rm D}]=\frac{1}{2}[U,D_a]=[P_+,D_a].
$$
Combining these with $U^2=1$, one has 
\begin{align*}
(\mathcal{P}_{\rm D}-U\mathcal{P}_{\rm D}U)^2&=-[U,\mathcal{P}_{\rm D}][U,\mathcal{P}_{\rm D}]\\
&=(i[U,\mathcal{P}_{\rm D}])^2=(i[P_+,D_a])^2.
\end{align*}
By using these relations, we obtain 
\begin{align}
{\rm Ind}^{(2n+1,\epsilon)}(D_a,S,U;R)&=\frac{1}{2}{\rm Tr}\; 
S(\mathcal{P}_{\rm D}-U\mathcal{P}_{\rm D}U)(\mathcal{P}_{\rm D}-U\mathcal{P}_{\rm D}U)^{2n+2\epsilon}\chi_R^a\nonumber\\  
&=\frac{1}{2}{\rm Tr}\; SU[P_+,D_a](i[P_+,D_a])^{2n+2\epsilon}\chi_R^a. 
\label{ApprochiralIndP+Da}
\end{align}

Further, we introduce 
$$
I_\pm(\epsilon,R):=\frac{1}{2}{\rm Tr}\; SD_a(P_\pm -D_aP_\pm D_a)^{2n+2+2\epsilon}\chi_R^a.
$$
Since the operator $(P_\pm -D_aP_\pm D_a)^{2n+2}$ is trace class, this is very tractable. 
Actually, the following relations are valid:  
$$
\lim_{R\nearrow\infty}I_\pm(\epsilon,R)=\frac{1}{2}{\rm Tr}\; SD_a(P_\pm -D_aP_\pm D_a)^{2n+2+2\epsilon}
$$
and 
\begin{equation}
\label{doubleLimtIpm}
\lim_{R\nearrow\infty}\lim_{\epsilon\downarrow 0}I_\pm(\epsilon,R)
=\lim_{\epsilon\downarrow 0}\lim_{R\nearrow\infty}I_\pm(\epsilon,R)
=\frac{1}{2}{\rm Tr}\; SD_a(P_\pm -D_aP_\pm D_a)^{2n+2}.
\end{equation}
Further, the identity $P_+=1-P_-$ implies 
\begin{equation}
\label{IDI+I-}
I_+(\epsilon,R)=I_-(\epsilon,R). 
\end{equation}

Let us show 
\begin{equation}
\label{IndchiralRIpmR}
{\rm Ind}^{(2n+1,\epsilon)}(D_a,S,U;R)=I_\pm(\epsilon,R).
\end{equation}
Since $\chi_R^a$ is trace class, we have 
\begin{align*}
I_\pm(\epsilon,R)&=\frac{1}{2}{\rm Tr}\; SD_a(P_\pm-D_aP_\pm D_a)(P_\pm-D_aP_\pm D_a)^{2n+1+2\epsilon}\chi_R^a\\ 
&=\frac{1}{2}{\rm Tr}\; SD_aP_\pm(P_\pm-D_aP_\pm D_a)^{2n+1+2\epsilon}\chi_R^a\\ 
&-\frac{1}{2}{\rm Tr}\; SP_\pm D_a(P_\pm-D_aP_\pm D_a)^{2n+1+2\epsilon}\chi_R^a,
\end{align*}
where we have used $D_a^2=1$. The first term in the right-hand side is written 
\begin{multline*}
{\rm Tr}\; SD_aP_\pm(P_\pm-D_aP_\pm D_a)^{2n+1+2\epsilon}\chi_R^a=\\
-{\rm Tr}\; SP_\pm D_a(P_\pm-D_aP_\pm D_a)^{2n+1+2\epsilon}\chi_R^a,
\end{multline*}
where we have used $[S,D_a]=0$, the property of the trace, and  
$$
D_a(P_\pm-D_aP_\pm D_a)=-(P_\pm -D_aP_\pm D_a)D_a.
$$
Therefore, one has 
$$
I_\pm(\epsilon,R)=
-{\rm Tr}\; SP_\pm D_a(P_\pm-D_aP_\pm D_a)^{2n+1+2\epsilon}\chi_R^a. 
$$
In the same way as the above, one has 
$$
(P_\pm-D_aP_\pm D_a)^2=(i[P_\pm,D_a])^2
$$
and 
$$
D_a(P_\pm-D_aP_\pm D_a)=-[P_\pm,D_a]. 
$$
{From} these observations, we obtain 
$$
I_\pm(\epsilon,R)={\rm Tr}\; SP_\pm [P_\pm,D_a](i[P_\pm,D_a])^{2n+2\epsilon}\chi_R^a.
$$
{From} this, (\ref{ApprochiralIndP+Da}), (\ref{IDI+I-}) and $P_-=1-P_+$, we obtain 
\begin{align}
I_+(\epsilon,R)&=\frac{1}{2}\left[I_+(\epsilon,R)+I_-(\epsilon,R)\right]\nonumber
\\ \nonumber
&=\frac{1}{2}{\rm Tr}\; SP_+ [P_+,D_a](i[P_+,D_a])^{2n+2\epsilon}\chi_R^a\\ \nonumber 
&+\frac{1}{2}{\rm Tr}\; SP_- [P_-,D_a](i[P_-,D_a])^{2n+2\epsilon}\chi_R^a\\ \nonumber
&=\frac{1}{2}{\rm Tr}\; SP_+ [P_+,D_a](i[P_+,D_a])^{2n+2\epsilon}\chi_R^a\\ \nonumber
&-\frac{1}{2}{\rm Tr}\; SP_- [P_+,D_a](i[P_+,D_a])^{2n+2\epsilon}\chi_R^a\\ \nonumber
&=\frac{1}{2}{\rm Tr}\; S(P_+-P_-) [P_+,D_a](i[P_+,D_a])^{2n+2\epsilon}\chi_R^a\\ 
&={\rm Ind}^{(2n+1,\epsilon)}(D_a,S,U;R).
\label{relationI+ChiralInd}
\end{align}
This is the desired result (\ref{IndchiralRIpmR}) because of (\ref{IDI+I-}).

We write 
$$
I_\pm(R):=\lim_{\epsilon\downarrow 0}I_\pm(\epsilon,R)=\frac{1}{2}{\rm Tr}\;
SD_a(P_\pm-D_aP_\pm D_a)^{2n+2}\chi_R^a.
$$
We introduce two approximate indices as 
$$
{\rm Ind}^{(2n+1)}(D_a,S,U;\Omega,R):=\frac{1}{|\Omega|}\int_\Omega dv(a)\;I_\pm(R)
$$
and 
\begin{multline}
\label{ApproTildeIndChiral}
\widetilde{{\rm Ind}}^{(2n+1)}(D_a,S,U;\Lambda,R)\\:=
\frac{1}{2|\Lambda|}\int_{\re^d}dv(a)\; {\rm Tr}\; SD_a(P_\pm -D_aP_\pm D_a)^{2n+2}\chi_R^a\chi_\Lambda. 
\end{multline} 
Since the chiral index is independent of the local $a$, one has 
\begin{align*}
\lim_{R\nearrow\infty}
{\rm Ind}^{(2n+1)}(D_a,S,U;\Omega,R)&=\frac{1}{|\Omega|}\int_\Omega dv(a)\;
\lim_{R\nearrow\infty}\lim_{\epsilon\downarrow 0}I_\pm(\epsilon,R)\\
&=\frac{1}{|\Omega|}\int_\Omega dv(a)\;\lim_{\epsilon\downarrow 0}
\lim_{R\nearrow\infty}I_\pm(\epsilon,R)\\
&=\lim_{\epsilon\downarrow 0}\frac{1}{|\Omega|}\int_\Omega dv(a)\;{\rm Ind}^{(2n+1,\epsilon)}(D_a,S,U)\\
&={\rm Ind}^{(2n+1)}(D_a,S,U), 
\end{align*}
where we have used (\ref{doubleLimtIpm}) and (\ref{IndchiralRIpmR}). 
Clearly, 
$$
\lim_{\Omega\nearrow\re^d}\lim_{R\nearrow\infty}
{\rm Ind}^{(2n+1)}(D_a,S,U,\Omega,R)={\rm Ind}^{(2n+1)}(D_a,S,U).
$$

The following lemma is an analogue of Lemma~\ref{lem:DlimtROmegaInd}: 

\begin{lem}
We have 
\label{doubleLimChiralInd}
$$
\lim_{R\nearrow\infty}\lim_{\Omega\nearrow\re^d}{\rm Ind}^{(2n+1)}(D_a,S,U,\Omega,R)=
{\rm Ind}^{(2n+1)}(D_a,S,U).
$$
\end{lem}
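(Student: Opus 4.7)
The plan is to mirror the proof of Lemma~\ref{lem:DlimtROmegaInd} given in Appendix~\ref{Appendix:DlimtROmegaInd}, adapted to the chiral setting. The essential point is to establish a rate of convergence $I_\pm(R) \to {\rm Ind}^{(2n+1)}(D_a, S, U)$ as $R\nearrow\infty$ which is uniform in $a$; once this is granted, the $a$-average over $\Omega$ converges at the same rate, so the two limits may be interchanged.

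First I would rewrite the remainder as
$$
I_\pm(R) - {\rm Ind}^{(2n+1)}(D_a,S,U) = -\tfrac{1}{2}\,{\rm Tr}\,SD_a(P_\pm-D_aP_\pm D_a)^{2n+2}(1-\chi_R^a),
$$
using that $\lim_{R\nearrow\infty} I_\pm(R) = {\rm Ind}^{(2n+1)}(D_a,S,U)$ by (\ref{doubleLimtIpm}) and (\ref{relationI+ChiralInd}), together with the independence of the chiral index from $a$ proved at the end of Sec.~\ref{subsec:ChiralIndex}. Next, using the identity $P_\pm - D_aP_\pm D_a = -[P_\pm, D_a]D_a$ (valid because $D_a^2=1$) and expanding in the position basis $\{\zeta_u^\alpha\}$ from (\ref{ONSzeta}), the remainder trace becomes a multiple sum whose factors are matrix elements of $P_\pm$ (which decay exponentially in the intersite distance by (\ref{decayPF}) applied to each of $P_\pm$) together with components of $D_a$ which are bounded by $1$ uniformly in $a$.

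The cutoff $1-\chi_R^a$ forces at least one of the $2n+2$ summation sites to lie at distance greater than $R$ from $a$, and chaining the exponential decay along the resulting chain of matrix elements yields a bound of the form $Ce^{-\kappa R}$ in which both $C$ and $\kappa$ depend only on the parameters of $H$ and not on $a$. Taking absolute values and averaging over $a\in\Omega$ therefore gives
$$
\bigabs{{\rm Ind}^{(2n+1)}(D_a,S,U;\Omega,R) - {\rm Ind}^{(2n+1)}(D_a,S,U)} \le Ce^{-\kappa R}
$$
uniformly in $\Omega$, so sending $\Omega\nearrow\re^d$ first and then $R\nearrow\infty$ yields the claimed equality.

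The main obstacle is verifying that the constants in the exponential decay of $\langle\zeta_u^\alpha,P_\pm\zeta_v^\beta\rangle$, and in the trace-class estimates underlying the finiteness of the remainder (cf.\ Appendix~\ref{traceclassA}), are genuinely uniform in $a$. Since $a$ enters only through the bounded operator $D_a$, this uniformity is to be expected, but one still has to inspect the combinatorial sum arising from the $(2n+2)$-fold commutator expansion to confirm that no $a$-dependent growth appears, analogously to the bookkeeping that leads to (\ref{L1bound}) and (\ref{tildeIbound}) in the even-dimensional case.
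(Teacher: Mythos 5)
Your overall strategy --- split off the term with $(\chi_R^a-1)$, identify the cutoff-free term with the $a$-independent chiral index, and bound the remainder uniformly in $a$ --- is exactly the paper's, which reduces the remainder estimate to the argument of Lemma~\ref{lem:TrAnRbound}. However, your key estimate is wrong in both its rate and its mechanism, and as you describe it the bound would not close. The cutoff $1-\chi_R^a$ only forces the single site at which the trace is evaluated, say $u_1$, to satisfy $|u_1-a|>R$; it imposes no separation among the sites $u_1,\dots,u_{2n+2}$ themselves. If you bound each factor of $P_\pm-D_aP_\pm D_a$ only by the exponential decay of $\langle\zeta_u^\alpha,P_\pm\zeta_v^\beta\rangle$ and take the components of $D_a$ to be merely bounded by $1$, the dominant configurations are those in which all $2n+2$ sites cluster together at distance of order $R$ from $a$: every factor $e^{-\kappa|u_i-u_{i+1}|}$ is then $O(1)$, and the remaining sum over the location of the cluster, $\sum_{|u_1-a|>R}O(1)$, diverges. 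So ``chaining the exponential decay'' produces neither a bound of the form $Ce^{-\kappa R}$ nor any finite bound at all.

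The decay in $R$ has a different source: the commutator structure. By (\ref{commuPD}) and the bound (\ref{diffDabound}), each matrix element of $[P_\pm,D_a]$ carries the extra factor $\min\{2,\,4|u-v|/|u-a|\}$, i.e.\ a $1/|u_i-a|$ falloff coming from the difference $D_a(u)-D_a(v)$, not from $P_\pm$. With $2n+2$ such factors set against the volume element $r^{2n}\,dr$ in $d=2n+1$ dimensions, the sum over $|u_1-a|\ge R$ behaves like $\int_R^\infty r^{2n}\,r^{-(2n+2)}\,dr=1/R$ (plus an exponentially small contribution from configurations where some site lies much closer to $a$ than $u_1$ does). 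This is precisely the uniform ${\rm Const.}/R$ bound the paper obtains by invoking Lemma~\ref{lem:TrAnRbound}, and it is still enough to conclude, since the remainder then vanishes uniformly in $a$ and $\Omega$ as $R\nearrow\infty$. But you must use the refined commutator bound (\ref{decayboundcommuPFDa}) rather than mere boundedness of $D_a$, and you should expect only a power-law rate, not an exponential one.
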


\begin{proof}
Note that 
\begin{align*}
\frac{1}{|\Omega|}\int_\Omega dv(a)\;I_\pm(R)&=
\frac{1}{2|\Omega|}\int_\Omega dv(a)\;{\rm Tr}\;SD_a(P_\pm-D_aP_\pm D_a)^{2n+2}\chi_R^a\\
&=\frac{1}{2|\Omega|}\int_\Omega dv(a)\;{\rm Tr}\;SD_a(P_\pm-D_aP_\pm D_a)^{2n+2}\\
&+\frac{1}{2|\Omega|}\int_\Omega dv(a)\;{\rm Tr}\;SD_a(P_\pm-D_aP_\pm D_a)^{2n+2}(\chi_R^a-1).
\end{align*}
The first term in the right-hand side is written as 
\begin{align*}
&\frac{1}{2|\Omega|}\int_\Omega dv(a)\;{\rm Tr}\;SD_a(P_\pm-D_aP_\pm D_a)^{2n+2}\\
&=\frac{1}{2|\Omega|}\int_\Omega dv(a)\;{\rm Tr}\;\lim_{\epsilon\downarrow 0}\lim_{R\nearrow \infty}I_\pm(\epsilon,R)\\
&=\lim_{\epsilon\downarrow 0}\frac{1}{2|\Omega|}\int_\Omega dv(a)\;{\rm Ind}^{(2n+1,\epsilon)}(D_a,S,U)
={\rm Ind}^{(2n+1)}(D_a,S,U),
\end{align*}
where we have used (\ref{doubleLimtIpm}) and (\ref{IndchiralRIpmR}), and the fact that 
the chiral index is independent of the local $a$. The absolute value of the integrand in 
the second term can be bounded by ${\rm Const.}/R$ uniformly with respect to $a$ in the same way as 
in the proof of Lemma~\ref{lem:TrAnRbound} in Appendix~\ref{Appendix:DlimtROmegaInd}. 
Thus, the second term is vanishing in the double limit, and the desired relation is obtained. 
\end{proof}

Further, we can obtain an analogue of Lemma~\ref{lem:IndtildeInd} in the same way as follows:

\begin{lem}
\label{ChiralIndLimOmegaLambda}
For any fixed $R$, the following two limits coincide with each other as 
$$
\lim_{\Omega\nearrow\re^d}{\rm Ind}^{(2n+1)}(D_a,S,U,\Omega,R)=
\lim_{\Lambda\nearrow\ze^d}\widetilde{{\rm Ind}}^{(2n+1)}(D_a,S,U,\Lambda,R).
$$
in the sense of the accumulation points. 
\end{lem}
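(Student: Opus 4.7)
The plan is to mimic the strategy of Lemma~\ref{lem:IndtildeInd} (whose proof is in Appendix~\ref{proof:lem:IndtildeInd}), exploiting the close structural parallel between the even- and odd-dimensional indices. The operator $I_\pm(R)=\frac{1}{2}{\rm Tr}\,SD_a(P_\pm-D_aP_\pm D_a)^{2n+2}\chi_R^a$ furnishes a common integrand for both sides: ${\rm Ind}^{(2n+1)}(D_a,S,U;\Omega,R)$ averages it over $a\in\Omega$, while $\widetilde{\rm Ind}^{(2n+1)}(D_a,S,U;\Lambda,R)$ inserts an additional lattice cutoff $\chi_\Lambda$ and integrates $a$ over $\re^d$. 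The task is to show these two ways of truncating produce the same limit when $\Omega\nearrow\re^d$ and $\Lambda\nearrow\ze^d$ in a compatible way.

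First, I would expand the trace in the complete orthonormal basis $\{\zeta_u^\alpha\}$ of (\ref{ONSzeta}), writing the integrand in the form $\sum_x K(a,x)\chi_R^a(x)$, where $K(a,x)$ is the corresponding diagonal matrix element of $SD_a(P_\pm-D_aP_\pm D_a)^{2n+2}$. By Assumption~\ref{Assumption} and the exponential decay (\ref{decayPF}) of $P_\pm$, together with $\|D_a\|\le 1$, the kernel $K(a,x)$ is uniformly bounded in $(a,x)$, and the factor $\chi_R^a(x)$ confines the effective summation range to $|x-a|\le R$. After interchanging integration and summation (legitimate by absolute convergence given $\chi_R^a$ and the exponential decay), the first index reads $|\Omega|^{-1}\int_\Omega dv(a)\sum_{x:|x-a|\le R}K(a,x)$ and the second reads $|\Lambda|^{-1}\int_{\re^d}dv(a)\sum_{x\in\Lambda,\,|x-a|\le R}K(a,x)$.

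Taking the natural choice $\Lambda=\Omega\cap\ze^d$ with $|\Lambda|=|\Omega|$ (e.g., $\Omega$ a hypercubic box of integer side length), the two expressions agree on the bulk and differ only on an $R$-collar near $\partial\Omega$: the first allows the site $x$ to escape $\Lambda$ (constrained by $a\in\Omega$ and $|x-a|\le R$), the second allows $a$ to escape $\Omega$ (constrained by $x\in\Lambda$ and $|x-a|\le R$). The uniform bound on $K$ and the geometric cutoff yield a discrepancy of at most $C\cdot R\cdot|\partial\Omega|$, with $C$ independent of $\Omega$. Dividing by $|\Omega|=|\Lambda|$ produces a bound of order $R\,|\partial\Omega|/|\Omega|$, which vanishes as $\Omega\nearrow\re^d$ for any fixed $R$. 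This gives equality of accumulation points in the sense stated.

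The main obstacle is establishing the uniform control of $K(a,x)$ in the $R$-collar. Unlike $P_\pm$, the Dirac operator $D_a$ has only $1/|x-a|$ decay, so without the cutoff its contribution would be logarithmically divergent; the factor $\chi_R^a$ is precisely what tames this, and one must verify that each of the $2n+2$ factors of $(P_\pm-D_aP_\pm D_a)$, written as $\pm[P_\pm,D_a]D_a$ or $\mp D_a[P_\pm,D_a]$, remains controlled uniformly in $a$ when paired with the cutoff. Once these estimates are assembled along the lines of the even-dimensional argument of Appendix~\ref{proof:lem:IndtildeInd}, the boundary-term cancellation described above completes the proof.
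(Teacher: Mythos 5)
Your proposal follows essentially the same route as the paper: the paper proves this lemma by direct analogy with Lemma~\ref{lem:IndtildeInd}, splitting both approximate indices at the common bulk term $\frac{1}{2|\Lambda|}\int_\Omega dv(a)\,{\rm Tr}\,SD_a(P_\pm-D_aP_\pm D_a)^{2n+2}\chi_R^a\chi_\Lambda$ and bounding the two boundary discrepancies by ${\rm Const.}\,|\partial\Omega|R^{d+1}/|\Omega|$ using the uniform boundedness of the diagonal matrix elements together with the geometric constraint $|x-a|\le R$, exactly as you do. One remark: the ``main obstacle'' you flag is not actually an obstacle here, since the uniform bound on $K(a,x)$ follows immediately from $\|S\|=\|D_a\|=1$ and $\|P_\pm-D_aP_\pm D_a\|\le 1$ (the delicate $1/|x-a|$ decay of $[P_\pm,D_a]$ only matters for the $R\nearrow\infty$ limits, not for this fixed-$R$ comparison).
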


By replacing the cutoff function $\chi_R^a$ with $\chi_R^a\chi_\Lambda$ 
in the derivation of (\ref{relationI+ChiralInd}), a similar relation is obtained as  
$$
{\rm Tr}\; SD_a(P_\pm-D_aP_\pm D_a)^{2n+2}\chi_R^a\chi_\Lambda
=(-1)^{n-1}{\rm Tr}\; SU([P_-,D_a])^{2n+1}\chi_R^a\chi_\Lambda.
$$
By using this relation, the approximate index of (\ref{ApproTildeIndChiral}) can be written 
\begin{multline}
\label{tildeIndDaSULambdaR}
\widetilde{{\rm Ind}}^{(2n+1)}(D_a,S,U;\Lambda,R)\\
=\frac{(-1)^{n-1}}{2|\Lambda|}\int_{\re^d}dv(a)\; {\rm Tr}\; SU([P_-,D_a])^{2n+1}\chi_R^a\chi_\Lambda. 
\end{multline}
Here, we stress that, since the operator $([P_-,D_a])^{2n+1}$ in the right-hand side may not be trace class, 
we need the cutoff function $\chi_R^a$. In \cite{PSB}, the corresponding cutoff function was not 
introduced in their proof of Theorem~4.3 in \cite{PSB}. 

By using the basis (\ref{ONSzeta}) and (\ref{commuPD}) in Appendix~\ref{traceclassA}, 
the above approximate index (\ref{tildeIndDaSULambdaR}) can be written as  
\begin{align}
&\widetilde{{\rm Ind}}^{(2n+1)}(D_a,S,U;\Lambda,R)\nonumber\\ \nonumber 
&=
\frac{(-1)^{n}}{2|\Lambda|}\int_{\re^d}dv(a)\;\sum_{u_1,u_2,\ldots,u_{2n+1},u_{2n+2}}
\sum_{\alpha_1,\alpha_2,\ldots,\alpha_{2n+1},\alpha_{2n+2}}
\langle \zeta_{u_{2n+2}}^{\alpha_{2n+2}},SU\zeta_{u_1}^{\alpha_1}\rangle\\ 
&\times\langle\zeta_{u_1}^{\alpha_1},P_-\zeta_{u_2}^{\alpha_2}\rangle
\langle\zeta_{u_2}^{\alpha_2},P_-\zeta_{u_3}^{\alpha_3}\rangle
\cdots
\langle \zeta_{u_{2n+1}}^{\alpha_{2n+1}}P_-\zeta_{u_{2n+2}}^{\alpha_{2n+2}}\rangle
\chi_R^a(u_{2n+2})\chi_\Lambda(u_{2n+2})\mathcal{T}_\chi^{(\gamma)},
\label{ChiralIndeONS}
\end{align}
where 
\begin{align*}
\mathcal{T}_\chi^{(\gamma)}&:={\rm tr}^{(\gamma)}\left(\frac{u_1-a}{|u_1-a|}-\frac{u_2-a}{|u_2-a|}\right)\cdot\gamma
\left(\frac{u_2-a}{|u_2-a|}-\frac{u_3-a}{|u_3-a|}\right)\cdot\gamma\\
&\times \left(\frac{u_{2n}-a}{|u_{2n}-a|}-\frac{u_{2n+1}-a}{|u_{2n+1}-a|}\right)\cdot\gamma
\left(\frac{u_{2n+1}-a}{|u_{2n+1}-a|}-\frac{u_{2n+2}-a}{|u_{2n+2}-a|}\right)\cdot\gamma.
\end{align*}
{From} the properties of the gamma matrices and the same argument as in Section~\ref{TopoInvEven}, one has 
\begin{align*} 
\mathcal{T}_\chi^{(\gamma)}&=\sum_{j=1}^{2n+2}(-1)^j{\rm tr}^{(\gamma)}
\frac{u_1-a}{|u_1-a|}\cdot\gamma\cdots\underline{\frac{u_j-a}{|u_j-a|}\cdot\gamma}
\cdots\frac{u_{2n+2}-a}{|u_{2n+2}-a|}\cdot\gamma \\
&=\sum_{j=1}^{2n+2}(-1)^{j-1}(-2i)^n\sum_\sigma (-1)^\sigma 
\frac{u_1^{(\sigma_1)}-a^{(\sigma_1)}}{|u_1-a|}\cdots\underline{\frac{u_j-a}{|u_j-a|}}\\ 
&\qquad\qquad\qquad\qquad\qquad\qquad \cdots\frac{u_{2n+2}^{(\sigma_{2n+1})}-a^{(\sigma_{2n+1})}}{|u_{2n+2}-a|}\\
&=(-2i)^n(2n+1)!\sum_{j=1}^{2n+2}{\rm Vol}[\mathfrak{S}_j(a)],
\end{align*}
where 
$$
\mathfrak{S}_j(a)=\left[a+\frac{u_1-a}{|u_1-a|},\cdots, a , \cdots, a+\frac{u_{2n+2}-a}{|u_{2n+2}-a|}\right].
$$

In the same way as in the calculations in the case of the even dimensions, one has 
$$
\int_{\re^d}dv(a)\chi_R^a(u_{2n+2})\mathcal{T}_\chi^{(\gamma)}
\sim (-2i)^n(2n+1)!{\rm Vol}[\mathfrak{B}^{(2n+1)}]{\rm Vol}[\mathfrak{S}] 
$$
for a large $R$ in the expression (\ref{ChiralIndeONS}) of the chiral index. Here, $\mathfrak{B}^{(2n+1)}$ is 
the $(2n+1)$-dimensional unit ball whose volume is given by  
$$
{\rm Vol}[\mathfrak{B}^{(2n+1)}]=\frac{2^{n+1}\pi^n}{(2n+1)!!}
$$
and $\mathfrak{S}$ is the simplex $[u_1,u_2,\cdots,u_{2n+2}]$ whose volume is given by 
$$
{\rm Vol}[\mathfrak{S}]=\frac{1}{(2n+1)!}{\rm det}[u_1-u_{2n+2},u_2-u_{2n+2},\cdots,u_{2n+1}-u_{2n+2}].
$$
 
We write 
$$
\widetilde{{\rm Ind}}^{(2n+1)}(D_a,S,U;\Lambda)
:=\lim_{R\nearrow\infty}\widetilde{{\rm Ind}}^{(2n+1)}(D_a,S,U;\Lambda,R). 
$$
{From} the these observations, the chiral index (\ref{ChiralIndeONS}) can be written as 
\begin{align}
&\widetilde{{\rm Ind}}^{(2n+1)}(D_a,S,U;\Lambda)\nonumber
\\ \nonumber 
&=\frac{2^{2n}(\pi i)^n}{(2n+1)!!|\Lambda|}
\sum_{u_1,u_2,\ldots,u_{2n+1},u_{2n+2}}
\sum_{\alpha_1,\alpha_2,\ldots,\alpha_{2n+1},\alpha_{2n+2}}
\langle \zeta_{u_{2n+2}}^{\alpha_{2n+2}},SU\zeta_{u_1}^{\alpha_1}\rangle\\ \nonumber
&\times\langle\zeta_{u_1}^{\alpha_1},P_-\zeta_{u_2}^{\alpha_2}\rangle
\langle\zeta_{u_2}^{\alpha_2},P_-\zeta_{u_3}^{\alpha_3}\rangle
\cdots
\langle \zeta_{u_{2n+1}}^{\alpha_{2n+1}}P_-\zeta_{u_{2n+2}}^{\alpha_{2n+2}}\rangle
\chi_\Lambda(u_{2n+2})\\ 
&\times {\rm det}[u_1-u_{2n+2},u_2-u_{2n+2},\cdots,u_{2n+1}-u_{2n+2}].
\label{ChiralIndeONS2}
\end{align}

For the same reason as in the case of the even dimensions, we have 
\begin{equation}
\label{tildeChiralIndDiff}
\left|\widetilde{{\rm Ind}}^{(2n+1)}(D_a,S,U;\Lambda,R)
-\widetilde{{\rm Ind}}^{(2n+1)}(D_a,S,U;\Lambda)\right|\le \frac{{\rm Const.}}{R}
\end{equation}
for a large $R$, where the constant in the right-hand side is independent of the lattice $\Lambda$. 

\begin{lem}
The following equalities are valid: 
\begin{align}
{\rm Ind}^{(2n+1)}(D_a,S,U)&=\lim_{R\nearrow\infty}\lim_{\Lambda\nearrow\ze^d}
\widetilde{{\rm Ind}}^{(2n+1)}(D_a,S,U;\Lambda,R)\nonumber\\ 
&=\lim_{\Lambda\nearrow\ze^d}\lim_{R\nearrow\infty}
\widetilde{{\rm Ind}}^{(2n+1)}(D_a,S,U;\Lambda,R).
\label{exchangeLimChiralInd} 
\end{align}
\end{lem}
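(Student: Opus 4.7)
The plan is to mirror the analogous argument from the even-dimensional case (the proof following Lemma~\ref{lem:IndtildeInd}) and patch together three ingredients that are already in hand: Lemma~\ref{doubleLimChiralInd}, Lemma~\ref{ChiralIndLimOmegaLambda}, and the uniform rate bound~(\ref{tildeChiralIndDiff}).

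First I would dispatch the first equality in (\ref{exchangeLimChiralInd}) directly. Indeed, for fixed $R$, Lemma~\ref{ChiralIndLimOmegaLambda} gives
$$
\lim_{\Lambda\nearrow\ze^d}\widetilde{{\rm Ind}}^{(2n+1)}(D_a,S,U;\Lambda,R)=\lim_{\Omega\nearrow\re^d}{\rm Ind}^{(2n+1)}(D_a,S,U;\Omega,R),
$$
and then Lemma~\ref{doubleLimChiralInd} gives that the $R\nearrow\infty$ limit of the right-hand side equals ${\rm Ind}^{(2n+1)}(D_a,S,U)$. Concatenating the two limits yields the first equality.

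For the second equality (exchange of limits), I would run the standard $\varepsilon/3$ argument. Given $\varepsilon>0$, pick $R$ so large that simultaneously
$$
\bigl|\widetilde{{\rm Ind}}^{(2n+1)}(D_a,S,U;\Lambda,R)-\widetilde{{\rm Ind}}^{(2n+1)}(D_a,S,U;\Lambda)\bigr|<\varepsilon/3
$$
uniformly in $\Lambda$, which is exactly the content of (\ref{tildeChiralIndDiff}), and
$$
\bigl|{\rm Ind}^{(2n+1)}(D_a,S,U;\Omega,R)-{\rm Ind}^{(2n+1)}(D_a,S,U)\bigr|<\varepsilon/3
$$
uniformly in $\Omega$, which I would extract from the proof of Lemma~\ref{doubleLimChiralInd} by keeping track of the ${\rm Const.}/R$ bound on the $(\chi_R^a-1)$-cutoff remainder there. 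With this fixed $R$, Lemma~\ref{ChiralIndLimOmegaLambda} lets us choose $\Lambda$ and $\Omega$ large enough that
$$
\bigl|\widetilde{{\rm Ind}}^{(2n+1)}(D_a,S,U;\Lambda,R)-{\rm Ind}^{(2n+1)}(D_a,S,U;\Omega,R)\bigr|<\varepsilon/3.
$$
The triangle inequality then produces
$$
\bigl|\widetilde{{\rm Ind}}^{(2n+1)}(D_a,S,U;\Lambda)-{\rm Ind}^{(2n+1)}(D_a,S,U)\bigr|<\varepsilon,
$$
which is precisely the second equality after sending $\Lambda\nearrow\ze^d$.

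The main obstacle, as in the even-dimensional analogue, is guaranteeing that the $R$-rate bound (\ref{tildeChiralIndDiff}) and the analogous rate for ${\rm Ind}^{(2n+1)}(D_a,S,U;\Omega,R)$ are genuinely \emph{uniform} in the volume parameters $\Lambda$ and $\Omega$; only then can one pick $R$ first and the volumes afterward. This uniformity is the sole nontrivial input, but it is already baked in: it comes from the $1/r$ decay of ${\rm Vol}[\mathfrak{S}_j(a)]-{\rm Vol}[\mathfrak{B}_j(a)]$ combined with Assumption~\ref{Assumption}, which gives the exponential decay (\ref{decayPF}) of the matrix elements of $P_-$ and thereby suppresses the tails in the $u_i$-sums independently of $\Lambda$. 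Everything else is the same triangle-inequality bookkeeping as in the even-dimensional proof, so no new analytic input beyond what has already been established is required.
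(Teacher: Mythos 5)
Your proof is correct and follows essentially the same route as the paper: the first equality by concatenating Lemmas~\ref{ChiralIndLimOmegaLambda} and \ref{doubleLimChiralInd}, and the second by the $\varepsilon/3$ triangle-inequality argument resting on the $\Lambda$-uniform bound (\ref{tildeChiralIndDiff}) and the $\Omega$-uniform ${\rm Const.}/R$ remainder from the proof of Lemma~\ref{doubleLimChiralInd}. Your emphasis on the uniformity of the $1/R$ rates in the volume parameters correctly identifies the one nontrivial input, which the paper indeed establishes beforehand.
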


\begin{proof}
The first equality follows from Lemmas~\ref{doubleLimChiralInd} and \ref{ChiralIndLimOmegaLambda}. 

Let us prove the second equality. Let $\varepsilon$ be a small positive number. 
{From} the proof of Lemma~\ref{doubleLimChiralInd} and the above inequality (\ref{tildeChiralIndDiff}), 
there exists a large $R$ such that  
$$
\left|{\rm Ind}^{(2n+1)}(D_a,S,U;\Omega,R)-{\rm Ind}^{(2n+1)}(D_a,S,U)
\right|\le{{\rm Const.}}/{R}\le \varepsilon/3
$$
for any large $\Omega$, and that 
\begin{equation}
\left|\widetilde{{\rm Ind}}^{(2n+1)}(D_a,S,U;\Lambda,R)
-\widetilde{{\rm Ind}}^{(2n+1)}(D_a,S,U;\Lambda)\right|\le {{\rm Const.}}/{R}\le \varepsilon/3
\end{equation}
for any large $\Lambda$. For a fixed $R$ satisfying these conditions, we can find a large $\Omega$ and 
a large $\Lambda$ such that 
$$
\left|{\rm Ind}^{(2n+1)}(D_a,S,U;\Omega,R)-\widetilde{{\rm Ind}}^{(2n+1)}(D_a,S,U;\Lambda,R)\right|
\le \varepsilon/3
$$
from Lemma~\ref{ChiralIndLimOmegaLambda}. Combining these three inequalities, one has 
\begin{align*}
&\left|{\rm Ind}^{(2n+1)}(D_a,S,U)-\widetilde{{\rm Ind}}^{(2n+1)}(D_a,S,U;\Lambda)\right|\\
&\le\left|{\rm Ind}^{(2n+1)}(D_a,S,U)-{\rm Ind}^{(2n+1)}(D_a,S,U;\Omega,R)\right|\\
&+\left|{\rm Ind}^{(2n+1)}(D_a,S,U;\Omega,R)-\widetilde{{\rm Ind}}^{(2n+1)}(D_a,S,U;\Lambda,R)\right|\\
&+\left|\widetilde{{\rm Ind}}^{(2n+1)}(D_a,S,U;\Lambda,R)-\widetilde{{\rm Ind}}^{(2n+1)}(D_a,S,U;\Lambda)\right|
\le \varepsilon.  
\end{align*}
This implies that the second equality holds. 
\end{proof}

By definition, the second equality of (\ref{exchangeLimChiralInd}) implies 
$$
{\rm Ind}^{(2n+1)}(D_a,S,U)=\lim_{\Lambda\nearrow\ze^d}
\widetilde{{\rm Ind}}^{(2n+1)}(D_a,S,U,\Lambda).
$$ 
Thus, it is sufficient to calculate the right-hand side of (\ref{ChiralIndeONS2}) for deriving the expression of 
the chiral index in terms of the topological invariant, i.e., a winding number. 
The determinant in the right-hand side of (\ref{ChiralIndeONS2}) can be computed as 
\begin{multline*}
{\rm det}[u_1-u_{2n+2},u_2-u_{2n+2},\cdots,u_{2n+1}-u_{2n+2}]\\
=\sum_\sigma (-1)^\sigma (u_1^{(\sigma_1)}-u_2^{(\sigma_1)})(u_2^{(\sigma_2)}-u_3^{(\sigma_2)})\cdots
(u_{2n+1}^{(\sigma_{2n+1})}-u_{2n+2}^{(\sigma_{2n+1})})
\end{multline*}
in the same way as in (\ref{Volsimplexdet}). Substituting this into the right-hand side of (\ref{ChiralIndeONS2}), 
we obtain 
\begin{multline*}
\widetilde{{\rm Ind}}^{(2n+1)}(D_a,S,U,\Lambda)\\
=\frac{2^{2n}(\pi i)^n}{(2n+1)!!|\Lambda|}\sum_\sigma {\rm Tr}\; \chi_\Lambda SU[X^{(\sigma_1)},P_-]
\cdots[X^{(\sigma_{2n+1})},P_-],
\end{multline*}
where the position operator $X=(X^{(1)},\ldots,X^{(2n+1)})$ is given by (\ref{X}) 
with replacing the spatial dimension $d=2n$ by $(2n+1)$. 
 
\begin{thm}
\label{thm:ChiralIndtheta}
The chiral index can be expressed as 
\begin{multline}
\label{ChiralInvariant}
{\rm Ind}^{(2n+1)}(D_a,S,U)\\
=\frac{2^{2n}(\pi i)^n}{(2n+1)!!}\sum_\sigma (-1)^\sigma    
{\rm Tr}\; SU[\vartheta_a^{(\sigma_1)},P_-][\vartheta_a^{(\sigma_2)},P_-]\cdots[\vartheta_a^{(\sigma_{2n+1})},P_-].
\end{multline}
\end{thm}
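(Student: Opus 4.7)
The plan is to mirror the proof of Theorem~\ref{thm:IndDaIndtheta} carried out in Appendix~\ref{AppenProofthm:IndDaIndtheta} for the even-dimensional case. The starting point is the expression for $\widetilde{\rm Ind}^{(2n+1)}(D_a,S,U;\Lambda)$ in terms of position operators $X^{(j)}$ established in the preceding equation, whose $\Lambda\nearrow\ze^d$ limit equals the chiral index. I substitute the truncation $\hat X^{(j)} := \sum_{a^{(j)}=-\tilde L+1}^{\tilde L} \vartheta_a^{(j)} - \tilde L$ in place of $X^{(j)}$, for $\tilde L$ large, and let $\tilde L\to\infty$. Since constants commute with $P_-$, the $-\tilde L$ shift drops out of each commutator $[X^{(\sigma_i)},P_-]$, and the single trace unfolds into a multiple sum over $(a^{(1)},\ldots,a^{(2n+1)})\in\ze^{2n+1}$ of terms of the form $\sum_\sigma(-1)^\sigma{\rm Tr}\,\chi_\Lambda SU[\vartheta_a^{(\sigma_1)},P_-]\cdots[\vartheta_a^{(\sigma_{2n+1})},P_-]$, still divided by $|\Lambda|$.

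The key analytic input is that, for any fixed $a\in\re^d\setminus\ze^d$, the operator
$$
\mathcal{O}(a):=\sum_\sigma(-1)^\sigma SU[\vartheta_a^{(\sigma_1)},P_-]\cdots[\vartheta_a^{(\sigma_{2n+1})},P_-]
$$
is trace class and its trace $T(a):={\rm Tr}\,\mathcal{O}(a)$ is independent of $a$. The trace-class property follows from the exponential decay~(\ref{decayPF}) of the matrix elements of $P_-$: each commutator $[\vartheta_a^{(j)},P_-]$ has a kernel exponentially concentrated on the hyperplane $\{x^{(j)}=a^{(j)}\}$, so the product of $2n+1$ such commutators has a kernel exponentially concentrated near the single intersection point $a$, producing a trace-norm bound uniform in $a$. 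Independence of $a$ is obtained by the compactness-of-difference argument already used in Section~\ref{subsec:ChiralIndex} to show that ${\rm Ind}^{(2n+1)}(D_a,S,U)$ does not depend on $a$: writing $\mathcal{O}(a')-\mathcal{O}(a)$ as a telescoping sum, each summand becomes a trace of a commutator that vanishes after cyclic rearrangement, because $\vartheta_{a'}^{(j)}-\vartheta_a^{(j)}$ is supported in a thin slab and couples to $P_-$ only through exponentially decaying off-diagonal matrix elements.

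The conclusion comes from a translation-averaging argument. Because the kernel of $\mathcal{O}(a)$ is exponentially localized near $a$, one has $|{\rm Tr}\,\chi_\Lambda\mathcal{O}(a)-T(a)|\le C\,e^{-c\,\dist(a,\partial\Lambda)}$ for $a$ deep inside $\Lambda$ and $|{\rm Tr}\,\chi_\Lambda\mathcal{O}(a)|\le C\,e^{-c\,\dist(a,\Lambda)}$ for $a$ outside $\Lambda$. Summing over $a\in\ze^{2n+1}$ and dividing by $|\Lambda|$ therefore produces $T(a_0)$ in the limit $\Lambda\nearrow\ze^d$, for any fixed $a_0\in\re^d\setminus\ze^d$, since the boundary contributions are of order $|\partial\Lambda|/|\Lambda|\to 0$. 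Reinstating the prefactor $2^{2n}(\pi i)^n/(2n+1)!!$ then gives the asserted identity.

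The main obstacle will be the careful bookkeeping of several nested limits and interchanges. The error in replacing the unbounded $X^{(j)}$ by the bounded truncation $\hat X^{(j)}$ must be estimated inside the trace as $\tilde L\to\infty$, and the commuting of the limits $R\to\infty$ (from the definition of $\widetilde{\rm Ind}^{(2n+1)}$), $\tilde L\to\infty$, and $\Lambda\nearrow\ze^d$ must be justified uniformly. The required trace-norm estimates for the $(2n+1)$-fold commutator products are close analogues of those proved in Appendices~\ref{traceclassA} and \ref{AppenProofthm:IndDaIndtheta} for the even case, with the only algebraic differences being the odd number of commutators and the replacement of $P_{\rm F}$ by $SU$ paired with $P_-$; they ultimately rely on the exponential decay in Assumption~\ref{Assumption}.
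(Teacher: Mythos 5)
Your overall architecture coincides with the paper's: Appendix~\ref{proofTheorem:ChiralIndtheta} proves exactly the two lemmas you identify, namely (i) a localization lemma showing that $\chi_\Lambda$ can be removed from the trace for fixed $a$ (the analogue of Lemma~\ref{TrvarthetachiLambda}, proved from the exponential decay (\ref{decayPF})), and (ii) independence of the trace from the kink position $a$ (the analogue of Lemma~\ref{lem:IndependKinkInd}), after which the average over the $|\Lambda|$ kink positions collapses to the common value.

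There is, however, one step where your argument as stated would fail: the kink-independence. You assert that after telescoping, ``each summand becomes a trace of a commutator that vanishes after cyclic rearrangement.'' For a fixed permutation $\sigma$ the quantity ${\rm Tr}\; SU[\Delta\vartheta_a^{(1)},P_-][\vartheta_a^{(\sigma_2)},P_-]\cdots[\vartheta_a^{(\sigma_{2n+1})},P_-]$ does \emph{not} vanish individually; it is not the trace of a single commutator of trace-class operators. The cancellation in the paper is achieved only after (a) using the off-diagonal identities $P_-[\vartheta_a^{(i)},P_-]P_-=0$ and $(1-P_-)[\vartheta_a^{(i)},P_-](1-P_-)=0$ together with $SP_-=(1-P_-)S$ to rewrite each term, and (b) pairing terms across \emph{different} permutations in the antisymmetrized sum $\sum_\sigma(-1)^\sigma$ (specifically, a permutation with $\Delta\vartheta^{(1)}$ in a given slot against its cyclic shift), whereupon the paired contributions cancel. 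Without the antisymmetrization the statement is simply false, so this pairing mechanism is the missing idea you would need to supply. Your appeal to the compactness-of-difference argument of Sec.~\ref{subsec:ChiralIndex} does not substitute for it either: that argument establishes $a$-independence of the Fredholm index of $\mathfrak{T}_\chi$, whereas here one needs $a$-independence of a trace formula that is not a priori a Fredholm index, which is why the paper gives a separate algebraic proof. The remaining ingredients of your proposal (trace-class bounds, boundary estimates of order $|\partial\Lambda|/|\Lambda|$, interchange of limits) are handled in the paper essentially as you describe.
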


The proof is slightly different from that of Theorem~\ref{thm:IndDaIndtheta}. 
The details are given in Appendix~\ref{proofTheorem:ChiralIndtheta}. 

In the following, we write 
$$
{\rm Ind}^{(2n+1)}(\vartheta_a,S,U)={\rm Ind}^{(2n+1)}(D_a,S,U). 
$$

\begin{rem}
It is instructive to note the following: 
As an example, let us consider a one-dimensional finite system.
Then, all of the operators on the corresponding Hilbert space are a finite dimensional matrix. 
Therefore, the chiral index can be computed as  
\begin{align*}
{\rm Ind}^{(1)}(\vartheta_a,S,U)&={\rm Tr}\; SU[\vartheta_a,P_-]\\
&={\rm Tr}\; S(P_+-P_-)(\vartheta_a P_--P_-\vartheta_a)\\
&={\rm Tr}\; P_-S(P_+-P_-)\vartheta_a+{\rm Tr}\; SP_-\vartheta_a\\
&={\rm Tr}\; SP_+\vartheta_a+{\rm Tr}\; SP_-\vartheta_a={\rm Tr}\; S\vartheta_a=0, 
\end{align*}
where we have used the property of the trace, the property of the chiral operator $S$ and $P_-S=SP_+$. 
Namely, the chiral index is always vanishing for a finite system. In other words, this 
is nothing but a consequence of the well-known fact that the index of a Fredholm operator on a finite-dimensional 
Banach space is vanishing. Thus, the expression (\ref{ChiralInvariant}) of the chiral index is not 
necessarily convenient to numerically compute the value of the index.  
\end{rem}

We can derive a useful expression of the above chiral index. 
Since the projection $P_-$ onto the lower band is written in terms of the contour integral,  
one has 
$$
[\vartheta_a^{(j)},P_-]=\frac{i}{2\pi i}\oint dz \frac{1}{z-H}J_a^{(j)}\frac{1}{z-H}
$$
in the same way as in the case of even dimensions, where $J_a^{(j)}:=i[H,\theta_\ell]$ is the local current operator. 
Substituting this into the expression (\ref{ChiralInvariant}) of the chiral index, we obtain 
\begin{multline*} 
{\rm Ind}^{(2n+1)}(\vartheta_a,S,U)\\=\frac{i^{n}}{(2n+1)!!\cdot 2\cdot \pi^{n+1}}\sum_\sigma(-1)^\sigma 
\oint dz_1\oint dz_2\cdots \oint dz_{2n+1}{\rm Tr}\; 
SU\frac{1}{z_1-H}\\ \times J_a^{(\sigma_1)}\frac{1}{z_1-H}
\frac{1}{z_2-H}J_a^{(\sigma_2)}\frac{1}{z_2-H}\cdots\frac{1}{z_{2n+1}-H}J_a^{(\sigma_{2n+1})}\frac{1}{z_{2n+1}-H}.
\end{multline*}
This right-hand side is nothing but the generalized Chern number in Theorem~\ref{NonCommuThmOdd}. 

In the same way as in the case of even dimensions, 
we want to approximate the above right-hand side by the operators on the finite-volume lattice $\Lambda$.  
We assume that the corresponding finite system with the periodic boundary condition 
has a spectral gap or a localization regime between the upper and lower bands. 
Then, relying on the locality of the current operator $J_a^{(j)}$ and Assumption~\ref{Assumption}, 
the above right-hand side can be approximated by 
the corresponding finite systems on the finite lattice $\Lambda$.  
We write $\varphi_{\pm,j}$ for the energy eigenvector of the corresponding finite-volume Hamiltonian 
with the eigenvalue $E_{\pm,j}$, where the subscript $\pm$ denotes the index for the upper and lower bands, 
respectively.  
Consequently, we obtain the relation between the index and the topological invariant: 
\begin{multline}
\label{currentChernChiral} 
{\rm Ind}^{(2n+1)}(\vartheta_a,S,U)=\frac{2^{2n}i^{3n+1}\pi^n}{(2n+1)!!}\\
\times \lim_{\Lambda\nearrow\ze^{2n+1}}
\sum_\sigma(-1)^\sigma \sum_{j_0,j_1,\ldots,j_{2n+1}}\Biggl\{ \langle\varphi_{-,j_0},S\varphi_{+,j_1}\rangle
\frac{\langle \varphi_{+,j_1},J_a^{(\sigma_1)}\varphi_{-,j_2}\rangle}{E_{-,j_2}-E_{+,j_1}}\\
\times \frac{\langle \varphi_{-,j_2},J_a^{(\sigma_2)}\varphi_{+,j_3}\rangle}{E_{-,j_2}-E_{+,j_3}}
\cdots\frac{\langle \varphi_{-,j_{2n}},J_a^{(\sigma_{2n})}\varphi_{+,j_{2n+1}}\rangle}{E_{-,j_{2n}}-E_{+,j_{2n+1}}}
\frac{\langle \varphi_{+,j_{2n+1}},J_a^{(\sigma_{2n+1})}\varphi_{-,j_0}\rangle}{E_{-,j_0}-E_{+,j_{2n+1}}}\\
-\langle\varphi_{+,j_0},S\varphi_{-,j_1}\rangle
\frac{\langle \varphi_{-,j_1},J_a^{(\sigma_1)}\varphi_{+,j_2}\rangle}{E_{-,j_1}-E_{+,j_2}}
\frac{\langle \varphi_{+,j_2},J_a^{(\sigma_2)}\varphi_{-,j_3}\rangle}{E_{-,j_3}-E_{+,j_2}}\\
\cdots\frac{\langle \varphi_{+,j_{2n}},J_a^{(\sigma_{2n})}\varphi_{-,j_{2n+1}}\rangle}{E_{-,j_{2n+1}}-E_{+,j_{2n}}}
\frac{\langle \varphi_{-,j_{2n+1}},J_a^{(\sigma_{2n+1})}\varphi_{+,j_0}\rangle}{E_{-,j_{2n+1}}-E_{+,j_0}}\Biggr\}. 
\end{multline}
In one dimension $(n=0)$, the right-hand side of the chiral index (\ref{currentChernChiral}) coincides 
with the imaginary part of the linear response coefficient for 
the ${\rm AC}$ conductance except for the prefactor $2$ as we show in Appendix~\ref{sec:LR}.  

\subsection{Translationally Invariant Systems in Odd dimensions}

As an example, let us consider a translationally invariant Hamiltonian $H_\Lambda$ on 
the hypercubic box $\Lambda$ with the side length $L$. 
The energy eigenvectors of the Hamiltonian $H_\Lambda$ can be written as  $\varphi_{\pm,{\bf k}}$ 
with the energy eigenvalue $E_{\pm,{\bf k}}$ 
in terms of the momentum ${\bf k}:=(k^{(1)},k^{(2)},\ldots,k^{(2n+1)})$.   
In the same way as in the case of even dimensions, the chiral index of (\ref{currentChernChiral}) is written as 
\begin{multline*}
{\rm Ind}^{(2n+1)}(\vartheta_a,S,U)=\frac{i^{3n+1}}{(2n+1)!!\cdot 2\cdot \pi^{n+1}}\\
\times 
\sum_\sigma(-1)^\sigma \oint dk^{(1)}\cdots dk^{(2n+1)}
\Biggl\{ \langle\varphi_{-,{\bf k}},S\varphi_{+,{\bf k}}\rangle
\frac{\langle \varphi_{+,{\bf k}},J^{(\sigma_1)}\varphi_{-,{\bf k}}\rangle}{E_{-,{\bf k}}-E_{+,{\bf k}}}\\
\times \frac{\langle \varphi_{-,{\bf k}},J^{(\sigma_2)}\varphi_{+,{\bf k}}\rangle}{E_{-,{\bf k}}-E_{+,{\bf k}}}
\cdots\frac{\langle \varphi_{-,{\bf k}},J^{(\sigma_{2n})}\varphi_{+,{\bf k}}\rangle}{E_{-,{\bf k}}-E_{+,{\bf k}}}
\frac{\langle \varphi_{+,{\bf k}},J^{(\sigma_{2n+1})}\varphi_{-,{\bf k}}\rangle}{E_{-,{\bf k}}-E_{+,{\bf k}}}\\
-\langle\varphi_{+,{\bf k}},S\varphi_{-,{\bf k}}\rangle
\frac{\langle \varphi_{-,{\bf k}},J^{(\sigma_1)}\varphi_{+,{\bf k}}\rangle}{E_{-,{\bf k}}-E_{+,{\bf k}}}
\frac{\langle \varphi_{+,{\bf k}},J^{(\sigma_2)}\varphi_{-,{\bf k}}\rangle}{E_{-,{\bf k}}-E_{+,{\bf k}}}\\
\cdots\frac{\langle \varphi_{+,{\bf k}},J^{(\sigma_{2n})}\varphi_{-,{\bf k}}\rangle}{E_{-,{\bf k}}-E_{+,{\bf k}}}
\frac{\langle \varphi_{-,{\bf k}},J^{(\sigma_{2n+1})}\varphi_{+,{\bf k}}\rangle}{E_{-,{\bf k}}-E_{+,{\bf k}}}\Biggr\}. 
\end{multline*}
Further, this can be written 
\begin{multline}
\label{IndChiralPDel}
{\rm Ind}^{(2n+1)}(\vartheta_a,S,U)=\frac{-i^{3n+1}}{(2n+1)!!\cdot 2\cdot \pi^{n+1}}\\
\times 
\sum_\sigma(-1)^\sigma \oint dk^{(1)}\cdots dk^{(2n+1)}\;
{\rm Tr}\; S(1-2P_-({\bf k}))\frac{\partial P_-({\bf k})}{\partial k^{(\sigma_1)}}\cdots
\frac{\partial P_-({\bf k})}{\partial k^{(\sigma_{2n+1})}}
\end{multline}
in terms of the projection $P_-({\bf k})$ onto the lower band in the momentum space. 

More concretely, let us consider the Hamiltonian $H$ which is given by 
\begin{align*}
(H\varphi)(x)&=\sum_{j=1}^{2n+1} \frac{t_{\rm s}^{(j)}}{2i}
[\varphi(x+e^{(j)})-\varphi(x-e^{(j)})]\gamma^{(j)}\\
&+\Big\{m_0+\sum_{j=1}^{2n+1}\frac{t_{\rm c}^{(j)}}{2}[\varphi(x+e^{(j)})+\varphi(x-e^{(j)})]
\Big\}\gamma^{(2n+2)}, 
\end{align*}
where $t_{\rm s}^{(j)}$, $t_{\rm c}^{(j)}$ and $m_0$ are real constants, $e^{(j)}$ 
are the unit vectors in the $j$-th direction, and $\gamma^{(j)}$ are the gamma matrices. 
Clearly, this Hamiltonian $H$ is chiral symmetric with respect to the chiral operator $S=\gamma^{(2n+3)}$. 
The three-dimensional model is given by the equation (82) of \cite{RSFL}.
More generally, the models of this type are generally called odd dimensional 
Pauli-Dirac theory \cite{So,IM}. As is well known, these exhibit a nontrivial topological 
invariant, winding number \cite{MP}. Recently, these models have been intensively 
investigated again in \cite{Kitaev,SRFL,RSFL,MHSP,PSB,GSB}. 

The Fourier transform of the Hamiltonian $H$ is given by 
$$
H({\bf k})=\gamma\cdot\mathcal{E}({\bf k})=\sum_{i=1}^{2n+2} \gamma^{(i)}\mathcal{E}^{(i)}({\bf k}),
$$
where we have written 
$\mathcal{E}({\bf k}):=(\mathcal{E}^{(1)}({\bf k}),\mathcal{E}^{(2)}({\bf k}),\ldots,\mathcal{E}^{(2n+2)}({\bf k}))$ 
with\hfill\break $\mathcal{E}^{(j)}({\bf k})=t_{\rm s}^{(j)}\sin k^{(j)}$ for $j=1,2,\ldots,2n+1$, 
and 
$$
\mathcal{E}^{(2n+2)}({\bf k})=m_0+\sum_{j=1}^{2n}t_{\rm c}^{(j)}\cos k^{(j)}.
$$
{From} the properties of the gamma matrices, one has 
$$
H({\bf k})^2=|\mathcal{E}({\bf k})|^2 \quad \mbox{with} \quad 
|\mathcal{E}({\bf k})|=\sqrt{\sum_{i=1}^{2n+2} [\mathcal{E}^{(i)}({\bf k})]^2}.
$$
Therefore, if $|\mathcal{E}({\bf k})|$ is nonvanishing for all ${\bf k}$, 
then there appears a nonvanishing spectral gap between the upper and lower bands. 
In the following, we consider such a situation. 

The projection onto the lower band can be written as 
$$
P_-({\bf k})=\frac{1}{2}\Bigl[1-\frac{1}{|\mathcal{E}({\bf k})|}\gamma\cdot\mathcal{E}({\bf k})\Bigr].
$$
We define the unit vector $n({\bf k})$ with the $j$-th component, 
$$
n^{(j)}({\bf k}):=\frac{1}{|\mathcal{E}({\bf k})|}\mathcal{E}^{(j)}({\bf k}), 
$$
for $j=1,2,\ldots,2n+2$. Substituting these into the expression of the chiral index (\ref{IndChiralPDel}) 
and using the properties of the gamma matrices, we obtain 
\begin{multline*}
{\rm Ind}^{(2n+1)}(\vartheta_a,S,U)=\frac{(-1)^{n-1}n!}{2\cdot \pi^{n+1}}\\
\times 
\sum_\tau(-1)^\tau \oint dk^{(1)}\cdots dk^{(2n+1)}\;
n^{(\tau_1)}({\bf k}))\frac{\partial n^{(\tau_2)}({\bf k})}{\partial k^{(1)}}\cdots
\frac{\partial n^{(\tau_{2n+2})}({\bf k})}{\partial k^{(2n+1)}}. 
\end{multline*}
Since the quantity in the right-hand side is written 
$$
\sum_\tau(-1)^\tau \oint dk^{(1)}\cdots dk^{(2n+1)}\;
n^{(\tau_1)}({\bf k}))\frac{\partial n^{(\tau_2)}({\bf k})}{\partial k^{(1)}}\cdots
\frac{\partial n^{(\tau_{2n+2})}({\bf k})}{\partial k^{(2n+1)}}=\nu \left|\S^{2n+1}\right|
$$
in terms of the winding number $\nu$ for the $(2n+1)$-dimensional sphere $\S^{2n+1}$, 
we obtain 
$$
{\rm Ind}^{(2n+1)}(\vartheta_a,S,U)=(-1)^{n-1}\nu\in\ze.
$$
Here, we have used that the surface area of $\S^{2n+1}$ is given by 
$$
\left|\S^{2n+1}\right|={2 \pi^{n+1}}/{n!}.
$$

\section{Periodic Table}
\label{PeriodicTable} 

In this section, we show that all of the indices in Table~\ref{Ptable} can be explained by our noncommutative 
index theorems in Sec.~\ref{Sec:MainResults}.

\subsection{Signatures of Operators}

To begin with, we define the signatures of operators which we will use below. 

We denote a complex conjugation transformation by $K$ defined by  
$K\varphi=\overline{\varphi}$ for the wavefunctions $\varphi\in\ell^2(\ze^d,\co^M)\otimes\co^{2^n}$. 
{From} the definition (\ref{C+}) of the time-reversal operator $C_+$, one has  
\begin{equation}
\label{C+KC+K}
C_+KC_+K\varphi=C_+KC_+\overline{\varphi}
=\begin{cases}
(-1)^\ell\varphi, & n=2\ell;\\
(-1)^{\ell-1}\varphi, & n=2\ell-1
\end{cases}
\end{equation}
for a wavefunction $\varphi$, where $\ell\in\{0,1,2,\ldots\}$. 
We define the signature of the operator $(C_+K)^2$ by 
\begin{equation}
\label{sgnC+K2}
{\rm sgn}(C_+K)^2=\begin{cases}
(-1)^\ell, & n=2\ell;\\
(-1)^{\ell-1}, & n=2\ell-1.
\end{cases}
\end{equation}
Similarly, for the other time-reversal operator $C_-$, 
\begin{equation}
\label{C-KC-K}
C_-KC_-K\varphi=C_-KC_-\overline{\varphi}=(-1)^\ell\varphi
\end{equation}
for $n=2\ell-1$ or $n=2\ell$. Therefore, the signature of $(C_-K)^2$ is defined by 
\begin{equation}
\label{sgnC-k2}
{\rm sgn}(C_-K)^2=(-1)^\ell
\end{equation} 
for $n=2\ell-1$ or $n=2\ell$.    

Further, we define the signature of the gamma matrix $\gamma^{(2n+1)}$. 
For this purpose, consider 
\begin{align}
\label{commCpmKgamma}
C_\pm K \gamma^{(2n+1)}\varphi&=C_\pm\overline{\gamma^{(2n+1)}}\overline{\varphi}\\ \nonumber
&=C_\pm \gamma^{(2n+1)}C_\pm C_\pm \overline{\varphi}\\ \nonumber
&=(-1)^n\gamma^{(2n+1)}C_\pm\overline{\varphi}=(-1)^n\gamma^{(2n+1)}C_\pm K\varphi,
\end{align}
where we have used $\overline{\gamma^{(2n+1)}}=\gamma^{(2n+1)}$ and 
(\ref{Cpmgamma(2n+1)Cpm}) in Appendix~\ref{AppGamma}. 
By relying this relation, we define the signature of $\gamma^{(2n+1)}$ by 
$$
{\rm sgn}(\gamma^{(2n+1)})=(-1)^n. 
$$
Thus, if ${\rm sgn}(\gamma^{(2n+1)})=+1$, then $C_\pm K$ and $\gamma^{(2n+1)}$ commute 
with each other, and for ${\rm sgn}(\gamma^{(2n+1)})=-1$, they anticommute with each other. 

Further, in order to define the signatures of the Dirac operator $D_a$, we note  
\begin{align}
\label{commCpmKDa}
C_+KD_a\varphi&=C_+\overline{D_a}\overline{\varphi}\\ \nonumber
&=C_+\overline{D_a}C_+C_+K\varphi=(-1)^{n+1}D_aC_+K\varphi,
\end{align}
where we have used (\ref{C+gammaC+}). From this, we define 
$$
{\rm sgn}_+(D_a)=(-1)^{n+1}.
$$
Similarly, 
\begin{align}
\label{commCpmKDa-}
C_-KD_a\varphi&=C_-\overline{D_a}\overline{\varphi}\\ \nonumber
&=C_-\overline{D_a}C_-K\varphi=(-1)^nD_aC_-K\varphi,
\end{align}
where we have used (\ref{C-gammaC-}). Therefore, we define 
$$
{\rm sgn}_-(D_a)=(-1)^n.
$$

All these results are summarized in Table~\ref{sgnTRO} in even dimensions, $d\le 8$. 
Clearly, one can find the periodicity with the period $8$, and hence it is enough to deal with the case of $d\le 8$.

\begin{table}
\begin{center}
\begin{tabular}{|c||c|c|c|c|}
\hline
   $\!d=2n\!$  & $2$ &  $4$ & $6$ & $8$ \\
\hline
  $n$ & $1$  & $2$  & $3$ & $4$\\
\hline\hline
  ${\rm sgn}(C_+K)^2$ & $+1$ & $-1$ & $-1$ & $+1$\\
\hline
  ${\rm sgn}(C_-K)^2$  & $-1$ & $-1$ & $+1$ & $+1$ \\
\hline
  ${\rm sgn}(\gamma^{(2n+1)})$ & $-1$ & $+1$ & $-1$ & $+1$ \\
\hline
  ${\rm sgn}_+(D_a)$ & $+1$ & $-1$ & $+1$ & $-1$ \\
\hline
  ${\rm sgn}_-(D_a)$ & $-1$ & $+1$ & $-1$ & $+1$ \\
\hline
\end{tabular}
\bigskip
\caption{\sl The signatures of the operators related to the time-reversal transformations in even dimensions.}
\label{sgnTRO}
\end{center}
\end{table}

\subsection{Even Dimensions}
\label{PTableEvenD}

Consider first the case of even dimensions. 
To begin with, we recall the index of (\ref{Ind(2n)DPA}) as 
\begin{equation}
\label{IndexpA}
{\rm Ind}^{(2n)}(D_a,P_{\rm F})=\frac{1}{2}\;{\rm Tr}\; A^{2n+1}
\end{equation}
which is written in terms of the operator $A=\gamma^{(2n+1)}(P_{\rm F}-D_aP_{\rm F}D_a)$.

Consider first the classes without chiral symmetry. The rest will be treated in Sec.~\ref{sec:SSE} below. 

\subsubsection{AI and AII Classes}

The Hamiltonian $H$ of the models in these classes exhibits time-reversal symmetry only. 
Namely, it satisfies 
$$
\Theta(H\varphi)=H\varphi^\Theta,
$$
where $\Theta$ is the time-reversal transformation given by (\ref{TRS}).  
First, we extend the time-reversal transformation $\Theta$ to that for the operator $A$ as 
$$
\tilde{\Theta}_\pm:=C_\pm \Theta=C_\pm U^\Theta K.
$$
Note that 
\begin{align*}
(\tilde{\Theta}_\pm)^2\varphi&=C_\pm U^\Theta K C_\pm U^\Theta K\varphi\\
&=U^\Theta C_\pm K C_\pm K K U^\Theta K\varphi\\
&={\rm sgn}(C_\pm K)^2 U^\Theta K U^\Theta K \varphi={\rm sgn}(C_\pm K)^2\Theta^2\varphi,
\end{align*}
where we have used (\ref{C+KC+K}), (\ref{sgnC+K2}), (\ref{C-KC-K}) and (\ref{sgnC-k2}). 
Therefore, we have 
\begin{equation}
\label{tildeTheta2}
(\tilde{\Theta}_\pm)^2\varphi={\rm sgn}(C_\pm K)^2\Theta^2\varphi
\end{equation}
for any function $\varphi\in\ell^2(\ze^d,\co^M\otimes\co^{d_\gamma})$, where $d_\gamma$ is the dimension of 
the Hilbert space for the gamma matrices. 

\bigskip\medskip

\noindent
$\bullet$ {\bf AII Class in Two Dimensions.}
In this case, the Hamiltonian $H$ has only odd time-reversal symmetry. 
We write $\Theta^2=-1$ for the oddness for short. Clearly, from (\ref{tildeTheta2}), we have 
\begin{equation}
\label{tildeTheta2AII2D}
(\tilde{\Theta}_\pm)^2=\mp 1
\end{equation} 
in the present case. 

First, we show that the integer-valued index of (\ref{IndexpA}) is vanishing.  
Let $\lambda\ne 0$ be an eigenvalue of the operator $A$, i.e., 
$$
A\varphi =\lambda\varphi
$$ 
for an eigenvector $\varphi\in\ell^2(\ze^d,\co^M\otimes\co^{d_\gamma})$. Then, 
\begin{align*}
\tilde{\Theta}_\pm A\varphi &=\tilde{\Theta}_\pm \gamma^{(2n+1)}(P_{\rm F}-D_aP_{\rm F}D_a)\varphi\\
&=-\gamma^{(2n+1)}\tilde{\Theta}_\pm(P_{\rm F}-D_aP_{\rm F}D_a)\varphi
=-A\tilde{\Theta}_\pm \varphi,
\end{align*}
where we have used (\ref{commCpmKgamma}), (\ref{commCpmKDa}) and 
\begin{equation}
\label{tildeThetaPFcommu}
\tilde{\Theta}_\pm P_{\rm F}\varphi=P_{\rm F}\tilde{\Theta}_\pm\varphi
\end{equation}
derived from the time-reversal symmetry of the Hamiltonian $H$. 
This implies that $\tilde{\Theta}_\pm\varphi$ is an eigenvector with 
the eigenvalue $-\lambda\ne 0$. This map is one to one from (\ref{tildeTheta2AII2D}). 
Combining this observation with the expression (\ref{IndexpA}) of the index, we obtain 
that the index is vanishing, i.e., 
$$
{\rm Ind}^{(2n)}(D_a,P_{\rm F})=0
$$
for AII class in two dimensions. Therefore, one has 
\begin{align}
{\rm dim}\;{\rm ker}\; (P_{\rm F}-\mathfrak{D}_a^\ast P_{\rm F}\mathfrak{D}_a-1)
&={\rm dim}\;{\rm ker}\; (P_{\rm F}-\mathfrak{D}_a^\ast P_{\rm F}\mathfrak{D}_a+1)\nonumber\\ \nonumber
&={\rm dim}\;{\rm ker}\; (P_{\rm F}-\mathfrak{D}_aP_{\rm F}\mathfrak{D}_a^\ast +1)\\ 
&={\rm dim}\;{\rm ker}\; (P_{\rm F}-\mathfrak{D}_aP_{\rm F}\mathfrak{D}_a^\ast -1) 
\label{dimkerRelations}
\end{align}
{from} the argument for the index in Sec.~\ref{NoSymEvenDInd}. 

Although there is no integer-valued index, the class has a non-trivial $\ze_2$-valued index 
which is defined by  
$$
{\rm Ind}_2^{(2n)}(D_a,P_{\rm F}):=\frac{1}{2}{\rm dim}\;{\rm ker}\; (A-1) \ \mbox{modulo \ } 2. 
$$
The right-hand side can be written as 
\begin{align*}
&\frac{1}{2}{\rm dim}\;{\rm ker}\; (A-1)\\
&=\frac{1}{2}\left[{\rm dim}\;{\rm ker}\; (P_{\rm F}-\mathfrak{D}_a^\ast P_{\rm F}\mathfrak{D}_a-1)
+{\rm dim}\;{\rm ker}\; (P_{\rm F}-\mathfrak{D}_aP_{\rm F}\mathfrak{D}_a^\ast +1)\right]\\
&={\rm dim}\;{\rm ker}\; (P_{\rm F}-\mathfrak{D}_a^\ast P_{\rm F}\mathfrak{D}_a-1),
\end{align*} 
where we have used the above relation (\ref{dimkerRelations}). Therefore, the $\ze_2$ index is written as 
$$
{\rm Ind}_2^{(2n)}(D_a,P_{\rm F})=
{\rm dim}\;{\rm ker}\; (P_{\rm F}-\mathfrak{D}_a^\ast P_{\rm F}\mathfrak{D}_a-1)\ \mbox{modulo \ } 2.
$$
In the following, we will show that the $\ze_2$ index is well defined, i.e., 
the even-oddness of the right-hand side is robust against generic perturbations. 

For this purpose, we recall the operator $B=\gamma^{(2n+1)}(1-P_{\rm F}-D_aP_{\rm F}D_a)$, 
and the relations (\ref{A2B21}) and (\ref{anticommuABBA1}).  
As shown in Sec.~\ref{NoSymEvenDInd}, there exists a one-to-one correspondence between the two eigenvectors, 
$\varphi$ and $B\varphi$, of the operator $A$, 
and hence their eigenvalues come in pairs $\pm\lambda$, provided $0<|\lambda|<1$. 

As shown above, the time-reversal transformation $\tilde{\Theta}_\pm$ induces a similar map to the operator $B$.  
However, from (\ref{commCpmKgamma}), the transformation $\tilde{\Theta}_\pm$ anticommutes with $\gamma^{(2n+1)}$ as 
\begin{equation}
\label{anticommutildeThetagamma}
\{\tilde{\Theta}_\pm,\gamma^{(2n+1)}\}\varphi=0
\end{equation}
for any wavefunction $\varphi$. This implies that the transformation $\tilde{\Theta}_\pm$ maps 
an eigenvector of $\gamma^{(2n+1)}$ onto that with the different eigenvalue. 

On the other hand, the operator $B$ commutes with $\gamma^{(2n+1)}$ as 
\begin{equation}
\label{Bgammacommu}
[B,\gamma^{(2n+1)}]=0. 
\end{equation}
Clearly, this implies that $B$ maps an eigenvector of $\gamma^{(2n+1)}$ onto that without 
changing the eigenvalue of $\gamma^{(2n+1)}$. Thus, we want to find a transformation such that 
the transformation shows a similar pairing property to $B$ as shown above 
and commutes with $\gamma^{(2n+1)}$. 

We show that the transformation $\tilde{\Theta}_\pm D_a$ has the desired properties. 
Since the Dirac operator $D_a$ anticommutes with $\gamma^{(2n+1)}$ by definition, 
we have 
\begin{equation}
\label{tildeThetaDagammacommu}
[\tilde{\Theta}_\pm D_a,\gamma^{(2n+1)}]\varphi=0
\end{equation}
for any wavefunction $\varphi$, where we have used the anticommutation relation (\ref{anticommutildeThetagamma}) 
for $\tilde{\Theta}_\pm$ and $\gamma^{(2n+1)}$. 

Next, we show that the operation $\tilde{\Theta}_\pm D_a$ maps an eigenvector $\varphi$ of the operator $A$ 
with eigenvalue $\lambda$ onto that with $-\lambda$. 
By using the same argument as in the above, one has 
\begin{align*}
\tilde{\Theta}_\pm D_a A\varphi&=\tilde{\Theta}_\pm D_a\gamma^{(2n+1)}(P_{\rm F}-D_aP_{\rm F}D_a)\varphi\\
&=-\tilde{\Theta}_\pm\gamma^{(2n+1)}D_a(P_{\rm F}-D_aP_{\rm F}D_a)\varphi\\
&=\tilde{\Theta}_\pm\gamma^{(2n+1)}(P_{\rm F}-D_aP_{\rm F}D_a)D_a\varphi=-A\tilde{\Theta}_\pm D_a\varphi. 
\end{align*}
This implies that, if $\varphi$ be an eigenvector of $A$ with eigenvalue $\lambda$, then 
$\tilde{\Theta}_\pm D_a\varphi$ is an eigenvector of $A$ with eigenvalue $-\lambda$.  
Thus, the transformation $\tilde{\Theta}_\pm D_a$ has the desired properties. 

Further, the two transformations, $\tilde{\Theta}_\pm D_a$ and $B$, commute with each other 
on the corresponding subspace as: 

\begin{lem}
Let $\varphi$ be an eigenvector of $A$ with eigenvalue $\lambda$ satisfying $0<|\lambda|<1$. Then, 
$$ 
\tilde{\Theta}_\pm D_aB\varphi=B\tilde{\Theta}_\pm D_a\varphi.
$$
\end{lem}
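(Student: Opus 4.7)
The plan is to establish the stronger claim that $\tilde{\Theta}_\pm D_a\, B = B\, \tilde{\Theta}_\pm D_a$ as an operator identity; evaluating on the eigenvector $\varphi$ then gives the lemma immediately, and the hypothesis $0<|\lambda|<1$ plays no role in the commutation itself — it is already used by the surrounding text to guarantee that $B\varphi$ is a genuine eigenvector of $A$ with eigenvalue $-\lambda$.

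First I would collect the elementary intertwining rules that will be used. From (\ref{anticommutildeThetagamma}) one has $\tilde{\Theta}_\pm \gamma^{(2n+1)} = -\gamma^{(2n+1)} \tilde{\Theta}_\pm$, and from (\ref{anticommGammaD}) one has $D_a \gamma^{(2n+1)} = -\gamma^{(2n+1)} D_a$. Next, from (\ref{commCpmKDa})–(\ref{commCpmKDa-}), together with the fact that $U^\Theta$ acts only on the internal $\co^M$ factor (so that it commutes with both $D_a$ and $C_\pm$), I get the clean intertwining $\tilde{\Theta}_\pm D_a = {\rm sgn}_\pm(D_a)\, D_a\, \tilde{\Theta}_\pm$. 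Finally, the time-reversal symmetry of $H$ gives $[\tilde{\Theta}_\pm, P_{\rm F}] = 0$, as already noted in (\ref{tildeThetaPFcommu}).

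Armed with these four rules plus $D_a^2 = 1$, the computation is essentially mechanical. Expanding $B = \gamma^{(2n+1)}(1 - P_{\rm F} - D_a P_{\rm F} D_a)$ and pushing $D_a$ past $\gamma^{(2n+1)}$ (one minus sign), then pushing $\tilde{\Theta}_\pm$ past $\gamma^{(2n+1)}$ (another minus sign, which cancels), reduces the left-hand side to
\[
\gamma^{(2n+1)}\, \tilde{\Theta}_\pm\bigl(D_a - D_a P_{\rm F} - P_{\rm F} D_a\bigr),
\]
where I have also used $D_a^2 = 1$ in the cross-term. Using $[\tilde{\Theta}_\pm,P_{\rm F}]=0$ and the intertwining $\tilde{\Theta}_\pm D_a = {\rm sgn}_\pm(D_a) D_a \tilde{\Theta}_\pm$ pushes $\tilde{\Theta}_\pm$ all the way to the right, producing a factor ${\rm sgn}_\pm(D_a)$. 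The same manipulations applied to the right-hand side $B\,\tilde{\Theta}_\pm D_a = \gamma^{(2n+1)}(1-P_{\rm F}-D_a P_{\rm F}D_a)\tilde{\Theta}_\pm D_a$ — this time only the last intertwining step is needed — yield the same expression, with the same overall sign ${\rm sgn}_\pm(D_a)$.

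The main obstacle, if any, is purely bookkeeping: since $\tilde{\Theta}_\pm$ is antilinear, one must be careful that the sign conventions in (\ref{commCpmKDa})–(\ref{commCpmKDa-}) are correctly applied through $K$, and that every occurrence of $D_a$ flipped past $\tilde{\Theta}_\pm$ contributes the same factor ${\rm sgn}_\pm(D_a)$. Once that is done, both sides carry exactly one such factor, the signs cancel in the comparison, and the desired equality drops out. Evaluating the resulting operator identity on $\varphi$ completes the lemma.
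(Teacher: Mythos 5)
Your proposal is correct and uses exactly the same ingredients as the paper's proof ($\{D_a,\gamma^{(2n+1)}\}=0$, $\{\tilde{\Theta}_\pm,\gamma^{(2n+1)}\}\varphi=0$, $[\tilde{\Theta}_\pm,P_{\rm F}]\varphi=0$, $\tilde{\Theta}_\pm D_a={\rm sgn}_\pm(D_a)D_a\tilde{\Theta}_\pm$, and $D_a^2=1$); the paper merely organizes the bookkeeping by first isolating the anticommutation $D_aB=-BD_a$ before pushing $\tilde{\Theta}_\pm$ through, whereas you push everything through directly and compare the two sides. Your observation that the identity holds at the operator level and that the eigenvalue hypothesis is not needed for the commutation itself is also consistent with the paper's computation.
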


\begin{proof}
To begin with, we show that the Dirac operator $D_a$ anticommutes with $B$. This follows from 
explicit calculations as  
\begin{align*}
D_aB&=D_a\gamma^{(2n+1)}(1-P_{\rm F}-D_aP_{\rm F}D_a)\\
&=-\gamma^{(2n+1)}D_a(1-P_{\rm F}-D_aP_{\rm F}D_a)\\
&=-\gamma^{(2n+1)}(1-P_{\rm F}-D_aP_{\rm F}D_a)D_a=-BD_a,
\end{align*}
where we have used $D_a^2=1$ and $\{D_a,\gamma^{(2n+1)}\}=0$ which is derived from their definitions. 
Combining this, (\ref{anticommutildeThetagamma}), (\ref{tildeThetaPFcommu}), (\ref{commCpmKDa}) 
and (\ref{commCpmKDa-}), one has 
\begin{align*}
\tilde{\Theta}_\pm D_aB\varphi&=-\tilde{\Theta}_\pm BD_a\varphi\\
&=-\tilde{\Theta}_\pm\gamma^{(2n+1)}(1-P_{\rm F}-D_aP_{\rm F}D_a)D_a\varphi=B\tilde{\Theta}_\pm D_a\varphi. 
\end{align*}
\end{proof}

Clearly, the two vectors, $\varphi$ and $B\tilde{\Theta}_\pm D_a\varphi$, are eigenvectors of $A$ with 
the same eigenvalue $\lambda$. Therefore, if these two vectors are linearly independent of each other, 
then the corresponding sector which is spanned by the two vectors is a two-dimensional subspace.   
In fact, we can prove the following lemma: 

\begin{lem}
\label{lem:orthoAII}
Let $\varphi$ be an eigenvector of $A$ with eigenvalue $\lambda$ satisfying $0<|\lambda|<1$. Then, 
$$
\langle B\varphi,\tilde{\Theta}_\pm D_a\varphi\rangle=0.
$$
\end{lem}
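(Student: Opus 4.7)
The plan is to exploit the antiunitarity of $\tilde{\Theta}_\pm$ to transfer it across the inner product and derive that $\langle B\varphi, \tilde{\Theta}_\pm D_a\varphi\rangle$ equals its own negative, hence vanishes. This is a Kramers-type argument for $\tilde{\Theta}_+$ (which squares to $-1$), but the same conclusion has to hold for $\tilde{\Theta}_-$ (which squares to $+1$); in that case the sign that Kramers would normally supply must instead come from the anticommutation $\{\tilde{\Theta}_-, D_a\}=0$ available in two dimensions.

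The algebraic inputs I need are: first, $\tilde{\Theta}_\pm$ anticommutes with $\gamma^{(2n+1)}$ by (\ref{anticommutildeThetagamma}), commutes with $P_{\rm F}$ by (\ref{tildeThetaPFcommu}), and satisfies $\tilde{\Theta}_\pm D_a = \pm D_a \tilde{\Theta}_\pm$ in two dimensions (read off from (\ref{commCpmKDa}), (\ref{commCpmKDa-}) with $n=1$). From these and the fact that $D_a$ enters $B$ only through the product $D_aP_{\rm F}D_a$ (so the sign from pushing $\tilde{\Theta}_\pm$ past $D_a$ either does not appear or cancels), I obtain $\tilde{\Theta}_\pm B = -B\tilde{\Theta}_\pm$ in both cases. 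Second, $BD_a=-D_aB$ is already established in the proof of the preceding lemma, and $B$, $D_a$ are self-adjoint. Third, an antiunitary $T$ with $T^2 = c\in\{\pm 1\}$ satisfies $T^{-1}=cT$, so $\tilde{\Theta}_\pm^{-1}=\mp\tilde{\Theta}_\pm$.

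With these in hand, I transfer $\tilde{\Theta}_\pm$ to the left slot using antiunitarity,
\[
\langle B\varphi, \tilde{\Theta}_\pm D_a\varphi\rangle = \overline{\langle \tilde{\Theta}_\pm^{-1}B\varphi, D_a\varphi\rangle},
\]
substitute the expression for $\tilde{\Theta}_\pm^{-1}$, push $\tilde{\Theta}_\pm$ through $B$ using $\tilde{\Theta}_\pm B=-B\tilde{\Theta}_\pm$, migrate $B$ to the right slot via $B^\ast=B$, swap $BD_a\to -D_aB$, migrate $D_a$ back to the left via $D_a^\ast=D_a$, push $D_a$ through $\tilde{\Theta}_\pm$ using the commutation (resp.\ anticommutation) relation in two dimensions, and finally undo the complex conjugation by Hermitian symmetry of the inner product. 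The result will be $-\langle B\varphi,\tilde{\Theta}_\pm D_a\varphi\rangle$, whence the inner product is zero. For $\tilde{\Theta}_+$ the crucial $-1$ is furnished by $\tilde{\Theta}_+^{-1}=-\tilde{\Theta}_+$; for $\tilde{\Theta}_-$ the trivial factor $\tilde{\Theta}_-^{-1}=+\tilde{\Theta}_-$ is compensated exactly by $\tilde{\Theta}_-D_a=-D_a\tilde{\Theta}_-$.

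The main obstacle is purely the sign bookkeeping: three independent sign sources — the square of $\tilde{\Theta}_\pm$, its commutator with $D_a$, and its commutator with $\gamma^{(2n+1)}$ — must be combined correctly, and the seemingly fortuitous compensation that yields a uniform overall factor $-1$ for both choices of sign has to be verified in a single coherent calculation. Beyond these ingredients no further analytic input is required.
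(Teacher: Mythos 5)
Your proposal is correct and is essentially the paper's own argument: both proofs use the antiunitarity of $\tilde{\Theta}_\pm$ together with $(\tilde{\Theta}_\pm)^2=\mp 1$, $\tilde{\Theta}_\pm D_a=\pm D_a\tilde{\Theta}_\pm$, $\tilde{\Theta}_\pm B=-B\tilde{\Theta}_\pm$ and $BD_a=-D_aB$ to show the inner product equals its own negative, the only cosmetic difference being that the paper evaluates $\langle \tilde{\Theta}_\pm\tilde{\Theta}_\pm D_a\varphi,\tilde{\Theta}_\pm B\varphi\rangle$ in two ways whereas you transfer $\tilde{\Theta}_\pm^{-1}$ across the inner product. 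Your sign bookkeeping, including the compensation between the $\tilde{\Theta}_+$ and $\tilde{\Theta}_-$ cases, checks out.
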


\begin{proof}
In the same way as in the proof of the preceding lemma, one has  
\begin{align}
\langle \tilde{\Theta}_\pm \tilde{\Theta}_\pm D_a\varphi,\tilde{\Theta}_\pm B\varphi\rangle
&=\mp \langle D_a\varphi,\tilde{\Theta}_\pm B\varphi\rangle \nonumber\\ \nonumber 
&=\pm \langle D_a\varphi,B\tilde{\Theta}_\pm \varphi\rangle \\ 
&=\mp \langle B\varphi,D_a\tilde{\Theta}_\pm \varphi\rangle =-\langle B\varphi,\tilde{\Theta}_\pm D_a\varphi\rangle,
\label{tildeTheta2inner}
\end{align}
where we have used (\ref{tildeTheta2AII2D}) to obtain the first equality, and 
\begin{equation}
\label{tildeThetaDacommusgn}
\tilde{\Theta}_\pm D_a\varphi={\rm sgn}_\pm(D_a)D_a\tilde{\Theta}_\pm \varphi
\end{equation}
with ${\rm sgn}_\pm(D_a)=\pm 1$ for the fourth equality. 
The last relation (\ref{tildeThetaDacommusgn}) can be derived from (\ref{commCpmKDa}) and (\ref{commCpmKDa-}). 

On the other hand, one has 
$$
\langle \tilde{\Theta}_\pm\psi,\tilde{\Theta}_\pm\varphi\rangle=\langle \varphi,\psi\rangle
$$
for wavefunctions, $\psi$ and $\varphi$, {from} the definitions of $\tilde{\Theta}_\pm$. 
By using this relation, the above quantity in the left-hand side can be written as 
$$
\langle \tilde{\Theta}_\pm \tilde{\Theta}_\pm D_a\varphi,\tilde{\Theta}_\pm B\varphi\rangle
=\langle B\varphi, \tilde{\Theta}_\pm D_a\varphi\rangle. 
$$ 
Combining this with the above result, we obtain 
$$
\langle B\varphi, \tilde{\Theta}_\pm D_a\varphi\rangle=0.
$$ 
\end{proof}

In order to prove the existence of the $\ze_2$ index, we recall the expression of $A$ as 
$$
A=\left(\begin{matrix}P_{\rm F}-\mathfrak{D}_a^\ast P_{\rm F}\mathfrak{D}_a & 0\cr 
0 & -(P_{\rm F}-\mathfrak{D}_a P_{\rm F}\mathfrak{D}_a^\ast)\cr
\end{matrix}\right)
$$
in the basis which diagonalizes $\gamma^{(2n+1)}$. We write 
$\mathfrak{A}:=P_{\rm F}-\mathfrak{D}_a^\ast P_{\rm F}\mathfrak{D}_a$ for short. 
Combining the above two lemma with commutativity, (\ref{Bgammacommu}) and (\ref{tildeThetaDagammacommu}),  
we can conclude that the multiplicity of the eigenvalue $\lambda$ of the operator $\mathfrak{A}$ 
must be even when $\lambda$ satisfies $0<|\lambda|<1$. 
Therefore, if the eigenvalues of $A$ change continuously 
under a continuous variation of the parameters of the Hamiltonian $H$, 
then the parity of the multiplicity of the eigenvalue $\lambda=1$ of $\mathfrak{A}$ must be invariant 
under the deformation of the Hamiltonian. 
Thus, we have two possibilities: (i) An even number of eigenvectors of $\mathfrak{A}$ are lifted from the sector spanned 
by the eigenvectors of $\mathfrak{A}$ with the eigenvalue $\lambda=1$; (ii) an even number of eigenvectors of $\mathfrak{A}$ 
with eigenvalue $\lambda \ne 1$ become degenerate with the eigenvectors of $\mathfrak{A}$ with the eigenvalue $\lambda=1$. 
In both cases, it is enough to prove the continuity of the eigenvalues of the operator $A$ 
under deformation of the Hamiltonian \cite{RSB,Koma2}, in order to establish that the $\ze_2$ index is 
robust against generic perturbations. This is proved in Sec.~\ref{HomoArg}. 
Consequently, AII class in two dimensions has a nontrivial $\ze_2$ index.  

\bigskip\medskip

\noindent
$\bullet$ {\bf AI Class in Two Dimensions.}
In this case, the Hamiltonian $H$ has only even time-reversal symmetry. 
We write $\Theta^2=+1$ for the evenness for short. From (\ref{tildeTheta2}), we have 
\begin{equation}
\label{tildeTheta2AI2D}
(\tilde{\Theta}_\pm)^2=\pm 1
\end{equation} 
in the present case. Clearly, the difference between AI and AII classes is the even-oddness of 
the time-reversal transformation $\tilde{\Theta}_\pm$. 
Therefore, in the proof of Lemma~\ref{lem:orthoAII}, the right-hand side in the last line of (\ref{tildeTheta2inner}) 
becomes opposite sign in the case of AI class. This implies 
that the orthogonality between the corresponding two vectors does not generally holds.  
In consequence, AI class in two dimensions has no index. 

\bigskip\medskip

\noindent
$\bullet$ {\bf AII Class in Four Dimensions.}
In this case, we have $\Theta^2=-1$. Combining this with (\ref{tildeTheta2}), we have 
$$
(\tilde{\Theta}_\pm)^2=+1.
$$
{From} (\ref{commCpmKgamma}), one has 
$$
[\tilde{\Theta}_\pm,\gamma^{(2n+1)}]\varphi=0
$$
for any wavefunction $\varphi$. This yields  
\begin{align*}
\tilde{\Theta}_\pm A\varphi&=\tilde{\Theta}_\pm \gamma^{(2n+1)}(P_{\rm F}-D_aP_{\rm F}D_a)\varphi\\
&=\gamma^{(2n+1)}\tilde{\Theta}_\pm (P_{\rm F}-D_aP_{\rm F}D_a)\varphi=A\tilde{\Theta}_\pm\varphi
\end{align*}
in the same way as in the case of AII class in two dimensions. 
Clearly, this implies that, if $\varphi$ is an eigenvector of $A$, then $\tilde{\Theta}_\pm\varphi$ is 
an eigenvector of $A$ with the same eigenvalue. But this result does not give any information about 
the spectrum of $A$. Thus, AII class in four dimensions preserves an integer-valued index.  

\bigskip\medskip

\noindent
$\bullet$ {\bf AI Class in Four Dimensions.}
Since we have $\Theta^2=+1$ in this case, we obtain 
$$
(\tilde{\Theta}_\pm)^2=-1
$$
from (\ref{tildeTheta2}). In the same way as in the case of AII class in four dimensions, one has 
$$
[\tilde{\Theta}_\pm,\gamma^{(2n+1)}]\varphi=0
$$
and 
$$
\tilde{\Theta}_\pm A\varphi=A\tilde{\Theta}_\pm\varphi.
$$
Therefore, if $\varphi$ is an eigenvector of $A$ with eigenvalue $\lambda$, then 
$\tilde{\Theta}_\pm\varphi$ is also an eigenvector of $A$ with the same eigenvalue in the eigenspace 
of $\gamma^{(2n+1)}$.   
Further, we have 
$$
\langle (\tilde{\Theta}_\pm)^2\varphi,\tilde{\Theta}_\pm\varphi\rangle
=-\langle \varphi,\tilde{\Theta}_\pm\varphi\rangle
$$
{from} $(\tilde{\Theta}_\pm)^2=-1$. On the other hand, one has 
$$  
\langle (\tilde{\Theta}_\pm)^2\varphi,\tilde{\Theta}_\pm\varphi\rangle
=\langle \varphi,\tilde{\Theta}_\pm\varphi\rangle. 
$$ 
These imply $\langle \varphi,\tilde{\Theta}_\pm\varphi\rangle=0$. 
This is nothing but an analogue of the Kramers doublet. In consequence, the integer-valued index 
is always even for AI class in four dimensions

\bigskip\medskip

\noindent
$\bullet$ {\bf AII Class in Six Dimensions.}
In the same way, we have 
$$
(\tilde{\Theta}_\pm)^2=\pm 1
$$
from $\Theta^2=-1$, and 
$$
\{\tilde{\Theta}_\pm,\gamma^{(2n+1)}\}\varphi=0
$$
for any wavefunction $\varphi$. In addition, ${\rm sgn}_\pm(D_a)=\pm 1$. 
These conditions are the same as in the case of AI class in two dimensions. 
Therefore, AII class in six dimensions has no index. 
 
\bigskip\medskip

\noindent
$\bullet$ {\bf AI Class in Six Dimensions.}
In this case, we have 
$$
(\tilde{\Theta}_\pm)^2=\mp 1
$$
from $\Theta^2=+1$, and 
$$
\{\tilde{\Theta}_\pm,\gamma^{(2n+1)}\}\varphi=0
$$
for any wavefunction $\varphi$. In addition, ${\rm sgn}_\pm(D_a)=\pm 1$. 
These conditions are the same as in the case of AII class in two dimensions. 
As a result, AI class in six dimensions has a $\ze_2$ index. 

\bigskip\medskip

\noindent
$\bullet$ {\bf AII Class in Eight Dimensions.}
In the same way, we have 
$$
(\tilde{\Theta}_\pm)^2=-1
$$
from $\Theta^2=-1$, and 
$$
[\tilde{\Theta}_\pm,\gamma^{(2n+1)}]\varphi=0
$$
for any wavefunction $\varphi$. These are the same as in the case of AI class in four dimensions. 
Therefore, the integer-valued index is always even, i.e., $2\ze$. 

\bigskip\medskip

\noindent
$\bullet$ {\bf AI Class in Eight Dimensions.}
Similarly, we have 
$$
(\tilde{\Theta}_\pm)^2=+1
$$
from $\Theta^2=+1$, and 
$$
[\tilde{\Theta}_\pm,\gamma^{(2n+1)}]\varphi=0
$$
for any wavefunction $\varphi$. These conditions are the same as in the case of AII in four dimensions. 
Thus, AI class in eight dimensions preserves an integer-valued index. 

\subsubsection{C and D Classes}
The Hamiltonian $H$ of these classes has particle-hole symmetry only. By the particle-hole transformation $\Xi$, 
it satisfies  
$$
\Xi(H\varphi)=-H\Xi\varphi=-H\varphi^\Xi
$$
for any wavefunction $\varphi$. This yields 
\begin{equation}
\label{XiPF}
\Xi P_{\rm F}\varphi=(1-P_{\rm F})\varphi^\Xi,
\end{equation}
where we have chosen the Fermi level $E_{\rm F}=0$.  

To begin with, we extend $\Xi$ to that for the operator $A$ as 
\begin{equation}
\tilde{\Xi}_\pm:=C_\pm \Xi. 
\end{equation}
Then, we have 
\begin{equation}
\label{tildeXi2}
(\tilde{\Xi}_\pm)^2\varphi={\rm sgn}(C_\pm K)^2\Xi^2\varphi
\end{equation}
for any wavefunction $\varphi$, in the same way as in the case of the time-reversal transformation. 
Further, 
$$
\{\tilde{\Xi}_\pm,\gamma^{(2n+1)}\}\varphi=0\quad \mbox{if \ } {\rm sgn}(\gamma^{(2n+1)})=-1 
$$
and 
$$
[\tilde{\Xi}_\pm,\gamma^{(2n+1)}]\varphi=0\quad \mbox{if \ } {\rm sgn}(\gamma^{(2n+1)})=+1 
$$
for any wavefunction $\varphi$. Clearly, similar commutation relations hold for the Dirac operator $D_a$, 
depending on ${\rm sgn}_\pm(D_a)$. 

\bigskip\medskip

\noindent
$\bullet$ {\bf D Class in Two Dimensions.}
The class D is characterized by the even particle-hole transformation $\Xi$ satisfying $\Xi^2=+1$. 
Therefore, one has 
$$
(\tilde{\Xi}_\pm)^2=\pm 1 
$$
in two dimensions, from the above formula (\ref{tildeXi2}). From ${\rm sgn}(\gamma^{(2n+1)})=-1$, 
\begin{equation}
\{\tilde{\Xi}_\pm,\gamma^{(2n+1)}\}\varphi=0
\end{equation}
for any wavefunction $\varphi$. By using this, (\ref{XiPF}) and $D_a^2=1$, one has 
\begin{align*}
\tilde{\Xi}_\pm A\varphi&=\tilde{\Xi}_\pm\gamma^{(2n+1)}(P_{\rm F}-D_aP_{\rm F}D_a)\varphi\\
&=-\gamma^{(2n+1)}\tilde{\Xi}_\pm (P_{\rm F}-D_aP_{\rm F}D_a)\varphi\\
&=\gamma^{(2n+1)}(P_{\rm F}-D_aP_{\rm F}D_a)\tilde{\Xi}_\pm \varphi=A\tilde{\Xi}_\pm \varphi 
\end{align*}
for any wavefunction $\varphi$. This yields 
\begin{align*}
\tilde{\Xi}_\pm D_aA\varphi&=\tilde{\Xi}_\pm D_a\gamma^{(2n+1)}(P_{\rm F}-D_aP_{\rm F}D_a)\varphi\\
&=-\tilde{\Xi}_\pm\gamma^{(2n+1)}D_a(P_{\rm F}-D_aP_{\rm F}D_a)\varphi\\
&=\tilde{\Xi}_\pm\gamma^{(2n+1)}(P_{\rm F}-D_aP_{\rm F}D_a)D_a\varphi=A\tilde{\Xi}_\pm D_a\varphi,
\end{align*}
where we have used $\{D_a,\gamma^{(2n+1)}\}=0$ and $D_a^2=1$. This implies that, 
if $\varphi$ is an eigenvector of $A$, then $\tilde{\Xi}_\pm D_aA\varphi$ is also an eigenvector of $A$ 
with the same eigenvalue. Further, one has 
$$
[\tilde{\Xi}_\pm D_a,\gamma^{(2n+1)}]\varphi=0
$$
for any wavefunction $\varphi$. 
 
In order to check whether the two vectors, $\varphi$ and $\tilde{\Xi}_\pm D_aA\varphi$, are 
independent of each other, consider the inner product, $\langle \varphi,\tilde{\Xi}_\pm D_aA\varphi\rangle$.  
Note that 
\begin{align*}
\langle \tilde{\Xi}_\pm\tilde{\Xi}_\pm D_a\varphi, \tilde{\Xi}_\pm\varphi\rangle
&=\pm \langle D_a\varphi, \tilde{\Xi}_\pm\varphi\rangle\\
&=\pm \langle \varphi, D_a\tilde{\Xi}_\pm\varphi\rangle
=\langle \varphi, \tilde{\Xi}_\pm D_a\varphi\rangle, 
\end{align*}
where we have used $(\tilde{\Xi}_\pm)^2=\pm 1$ and ${\rm sgn}_\pm(D_a)=\pm 1$. 

On the other hand, one has 
$$
\langle \tilde{\Xi}_\pm\tilde{\Xi}_\pm D_a\varphi, \tilde{\Xi}_\pm\varphi\rangle
=\langle \varphi, \tilde{\Xi}_\pm D_a\varphi\rangle. 
$$
These observations imply that one cannot obtain any information about the spectrum of $A$ from 
the symmetry. In consequence, D class in two dimensions preserves an integer-valued index.   

\bigskip\medskip

\noindent
$\bullet$ {\bf C Class in Two Dimensions.}
Since the class C is characterized by the odd particle-hole transformation $\Xi$ satisfying $\Xi^2=-1$, 
one has 
$$
(\tilde{\Xi}_\pm)^2=\mp 1 
$$
in two dimensions, in the same way as in the preceding D class. 
This difference between C and D classes leads to the orthogonality, 
$$
\langle \varphi, \tilde{\Xi}_\pm D_a\varphi\rangle=0,
$$
in the case of C class. Thus, the integer-valued index for C class in two dimensions is always even, i.e., $2\ze$. 

\bigskip\medskip

\noindent
$\bullet$ {\bf D Class in Four Dimensions.}
In this case, we have 
$$
(\tilde{\Xi}_\pm)^2=-1
$$
and 
\begin{equation}
\label{tildeXigammacommuD4}
[\tilde{\Xi}_\pm,\gamma^{(2n+1)}]\varphi=0
\end{equation}
for any wavefunction $\varphi$. By using the second relation and (\ref{XiPF}), one has 
\begin{align*}
\tilde{\Xi}_\pm A\varphi&=\tilde{\Xi}_\pm \gamma^{(2n+1)}(P_{\rm F}-D_aP_{\rm F}D_a)\varphi\\
&=\gamma^{(2n+1)}\tilde{\Xi}_\pm (P_{\rm F}-D_aP_{\rm F}D_a)\varphi=-A\tilde{\Xi}_\pm \varphi. 
\end{align*}
This implies that the integer-valued index is vanishing. 

In order to check whether the present class has a $\ze_2$ index, we recall the expression of the operator,  
$B=\gamma^{(2n+1)}(1-P_{\rm F}-D_aP_{\rm F}D_a)$. 
Then, by using (\ref{tildeXigammacommuD4}) and (\ref{XiPF}), one has 
\begin{equation}
\label{tildeXiBanicommuD4}
\{\tilde{\Xi}_\pm ,B\}\varphi=0
\end{equation}
for any wavefunction $\varphi$. 

Let us check the orthogonality between $\tilde{\Xi}_\pm\varphi$ and $B\varphi$. 
Note that 
$$
\langle \tilde{\Xi}_\pm\tilde{\Xi}_\pm\varphi, \tilde{\Xi}_\pm B\varphi\rangle 
=-\langle\varphi,\tilde{\Xi}_\pm B\varphi\rangle=\langle B\varphi, \tilde{\Xi}_\pm\varphi\rangle, 
$$
where we have used $(\tilde{\Xi}_\pm)^2=-1$ and (\ref{tildeXiBanicommuD4}). 

On the other hand, 
$$
\langle \tilde{\Xi}_\pm\tilde{\Xi}_\pm\varphi, \tilde{\Xi}_\pm B\varphi\rangle 
=\langle B\varphi, \tilde{\Xi}_\pm\varphi\rangle.
$$
Therefore, one cannot obtain any information about the spectrum of $A$ from the symmetry in the present case. 
In consequence, D class in four dimensions has no index. 

\bigskip\medskip

\noindent
$\bullet$ {\bf C Class in Four Dimensions.}
The difference between D and C classes is even-oddness of the extended particle-hole 
transformation $\tilde{\Xi}_\pm$. In the present C class, we have 
$$
(\tilde{\Xi}_\pm)^2=+1 
$$
which is opposite sign to that for D class. Therefore, the argument of the orthogonality 
between the two vectors, $\tilde{\Xi}_\pm\varphi$ and $B\varphi$, yields 
$$
\langle B\varphi, \tilde{\Xi}_\pm\varphi\rangle=0
$$
for the present C class. Thus, C class in four dimensions has a $\ze_2$ index. 

\bigskip\medskip

\noindent
$\bullet$ {\bf D Class in Six Dimensions.}
In this case, we have 
$$
(\tilde{\Xi}_\pm)^2=\mp 1
$$
and 
\begin{equation}
\{\tilde{\Xi}_\pm,\gamma^{(2n+1)}\}\varphi=0
\end{equation}
for any wavefunction $\varphi$. In addition, ${\rm sgn}_\pm(D_a)=\pm 1$. 
These conditions are all the same as in the case of C class in two dimensions. 
Therefore, the integer-valued index for D class in six dimensions is always even, i.e., $2\ze$. 

\bigskip\medskip

\noindent
$\bullet$ {\bf C Class in Six Dimensions.}
Similarly, in this case, we have 
$$
(\tilde{\Xi}_\pm)^2=\pm 1
$$
and 
\begin{equation}
\{\tilde{\Xi}_\pm,\gamma^{(2n+1)}\}\varphi=0
\end{equation}
for any wavefunction $\varphi$. In addition, ${\rm sgn}_\pm(D_a)=\pm 1$.
These conditions are all the same as in the case of D class in two dimensions. 
In consequence, C class in six dimensions preserves an integer-valued index.  

\bigskip\medskip

\noindent
$\bullet$ {\bf D Class in Eight Dimensions.}
In this case, we have 
$$
(\tilde{\Xi}_\pm)^2=+1
$$
and 
\begin{equation}
[\tilde{\Xi}_\pm,\gamma^{(2n+1)}]\varphi=0
\end{equation}
for any wavefunction $\varphi$. These conditions are the same as in the case of C class in four dimensions. 
Thus, D class in eight dimensions has a $\ze_2$ index. 

\bigskip\medskip

\noindent
$\bullet$ {\bf C Class in Eight Dimensions.}
Similarly, in this case, we have 
$$
(\tilde{\Xi}_\pm)^2=-1
$$
and 
\begin{equation}
[\tilde{\Xi}_\pm,\gamma^{(2n+1)}]\varphi=0
\end{equation}
for any wavefunction $\varphi$. These two conditions are the same as in the case of D class in four dimensions. 
Thus, C class in eight dimensions has no index. 

\subsubsection{Chiral Symmetry in Even Dimensions}
\label{sec:SSE}

In even dimensions, the rest of the CAZ classes are AIII, BDI, CI, DIII and CII, which have chiral symmetry. 
To begin with, we recall the properties of the chiral operator $S$. 
The operator $S$ satisfies $S^\ast=S$ and $S^2=1$. The Hamiltonian $H$ is transformed as $SHS=-H$. 
This yields 
\begin{equation}
\label{SPFS}
SP_{\rm F}S=1-P_{\rm F},
\end{equation}
where we have chosen the Fermi level $E_{\rm F}=0$. Combining this, $[S,\gamma^{(2n+1)}]=0$ and $[S,D_a]=0$, 
we have 
\begin{align}
SA\varphi&=S\gamma^{(2n+1)}(P_{\rm F}-D_aP_{\rm F}D_a)\varphi\nonumber
\\ &=\gamma^{(2n+1)}S(P_{\rm F}-D_aP_{\rm F}D_a)\varphi=-AS\varphi.
\label{SAcommuchiralEven}
\end{align}
This implies that the integer-valued index is always vanishing. Thus, it is enough to determine 
whether or not the classes have a $\ze_2$ index.

Consider first AIII class, which has only chiral symmetry. 
{From} the expression of the operator $B$, (\ref{SPFS}) and $D_a^2=1$, one has 
\begin{align*}
SBS&=\gamma^{(2n+1)}[P_{\rm F}-D_a(1-P_{\rm F})D_a]\\
&=\gamma^{(2n+1)}(P_{\rm F}-1+D_aP_{\rm F}D_a)=-B.
\end{align*}
However, we cannot obtain any information about $\langle S\varphi,B\varphi\rangle$, 
where $\varphi$ is an eigenvector of $A$ with eigenvalue $\lambda$ satisfying $0<|\lambda|<1$. 
This implies that AIII class in all even dimensions has no $\ze_2$ index. 
\bigskip\medskip

\noindent
$\bullet$ {\bf BDI Class in Two Dimensions.}
In this case, from $\Theta^2=+1$, one has 
$$
(\tilde{\Theta}_\pm)^2=\pm 1
$$
and, from ${\rm sgn}(\gamma^{(2n+1)})=-1$,   
$$
\{\tilde{\Theta}_\pm,\gamma^{(2n+1)}\}\varphi=0
$$
for any wavefunction $\varphi$. Further, from the even-oddness of the particle-hole symmetry, 
$$
(S\Theta)^2=+1.
$$
This implies 
$$
S\Theta S\Theta\varphi=\varphi
$$
for any wavefunction $\varphi$. Using $S^2=1$, one has 
$$
\Theta S\Theta S\varphi=S\varphi.
$$
Further, from $\Theta^2=+1$, 
$$
S\Theta\varphi=\Theta S\varphi.
$$
Thus, $S$ and $\Theta$ commutes with each other as 
$$
[S,\Theta]\varphi=0.
$$
Therefore, by the definition of $\tilde{\Theta}_\pm$, we have 
$$
[S,\tilde{\Theta}_\pm]\varphi=0.
$$

Note that 
\begin{align*}
\tilde{\Theta}_\pm A\varphi&=\tilde{\Theta}_\pm \gamma^{(2n+1)}(P_{\rm F}-D_aP_{\rm F}D_a)\varphi\\
&=-\gamma^{(2n+1)}\tilde{\Theta}_\pm (P_{\rm F}-D_aP_{\rm F}D_a)\varphi=-A\tilde{\Theta}_\pm \varphi. 
\end{align*}
{From} this and $\{D_a,\gamma^{(2n+1)}\}=0$, one has 
\begin{align}
\tilde{\Theta}_\pm D_aA\varphi&=\tilde{\Theta}_\pm D_a\gamma^{(2n+1)}(P_{\rm F}-D_aP_{\rm F}D_a)\varphi\nonumber\\ \nonumber
&=-\tilde{\Theta}_\pm\gamma^{(2n+1)}D_a(P_{\rm F}-D_aP_{\rm F}D_a)\varphi\\ 
&=\tilde{\Theta}_\pm\gamma^{(2n+1)}(P_{\rm F}-D_aP_{\rm F}D_a)D_a\varphi=-A\tilde{\Theta}_\pm D_a\varphi.
\label{tildeThetaDaCommuBDI2}
\end{align}
Further, 
\begin{equation}
\label{tildeThetaDagammaBDI2commu}
[\tilde{\Theta}_\pm D_a,\gamma^{(2n+1)}]\varphi=0.
\end{equation}
Thus, the two transformations, $S$ and $\tilde{\Theta}_\pm D_a$, commute with $\gamma^{(2n+1)}$, and 
map an eigenvector of $A$ with eigenvalue $\lambda$ onto that with eigenvalue $-\lambda$. 
 
We recall the fact that 
the transformation, $B=\gamma^{(2n+1)}(1-P_{\rm F}-D_aP_{\rm F}D_a)$, commutes with $\gamma^{(2n+1)}$, and 
maps an eigenvector of $A$ with eigenvalue $\lambda$ satisfying $0<|\lambda|<1$ onto that with eigenvalue $-\lambda$.
In order to determine the index of BDI class in two dimensions, we must deal with 
the three vectors, $S\varphi$, $\tilde{\Theta}_\pm D_a\varphi$ and $B\varphi$. 

Consider first the inner product $\langle S\varphi,\tilde{\Theta}_\pm D_a\varphi\rangle$. Note that 
\begin{align*}
\langle \tilde{\Theta}_\pm S\varphi, \tilde{\Theta}_\pm\tilde{\Theta}_\pm D_a\varphi\rangle 
&=\pm \langle \tilde{\Theta}_\pm S\varphi,D_a\varphi\rangle\\
&=\pm \langle S\tilde{\Theta}_\pm \varphi,D_a\varphi\rangle\\
&=\pm \langle D_a\tilde{\Theta}_\pm \varphi,S\varphi\rangle=\langle \tilde{\Theta}_\pm D_a\varphi,S\varphi\rangle,
\end{align*}
where we have used $(\tilde{\Theta}_\pm)^2=\pm 1$, $[S,\tilde{\Theta}_\pm]\varphi=0$ and ${\rm sgn}_\pm(D_a)=\pm 1$.  
On the other hand, the left-hand side can be written as 
$$
\langle \tilde{\Theta}_\pm S\varphi, \tilde{\Theta}_\pm\tilde{\Theta}_\pm D_a\varphi\rangle
=\langle \tilde{\Theta}_\pm D_a\varphi,S\varphi\rangle.
$$
These do not give any information about the spectrum of $A$. 

Next, consider the inner product, $\langle B\varphi, \tilde{\Theta}_\pm D_a\varphi\rangle$. 
Similarly, 
\begin{align*}
\langle \tilde{\Theta}_\pm \tilde{\Theta}_\pm D_a\varphi,\tilde{\Theta}_\pm B\varphi\rangle
&=\pm \langle D_a\varphi,\tilde{\Theta}_\pm B\varphi\rangle\\
&=\mp \langle D_a\varphi,B\tilde{\Theta}_\pm \varphi\rangle\\
&=\pm \langle B\varphi, D_a\tilde{\Theta}_\pm \varphi\rangle=\langle B\varphi, \tilde{\Theta}_\pm D_a\varphi\rangle
\end{align*}
where we have used $(\tilde{\Theta}_\pm)^2=\pm 1$, $\{B,\tilde{\Theta}_\pm\}\varphi=0$, 
$\{D_a,B\}=0$ and \hfill\break ${\rm sgn}_\pm(D_a)=\pm 1$.
On the other hand, the left-hand side can be written as 
$$
\langle \tilde{\Theta}_\pm \tilde{\Theta}_\pm D_a\varphi,\tilde{\Theta}_\pm B\varphi\rangle
=\langle B\varphi, \tilde{\Theta}_\pm D_a\varphi\rangle. 
$$
These do not give any information, again. 

Finally, consider the inner product $\langle S\varphi, B\varphi\rangle$. 
In order to treat this quantity, we want to find an anti-linear transformation $\tilde{\Sigma}:=U^\Sigma K$ 
which satisfies the following three conditions with a unitary operator $U^\Sigma$: 
$$
(\tilde{\Sigma})^2=+1, \quad 
[\tilde{\Sigma},\gamma^{(2n+1)}]\varphi=0,
\quad \mbox{and} \quad 
\tilde{\Sigma}A\varphi=A\tilde{\Sigma}\varphi
$$
for any wavefunction $\varphi$. Here, $K$ is complex conjugation. 
Then, $A$ and $\tilde{\Sigma}$ are simultaneously diagonalized as 
$$
A\varphi=\lambda\varphi\quad \mbox{and}\quad \tilde{\Sigma}\varphi=\varphi.
$$
The proof is given in Appendix~\ref{EvenALO}. 

We introduce a transformation, 
$$
\tilde{\Sigma}_\pm:=SD_a\tilde{\Theta}_\pm, 
$$
which satisfies the above three conditions. Actually, 
\begin{align*}
(\tilde{\Sigma}_\pm)^2&=SD_a\tilde{\Theta}_\pm SD_a\tilde{\Theta}_\pm\\
&=D_a\tilde{\Theta}_\pm D_a\tilde{\Theta}_\pm=\pm D_a^2(\tilde{\Theta}_\pm)^2=1
\end{align*}
where we have used $[S,\tilde{\Theta}_\pm]\varphi=0$, $S^2=1$, ${\rm sgn}_\pm(D_a)=\pm 1$, $D_a^2=1$ and 
$(\tilde{\Theta}_\pm)^2=\pm 1$. 
The second commutation relation follows from (\ref{tildeThetaDagammaBDI2commu}) and $[S,\gamma^{(2n+1)}]=0$. 
The third condition follows from (\ref{SAcommuchiralEven}) and (\ref{tildeThetaDaCommuBDI2}). 

Note that 
$$
\langle \tilde{\Sigma}_\pm S\varphi, \tilde{\Sigma}_\pm B\varphi\rangle=\langle B\varphi, S\varphi\rangle.
$$
The two vectors in the left-hand side can be calculated as follows:  
\begin{align*}
\tilde{\Sigma}_\pm S\varphi&=SD_a\tilde{\Theta}_\pm S\varphi\\
&=SD_aS\tilde{\Theta}_\pm\varphi=S\tilde{\Sigma}_\pm\varphi=S\varphi,
\end{align*}
where we have used $[S,\tilde{\Theta}_\pm]\varphi=0$ and $\tilde{\Sigma}_\pm\varphi=\varphi$, and  
\begin{align*}
\tilde{\Sigma}_\pm B\varphi&=\tilde{\Sigma}_\pm\gamma^{(2n+1)}(1-P_{\rm F}-D_aP_{\rm F}D_a)\varphi\\
&=\gamma^{(2n+1)}SD_a\tilde{\Theta}_\pm(1-P_{\rm F}-D_aP_{\rm F}D_a)\varphi\\
&=\gamma^{(2n+1)}S(1-P_{\rm F}-D_aP_{\rm F}D_a)D_a\tilde{\Theta}_\pm\varphi\\
&=-\gamma^{(2n+1)}(1-P_{\rm F}-D_aP_{\rm F}D_a)\tilde{\Sigma}_\pm\varphi=-B\varphi, 
\end{align*}
where we have used $[\tilde{\Sigma}_\pm,\gamma^{(2n+1)}]\varphi=0$, (\ref{SPFS}) and $\tilde{\Sigma}_\pm\varphi=\varphi$. 
{From} these observations, we have 
$$
\langle \tilde{\Sigma}_\pm S\varphi, \tilde{\Sigma}_\pm B\varphi\rangle
=-\langle S\varphi,B\varphi\rangle=\langle B\varphi, S\varphi\rangle,
$$
where we have used $\{S,B\}=0$. In consequence, we do not have any information about the spectrum of $A$, again. 
We conclude that BDI class in two dimensions has no $\ze_2$ index. 

\bigskip\medskip

\noindent
$\bullet$ {\bf CI Class in Two Dimensions.}
Similarly to the preceding BDI class, one has 
$$
(\tilde{\Theta}_\pm)^2=\pm 1
$$
from $\Theta^2=+1$, and   
$$
\{\tilde{\Theta}_\pm,\gamma^{(2n+1)}\}\varphi=0
$$
for any wavefunction $\varphi$, from ${\rm sgn}(\gamma^{(2n+1)})=-1$. 

But, the even-oddness of the particle-hole symmetry implies  
$$
(S\Theta)^2=-1
$$
which is different from that of the preceding BDI class.  
Combining this, $S^2=1$ and $\Theta^2=+1$, one has 
$$
\{S,\Theta \}\varphi=0
$$
for any wavefunction $\varphi$. 
Therefore, we have 
$$
\{S,\tilde{\Theta}_\pm\}\varphi=0
$$
by the definition of $\tilde{\Theta}_\pm$. This affects the calculation of 
the orthogonality between $S\varphi$ and $\tilde{\Theta}_\pm D_a\varphi$.  
As a result, a similar calculation to that in the preceding BDI class yields 
$$
\langle S\varphi,\tilde{\Theta}_\pm D_a\varphi\rangle=0.
$$
In addition, we have 
$$
S\tilde{\Theta}_\pm D_a\varphi=-\tilde{\Theta}_\pm D_aS\varphi
$$
{from} the above anticommutativity between $S$ and $\tilde{\Theta}_\pm$.  
These results imply that ${\rm dim}\;{\rm ker}\;(P_{\rm F}-\mathfrak{D}_a^\ast P_{\rm F}\mathfrak{D}_a-1)$ 
is even. Therefore, the $\ze_2$ index is vanishing in CI class in two dimensions. 

\bigskip\medskip

\noindent
$\bullet$ {\bf DIII Class in Two Dimensions.}
In this case, one has 
$$
(\tilde{\Theta}_\pm)^2=\mp 1
$$
{from} $\Theta^2=-1$, and 
$$
\{\tilde{\Theta}_\pm,\gamma^{(2n+1)}\}\varphi=0
$$
for any wavefunction $\varphi$, from ${\rm sgn}(\gamma^{(2n+1)})=-1$. 
The even-oddness of the particle-hole symmetry is 
$$
(S\Theta)^2=+1. 
$$  
Combining this, $\Theta^2=-1$ and $S^2=1$, one has 
$$
\{S,\tilde{\Theta}_\pm\}\varphi=0
$$
for any wavefunction $\varphi$. 

Since the present class satisfies $(\tilde{\Theta}_\pm)^2=\mp 1$ and $\{S,\tilde{\Theta}_\pm\}\varphi=0$, 
a similar calculation to that in the BDI class does not yield any information about 
the inner product $\langle \tilde{\Theta}_\pm D_a\varphi,S\varphi\rangle$.   

Similarly, we obtain 
$$
\langle B\varphi,\tilde{\Theta}_\pm D_a\varphi\rangle=0 
\quad \mbox{and} \quad 
\langle B\varphi, S\varphi\rangle=0. 
$$
Combining these results with $\{S,B\}=0$, we conclude that DIII class in two dimensions has an $\ze_2$ index. 

\bigskip\medskip

\noindent
$\bullet$ {\bf CII Class in Two Dimensions.}
In this case, one has 
$$
(\tilde{\Theta}_\pm)^2=\mp 1
$$
{from} $\Theta^2=-1$, and 
$$
\{\tilde{\Theta}_\pm,\gamma^{(2n+1)}\}\varphi=0
$$
for any wavefunction $\varphi$, from ${\rm sgn}(\gamma^{(2n+1)})=-1$. 
The even-oddness of the particle-hole symmetry is 
$$
(S\Theta)^2=-1. 
$$  
{From} this, $\Theta^2=-1$ and $S^2=1$, one has 
$$
[S,\tilde{\Theta}_\pm]\varphi=0
$$
for any wavefunction $\varphi$. 
Therefore, in the same way as in CI Class in two dimensions, we obtain 
$$
\langle S\varphi,\tilde{\Theta}_\pm D_a\varphi\rangle=0.
$$
In addition, we have 
$$
S\tilde{\Theta}_\pm D_a\varphi=\tilde{\Theta}_\pm D_aS\varphi
$$
{from} the above commutativity between $S$ and $\tilde{\Theta}_\pm$.  
These results imply that ${\rm dim}\;{\rm ker}\;(P_{\rm F}-\mathfrak{D}_a^\ast P_{\rm F}\mathfrak{D}_a-1)$ 
is even. Thus, the $\ze_2$ index is vanishing in CII class in two dimensions.

\bigskip\medskip

\noindent
$\bullet$ {\bf BDI and CI Classes in Four Dimensions.}
These two classes are derived from AI class with an additional chiral symmetry. 
As shown in the case of AI class in four dimensions, 
${\rm dim}\;{\rm ker}\;(P_{\rm F}-\mathfrak{D}_a^\ast P_{\rm F}\mathfrak{D}_a-1)$ 
is already even. Therefore, these two classes have no non-trivial $\ze_2$ index. 

\bigskip\medskip

\noindent
$\bullet$ {\bf DIII Class in Four Dimensions.}
In this case, one has 
$$
(\tilde{\Theta}_\pm)^2=+1
$$
{from} $\Theta^2=-1$, and 
$$
[\tilde{\Theta}_\pm,\gamma^{(2n+1)}]\varphi=0
$$
for any wavefunction $\varphi$, from ${\rm sgn}(\gamma^{(2n+1)})=+1$. 
Further, the evenness, $(S\Theta)^2=+1$, of the particle-hole symmetry for DIII class yields  
$$
\{S,\Theta\}\varphi=0,
$$
where we have used $S^2=1$ and $\Theta^2=-1$. Immediately, 
$$
\{S,\tilde{\Theta}_\pm\}\varphi=0.
$$

{From} the above observations, we have 
\begin{align*}
\tilde{\Theta}_\pm A\varphi&=\tilde{\Theta}_\pm \gamma^{(2n+1)}(P_{\rm F}-D_aP_{\rm F}D_a)\varphi\\
&=\gamma^{(2n+1)}\tilde{\Theta}_\pm (P_{\rm F}-D_aP_{\rm F}D_a)\varphi=A\tilde{\Theta}_\pm \varphi.
\end{align*}
Therefore, the transformation $\tilde{\Theta}_\pm$ satisfies the three conditions for 
the transformation $\tilde{\Sigma}$ in the case of BDI class in two dimensions.    
As a result, $A$ and $\tilde{\Theta}_\pm$ are simultaneously diagonalized as 
$$
A\varphi=\lambda\varphi\quad \mbox{and}\quad \tilde{\Theta}_\pm\varphi=\varphi.
$$

Let us consider the inner product $\langle S\varphi,B\varphi\rangle$. 
Note that 
$$
\langle \tilde{\Theta}_\pm S\varphi,\tilde{\Theta}_\pm B\varphi\rangle=
-\langle S\varphi, B\varphi\rangle,
$$
where we have used $\{S,\tilde{\Theta}_\pm\}\varphi=0$, $\tilde{\Theta}_\pm\varphi=\varphi$ and 
$\tilde{\Theta}_\pm B\varphi=B\tilde{\Theta}_\pm\varphi$ which is 
obtained from $[\tilde{\Theta}_\pm,\gamma^{(2n+1)}]\varphi=0$. 
On the other hand, one has 
$$
\langle \tilde{\Theta}_\pm S\varphi,\tilde{\Theta}_\pm B\varphi\rangle=
\langle B\varphi,S\varphi\rangle=-\langle S\varphi,B\varphi\rangle,
$$
where we have used $\{S,B\}=0$. These results do not yield any information. 
Thus, DIII in four dimensions does not show an $\ze_2$ index. 

\bigskip\medskip

\noindent
$\bullet$ {\bf CII Class in Four Dimensions.}
In this case, one has 
$$
(\tilde{\Theta}_\pm)^2=+1
$$
{from} $\Theta^2=-1$, and 
$$
[\tilde{\Theta}_\pm,\gamma^{(2n+1)}]\varphi=0
$$
for any wavefunction $\varphi$, from ${\rm sgn}(\gamma^{(2n+1)})=+1$. 
But, in contrast to DIII class, CII class shows the oddness for the particle-hole symmetry as 
$$
(S\Theta)^2=-1. 
$$
Combining this, $S^2=1$ and $\Theta^2=-1$, one has  
$$
[S,\Theta]\varphi=0.
$$
This yields 
$$
[S,\tilde{\Theta}_\pm]\varphi=0.
$$
This difference yields 
$$
\langle S\varphi,B\varphi\rangle=0
$$
in the same way as in the case of DIII class in four dimensions.  
In consequence, CII class in four dimensions has a $\ze_2$ index. 

\bigskip\medskip

\noindent
$\bullet$ {\bf CII Class in Six Dimensions.}
For convenience' sake, we consider first CII class for six dimensionality. 
In this case, we have 
$$
(\tilde{\Theta}_\pm)^2=\pm 1
$$
{from} $\Theta^2=-1$, and  
$$
\{\tilde{\Theta}_\pm,\gamma^{(2n+1)}\}\varphi=0
$$
for any wavefunction $\varphi$, from ${\rm sgn}(\gamma^{(2n+1)})=-1$. 
Combining $S^2=1$, $\Theta^2=-1$ and $(S\Theta)^2=-1$, one has 
$$
[S,\Theta]\varphi=0.
$$
Therefore, 
$$
[S,\tilde{\Theta}_\pm]\varphi=0
$$
for any wavefunction $\varphi$. 

Note that 
\begin{align*}
\tilde{\Theta}_\pm A\varphi&=\tilde{\Theta}_\pm \gamma^{(2n+1)}(P_{\rm F}-D_aP_{\rm F}D_a)\varphi\\
&=-\gamma^{(2n+1)}\tilde{\Theta}_\pm (P_{\rm F}-D_aP_{\rm F}D_a)\varphi=-A\tilde{\Theta}_\pm \varphi
\end{align*}
{from} $\{\tilde{\Theta}_\pm,\gamma^{(2n+1)}\}\varphi=0$. Further, this yields 
\begin{align*}
\tilde{\Theta}_\pm D_aA\varphi&=\tilde{\Theta}_\pm D_a\gamma^{(2n+1)}(P_{\rm F}-D_aP_{\rm F}D_a)\varphi\\
&=-\tilde{\Theta}_\pm \gamma^{(2n+1)}D_a(P_{\rm F}-D_aP_{\rm F}D_a)\varphi\\
&=\tilde{\Theta}_\pm AD_a\varphi=-A\tilde{\Theta}_\pm D_a\varphi.
\end{align*}
In addition, 
$$
[\tilde{\Theta}_\pm D_a,\gamma^{(2n+1)}]\varphi=0
$$
{from} $\{\tilde{\Theta}_\pm,\gamma^{(2n+1)}\}\varphi=0$.
Thus, we have to deal with the three vectors, $\tilde{\Theta}_\pm\varphi$, $S\varphi$ and $B\varphi$, again, 
as in the case of BDI class in two dimensions. 

Consider first the inner product $\langle S\varphi,\tilde{\Theta}_\pm D_a\varphi\rangle$. 
Note that 
\begin{align*}
\langle \tilde{\Theta}_\pm \tilde{\Theta}_\pm D_a\varphi, \tilde{\Theta}_\pm S\varphi\rangle
&=\pm \langle D_a\varphi, \tilde{\Theta}_\pm S\varphi\rangle\\
&=\pm \langle D_a\varphi, S\tilde{\Theta}_\pm \varphi\rangle\\
&=\pm \langle S\varphi, D_a\tilde{\Theta}_\pm \varphi\rangle
=\langle S\varphi, \tilde{\Theta}_\pm D_a\varphi\rangle,
\end{align*}
where we have used $(\tilde{\Theta}_\pm)^2=\pm 1$, $[S,\tilde{\Theta}_\pm]\varphi=0$, 
and ${\rm sgn}_\pm(D_a)=\pm 1$. On the other hand, the left-hand side can be written as 
$$
\langle \tilde{\Theta}_\pm \tilde{\Theta}_\pm D_a\varphi, \tilde{\Theta}_\pm S\varphi\rangle
=\langle S\varphi,\tilde{\Theta}_\pm D_a\varphi\rangle. 
$$
Therefore, these results do not give any information about the spectrum of $A$. 

Next, consider $\langle B\varphi,\tilde{\Theta}_\pm D_a\varphi\rangle$. Similarly, 
\begin{align*}
\langle \tilde{\Theta}_\pm \tilde{\Theta}_\pm D_a\varphi,\tilde{\Theta}_\pm B\varphi\rangle
&=\pm\langle D_a\varphi,\tilde{\Theta}_\pm B\varphi\rangle\\
&=\mp\langle D_a\varphi,B\tilde{\Theta}_\pm \varphi\rangle\\
&=\pm \langle B\varphi,D_a\tilde{\Theta}_\pm \varphi\rangle 
=\langle B\varphi,\tilde{\Theta}_\pm D_a\varphi\rangle,
\end{align*}
where we have used $(\tilde{\Theta}_\pm)^2=\pm 1$, $\{\tilde{\Theta}_\pm,\gamma^{(2n+1)}\}\varphi=0$, 
$\{B,D_a\}=0$ and ${\rm sgn}_\pm(D_a)=\pm 1$. The left-hand side can be written as 
$$
\langle \tilde{\Theta}_\pm \tilde{\Theta}_\pm D_a\varphi,\tilde{\Theta}_\pm B\varphi\rangle
=\langle B\varphi, \tilde{\Theta}_\pm D_a\varphi\rangle.
$$
These do not yield any information, again. 

Finally, consider $\langle B\varphi, S\varphi\rangle$. In order to treat this quantity, we introduce 
$$
\tilde{\Sigma}_\pm:=SD_a\tilde{\Theta}_\pm. 
$$
{From} $[\tilde{\Theta}_\pm D_a,\gamma^{(2n+1)}]\varphi=0$, one has 
$$
[\tilde{\Sigma}_\pm,\gamma^{(2n+1)}]\varphi=0.
$$
Since 
$$
SA\varphi=-AS\varphi\quad \mbox{and}\quad \tilde{\Theta}_\pm D_a A\varphi=-A\tilde{\Theta}_\pm D_a \varphi,
$$
we have 
$$
\tilde{\Sigma}_\pm A\varphi=A\tilde{\Sigma}_\pm \varphi. 
$$
Further, one has 
$$
(\tilde{\Sigma}_\pm)^2=SD_a\tilde{\Theta}_\pm SD_a\tilde{\Theta}_\pm=D_a\tilde{\Theta}_\pm D_a\tilde{\Theta}_\pm
=\pm (\tilde{\Theta}_\pm)^2=+1
$$
where we have used $[S,\tilde{\Theta}_\pm]\varphi=0$, $S^2=1$, ${\rm sgn}_\pm(D_a)=\pm 1$ 
and $(\tilde{\Theta}_\pm)^2=\pm 1$. 
Thus, the transformation $\tilde{\Sigma}_\pm$ satisfies the three conditions in the case of 
BDI class in two dimensions. As a result, we can choose eigenvectors $\varphi$ as 
$$
A\varphi=\lambda\varphi \quad \mbox{and}\quad \tilde{\Sigma}_\pm\varphi=\varphi.
$$
For such a vector $\varphi$, we have 
\begin{align*}
\tilde{\Sigma}_\pm B\varphi&=SD_a\tilde{\Theta}_\pm \gamma^{(2n+1)}(1-P_{\rm F}-D_aP_{\rm F}D_a)\varphi\\
&=-SD_a\gamma^{(2n+1)}\tilde{\Theta}_\pm (1-P_{\rm F}-D_aP_{\rm F}D_a)\varphi\\
&=-SD_a\gamma^{(2n+1)}(1-P_{\rm F}-D_aP_{\rm F}D_a)\tilde{\Theta}_\pm \varphi\\
&=S\gamma^{(2n+1)}D_a(1-P_{\rm F}-D_aP_{\rm F}D_a)\tilde{\Theta}_\pm \varphi\\
&=S\gamma^{(2n+1)}(1-P_{\rm F}-D_aP_{\rm F}D_a)D_a\tilde{\Theta}_\pm \varphi\\
&=-\gamma^{(2n+1)}(1-P_{\rm F}-D_aP_{\rm F}D_a)SD_a\tilde{\Theta}_\pm \varphi=-B\tilde{\Sigma}_\pm\varphi=-B\varphi, 
\end{align*}
where we have used $\{\tilde{\Theta}_\pm,\gamma^{(2n+1)}\}\varphi=0$, $\{D_a,\gamma^{(2n+1)}\}=0$ 
and $\{S,B\}=0$. 
Further, 
$$
\tilde{\Sigma}_\pm S\varphi=SD_a\tilde{\Theta}_\pm S\varphi=S^2D_a\tilde{\Theta}_\pm \varphi=S\tilde{\Sigma}_\pm =S\varphi,
$$
where we have used $[S,\tilde{\Theta}_\pm]\varphi=0$. From these observations, we obtain 
$$
\langle \tilde{\Sigma}_\pm B\varphi, \tilde{\Sigma}_\pm S\varphi\rangle 
=-\langle B\varphi, S\varphi\rangle.
$$ 
On the other hand, 
$$
\langle \tilde{\Sigma}_\pm B\varphi, \tilde{\Sigma}_\pm S\varphi\rangle
=\langle S\varphi,B\varphi\rangle=-\langle B\varphi,S\varphi\rangle, 
$$
where we have used $\{S,B\}=0$. These also do not give any information about the spectrum of $A$, again. 
In conclusion, CII class in six dimensions has no $\ze_2$ index. 

\bigskip\medskip

\noindent
$\bullet$ {\bf BDI Class in Six Dimensions.}
In this case, we have 
$$
(\tilde{\Theta}_\pm)^2=\mp 1
$$
{from} $\Theta^2=+1$, and 
$$
\{\tilde{\Theta}_\pm,\gamma^{(2n+1)}\}\varphi=0
$$
for any wavefunction $\varphi$, from ${\rm sgn}(\gamma^{(2n+1)})=-1$. 
The even-oddness of the particle-hole symmetry is given by 
$$
(S\Theta)^2=+1. 
$$
Combining this, $S^2=1$ and $\Theta^2=+1$, one has 
$$
[S,\Theta]\varphi=0
$$
for any wavefunction $\varphi$. Therefore, 
$$
[S,\tilde{\Theta}_\pm]\varphi=0.
$$
Compared with the preceding CII class, BDI class has $(\tilde{\Theta}_\pm)^2=\mp 1$ which 
is opposite to that of CII. Therefore, we obtain 
$$
\langle S\varphi,\tilde{\Theta}_\pm D_a\varphi\rangle=0 
$$
in the same way as in the case of the preceding CII class. Combining this with 
$$
S\tilde{\Theta}_\pm D_a\varphi=\tilde{\Theta}_\pm D_aS\varphi
$$
which is derived from the above $[S,\tilde{\Theta}_\pm]\varphi=0$, we can conclude that\hfill\break  
${\rm dim}\;{\rm ker}\;(P_{\rm F}-\mathfrak{D}_a^\ast P_{\rm F}\mathfrak{D}_a-1)$ 
is even. Namely, the $\ze_2$ index is vanishing for BDI class in six dimensions.  

\bigskip\medskip

\noindent
$\bullet$ {\bf CI Class in Six Dimensions.}
Similarly, in this case, we have 
$$
(\tilde{\Theta}_\pm)^2=\mp 1
$$
{from} $\Theta^2=+1$, and 
$$
\{\tilde{\Theta}_\pm,\gamma^{(2n+1)}\}\varphi=0
$$
for any wavefunction $\varphi$, from ${\rm sgn}(\gamma^{(2n+1)})=-1$. 
The even-oddness of the particle-hole symmetry is given by 
$$
(S\Theta)^2=-1. 
$$
Combining this, $S^2=1$ and $\Theta^2=+1$, one has 
$$
\{S,\Theta\}\varphi=0
$$
for any wavefunction $\varphi$. This yields 
$$
\{S,\tilde{\Theta}_\pm\}\varphi=0.
$$
Therefore, in the same way as in the case of CII class in six dimensions, we obtain the following:
There appears no information about the inner product $\langle S\varphi, \tilde{\Theta}_\pm D_a\varphi\rangle$. 
In addition, 
$$
\langle B\varphi,\tilde{\Theta}_\pm D_a\varphi\rangle=0, 
$$ 
for any wavefunction $\varphi$, and 
$$
\langle B\psi,S\psi\rangle=0
$$
for an eigenvector $\psi$ such that $A\psi=\lambda\psi$ and $SD_a\tilde{\Theta}_\pm\psi=\psi$. 
{From} these observations, we can conclude that CI class in six dimensions has a $\ze_2$ index. 

\bigskip\medskip

\noindent
$\bullet$ {\bf DIII Class in Six Dimensions.}
Similarly, in this case, one has 
$$
(\tilde{\Theta}_\pm)^2=\pm 1
$$
{from} $\Theta^2=-1$, and 
$$
\{\tilde{\Theta}_\pm,\gamma^{(2n+1)}\}\varphi=0
$$
for any wavefunction $\varphi$, from ${\rm sgn}(\gamma^{(2n+1)})=-1$. 
The even-oddness of the particle-hole symmetry is given by 
$$
(S\Theta)^2=+1. 
$$
Combining this, $S^2=1$ and $\Theta^2=-1$, one has 
$$
\{S,\Theta\}\varphi=0
$$
for any wavefunction $\varphi$. This yields 
$$
\{S,\tilde{\Theta}_\pm\}\varphi=0.
$$
Therefore, in the same way as in the case of CII class in six dimensions, 
we obtain 
$$
\langle S\varphi, \tilde{\Theta}_\pm D_a\varphi\rangle=0
$$
for any wavefunction $\varphi$. This implies 
${\rm dim}\;{\rm ker}\;(P_{\rm F}-\mathfrak{D}_a^\ast P_{\rm F}\mathfrak{D}_a-1)$ is even, 
and hence the $\ze_2$ index is vanishing for DIII class in six dimensions. 

\bigskip\medskip

\noindent
$\bullet$ {\bf BDI Class in Eight Dimensions.}
In this case, we have 
$$
(\tilde{\Theta}_\pm)^2=+1
$$
{from} $\Theta^2=+1$, and 
$$
[\tilde{\Theta}_\pm,\gamma^{(2n+1)}]\varphi=0
$$
for any wavefunction $\varphi$, from ${\rm sgn}(\gamma^{(2n+1)})=+1$. The parity of the particle-hole 
symmetry is given by 
$$
(S\Theta)^2=+1.
$$
Combining this, $S^2=1$ and $\Theta^2=+1$, one has 
$$
[S,\Theta]\varphi=0.
$$
Therefore, 
$$
[S,\tilde{\Theta}_\pm]\varphi=0
$$
for any wavefunction $\varphi$. These conditions are the same as in CII class in four dimensions.  
Therefore, we obtain 
$$
\langle S\varphi, B\varphi\rangle=0.
$$
In consequence, BDI class in eight dimensions has a $\ze_2$ index. 
\bigskip\medskip

\noindent
$\bullet$ {\bf CI Class in Eight Dimensions.}
Similarly, we have 
$$
(\tilde{\Theta}_\pm)^2=+1
$$
{from} $\Theta^2=+1$, and 
$$
[\tilde{\Theta}_\pm,\gamma^{(2n+1)}]\varphi=0
$$
for any wavefunction $\varphi$, from ${\rm sgn}(\gamma^{(2n+1)})=+1$. But the parity of the particle-hole 
symmetry is given by 
$$
(S\Theta)^2=-1.
$$
in the present CI class. 
Combining this, $S^2=1$ and $\Theta^2=+1$, one has 
$$
\{S,\Theta\}\varphi=0.
$$
This yields  
$$
\{S,\tilde{\Theta}_\pm\}\varphi=0
$$
for any wavefunction $\varphi$. This is different from the commutation relation in the case of the preceding BDI class. 
This implies that we cannot obtain any information about the inner product $\langle S\varphi, B\varphi\rangle$.  
Thus, CI class in eight dimensions does not show a non-trivial $\ze_2$ index. 
\bigskip\medskip

\noindent
$\bullet$ {\bf CII and DIII Classes in Eight Dimensions.}
These two classes are derived from AII class with an additional chiral symmetry. 
As shown in the case of AII class in eight dimensions, 
${\rm dim}\;{\rm ker}\;(P_{\rm F}-\mathfrak{D}_a^\ast P_{\rm F}\mathfrak{D}_a-1)$ 
is already even. Therefore, these two classes have no non-trivial $\ze_2$ index. 

\subsection{Odd Dimensions}
Let us treat the case of odd dimensions, $d=2n+1$, $n=1,2,\ldots$.
As seen in the statement of Theorem~\ref{thm:Z2oddNoS}, the case without chiral symmetry in odd dimensions 
is slightly special compared to the others. 
Therefore, we consider first this case. The rest will be treated in Sec.~\ref{sec:SSO} below. 

\subsubsection{No Chiral Symmetry}
\label{sec:NoSS}

In the case without chiral symmetry, we introduce two operators as 
$$
A=P_{\rm F}-D_aP_{\rm F}D_a
$$
and 
$$
B=1-P_{\rm F}-D_aP_{\rm F}D_a,
$$
where the Dirac operator $D_a$ is given by (\ref{defDiracOdd}).   
In the same way as in Appendix~\ref{traceclassA}, one can show that $A^{2n+2}$ is trace class. 
Therefore, the operator $A$ is compact. 
Since one can easily show that $A^2+B^2=1$ and $AB+BA=0$,
the operator $B$ maps an eigenvector of $A$ with eigenvalue $\lambda$ onto 
that with eigenvalue $-\lambda$, provided $0<|\lambda|<1$.  

In addition, one has 
$$
D_aA=D_a(P_{\rm F}-D_aP_{\rm F}D_a)=-(P_{\rm F}-D_aP_{\rm F}D_a)D_a=-AD_a.
$$
This implies that, if we define an integer-valued index by 
$$
{\rm Ind}^{(2n+1)}(D_a,P_{\rm F})={\rm dim}\;{\rm ker}\;(A-1)-{\rm dim}\;{\rm ker}\;(A+1),
$$
then the index ${\rm Ind}^{(2n+1)}(D_a,P_{\rm F})$ is always vanishing. 
But we can define a $\ze_2$ index by (\ref{Z2oddNoS}) in Theorem~\ref{thm:Z2oddNoS}, 
if the spectrum of $A$ satisfies the condition as in the case of AII class in two dimensions.

Instead of the above operator $A$, we can use 
$$
A':=\mathcal{P}_{\rm D}-U\mathcal{P}_{\rm D}U,
$$
in order to define a $\ze_2$ index, where $\mathcal{P}_{\rm D}=(1+D_a)/2$ and $U=(1-P_{\rm F})-P_{\rm F}$. 
These two indices coincide with each other. 
In fact, the following relation holds: \cite{GSB} 
$$
{\rm dim}\;{\rm ker}\;(A-1)={\rm dim}\;{\rm ker}\;(A'-1). 
$$ 
The proof is given in Appendix~\ref{RelationIndices}. 

Consider first A class, which has no symmetry. Clearly, we have
$$
D_aB=(D_a-D_aP_{\rm F}-P_{\rm F}D_a)=(1-D_aP_{\rm F}D_a-P_{\rm F})D_a=BD_a
$$
by using $D_a^2=1$. However, this does not yield any information about $\langle D_a\varphi,B\varphi\rangle$, 
where $\varphi$ is an eigenvector of $A$ with eigenvalue $\lambda$ satisfying $0<|\lambda|<1$.
Thus, A class has no $\ze_2$ index in any odd dimensions. 
\bigskip

\subsubsection{AI an AII Classes}
\label{sec:AIAII}
Next let us consider AI and AII classes, which have time-reversal symmetry only.

\medskip

\noindent
$\bullet$ {\bf AI Class in One Dimension.}
In one dimension, there is no gamma matrix. Therefore, the Dirac operator is given by 
\begin{equation}
\label{Dirac1D}
D_a(x)=
\begin{cases} \ \ 1, & \text{$x> a$};\\
 -1, & \text{$x<a$},
\end{cases}
\end{equation}
and the extended time-reversal transformation is given by $\tilde{\Theta}=\Theta$. 
Clearly, one has 
$$
\tilde{\Theta}^2=+1
$$
{from} $\Theta^2=+1$.

Note that 
$$
\tilde{\Theta}A\varphi=\tilde{\Theta}(P_{\rm F}-D_aP_{\rm F}D_a)\varphi=A\tilde{\Theta}\varphi.
$$
Therefore, the transformation $\tilde{\Theta}$ satisfies the conditions without 
that for the gamma matrix $\gamma^{(2n+1)}$ in the case of BDI class in two dimensions. 
We can choose the eigenvectors of $A$ so that 
$$
A\varphi=\lambda \varphi \quad \mbox{and}\quad \tilde{\Theta}\varphi=\varphi.
$$
Combining this, $\tilde{\Theta}D_a\varphi=D_a\tilde{\Theta}\varphi$ 
and $\tilde{\Theta}B\varphi=B\tilde{\Theta}\varphi$, we obtain 
$$
\langle \tilde{\Theta}D_a\varphi,\tilde{\Theta}B\varphi\rangle=\langle D_a\varphi,B\varphi\rangle.
$$
On the other hand, the left-hand side can be written as  
$$
\langle \tilde{\Theta}D_a\varphi,\tilde{\Theta}B\varphi\rangle
=\langle B\varphi,D_a\varphi\rangle=\langle D_a\varphi,B\varphi\rangle.
$$
These imply that AI class in one dimension has no $\ze_2$ index. 

\bigskip\medskip

\noindent
$\bullet$ {\bf AII Class in One Dimension.}
In this case, we have $\tilde{\Theta}^2=-1$ from $\Theta^2=-1$. 
Further, in the same way as in the preceding AI class, one has 
$$
\tilde{\Theta}A\varphi=A\tilde{\Theta}\varphi.
$$
Therefore, if $\varphi$ is an eigenvector of $A$, then $\tilde{\Theta}\varphi$ is also an eigenvector 
of $A$ with the same eigenvalue. Combining this with $\langle\varphi, \tilde{\Theta}\varphi\rangle=0$ 
which is derived from $\tilde{\Theta}^2=-1$ as in the case of the Kramers doublet, we obtain 
that ${\rm dim}\;{\rm ker}\;(A-1)$ is even. This implies that the $\ze_2$ index is vanishing 
for AII class in one dimension.  

\bigskip\medskip

\noindent
$\bullet$ {\bf AI Class in Three Dimensions.}
In three or higher dimensions, the $(2n+1)$-th component of $\gamma$ is given by $\gamma^{(2n+1)}$ 
{from} the definition (\ref{gammaOddDim}).  
Therefore, by relying on (\ref{C-gammaC-}) in Appendix~\ref{AppGamma}, we use only 
$$
\tilde{\Theta}_-:=C_-\Theta
$$
as the time-reversal transformation in odd dimensions. Namely, we can define the signature 
of the Dirac operator $D_a$ only for $C_-K$ in the present case. 
 
{From} $\Theta^2=+1$ for AI class and ${\rm sgn}(C_-K)^2=-1$ for $n=1$, we have 
$$
(\tilde{\Theta}_-)^2=-1.
$$
Clearly, one has 
$$
\tilde{\Theta}_-A\varphi=\tilde{\Theta}_-(P_{\rm F}-D_aP_{\rm F}D_a)\varphi=A\tilde{\Theta}_-\varphi.
$$

On the other, hand, we have 
$$
\langle \varphi, \tilde{\Theta}_-\varphi\rangle=0
$$
{from} $(\tilde{\Theta}_-)^2=-1$. 
As in the preceding case of AII class in one dimensions, these observations imply that 
${\rm dim}\;{\rm ker}\;(A-1)$ is even. Thus, the $\ze_2$ index is vanishing 
for AI class in three dimensions.   

\bigskip\medskip

\noindent
$\bullet$ {\bf AII Class in Three Dimensions.}
In this case, we have 
$$
(\tilde{\Theta}_-)^2=+1
$$ 
from $\Theta^2=-1$ for AII class and ${\rm sgn}(C_K)^2=-1$. 
Further, 
\begin{equation}
\label{tildeTheta-DacommuAII3D}
\tilde{\Theta}_-D_a\varphi=-D_a\tilde{\Theta}_-\varphi
\end{equation}
for any wavefunction $\varphi$, from ${\rm sgn}_-(D_a)=-1$. 
Since we have 
$$
A\tilde{\Theta}_-\varphi=\tilde{\Theta}_-A\varphi
$$
for any wavefunction $\varphi$, 
the transformation $\tilde{\Theta}$ satisfies the conditions without 
that for the gamma matrix $\gamma^{(2n+1)}$ in the case of BDI class in two dimensions. 
Further, one has 
$$
D_aB=D_a(1-P_{\rm F}-D_aP_{\rm F}D_a)=BD_a
$$
and 
$$
\tilde{\Theta}_-B\varphi=B\tilde{\Theta}_-\varphi.
$$
{From} these observations, the above condition (\ref{tildeTheta-DacommuAII3D}) 
which is different from that in the case of AI class in one dimensions yields 
$$
\langle D_a\varphi, B\varphi\rangle=0
$$
in the same way as in the case of AI class in one dimensions. 
In consequence, AII class in three dimensions has a $\ze_2$ index. 

\bigskip\medskip

\noindent
$\bullet$ {\bf AI Class in Five Dimensions.}
Similarly, one has 
$$
(\tilde{\Theta}_-)^2=-1
$$
{from} $\Theta^2=+1$ and ${\rm sgn}(C_-K)^2=-1$. 
This yields  
$$
\langle \varphi,\tilde{\Theta}_-\varphi\rangle=0
$$
in the same way as in the case of AI class in three dimensions.
Therefore, ${\rm dim}\;{\rm ker}\;(A-1)$ is even, and the $\ze_2$ index is vanishing 
for AI class in five dimensions. 

\bigskip\medskip

\noindent
$\bullet$ {\bf AII Class in Five Dimensions.}
{From} $\Theta^2=-1$ and ${\rm sgn}(C_-K)^2=-1$, one has 
$$
(\tilde{\Theta}_-)^2=+1. 
$$
Further, we have 
$$
\tilde{\Theta}_-D_a\varphi=D_a\tilde{\Theta}_-\varphi
$$
by the property ${\rm sgn}_-(D_a)=+1$. Therefore, AII class in five dimensions has no $\ze_2$ index 
in the same way as in the case of AI class in one dimension. 

\bigskip\medskip

\noindent
$\bullet$ {\bf AI Class in Seven Dimensions.}
Similarly, we obtain 
$$
(\tilde{\Theta}_-)^2=+1
$$
by using $\Theta^2=+1$ for AI class and ${\rm sgn}(C_-K)^2=+1$. In addition, 
$$
\tilde{\Theta}_-D_a\varphi=-D_a\tilde{\Theta}_-\varphi
$$
for any wavefunction $\varphi$, from ${\rm sgn}_-(D_a)=-1$.
Thus, we obtain that AI class in seven dimensions has a $\ze_2$ index in the same way as in 
the case of AII class in three dimensions. 

\bigskip\medskip

\noindent
$\bullet$ {\bf AII Class in Seven Dimensions.}
In this case, we have 
$$
(\tilde{\Theta}_-)^2=-1
$$
{from} $\Theta^2=-1$ for AII class and ${\rm sgn}(C_-K)^2=+1$. 
This yields  
$$
\langle \varphi,\tilde{\Theta}_-\varphi\rangle=0
$$
in the same way as in the case of AI class in three dimensions.
Therefore, ${\rm dim}\;{\rm ker}\;(A-1)$ is even, and the $\ze_2$ index is vanishing 
for AII class in seven dimensions. 

\bigskip 

\subsubsection{C and D Classes}
\label{sec:CD}
These classes have particle-hole symmetry only. 
We define the extended particle-hole symmetry by 
$$
\tilde{\Xi}_-:=C_-\Xi. 
$$

\medskip

\noindent
$\bullet$ {\bf D Class in One Dimension.}
In one dimension, there appears no gamma matrix, and hence the Dirac operator is given by (\ref{Dirac1D}). 
Furthermore, one has $\tilde{\Xi}=\Xi$. This yields 
$$
\tilde{\Xi}^2=+1
$$
{from} $\Xi^2=+1$ for D class. 

Since we have $\Xi P_{\rm F}\varphi=(1-P_{\rm F})\Xi\varphi$ with $E_{\rm F}=0$ 
for any wavefunction $\varphi$, we obtain 
\begin{equation}
\label{tildeXiAcommuD1D}
\tilde{\Xi}A\varphi=\tilde{\Xi}(P_{\rm F}-D_aP_{\rm F}D_a)\varphi=-A\tilde{\Xi}\varphi.
\end{equation}
Thus, we have to deal with three vectors, $\tilde{\Xi}\varphi$, $B\varphi$ and $D_a\varphi$, 
{from} the commutation relation $\{D_a,A\}=0$. 

Consider first the inner product $\langle D_a\varphi,\tilde{\Xi}\varphi\rangle$.  
Note that 
$$
\langle \tilde{\Xi} \tilde{\Xi}\varphi, \tilde{\Xi}D_a\varphi\rangle =\langle \varphi,\tilde{\Xi}D_a\varphi\rangle
=\langle D_a\varphi,\tilde{\Xi}\varphi\rangle,
$$
where we have used $\tilde{\Xi}^2=+1$ and the fact that the Dirac operator is given by (\ref{Dirac1D}). 
On the other hand, the left-hand side can be written as 
$$
\langle \tilde{\Xi} \tilde{\Xi}\varphi, \tilde{\Xi}D_a\varphi\rangle =\langle D_a\varphi,\tilde{\Xi}\varphi\rangle.
$$
These do not give any information about the inner product $\langle D_a\varphi,\tilde{\Xi}\varphi\rangle$. 

Next, consider $\langle B\varphi,\tilde{\Xi}\varphi\rangle$. 
To begin with, we note that 
$$
\tilde{\Xi}B\varphi=\tilde{\Xi}(1-P_{\rm F}-D_aP_{\rm F}D_a)\varphi
=-B\tilde{\Xi}\varphi 
$$
for any wavefunction $\varphi$, where we have used $\Xi P_{\rm F}\varphi=(1-P_{\rm F})\Xi\varphi$. 
By using this and $\tilde{\Xi}^2=+1$, we obtain 
$$
\langle \tilde{\Xi}\tilde{\Xi}\varphi,\tilde{\Xi}B\varphi\rangle
=\langle \varphi,\tilde{\Xi}B\varphi\rangle=-\langle B\varphi,\tilde{\Xi}\varphi\rangle.
$$
On the other hand, the left-hand side can be written as 
$$
\langle \tilde{\Xi}\tilde{\Xi}\varphi,\tilde{\Xi}B\varphi\rangle
=\langle B\varphi,\tilde{\Xi}\varphi\rangle.
$$
These two results imply 
$$
\langle B\varphi,\tilde{\Xi}\varphi\rangle=0.
$$

Finally, let us treat $\langle D_a\varphi,B\varphi\rangle$.  
For this purpose, we introduce an operator, 
$$
\tilde{\Sigma}:=D_a\tilde{\Xi}.
$$
Then, from (\ref{tildeXiAcommuD1D}) and $\{D_a,A\}=0$, one has 
$$
\tilde{\Sigma}A\varphi=A\tilde{\Sigma}\varphi.
$$
Further, 
$$
(\tilde{\Sigma})^2=D_a\tilde{\Xi}D_a\tilde{\Xi}=D_a^2(\tilde{\Xi})^2=1,
$$
where we have used $D_a^2=1$ and $(\tilde{\Xi})^2=1$. 
{From} these observations, we can choose the eigenvectors of $A$ so that 
$$
A\varphi=\lambda\varphi\quad\mbox{and}\quad \tilde{\Sigma}\varphi=\varphi,
$$
as in the same way in the case of BDI class in two dimensions. 
By relying on this, we have 
$$
\langle \tilde{\Sigma}D_a\varphi,\tilde{\Sigma}B\varphi\rangle
=-\langle D_a\varphi,B\varphi\rangle,
$$
where we have used $\{\tilde{\Xi},B\}\varphi=0$. On the other hand, the left-hand side can be written as 
$$
\langle \tilde{\Sigma}D_a\varphi,\tilde{\Sigma}B\varphi\rangle
=\langle B\varphi,D_a\varphi\rangle=\langle D_a\varphi,B\varphi\rangle.
$$
These imply $\langle D_a\varphi,B\varphi\rangle=0$. 

To summarize, the above three results about the inner products between the three vectors yield 
the conclusion that the present D class in one dimension has a $\ze_2$ index. 

\bigskip\medskip

\noindent
$\bullet$ {\bf C Class in One Dimension.}
In this case, one has 
$$
(\tilde{\Xi})^2=-1
$$
{from} $\Xi^2=-1$. 

As in the preceding case of D class in one dimension, let us consider 
the inner product $\langle\tilde{\Xi}\varphi, D_a\varphi\rangle$. 
Then, we obtain 
$$
\langle\tilde{\Xi}\varphi, D_a\varphi\rangle=0
$$
because $(\tilde{\Xi})^2=-1$ which is opposite sign to the case of the preceding D class in one dimension. 
This implies that ${\rm dim}\;{\rm ker}\;(A-1)$ is even, and hence the $\ze_2$ index is vanishing 
for C class in one dimension. 
 
\bigskip\medskip

\noindent
$\bullet$ {\bf D Class in Three Dimensions.}
In this case, one has 
$$
(\tilde{\Xi}_-)^2=-1
$$
{from} $\Xi^2=+1$ for D class and ${\rm sgn}(C_K)^2=-1$. Further, 
$$
\tilde{\Xi}_-D_a\varphi=-D_a\tilde{\Xi}_-\varphi
$$
for any wavefunction $\varphi$, from ${\rm sgn}_-(D_a)=-1$. 
Taking account of these two facts, it is sufficient to deal with 
the three inner products about the three vectors, $D_a\varphi$, $B\varphi$ and $\tilde{\Xi}_-\varphi$, 
in the same way as in the case of D class in one dimension. 
As a result, one cannot obtain any information about the spectrum of $A$. 
Therefore, the present D class in three dimensions has no $\ze_2$ index.
  
\bigskip\medskip

\noindent
$\bullet$ {\bf C Class in Three Dimensions.}
Similarly, one has 
$$
(\tilde{\Xi}_-)^2=+1
$$
{from} $\Xi^2=-1$ for C class and ${\rm sgn}(C_K)^2=-1$. Further, 
$$
\tilde{\Xi}_-D_a\varphi=-D_a\tilde{\Xi}_-\varphi
$$
for any wavefunction $\varphi$, from ${\rm sgn}_-(D_a)=-1$. 
In this case, we obtain 
$$
\langle \tilde{\Xi}_-\varphi,D_a\varphi\rangle=0
$$
in the same way as in the case of D class in one dimension because $(\tilde{\Xi}_-)^2=+1$ 
which is opposite sign to that of the preceding D class in three dimensions.
Thus, ${\rm dim}\;{\rm ker}\;(A-1)$ is even, and hence the $\ze_2$ index is vanishing 
for C class in three dimensions.

\bigskip\medskip

\noindent
$\bullet$ {\bf D Class in Five Dimensions.}
In this case, one has 
$$
(\tilde{\Xi}_-)^2=-1
$$
{from} $\Xi^2=+1$ for D class and ${\rm sgn}(C_-K)^2=-1$. Further, 
$$
\tilde{\Xi}_-D_a\varphi=D_a\tilde{\Xi}_-\varphi
$$
for any wavefunction $\varphi$, from ${\rm sgn}_-(D_a)=+1$. 
These two signs are opposite to those in the preceding case of C class in three dimensions. 
Therefore, we obtain the same conclusion as 
$$
\langle \tilde{\Xi}_-\varphi,D_a\varphi\rangle=0
$$
in the same way as in the case of D class in one dimension.
In consequence, ${\rm dim}\;{\rm ker}\;(A-1)$ is even, and the $\ze_2$ index is vanishing 
for D class in five dimensions. 

\bigskip\medskip

\noindent
$\bullet$ {\bf C Class in Five Dimensions.}
Similarly, one has 
$$
(\tilde{\Xi}_-)^2=+1
$$
{from} $\Xi^2=-1$ for C class and ${\rm sgn}(C_-K)^2=-1$. Further, 
$$
\tilde{\Xi}_-D_a\varphi=D_a\tilde{\Xi}_-\varphi
$$
for any wavefunction $\varphi$, from ${\rm sgn}_-(D_a)=+1$. 
In this case, one cannot obtain any information about the inner 
product $\langle \tilde{\Xi}_-\varphi,D_a\varphi\rangle$. 
Further, on the rest of the two inner products, 
we can obtain the same results as in the case of D class in one dimension. 
Therefore, we conclude that C class in five dimensions has a $\ze_2$ index. 

\bigskip\medskip

\noindent
$\bullet$ {\bf D Class in Seven Dimensions.}
In this case, one has 
$$
(\tilde{\Xi}_-)^2=+1
$$
{from} $\Xi^2=+1$ for D class and ${\rm sgn}(C_-K)^2=+1$. In addition, 
$$
\tilde{\Xi}_-D_a\varphi=-D_a\tilde{\Xi}_-\varphi
$$
for any wavefunction $\varphi$, from ${\rm sgn}_-(D_a)=-1$. 
These conditions are the same as in the case of C class in three dimensions. 
Therefore, the $\ze_2$ index for D class in seven dimensions is vanishing. 

\bigskip\medskip

\noindent
$\bullet$ {\bf C Class in Seven Dimensions.}
Similarly, one has 
$$
(\tilde{\Xi}_-)^2=-1
$$
{from} $\Xi^2=-1$ for C class and ${\rm sgn}(C_-K)^2=+1$. Further, 
$$
\tilde{\Xi}_-D_a\varphi=-D_a\tilde{\Xi}_-\varphi
$$
for any wavefunction $\varphi$, from ${\rm sgn}_-(D_a)=-1$. 
These conditions are the same as in the case of D class in three dimensions. 
In conclusion, C class in seven dimensions has no $\ze_2$ index. 

\subsubsection{Chiral Symmetry in Odd Dimensions}
\label{sec:SSO}
In odd dimensions, the rest of the CAZ classes are BDI, CI, DIII and CII, which have chiral 
and time-reversal symmetries. To begin with, we recall two operators, 
$$
A=S(\mathcal{P}_{\rm D}-U\mathcal{P}_{\rm D}U)
\quad \mbox{and} \quad  
B=S(1-\mathcal{P}_{\rm D}-U\mathcal{P}_{\rm D}U),
$$
where the projection $\mathcal{P}_{\rm D}$ is given by (\ref{projectionchiral}), 
and the unitary operator $U$ is given by (\ref{U}); $S$ is the chiral operator. 
As shown in Sec.~\ref{subsec:ChiralIndex}, theses two operators satisfy $A^2+B^2=1$ {and} $AB+BA=0$,  
and the integer-valued index is defined by 
$$
{\rm Ind}^{(2n+1)}(D_a,S,U)=\frac{1}{2}\left[{\rm dim}\;{\rm ker}\;(A-1)-{\rm dim}\;{\rm ker}\;(A+1)\right].
$$
When the integer-valued index is vanishing, the $\ze_2$-valued index is defined by 
$$
{\rm Ind}_2^{(2n+1)}(D_a,S,U)=\frac{1}{2}{\rm dim}\;{\rm ker}\;(A-1)\ \mbox{modulo}\ 2. 
$$

\medskip

\noindent
$\bullet$ {\bf BDI Class in One Dimension.}
In one dimension, there is no gamma matrix. The Dirac operator $D_a$ is given by (\ref{Dirac1D}). 
Therefore, one has 
$$
\tilde{\Theta}=\Theta\quad \mbox{and}\quad (\tilde{\Theta})^2=\Theta^2=+1
$$
for the present BDI class in one dimension. Since $(S\Theta)^2=+1$ for the particle-hole symmetry, 
we have 
$$
[S,\tilde{\Theta}]\varphi=0
$$
for any wavefunction $\varphi$. Using this relation, we have  
$$
\tilde{\Theta}A\varphi=\tilde{\Theta}S(\mathcal{P}_{\rm D}-U\mathcal{P}_{\rm D}U)\varphi
=S\tilde{\Theta}(\mathcal{P}_{\rm D}-U\mathcal{P}_{\rm D}U)\varphi=A\tilde{\Theta}\varphi.
$$
Therefore, the time-reversal symmetry dose not give any information about the spectrum of $A$. 
In conclusion, BDI class in one dimension has an integer-valued index as in the case of generic 
chiral symmetric class in odd dimensions. 

\bigskip\medskip

\noindent
$\bullet$ {\bf CI Class in One Dimension.}
Similarly, one has $\tilde{\Theta}=\Theta$, and hence 
$$
(\tilde{\Theta})^2=\Theta^2=+1. 
$$
{From} the particle-hole symmetry $(S\Theta)^2=-1$, 
$$
\{S,\tilde{\Theta}\}\varphi=0
$$
for any wavefunction $\varphi$. This yields 
\begin{align*}
\tilde{\Theta}A\varphi&=\tilde{\Theta}S(\mathcal{P}_{\rm D}-U\mathcal{P}_{\rm D}U)\varphi\\
&=-S\tilde{\Theta}(\mathcal{P}_{\rm D}-U\mathcal{P}_{\rm D}U)\varphi=-A\tilde{\Theta}\varphi. 
\end{align*}
This implies that the integer-valued index is vanishing. 

Note that 
\begin{align*}
\tilde{\Theta}UA\varphi&=\tilde{\Theta}US(\mathcal{P}_{\rm D}-U\mathcal{P}_{\rm D}U)\varphi\\
&=-\tilde{\Theta}SU(\mathcal{P}_{\rm D}-U\mathcal{P}_{\rm D}U)\varphi\\
&=\tilde{\Theta}S(\mathcal{P}_{\rm D}-U\mathcal{P}_{\rm D}U)U\varphi=-A\tilde{\Theta}U\varphi,
\end{align*}
where we have used $SU=-US$ and the above $\tilde{\Theta}A\varphi=-A\tilde{\Theta}\varphi$. 
Further, one has 
$$
[\tilde{\Theta}U,S]\varphi=0
$$
for any wavefunction $\varphi$, where we have used $\{S,\tilde{\Theta}\}\varphi=0$ and $\{S,U\}=0$. 
Therefore, in order to check whether the present class has a $\ze_2$ index, it is enough to consider 
the inner product $\langle \tilde{\Theta}U\varphi,B\varphi\rangle$. 
Note that 
$$
\tilde{\Theta}B\varphi=\tilde{\Theta}S(1-\mathcal{P}_{\rm D}-U\mathcal{P}_{\rm D}U)\varphi
=-B\tilde{\Theta}\varphi,
$$
where we have used $\{S,\tilde{\Theta}\}\varphi=0$. Further, from $\{S,U\}=0$, we have 
$$
UB=US(1-\mathcal{P}_{\rm D}-U\mathcal{P}_{\rm D}U)
=-SU(1-\mathcal{P}_{\rm D}-U\mathcal{P}_{\rm D}U)=-BU.
$$
Combining these with $\tilde{\Theta}^2=+1$, we obtain 
\begin{align*}
\langle \tilde{\Theta}\tilde{\Theta}U\varphi,\tilde{\Theta}B\varphi\rangle
&=\langle U\varphi,\tilde{\Theta}B\varphi\rangle\\
&=-\langle U\varphi,B\tilde{\Theta}\varphi\rangle\\
&=\langle B\varphi,U\tilde{\Theta}\varphi\rangle=\langle B\varphi,\tilde{\Theta}U\varphi\rangle.
\end{align*}
On the other hand, the left-hand side can be written as 
$$
\langle \tilde{\Theta}\tilde{\Theta}U\varphi,\tilde{\Theta}B\varphi\rangle
=\langle B\varphi,\tilde{\Theta}U\varphi\rangle. 
$$
These two results do not give any information about the spectrum of $A$, and 
hence CI class in one dimension has no index. 

\bigskip\medskip

\noindent
$\bullet$ {\bf DIII Class in One Dimension.}
In this case, from $\tilde{\Theta}=\Theta$, one has  
$$
(\tilde{\Theta})^2=\Theta^2=-1.
$$
Combining this with $(S\Theta)^2=+1$ for the particle-hole symmetry, we have  
$$
\{S,\tilde{\Theta}\}\varphi=0
$$
for any wavefunction $\varphi$. Therefore, we obtain that the integer-valued index is vanishing, and that   
$$
\langle B\varphi,\tilde{\Theta}U\varphi\rangle=0
$$
in the same way as in the preceding case of CI class in one dimension. 
In conclusion, DIII class in one dimension has a $\ze_2$ index. 

\bigskip\medskip

\noindent
$\bullet$ {\bf CII Class in One Dimension.}
Similarly, we have 
$$
(\tilde{\Theta})^2=-1
$$
{from} $\Theta^2=-1$, and 
$$
[S,\tilde{\Theta}]\varphi=0
$$
for any wavefunction $\varphi$, from $(S\Theta)^2=-1$. The second condition yields 
$$
\tilde{\Theta}A\varphi=A\tilde{\Theta}\varphi.
$$
{From} the first condition $(\tilde{\Theta})^2=-1$, we have 
$$
\langle \varphi,\tilde{\Theta}\varphi\rangle=0.
$$
These imply that the integer-valued index for CII class in one dimension is always even, i.e., $2\ze$. 

\bigskip\medskip

\noindent
$\bullet$ {\bf BDI Class in Three Dimensions.}
In three or higher dimensions, we define the extended time-reversal transformation by 
$$
\tilde{\Theta}_-:=C_-\Theta. 
$$
For the present case, we have 
$$
(\tilde{\Theta}_-)^2=-1
$$
{from} $\Theta^2=+1$ for BDI class and ${\rm sgn}(C_-K)^2=-1$. On the other hand, from 
$(S\Theta)^2=+1$ for the particle-hole symmetry, we obtain 
$$
[S,\Theta]\varphi=0
$$
for any wavefunction $\varphi$, where we have used $\Theta^2=+1$, again. Clearly, this yields 
$$
[S,\tilde{\Theta}_-]\varphi=0.
$$
Further, 
$$
\tilde{\Theta}_-D_a\varphi=-D_a\tilde{\Theta}_-\varphi
$$
{from} ${\rm sgn}_-(D_a)=-1$. By using this, we have 
\begin{align*}
\tilde{\Theta}_-(\mathcal{P}_{\rm D}-U\mathcal{P}_{\rm D}U)\varphi
&=\frac{1}{2}\tilde{\Theta}_-(D_a-UD_aU)\varphi\\
&=-\frac{1}{2}(D_a-UD_aU)\tilde{\Theta}_-\varphi\\
&=-(\mathcal{P}_{\rm D}-U\mathcal{P}_{\rm D}U)\tilde{\Theta}_-\varphi
=-(\mathcal{P}_{\rm D}-U\mathcal{P}_{\rm D}U)\tilde{\Theta}_-\varphi.
\end{align*}
Combining this with the above $[S,\tilde{\Theta}_-]\varphi=0$, we obtain 
$$ 
\tilde{\Theta}_-A\varphi=-A\tilde{\Theta}_-\varphi.
$$
This implies that the integer-valued index is vanishing. 

In order to check whether the present case has a $\ze_2$ index, it is sufficient to consider 
the inner product $\langle \tilde{\Theta}_-\varphi,B\varphi\rangle$. 
Note that 
\begin{align*}
\tilde{\Theta}_-B\varphi&=\tilde{\Theta}_-S(1-\mathcal{P}_{\rm D}-U\mathcal{P}_{\rm D}U)\varphi\\
&=S\tilde{\Theta}_-(1-\mathcal{P}_{\rm D}-U\mathcal{P}_{\rm D}U)\varphi\\
&=\frac{1}{2}S\tilde{\Theta}_-(-D_a-UD_aU)\varphi\\
&=-\frac{1}{2}S(-D_a-UD_aU)\tilde{\Theta}_-\varphi=-B\tilde{\Theta}_-\varphi,
\end{align*}
where we have used $[S,\tilde{\Theta}_-]\varphi=0$ and $\tilde{\Theta}_-D_a\varphi=-D_a\tilde{\Theta}_-\varphi$. 
Combining this with $(\tilde{\Theta}_-)^2=-1$, we obtain 
$$
\langle \tilde{\Theta}_-\tilde{\Theta}_-\varphi,\tilde{\Theta}_-B\varphi\rangle
=-\langle\varphi,\tilde{\Theta}_-B\varphi\rangle
=\langle B\varphi,\tilde{\Theta}_-\varphi\rangle.
$$
On the other hand, the left-hand side can be written as 
$$
\langle \tilde{\Theta}_-\tilde{\Theta}_-\varphi,\tilde{\Theta}_-B\varphi\rangle
=\langle B\varphi, \tilde{\Theta}_-\varphi\rangle.
$$
These results do not give any information about the spectrum of $A$. 
Thus, BDI class in three dimensions has no index. 

\bigskip\medskip

\noindent
$\bullet$ {\bf CI Class in Three Dimensions.}
In this case, we have 
$$
(\tilde{\Theta}_-)^2=-1
$$
{from} $\Theta^2=+1$ and ${\rm sgn}(C_-K)^2=-1$, and 
$$
\{S,\Theta\}\varphi=0
$$
for any wavefunction $\varphi$, from $(S\Theta)^2=-1$ and $\Theta^2=+1$. The second relation yields 
$$
\{S,\tilde{\Theta}_-\}\varphi=0. 
$$
Further, we have 
$$
\tilde{\Theta}_-D_a\varphi=-D_a\tilde{\Theta}_-\varphi
$$
{from} ${\rm sgn}_-(D_a)=-1$. 

Note that 
\begin{align*}
\tilde{\Theta}_-A\varphi&=\tilde{\Theta}_-S(\mathcal{P}_{\rm D}-U\mathcal{P}_{\rm D}U)\varphi\\
&=-S\tilde{\Theta}_-(\mathcal{P}_{\rm D}-U\mathcal{P}_{\rm D}U)\varphi=A\tilde{\Theta}_-\varphi, 
\end{align*}
where we have used $\{S,\tilde{\Theta}_-\}\varphi=0$ and $\tilde{\Theta}_-D_a\varphi=-D_a\tilde{\Theta}_-\varphi$. 
By using this and $\{S,U\}=0$, we obtain 
\begin{align*}
\tilde{\Theta}_-UA\varphi&=\tilde{\Theta}_-US(\mathcal{P}_{\rm D}-U\mathcal{P}_{\rm D}U)\varphi\\
&=-\tilde{\Theta}_-SU(\mathcal{P}_{\rm D}-U\mathcal{P}_{\rm D}U)\varphi\\
&=\tilde{\Theta}_-S(\mathcal{P}_{\rm D}-U\mathcal{P}_{\rm D}U)U\varphi\\
&=\tilde{\Theta}_-AU\varphi=A\tilde{\Theta}_-U\varphi
\end{align*}
and
$$
[\tilde{\Theta}_-U,S]\varphi=0 
$$
for any wavefunctions $\varphi$. Therefore, if $\varphi$ is an eigenvector of $A$, 
then $\tilde{\Theta}_-UA\varphi$ is also an eigenvector of $A$ with the same eigenvalue 
in the same sector of the eigenspace of $S$. 

By relying on the above observations, let us consider 
the inner product $\langle \varphi,\tilde{\Theta}_-U\varphi\rangle$. Note that 
$$
\langle \tilde{\Theta}_-\varphi,\tilde{\Theta}_-\tilde{\Theta}_-U\varphi\rangle
=-\langle \tilde{\Theta}_-\varphi,U\varphi\rangle=-\langle \tilde{\Theta}_-U\varphi,\varphi\rangle,
$$
where we have used $(\tilde{\Theta}_-)^2=-1$. The left-hand side can be computed as 
$$
\langle \tilde{\Theta}_-\varphi,\tilde{\Theta}_-\tilde{\Theta}_-U\varphi\rangle
=\langle \tilde{\Theta}_-U\varphi,\varphi\rangle.
$$
Therefore, we obtain  
$$
\langle \tilde{\Theta}_-U\varphi,\varphi\rangle=0.
$$
This implies that the integer-valued index for CI class in three dimensions is always even, i.e., $2\ze$. 

\bigskip\medskip

\noindent
$\bullet$ {\bf DIII Class in Three Dimensions.}
Similarly, we have 
$$
(\tilde{\Theta}_-)^2=+1
$$
{from} $\Theta^2=-1$ and ${\rm sgn}(C_-K)^2=-1$, and 
$$
\{S,\Theta\}\varphi=0
$$
for any wavefunction $\varphi$, from $(S\Theta)^2=+1$ and $\Theta^2=-1$. The latter yields 
$$
\{S,\tilde{\Theta}_-\}\varphi=0. 
$$
Further, we have 
$$
\tilde{\Theta}_-D_a\varphi=-D_a\tilde{\Theta}_-\varphi
$$
{from} ${\rm sgn}_-(D_a)=-1$. 
Therefore, it is sufficient to treat the inner product $\langle \varphi, \tilde{\Theta}_-U\varphi\rangle$ 
in the same way as in the preceding CI class in three dimensions. 
As a result, we cannot obtain any information about $\langle \varphi, \tilde{\Theta}_-U\varphi\rangle$ 
because $(\tilde{\Theta}_-)^2=+1$ which is opposite sign to that for CI class in three dimensions. 
Thus, DIII class in three dimensions has an integer-valued index.  

In passing, we can obtain a relation between the indices of the DIII and AII classes in three dimensions.  
Under a weak perturbation which breaks the chiral symmetry of the DIII class, the DIII class of the model 
changes to AII class. In the weak limit of the perturbation, a relation of the indices between the two classes 
holds. For the details, see Appendix~\ref{RelationIndices}.  
\bigskip\medskip

\noindent
$\bullet$ {\bf CII Class in Three Dimensions.}
In this case, one has 
$$
(\tilde{\Theta}_-)^2=+1
$$
{from} $\Theta^2=-1$ and ${\rm sgn}(C_-K)^2=-1$, and 
$$
[S,\Theta]\varphi=0
$$
for any wavefunction $\varphi$, from $(S\Theta)^2=-1$ and $\Theta^2=-1$. 
Clearly, the commutation relation yields 
$$
[S,\tilde{\Theta}_-]\varphi=0. 
$$
In addition, we have 
$$
\tilde{\Theta}_-D_a\varphi=-D_a\tilde{\Theta}_-\varphi
$$
{from} ${\rm sgn}_-(D_a)=-1$. The last two relations yield 
\begin{align*}
\tilde{\Theta}_-A\varphi&=\tilde{\Theta}_-S(\mathcal{P}_{\rm D}-U\mathcal{P}_{\rm D}U)\varphi\\
&=S\tilde{\Theta}_-(\mathcal{P}_{\rm D}-U\mathcal{P}_{\rm D}U)\varphi\\
&=-S(\mathcal{P}_{\rm D}-U\mathcal{P}_{\rm D}U)\tilde{\Theta}_-\varphi=-A\tilde{\Theta}_-\varphi.
\end{align*}
This implies that the integer-valued index is vanishing. 

In order to check whether the present case shows the $\ze_2$-valued index, it is enough to 
consider the inner product $\langle \tilde{\Theta}_-\varphi,B\varphi\rangle$. 
Note that
$$
\tilde{\Theta}_-B\varphi=\tilde{\Theta}_-S(1-\mathcal{P}_{\rm D}-U\mathcal{P}_{\rm D}U)\varphi
=S\tilde{\Theta}_-(1-\mathcal{P}_{\rm D}-U\mathcal{P}_{\rm D}U)\varphi=-B\tilde{\Theta}_-\varphi,
$$
where we have used $[S,\tilde{\Theta}_-]\varphi=0$ and $\tilde{\Theta}_-D_a\varphi=-D_a\tilde{\Theta}_-\varphi$. 
Combining this with $(\tilde{\Theta}_-)^2=+1$, we obtain 
$$
\langle \tilde{\Theta}_-\tilde{\Theta}_-\varphi,\tilde{\Theta}_-B\varphi\rangle
=\langle \varphi,\tilde{\Theta}_-B\varphi\rangle=-\langle B\varphi,\tilde{\Theta}_-\varphi\rangle.
$$
On the other hand, the left-hand side can be written as 
$$
\langle \tilde{\Theta}_-\tilde{\Theta}_-\varphi,\tilde{\Theta}_-B\varphi\rangle
=\langle B\varphi,\tilde{\Theta}_-\varphi\rangle.
$$
Immediately, 
$$
\langle B\varphi,\tilde{\Theta}_-\varphi\rangle=0.
$$
In consequence, CII class in three dimensions has a $\ze_2$ index. 

\bigskip\medskip

\noindent
$\bullet$ {\bf BDI Class in Five Dimensions.}
In this case, we have 
$$
(\tilde{\Theta}_-)^2=-1
$$
{from} $\Theta^2=+1$ and ${\rm sgn}(C_-K)^2=-1$, and 
$$
[S,\Theta]\varphi=0
$$
for any wavefunction $\varphi$, from $(S\Theta)^2=+1$ and $\Theta^2=+1$. 
Clearly, this yields 
$$
[S,\tilde{\Theta}_-]\varphi=0. 
$$
In addition, we have 
$$
\tilde{\Theta}_-D_a\varphi=D_a\tilde{\Theta}_-\varphi
$$
{from} ${\rm sgn}_-(D_a)=+1$. These conditions are the same as in the case of CII class in one dimension. 
Therefore, the integer-valued index for BDI class in five dimensions is always even, i.e., $2\ze$. 

\bigskip\medskip

\noindent
$\bullet$ {\bf CI Class in Five Dimensions.}
Similarly, one has  
$$
(\tilde{\Theta}_-)^2=-1
$$
{from} $\Theta^2=+1$ and ${\rm sgn}(C_-K)^2=-1$, and 
$$
\{S,\Theta\}\varphi=0
$$
for any wavefunction $\varphi$, from $(S\Theta)^2=-1$ and $\Theta^2=+1$. 
Clearly, this yields 
$$
\{S,\tilde{\Theta}_-\}\varphi=0. 
$$
In addition, we have 
$$
\tilde{\Theta}_-D_a\varphi=D_a\tilde{\Theta}_-\varphi
$$
{from} ${\rm sgn}_-(D_a)=+1$. These conditions are the same as in the case of DIII class in one dimension. 
Thus, CI class in five dimensions has a $\ze_2$ index. 

\bigskip\medskip

\noindent
$\bullet$ {\bf DIII Class in Five Dimensions.}
Similarly, one has  
$$
(\tilde{\Theta}_-)^2=+1
$$
{from} $\Theta^2=-1$ and ${\rm sgn}(C_-K)^2=-1$, and 
$$
\{S,\Theta\}\varphi=0
$$
for any wavefunction $\varphi$, from $(S\Theta)^2=+1$ and $\Theta^2=-1$. 
Clearly, this yields 
$$
\{S,\tilde{\Theta}_-\}\varphi=0. 
$$
In addition, we have 
$$
\tilde{\Theta}_-D_a\varphi=D_a\tilde{\Theta}_-\varphi
$$
{from} ${\rm sgn}_-(D_a)=+1$. These conditions are the same as in the case of CI class in one dimension. 
Therefore, DIII class in five dimensions has no index. 

\bigskip\medskip

\noindent
$\bullet$ {\bf CII Class in Five Dimensions.}
Similarly, one has  
$$
(\tilde{\Theta}_-)^2=+1
$$
{from} $\Theta^2=-1$ and ${\rm sgn}(C_-K)^2=-1$, and 
$$
[S,\Theta]\varphi=0
$$
for any wavefunction $\varphi$, from $(S\Theta)^2=-1$ and $\Theta^2=-1$. 
Clearly, this yields 
$$
[S,\tilde{\Theta}_-]\varphi=0. 
$$
In addition, we have 
$$
\tilde{\Theta}_-D_a\varphi=D_a\tilde{\Theta}_-\varphi
$$
{from} ${\rm sgn}_-(D_a)=+1$. These conditions are the same as in the case of BDI class in one dimension. 
In consequence, CII class in five dimensions has an integer-valued index, i.e., $\ze$. 

\bigskip\medskip

\noindent
$\bullet$ {\bf BDI Class in Seven Dimensions.}
In this case, we have 
$$
(\tilde{\Theta}_-)^2=+1
$$
{from} $\Theta^2=+1$ and ${\rm sgn}(C_-K)^2=+1$, and 
$$
[S,\Theta]\varphi=0
$$
for any wavefunction $\varphi$, from $(S\Theta)^2=+1$ and $\Theta^2=+1$. 
Clearly, this yields 
$$
[S,\tilde{\Theta}_-]\varphi=0. 
$$
In addition, we have 
$$
\tilde{\Theta}_-D_a\varphi=-D_a\tilde{\Theta}_-\varphi
$$
{from} ${\rm sgn}_-(D_a)=-1$. These conditions are the same as in the case of CII class in three dimensions. 
Therefore, BDI class in seven dimensions has a $\ze_2$ index. 

\bigskip\medskip

\noindent
$\bullet$ {\bf CI Class in Seven Dimensions.}
Similarly, we have 
$$
(\tilde{\Theta}_-)^2=+1
$$
{from} $\Theta^2=+1$ and ${\rm sgn}(C_-K)^2=+1$, and 
$$
\{S,\Theta\}\varphi=0
$$
for any wavefunction $\varphi$, from $(S\Theta)^2=-1$ and $\Theta^2=+1$. 
Clearly, this yields 
$$
\{S,\tilde{\Theta}_-\}\varphi=0. 
$$
In addition, we have 
$$
\tilde{\Theta}_-D_a\varphi=-D_a\tilde{\Theta}_-\varphi
$$
{from} ${\rm sgn}_-(D_a)=-1$. These conditions are the same as in the case of DIII class in three dimensions. 
Thus, CI class in seven dimensions has an integer-valued index, i.e., $\ze$.  

\bigskip\medskip

\noindent
$\bullet$ {\bf DIII Class in Seven Dimensions.}
Similarly, we obtain  
$$
(\tilde{\Theta}_-)^2=-1
$$
{from} $\Theta^2=-1$ and ${\rm sgn}(C_-K)^2=+1$, and 
$$
\{S,\Theta\}\varphi=0
$$
for any wavefunction $\varphi$, from $(S\Theta)^2=+1$ and $\Theta^2=-1$. 
Clearly, this yields 
$$
\{S,\tilde{\Theta}_-\}\varphi=0. 
$$
In addition, we have 
$$
\tilde{\Theta}_-D_a\varphi=-D_a\tilde{\Theta}_-\varphi
$$
{from} ${\rm sgn}_-(D_a)=-1$. These conditions are the same as in the case of CI class in three dimensions. 
Therefore, the integer-valued index for DIII class in five dimensions is always even, i.e., $2\ze$. 

\bigskip\medskip

\noindent
$\bullet$ {\bf CII Class in Seven Dimensions.}
Similarly, one has 
$$
(\tilde{\Theta}_-)^2=-1
$$
{from} $\Theta^2=-1$ and ${\rm sgn}(C_-K)^2=+1$, and 
$$
[S,\Theta]\varphi=0
$$
for any wavefunction $\varphi$, from $(S\Theta)^2=-1$ and $\Theta^2=-1$. 
Clearly, this yields 
$$
[S,\tilde{\Theta}_-]\varphi=0. 
$$
In addition, we have 
$$
\tilde{\Theta}_-D_a\varphi=-D_a\tilde{\Theta}_-\varphi
$$
{from} ${\rm sgn}_-(D_a)=-1$. These conditions are the same as in the case of BDI class in three dimensions. 
In consequence, CII class in seven dimensions has no index. 

\section{Homotopy argument}
\label{HomoArg}

In this section, we prove that 
the nonzero eigenvalue $\lambda$ of the operator $A$ of (\ref{AEvenD}) in the case of even dimensions 
is continuous with respect to the norm of perturbations. This leads to the robustness of the indices under perturbations 
because the indices are given by the multiplicities of the eigenvalues $\lambda=\pm 1$ of $A$. 
The case of odd dimensions can be treated in the same way. 

Consider a perturbation $\delta H$ for the Hamiltonian. We assume that $\delta H$ has only finite-range 
hopping terms for simplicity although we can treat long-range hoppings which rapidly decay with large distance. 
The total Hamiltonian $H'$ is given by 
$$
H'=H+g\delta H
$$
with a small real parameter $g$. For this Hamiltonian $H'$, the operator $A'$ which corresponds to $A$ of (\ref{AEvenD}) 
is given by 
$$
A'=\gamma^{(2n+1)}(P_{\rm F}'-D_aP_{\rm F}'D_a).
$$
The difference between $A$ and $A'$ is written as  
$$
A'-A=\gamma^{(2n+1)}\left[(P_{\rm F}'-P_{\rm F})-D_a(P_{\rm F}'-P_{\rm F})D_a\right].
$$
Thus, in order to prove the continuity under the perturbation, it is enough to estimate $P_{\rm F}'-P_{\rm F}$.  

Using the contour-integral expression (\ref{PFcontour}) of the projection $P_{\rm F}'$ 
onto the Fermi sea for the Hamiltonian $H'$, one has  
\begin{equation}
\label{diffPF'PF}
P_{\rm F}'-P_{\rm F}=\frac{1}{2\pi i}\oint dz \left[\frac{1}{z-H'}-\frac{1}{z-H}\right]
=\frac{g}{2\pi i}\oint dz\frac{1}{z-H'}\delta H\frac{1}{z-H}.
\end{equation}
We write 
$$
\psi=\frac{1}{z-H'}\delta H\frac{1}{z-H}\varphi
$$
with a wavefunction $\varphi\in\ell^2(\ze^d,\co^M)$. The norm $\Vert \psi\Vert$ is written as 
$$
\Vert \psi\Vert^2=\left\langle\varphi,\frac{1}{\overline{z}-H'}\delta H\frac{1}{\overline{z}-H}
\frac{1}{z-H'}\delta H\frac{1}{z-H}\varphi\right\rangle.  
$$
By using the basis of (\ref{ONSzeta}) in Sec.~\ref{TopoInvEven}, one has 
\begin{multline*}
\left\langle \zeta_u^\alpha,\frac{1}{z-H'}\delta H\frac{1}{z-H}\zeta_v^\beta\right\rangle\\
=\sum_{w,\mu}\sum_{w',\mu'}
\left\langle \zeta_u^\alpha,\frac{1}{z-H'}\zeta_w^\mu\right\rangle 
\left\langle \zeta_w^\mu,\delta H\zeta_{w'}^{\mu'}\right\rangle 
\left\langle\zeta_{w'}^{\mu'},\frac{1}{z-H}\zeta_v^\beta\right\rangle. 
\end{multline*}
In order to evaluate the right-hand side, we recall Assumption~\ref{Assumption}, which yields  
$$
\left|\left\langle \zeta_u^\alpha,\frac{1}{z-H}\zeta_v^\beta\right\rangle\right|\le {\rm Const.}
e^{-\kappa|u-v|}
$$
with some positive constant $\kappa$. Combining this with the assumption that $\delta H$ has only finite-range 
hopping terms, we have 
\begin{align*}
\left|\left\langle \zeta_u^\alpha,\frac{1}{z-H'}\delta H\frac{1}{z-H}\zeta_v^\beta\right\rangle\right|
&\le {\rm Const.}\sum_{w,w':|w-w'|\le r_0}e^{-\kappa|u-w|}e^{-\kappa|w'-v|}\\
&\le {\rm Const.}e^{-\kappa'|u-v|}
\end{align*}
with some constants, $r_0>0$ and $\kappa'$ satisfying $0<\kappa'<\kappa$. 
Using this estimate for $\Vert\psi\Vert^2$, we obtain 
\begin{align*}
\Vert \psi\Vert^2&\le {\rm Const.}
\sum_{u,\alpha}\sum_{v,\beta}
\left|\langle\varphi,\zeta_u^\alpha\rangle\right| |\langle \zeta_v^\beta,\varphi\rangle|e^{-\kappa''|u-v|}\\
&\le {\rm Const.}
\sum_{u,\alpha}\sum_{v,\beta}
\left[\left|\langle\varphi,\zeta_u^\alpha\rangle\right|^2+|\langle \zeta_v^\beta,\varphi\rangle|^2\right]e^{-\kappa''|u-v|}\\
&={\rm Const.}\left[\sum_{u,\alpha}\left|\langle\varphi,\zeta_u^\alpha\rangle\right|^2\sum_{v,\beta}e^{-\kappa''|u-v|}
+\sum_{v,\beta}|\langle \zeta_v^\beta,\varphi\rangle|^2\sum_{u,\alpha}e^{-\kappa''|u-v|}\right]\\
&\le {\rm Const.}\Vert\varphi\Vert^2,
\end{align*}
where $\kappa''$ is some positive constant. This implies that the operator in the integrand in the right-hand side of 
(\ref{diffPF'PF}) is bounded. Therefore, we obtain 
$$
\Vert P_{\rm F}'-P_{\rm F}\Vert\le {\rm Const.}g.
$$
This yields 
$$
\Vert A'-A\Vert\le{\rm Const.}g.
$$
Namely, the operator $A$ is continuous with respect to the norm of the perturbation $g\delta H$.  
Combining this with the min-max principle \cite{RSIV}, we can obtain the desired result that 
the nonzero eigenvalue $\lambda$ of the operator $A$ is continuous with respect to the norm of the perturbation $g\delta H$.

\appendix
\section{A Dirac-Clifford Garden} 
\label{AppGamma}

In this appendix, we give a concrete expression of the gamma matrices, and 
the corresponding time-reversal operators. 

We use the standard convention for the Pauli matrices as  
$$
\sigma_0=\left(\begin{matrix}
1 & 0\\ 0 & 1 
\end{matrix}\right),\quad
\sigma_1=\left(\begin{matrix}
0 & 1\\ 1 & 0 
\end{matrix}\right),\quad 
\sigma_2=\left(\begin{matrix}
0 & -i\\ i & 0 
\end{matrix}\right),\quad
\sigma_3=\left(\begin{matrix}
1 & 0\\ 0 & -1 
\end{matrix}\right).
$$
By using these Pauli matrices, one can define the gamma matrices $\gamma=(\gamma^{(1)},\ldots,\gamma^{(2n+1)})$ as 
$$
\gamma^{(2m-1)}=\underbrace{\sigma_0\otimes\cdots\otimes\sigma_0}_{m-1}\otimes\sigma_1\otimes
\underbrace{\sigma_3\otimes\cdots\otimes\sigma_3}_{n-m}
$$
and
$$
\gamma^{(2m)}=\underbrace{\sigma_0\otimes\cdots\otimes\sigma_0}_{m-1}\otimes\sigma_2\otimes
\underbrace{\sigma_3\otimes\cdots\otimes\sigma_3}_{n-m}
$$
for $m=1,2,\ldots,n$, and 
$$
\gamma^{(2n+1)}=\underbrace{\sigma_3\otimes\cdots\otimes\sigma_3}_n.
$$

We introduce two time-reversal operators, $C_+$ and $C_-$ for the gamma matrices as 
\begin{equation}
\label{C+}
C_+:=\begin{cases} 
\overbrace{\sigma_1\otimes\sigma_2\otimes\sigma_1\otimes\sigma_2\otimes\cdots\otimes\sigma_1\otimes\sigma_2}^n, 
& \text{$n=$ even};\\
\underbrace{\sigma_1\otimes\sigma_2\otimes\sigma_1\otimes\sigma_2\otimes\cdots\otimes\sigma_1\otimes\sigma_2
\otimes\sigma_1}_n, 
& \text{$n=$ odd},
\end{cases}
\end{equation}
and
\begin{equation}
C_-:=\begin{cases} 
\overbrace{\sigma_2\otimes\sigma_1\otimes\sigma_2\otimes\sigma_1\otimes\cdots\otimes\sigma_2\otimes\sigma_1}^n, 
& \text{$n=$ even};\\
\underbrace{\sigma_2\otimes\sigma_1\otimes\sigma_2\otimes\sigma_1\otimes\cdots
\otimes\sigma_2\otimes\sigma_1\otimes\sigma_2}_n, 
& \text{$n=$ odd}.
\end{cases}
\end{equation}
Then, one can easily show that the time-reversal transformations for the gamma matrices are given by  
\begin{equation}
\label{C+gammaC+}
C_+\gamma^{(j)}C_+=(-1)^{n+1}\overline{\gamma^{(j)}},\quad \mbox{for \ }j=1,2,\ldots,2n,
\end{equation}
$$
C_+\gamma^{(2n+1)}C_+=(-1)^n\overline{\gamma^{(2n+1)}}
$$
and 
\begin{equation}
\label{C-gammaC-}
C_-\gamma^{(j)}C_-=(-1)^n\overline{\gamma^{(j)}},\quad \mbox{for \ }j=1,2,\ldots,2n,2n+1,
\end{equation}
where $\overline{\cdots}$ stands for the complex conjugate. In particular, one has 
\begin{equation}
\label{Cpmgamma(2n+1)Cpm}
C_\pm\gamma^{(2n+1)}C_\pm=(-1)^n\gamma^{(2n+1)}
\end{equation}
because $\overline{\gamma^{(2n+1)}}=\gamma^{(2n+1)}$ by the definition of $\gamma^{(2n+1)}$. 
By definition, one has 
$$
C_+C_-=\gamma^{(2n+1)}\times \begin{cases} 1, & \text{$n=$ even};\\
i, & \text{$n=$ odd}.  
\end{cases}
$$
Thus, the two time-reversal transformation, $C_\pm$, are not independent of each other.

\section{Trace Class and Trace $p$-Norm}
\label{traceclassA}

We show that the operator $(P_{\rm F}-D_aP_{\rm F}D_a)^{2n+1}$ is trace class \cite{RSI}. 

To begin with, we introduce a complete orthonormal system of wavefunctions,
\begin{equation}
\label{tildezetabasis}
\tilde{\zeta}_u^{\alpha,\mu}:=\chi_{\{u\}}\otimes\Phi^\alpha\otimes\Psi^\mu, \quad 
u\in\ze^d,\ \alpha=1,2,\ldots,d_{\rm s},\ \mu=1,2,\ldots,d_\gamma, 
\end{equation}
where $\chi_{\{u \}}$ and $\Phi^\alpha$ are the wavefunctions which are given in (\ref{ONSzeta}), 
and the wavefunction $\Psi^\mu$ for the gamma matrices are an orthonormal basis 
whose dimension of the Hilbert space is given by $d_\gamma$. 
We write
$$
T=P_{\rm F}-D_aP_{\rm F}D_a
$$
for short. The matrix elements are given by 
$$
T(u,\alpha,\mu;v,\beta,\nu):=\langle\tilde{\zeta}_u^{\alpha,\mu},T\tilde{\zeta}_v^{\beta,\nu}\rangle.
$$

Following Aizenman and Graf \cite{AG}, we introduce 
$$
T^{(b,\eta,\xi)}(u,\alpha,\mu;v,\beta,\nu):=T(u,\alpha,\mu;v,\beta,\nu)\delta_{u-b,v}
\delta_{\alpha-\eta,\beta}^{(d_{\rm s})}
\delta_{\mu-\xi,\nu}^{(d_\gamma)}
$$
for $b\in\ze^d$, $\eta\in\{1,2,\ldots,d_{\rm s}\}$ and $\xi\in\{1,2,\ldots,d_\gamma\}$, 
where $\delta_{u,v}$ is the usual Kronecker delta for $u,v\in\ze^d$, and the rest are defined as  
$$
\delta_{\alpha,\beta}^{(d_{\rm s})}:=\begin{cases} 1, & \text{$\alpha=\beta$ modulo $d_{\rm s}$};\\
0, & \text{otherwise},
\end{cases}
$$
and 
$$
\delta_{\mu,\nu}^{(d_\gamma)}:=\begin{cases} 1, & \text{$\mu=\nu$ modulo $d_\gamma$};\\
0, & \text{otherwise}.
\end{cases}
$$
Clearly, one has 
\begin{equation}
\label{Tsumexpres}
T(u,\alpha,\mu;v,\beta,\nu)=\sum_{b,\eta,\xi}T^{(b,\eta,\xi)}(u,\alpha,\mu;v,\beta,\nu).
\end{equation}

Note that
\begin{align}
&|T^{(b,\eta,\xi)}|^2(u,\alpha,\mu;v,\beta,\nu)\nonumber\\ \nonumber
&=\sum_{w,\rho,\theta}
T^{(b,\eta,\xi)}(w,\rho,\theta;u,\alpha,\mu)^\ast T^{(b,\eta,\xi)}(w,\rho,\theta;v,\beta,\nu)\\ \nonumber
&=\sum_{w,\rho,\theta}T(w,\rho,\theta;u,\alpha,\mu)^\ast\delta_{w-b,u}\delta_{\rho-\eta,\alpha}^{(d_{\rm s})}
\delta_{\theta-\xi,\mu}^{(d_\gamma)}\\ \nonumber
&\times T(w,\rho,\theta;v,\beta,\nu)\delta_{w-b,v}\delta_{\rho-\eta,\beta}^{(d_{\rm s})}
\delta_{\theta-\xi,\nu}^{(d_\gamma)}\\  
&=|T(u+b,\eta+\alpha,\xi+\mu;u,\alpha,\mu)|^2\delta_{u,v}\delta_{\alpha,\beta}^{(d_{\rm s})}\delta_{\mu,\nu}^{(d_\gamma)}. 
\label{Tsqurediagonal}
\end{align}
Clearly, the right-hand side is diagonal in the present basis. 

For $p$ satisfying $1\le p<\infty$, the trace $p$-norm of an operator $\mathcal{A}$ is defined by 
$$
\Vert \mathcal{A}\Vert_p:=\left({\rm Tr}\; |\mathcal{A}|^p\right)^{1/p}.
$$
For two operators, $\mathcal{A}$ and $\mathcal{B}$, the Minkowski inequality \cite{RSII} holds as 
$$
\Vert \mathcal{A}+\mathcal{B}\Vert_p\le \Vert \mathcal{A}\Vert_p+\Vert \mathcal{B}\Vert_p.
$$
Set $p=2n+1$. 
Since the operator $T=P_{\rm F}-D_aP_{\rm F}D_a$ is self-adjoint, one has 
\[
|T^p|=\sqrt{(T^p)^\ast T^p}=\sqrt{T^{2p}}=\sqrt{|T|^{2p}}=|T|^p.
\]
Therefore, in order to prove that the operator $T^P$ is trace class, it is sufficient to show 
that the trace $p$-norm $\Vert T\Vert_p$ is bounded. 

{From} the expression (\ref{Tsumexpres}) and the Minkowski inequality, we have 
\begin{equation}
\label{Tpnormbound}
\Vert T\Vert_p\le \sum_{b,\eta,\xi}\Vert T^{(b,\eta,\xi)}\Vert_p.
\end{equation}
As shown in (\ref{Tsqurediagonal}), the operator $|T^{(b,\eta,\xi)}|$ is diagonal. Therefore, one has
\begin{equation}
\label{Tbetaxibound}
\Vert T^{(b,\eta,\xi)}\Vert_p^p= \sum_{u,\alpha,\mu}|T(u+b,\eta+\alpha,\xi+\mu;u,\alpha,\mu)|^p. 
\end{equation}
Thus, it is enough to estimate the matrix element $T(u+b,\eta+\alpha,\xi+\mu;u,\alpha,\mu)$. 

Since $D_a^2=1$, one has 
\begin{equation}
\label{TcommuDaPF}
T=P_{\rm F}-D_aP_{\rm F}D_a=D_a(D_aP_{\rm F}-P_{\rm F}D_a)=D_a[D_a,P_{\rm F}].
\end{equation}
Note that 
\begin{align}
\langle\tilde{\zeta}_{u}^{\alpha,\mu},[P_{\rm F},D_a]\tilde{\zeta}_{v}^{\beta,\nu}\rangle&=
\langle \tilde{\zeta}_{u}^{\alpha,\mu},P_{\rm F}\tilde{\zeta}_{v}^{\beta,\mu}\rangle 
\langle \tilde{\zeta}_{v}^{\beta,\mu},D_a\tilde{\zeta}_{v}^{\beta,\nu}\rangle\nonumber\\
&-\langle \tilde{\zeta}_{u}^{\alpha,\mu},D_a\tilde{\zeta}_{u}^{\alpha,\nu}\rangle
\langle \tilde{\zeta}_{u}^{\alpha,\nu},P_{\rm F}\tilde{\zeta}_{v}^{\beta,\nu}\rangle\nonumber\\
&=\langle \tilde{\zeta}_{u}^{\alpha,\mu},P_{\rm F},\tilde{\zeta}_{v}^{\beta,\mu}\rangle
\left[\langle\tilde{\zeta}_{v}^{\alpha,\mu},D_a\tilde{\zeta}_{v}^{\alpha,\nu} \rangle 
-\langle\tilde{\zeta}_{u}^{\alpha,\mu},D_a\tilde{\zeta}_{u}^{\alpha,\nu} \rangle\right],
\label{commuPD}
\end{align}
where we have the properties of the basis (\ref{tildezetabasis}):    
$$
\langle \tilde{\zeta}_u^{\alpha,\mu},P_{\rm F}\tilde{\zeta}_{u'}^{\beta,\mu}\rangle
=\langle \tilde{\zeta}_u^{\alpha,\nu},P_{\rm F}\tilde{\zeta}_{u'}^{\beta,\nu}\rangle
\quad \mbox{for all\ } \mu,\nu, 
$$
and 
$$
\langle \tilde{\zeta}_u^{\alpha,\mu},D_a\tilde{\zeta}_u^{\alpha,\nu}\rangle
=\langle \tilde{\zeta}_u^{\beta,\mu},D_a\tilde{\zeta}_u^{\beta,\nu}\rangle
\quad \mbox{for all \ } \alpha, \beta.
$$
The difference between two matrix elements for the Dirac operators $D_a$ in the right-hand side 
is evaluated as: 

\begin{lem}
The following bound is valid: 
\begin{equation}
\label{diffDabound}
\left|\langle \tilde{\zeta}_u^{\alpha,\mu}D_a\tilde{\zeta}_u^{\alpha,\nu}\rangle 
-\langle\tilde{\zeta}_v^{\alpha,\mu},D_a\tilde{\zeta}_v^{\alpha,\nu}\rangle \right|
\le \frac{4|u-v|}{|u-a|}.
\end{equation}
\end{lem}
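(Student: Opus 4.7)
The plan is to reduce the left-hand side to a purely geometric quantity on $\mathbb{R}^d$ and then estimate it by a telescoping argument. Observe first that the Dirac operator $D_a(x) = \hat n_a(x) \cdot \gamma$ with $\hat n_a(x) := (x-a)/|x-a|$ acts nontrivially only on the gamma-matrix factor of the basis vector $\tilde\zeta_u^{\alpha,\mu} = \chi_{\{u\}}\otimes\Phi^\alpha\otimes\Psi^\mu$. Hence
\[
\langle \tilde\zeta_u^{\alpha,\mu}, D_a \tilde\zeta_u^{\alpha,\nu}\rangle
= \bigl\langle \Psi^\mu,\, [\hat n_a(u)\cdot\gamma]\,\Psi^\nu\bigr\rangle,
\]
so the expression to be bounded is just
\[
\bigl|\langle \Psi^\mu, [(\hat n_a(u)-\hat n_a(v))\cdot\gamma]\,\Psi^\nu\rangle\bigr|.
\]

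Next, I would bound this matrix element by the operator norm of $(\hat n_a(u)-\hat n_a(v))\cdot\gamma$ on $\mathbb{C}^{d_\gamma}$. Using the Clifford anticommutation relations (\ref{gammaACR}), one has $(\xi\cdot\gamma)^2 = |\xi|^2\,\mathbf 1$ for any real vector $\xi$, so $\|\xi\cdot\gamma\| = |\xi|$. Therefore the matrix element is bounded above by $|\hat n_a(u)-\hat n_a(v)|$ (and the factor $4$ in the statement is simply loose; a factor $2$ already suffices).

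The remaining step is the elementary estimate
\[
\bigl|\hat n_a(u) - \hat n_a(v)\bigr| \le \frac{2\,|u-v|}{|u-a|}.
\]
I would prove this by writing
\[
\frac{u-a}{|u-a|} - \frac{v-a}{|v-a|}
= \frac{u-v}{|u-a|} + \frac{(v-a)\bigl(|v-a|-|u-a|\bigr)}{|u-a|\,|v-a|},
\]
and applying the triangle and reverse-triangle inequalities to each summand, which give contributions bounded by $|u-v|/|u-a|$ respectively. Combining these two steps yields the desired bound (with the harmless slack in the overall constant).

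There is no real obstacle here: once the observation is made that $D_a$ acts purely through $\hat n_a(x)\cdot\gamma$ on the gamma-matrix fiber, the estimate reduces to a one-line Clifford computation plus a standard geometric inequality about unit vectors. The only point that requires any care is the identification $\|\xi\cdot\gamma\| = |\xi|$, which relies on the defining anticommutation relations (\ref{gammaACR}) and hence on the specific choice of gamma matrices in Appendix~\ref{AppGamma}.
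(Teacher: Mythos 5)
Your proof is correct and follows essentially the same route as the paper: the same telescoping decomposition of $\hat n_a(u)-\hat n_a(v)$ into $\frac{u-v}{|u-a|}$ plus a term involving $\frac{1}{|u-a|}-\frac{1}{|v-a|}$, combined with the Clifford identity $(\xi\cdot\gamma)^2=|\xi|^2\mathbf 1$ to pass from the operator difference to a vector norm. The only difference is that you estimate the second summand via the reverse triangle inequality $\bigl||u-a|-|v-a|\bigr|\le|u-v|$, which is cleaner and gives the sharper constant $2$, whereas the paper expands $|v-a|-|u-a|$ algebraically and settles for a factor $3$ on that term, hence the stated constant $4$.
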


\begin{proof}
Note that 
\begin{align}
D_a(u)-D_a(v)&=\frac{1}{|u-a|}(u-a)\cdot \gamma-\frac{1}{|v-a|}(v-a)\cdot\gamma \nonumber\\
&=\frac{1}{|u-a|}(u-a)\cdot \gamma-\frac{1}{|u-a|}(v-a)\cdot\gamma \nonumber\\
&+\frac{1}{|u-a|}(v-a)\cdot\gamma-\frac{1}{|v-a|}(v-a)\cdot\gamma \nonumber\\
&=\frac{1}{|u-a|}(u-v)\cdot\gamma+\left(\frac{1}{|u-a|}-\frac{1}{|v-a|}\right)(v-a)\cdot\gamma.
\label{diffDauv}
\end{align}
As to the second term in the last line, one has 
\begin{align*}
\frac{1}{|u-a|}-\frac{1}{|v-a|}&=\frac{|v-a|-|u-a|}{|u-a||v-a|}\\
&=-\frac{(u-v)^2+2(v-a)\cdot(u-v)}{|u-a||v-a|(|u-a|+|v-a|)}.
\end{align*}
Therefore, we have 
$$
\left|\frac{1}{|u-a|}-\frac{1}{|v-a|}\right|
\le \frac{3|u-v|}{|u-a||v-a|},
$$
where we have used $|u-a|+|v-a|\ge|u-v|$.
Combining this with (\ref{diffDauv}), we obtain the desired bound (\ref{diffDabound}). 
\end{proof}

{From} Assumption~\ref{Assumption} for the resolvent $(E_{\rm F}-H)^{-1}$, we have   
\begin{equation}
\label{decaymatPF}
\left|\langle\tilde{\zeta}_u^{\alpha,\mu},P_{\rm F}\tilde{\zeta}_v^{\beta,\mu}\rangle \right|
\le {\rm Const.}e^{-\kappa|u-v|}
\end{equation}
with some positive constant $\kappa$. Combining this, (\ref{commuPD}) and (\ref{diffDabound}), we obtain 
\begin{equation}
\label{decayboundcommuPFDa}
\left|\langle\tilde{\zeta}_{u}^{\alpha,\mu},[P_{\rm F},D_a]\tilde{\zeta}_{v}^{\beta,\nu}\rangle\right|
\le {\rm Const.}e^{-\kappa|u-v|}\min\{2, {4|u-v|}/{|u-a|}\}.
\end{equation}
Since one has 
\begin{multline*}
T(u+b,\eta+\alpha,\xi+\mu;u,\alpha,\mu)
=\langle\tilde{\zeta}_{u+b}^{\eta+\alpha,\xi+\mu},T\tilde{\zeta}_u^{\alpha,\mu}\rangle\\
=\sum_\nu\langle\tilde{\zeta}_{u+b}^{\eta+\alpha,\xi+\mu},D_a\tilde{\zeta}_{u+b}^{\eta+\alpha,\nu}\rangle
\langle\tilde{\zeta}_{u+b}^{\eta+\alpha,\nu},[D_a,P_{\rm F}]\tilde{\zeta}_u^{\alpha,\mu}\rangle
\end{multline*}
from (\ref{TcommuDaPF}), we have 
$$
|T(u+b,\eta+\alpha,\xi+\mu;u,\alpha,\mu)|^p\le 
{\rm Const.}e^{-p\kappa|b|}[\min\{2, {4|b|}/{|u+b-a|}\}]^p
$$
by using the bound (\ref{decayboundcommuPFDa}). 
Combining this, (\ref{Tpnormbound}) and (\ref{Tbetaxibound}), we obtain the desired result: 
\begin{align*}
\Vert T\Vert_p&\le \sum_{b,\eta,\xi}\left(\sum_{u,\alpha,\mu}|T(u+b,\eta+\alpha,\xi+\mu;u,\alpha,\mu)|^p\right)^{1/p}\\
&\le {\rm Const.}\sum_b e^{-\kappa|b|}\times\left(\sum_u\; [\min\{2, {4|b|}/{|u+b-a|}\}]^p\right)^{1/p}\\
&\le \sum_b({\rm Const.}+{\rm Const.}|b|)e^{-\kappa|b|}<\infty,
\end{align*}
where we have used $p=2n+1>d=2n$ for showing that the sum about $u$ is finite.

\section{Proof of Lemma~\ref{lem:DlimtROmegaInd}}
\label{Appendix:DlimtROmegaInd}

In this appendix, we give a proof of Lemma~\ref{lem:DlimtROmegaInd}. Namely, we prove that 
the approximate index converges to the index as 
\begin{equation}
\label{IndIDLemma6}
\lim_{R\nearrow\infty}\lim_{\Omega\nearrow\re^d}{\rm Ind}^{(2n)}(D_a,P_{\rm F};\Omega,R)
={\rm Ind}^{(2n)}(D_a,P_{\rm F}).
\end{equation}
  
To begin with, we note that  
$$
P_{\rm F}-D_aP_{\rm F}D_a=[P_{\rm F},D_a]D_a=-D_a[P_{\rm F},D_a].
$$
because $D_a^2=1$. Immediately,  
\begin{align*}
(P_{\rm F}-D_aP_{\rm F}D_a)^{2n+1}&=-D_a[P_{\rm F},D_a](-1)^n[P_{\rm F},D_a]^{2n}\\
&=(-1)^{n+1}D_a[P_{\rm F},D_a]^{2n+1}.
\end{align*}
Using this identity, we have 
\begin{align}
&{\rm Tr}\> \gamma^{(2n+1)}(P_{\rm F}-D_aP_{\rm F}D_a)^{2n+1}(1-\chi_R^a)\nonumber\\
&=(-1)^{n+1}\>{\rm Tr}\> \gamma^{(2n+1)}D_a[P_{\rm F},D_a]^{2n+1}(1-\chi_R^a)\nonumber\\
&=(-1)^{n+1}\sum_{u_1,\ldots,u_{2n+1}}\sum_{\alpha_1,\ldots,\alpha_{2n+1}}
\sum_{\mu_1,\ldots,\mu_{2n+1}}\langle\tilde{\zeta}_{u_1}^{\alpha_1,\mu_1},\gamma^{(2n+1)}D_a[P_{\rm F},D_a]
\tilde{\zeta}_{u_2}^{\alpha_2,\mu_2}\rangle\nonumber\\
&\times\langle\tilde{\zeta}_{u_2}^{\alpha_2,\mu_2},[P_{\rm F},D_a]\tilde{\zeta}_{u_3}^{\alpha_3,\mu_3}\rangle 
\cdots \langle\tilde{\zeta}_{u_{2n}}^{\alpha_{2n},\mu_{2n}},[P_{\rm F},D_a]
\tilde{\zeta}_{u_{2n+1}}^{\alpha_{2n+1},\mu_{2n+1}}\rangle\nonumber\\
&\times\langle \tilde{\zeta}_{u_{2n+1}}^{\alpha_{2n+1},\mu_{2n+1}},[P_{\rm F},D_a]
\tilde{\zeta}_{u_1}^{\alpha_1,\mu_1}\rangle 
(1-\chi_R^a)(u_1), 
\label{Trsum}
\end{align}
where $\tilde{\zeta}_u^{\alpha,\mu}$ are the complete orthonormal basis of (\ref{tildezetabasis}) 
in the preceding Appendix~\ref{traceclassA}.  

\begin{lem}
\label{lem:TrAnRbound}
The following bound is valid: 
\begin{equation}
\left|{\rm Tr}\> \gamma^{(2n+1)}(P_{\rm F}-D_aP_{\rm F}D_a)^{2n+1}(1-\chi_R^a)\right|\le 
\frac{{\rm Const.}}{R},
\end{equation}
where the constant does not depend on $a$. 
\end{lem}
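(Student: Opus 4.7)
The plan is to take the explicit expansion \eqref{Trsum} of the trace in terms of matrix elements in the orthonormal basis $\tilde{\zeta}_u^{\alpha,\mu}$ and estimate term by term using the pointwise bound \eqref{decayboundcommuPFDa} for the commutator $[P_{\rm F},D_a]$. Since $\gamma^{(2n+1)}$ and $D_a$ are uniformly bounded, the prefactor $\gamma^{(2n+1)}D_a$ contributes only a constant (after summing the internal gamma indices), and each of the $2n+1$ matrix elements of $[P_{\rm F},D_a]$ appearing along the cyclic chain $u_1\to u_2\to\cdots\to u_{2n+1}\to u_1$ is controlled by $\mathrm{Const.}\,e^{-\kappa|u_i-u_{i+1}|}\min\{2,4|u_i-u_{i+1}|/|u_i-a|\}$.

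Next, I would exploit the cutoff $(1-\chi_R^a)(u_1)$, which restricts $|u_1-a|\ge R$. Using the exponential factors along the chain, the contribution of configurations in which some $|u_i-u_1|>R/2$ is exponentially small in $R$ and is therefore harmless. In the complementary main region $|u_i-u_1|\le R/2$ for every $i$, the triangle inequality gives $|u_i-a|\ge R/2$, so I can use the second branch of the $\min$ for \emph{every} link and obtain
$$\prod_{i=1}^{2n+1}\min\left\{2,\frac{4|u_i-u_{i+1}|}{|u_i-a|}\right\}\le\frac{\mathrm{Const.}}{|u_1-a|^{2n+1}}\prod_{i=1}^{2n+1}|u_i-u_{i+1}|.$$
The polynomial factors $|u_i-u_{i+1}|$ are absorbed into a slightly weaker exponential via $|u_i-u_{i+1}|e^{-\kappa|u_i-u_{i+1}|}\le\mathrm{Const.}\,e^{-\kappa'|u_i-u_{i+1}|}$ with $0<\kappa'<\kappa$.

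At this point I would freeze $u_1$ and sum over the remaining indices. The cyclic chain of exponentials $e^{-\kappa'\sum_i|u_i-u_{i+1}|}$, combined with the finite sum over the internal indices $\alpha_i,\mu_i$, makes the $(2n)$-fold sum over $u_2,\ldots,u_{2n+1}\in\ze^d$ convergent, uniformly in $u_1$ and $a$. The final step is to sum over $u_1$ with $|u_1-a|\ge R$. Because $d=2n$, one has
$$\sum_{u_1:|u_1-a|\ge R}\frac{1}{|u_1-a|^{2n+1}}\le \mathrm{Const.}\int_R^\infty\frac{r^{2n-1}}{r^{2n+1}}\,dr=\frac{\mathrm{Const.}}{R},$$
which yields the claimed bound, with a constant depending only on $\mathcal{C}_0$, $\xi_0$, $d$, $d_{\rm s}$, $d_\gamma$ but not on $a$.

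The main obstacle is arithmetic rather than conceptual: the power $2n+1$ of the inverse-distance factors must be squeezed out of the $2n+1$ available commutators so that it exactly matches the $d=2n$ volume growth of the exterior region $|u_1-a|\ge R$ to produce the single surplus power $1/R$. Losing even one factor of $|u_i-a|^{-1}$—for instance by bounding a link with the trivial branch $\min\le 2$ in the wrong place—would give only boundedness, not decay. The handling of the "small exceptional region'' where some $|u_i-u_1|$ is comparable to $R$ is a minor but genuine nuisance, since there one cannot claim $|u_i-a|\ge R/2$; however, the cyclic chain of exponentials immediately beats any polynomial, so that region contributes $O(e^{-\kappa'R/2})$, which is absorbed into the $\mathrm{Const.}/R$ bound.
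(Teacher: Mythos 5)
Your overall strategy is the same as the paper's: expand the trace as in (\ref{Trsum}), use the bound (\ref{decayboundcommuPFDa}) on each link of the cyclic chain, extract $|u_1-a|^{-(2n+1)}$ from the $2n+1$ factors of $\min\{2,4|u_i-u_{i+1}|/|u_i-a|\}$ in the main region, and sum $|u_1-a|^{-(2n+1)}$ over $\{|u_1-a|\ge R\}$ in dimension $d=2n$ to get $1/R$. The main region of your argument is correct.

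However, there is a genuine gap in your treatment of the exceptional region, caused by choosing the dichotomy $|u_i-u_1|\lessgtr R/2$ with the \emph{fixed} cutoff $R$ rather than relative to $r=|u_1-a|$. In your exceptional region the chain of exponentials yields only the factor $e^{-\kappa R/4}$ (after reserving half of each exponent to sum over $u_2,\ldots,u_{2n+1}$), and this factor is \emph{constant in $u_1$}. The remaining sum over $u_1$ ranges over the unbounded exterior region $\{u_1:|u_1-a|\ge R\}\subset\ze^{2n}$, and the only decay in $u_1$ still available is the single factor $\min\{2,4|u_1-u_2|/|u_1-a|\}\lesssim |u_1-u_2|/|u_1-a|$ from the first link; $\sum_{|u_1-a|\ge R}|u_1-a|^{-1}$ diverges for $d=2n\ge 2$, so the claimed $O(e^{-\kappa'R/2})$ bound does not follow — the estimate as written is divergent, not exponentially small. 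The paper avoids this by splitting instead according to whether some $|u_\ell-a|\le r/2$ with $r=|u_1-a|$: in that bad set one has $|u_1-u_\ell|\ge r/2$, so the chain produces $e^{-\kappa|u_1-a|/4}$, which \emph{does} decay in $u_1$, making the $u_1$-sum convergent and of order $e^{-\kappa'R}$; in the complementary set all $|u_\ell-a|\ge|u_1-a|/2$ and your main-region computation applies verbatim. Your proof is repaired simply by replacing the threshold $R/2$ in your dichotomy with $|u_1-a|/2$ (which is $\ge R/2$ on the support of $1-\chi_R^a$, so the main-region estimate is unaffected).
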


\begin{proof}
In order to prove the statement, we evaluate the right-hand side in the second equality of (\ref{Trsum}). 
We write $r=|u_1-a|$ for short. 
Consider the contribution such that, for $\ell\in\{2,3,\ldots,2n+1\}$, 
$u_\ell$ satisfies $|u_\ell-a|\le r/2$ 
in the first sum in the right-hand side in the second equality of (\ref{Trsum}).
We write $I_{\rm in}^{(\ell)}$ for the corresponding contribution. 
The contribution is estimated as  
\begin{align*}
\left|I_{\rm in}^{(\ell)}\right|&\le {\rm Const.}
\sum_{\substack{u_1:\\ |u_1-a|\ge R}} \sum_{u_2}\cdots \sum_{\substack{u_\ell:\\ |u_\ell-a|\le r/2}}
\cdots \sum_{u_{2n+1}}e^{-\kappa|u_1-u_2|}e^{-\kappa|u_2-u_3|}\\
&\qquad\qquad\cdots e^{-\kappa|u_{2n}-u_{2n+1}|}e^{-\kappa|u_{2n+1}-u_1|}
\end{align*}
from (\ref{commuPD}) and  (\ref{decaymatPF}). 
Further, by using $r=|u_1-a|$ and 
$$
r/2\le |u_1-a|-|u_\ell-a|\le |u_1-u_\ell|\le |u_1-u_2|+\cdots+|u_{\ell-1}-u_\ell|, 
$$ 
we obtain  
\begin{align*}
\left|I_{\rm in}^{(\ell)}\right|&\le{\rm Const.}
\sum_{\substack{u_1:\\ |u_1-a|\ge R}} \sum_{u_2}\cdots \sum_{\substack{u_\ell:\\ |u_\ell-a|\le r/2}}
\cdots \sum_{u_{2n+1}}e^{-\kappa|u_1-u_\ell|/2}e^{-\kappa|u_2-u_3|/2}\\
&\qquad\cdots e^{-\kappa|u_{2n}-u_{2n+1}|/2}e^{-\kappa|u_{2n+1}-u_1|/2}\\
&\le {\rm Const.}\sum_{\substack{u_1:\\ |u_1-a|\ge R}}e^{-\kappa r/4}\\
&\le {\rm Const.}\sum_{\substack{u_1:\\ |u_1-a|\ge R}}e^{-\kappa|u_1-a|/4}
\le {\rm Const.}\exp[-\kappa' R]
\end{align*}
with some positive constant $\kappa'$. Thus, the contributions such that, at least,  
one of $u_\ell$, $\ell\ne 1$, satisfies $|u_\ell-a|\le r/2$ in the sum exponentially decays in a large $R$. 

For the rest of the contributions, we write $I_{\rm out}$. 
Combining (\ref{commuPD}), (\ref{diffDabound}) and (\ref{decaymatPF}), we have  
\begin{align*}
&|I_{\rm out}|\\
&\le {\rm Const.}\sum_{\substack{u_1:\\ |u_1-a|\ge R}}\sum_{\substack{u_2:\\ |u_2-a|\ge r/2}}\cdots
\sum_{\substack{u_{2n+1}:\\ |u_{2n+1}-a|\ge r/2}}
\frac{1}{|u_1-a||u_2-a|\cdots|u_{2n+1}-a|}\\
&\times e^{-\tilde{\kappa}|u_1-u_2|}e^{-\tilde{\kappa}|u_2-u_3|}\cdots e^{-\tilde{\kappa}|u_{2n}-u_{2n+1}|}
e^{-\tilde{\kappa}|u_{2n+1}-u_1|}\\
&\le {\rm Const.}\sum_{\substack{u_1:\\ |u_1-a|\ge R}}\frac{1}{|u_1-a|^{2n+1}}
\sum_{\substack{u_2:\\ |u_2-a|\ge r/2}}\cdots
\sum_{\substack{u_{2n+1}:\\ |u_{2n+1}-a|\ge r/2}}
e^{-\tilde{\kappa}|u_2-u_3|}\\ 
&\cdots e^{-\tilde{\kappa}|u_{2n}-u_{2n+1}|}e^{-\tilde{\kappa}|u_{2n+1}-u_1|}\\
&\le \frac{{\rm Const.}}{R},
\end{align*}
where $\tilde{\kappa}$ is some positive constant. Combining these estimates, we obtain 
$$
\left|{\rm Tr}\> \gamma^{(2n+1)}(P_{\rm F}-D_aP_{\rm F}D_a)^{2n+1}(1-\chi_R^a)\right|
\le \frac{{\rm Const.}}{R}+{\rm Const.}\exp[-\kappa' R]. 
$$
\end{proof}  

Now we prove Lemma~\ref{lem:DlimtROmegaInd}. 
Note that 
\begin{multline}
\label{IndRIndinfty}
{\rm Ind}^{(2n)}(D_a,P_{\rm F};\Omega,R)
={\rm Ind}^{(2n)}(D_a,P_{\rm F})\\
+\frac{1}{2|\Omega|}\int_\Omega dv(a)\>{\rm Tr}\>\gamma^{(2n+1)}(P_{\rm F}-D_aP_{\rm F}D_a)^{2n+1}(\chi_R^a-1).
\end{multline}
{From} Lemma~\ref{lem:TrAnRbound}, the absolute value of the second term in the right-hand side is 
bounded by ${\rm Const.}/R$. Therefore, we obtain the desired result (\ref{IndIDLemma6}).

\section{Proof of Lemma~\ref{lem:IndtildeInd}}
\label{proof:lem:IndtildeInd}

In this appendix, we prove the relation between the two approximate indices 
in Lemma~\ref{lem:IndtildeInd}. 

The approximate index of (\ref{IndDaPFOmegaR}) is written 
\begin{multline}
\label{AppIndSplt}
{\rm Ind}^{(2n)}(D_a,P_{\rm F};\Omega,R)
=\frac{1}{2|\Omega|}\int_\Omega dv(a) 
\;{\rm Tr}\;\gamma^{(2n+1)}(P_{\rm F}-D_aP_{\rm F}D_a)^{2n+1}\chi_R^a\\
=\frac{1}{2|\Omega|}\int_\Omega dv(a) 
\;{\rm Tr}\;\gamma^{(2n+1)}(P_{\rm F}-D_aP_{\rm F}D_a)^{2n+1}\chi_R^a\chi_\Lambda\\
+\frac{1}{2|\Omega|}\int_\Omega dv(a) 
\>{\rm Tr}\>\gamma^{(2n+1)}(P_{\rm F}-D_aP_{\rm F}D_a)^{2n+1}\chi_R^a(1-\chi_\Lambda). 
\end{multline}
On the other hand, the approximate index of (\ref{IndDaPFLambdaR}) is written 
\begin{align}
\widetilde{{\rm Ind}}^{(2n)}(D_a,P_{\rm F};\Lambda,R)&=
\frac{1}{2|\Lambda|}\int_{\re^d}dv(a)\>{\rm Tr}\> \gamma^{(2n+1)}
(P_{\rm F}-D_aP_{\rm F}D_a)^{2n+1}\chi_R^a\chi_\Lambda \nonumber \\ 
&=\frac{1}{2|\Lambda|}\int_{\Omega}dv(a)\>{\rm Tr}\> \gamma^{(2n+1)}
(P_{\rm F}-D_aP_{\rm F}D_a)^{2n+1}\chi_R^a\chi_\Lambda\nonumber\\ 
+&\frac{1}{2|\Lambda|}\int_{\re^d\backslash\Omega}dv(a)\>{\rm Tr}\>\gamma^{(2n+1)}
(P_{\rm F}-D_aP_{\rm F}D_a)^{2n+1}\chi_R^a\chi_\Lambda.
\label{AppTildIndSplt}
\end{align}
Since the first terms in the right-hand side in the second equality for both the indices 
coincide with each other from $|\Omega|=|\Lambda|$, 
it is enough to estimate the second terms in both indices. 

We set 
$$
\Omega_{\rm out}:=\{a\in\Omega\>|\>{\rm dist}(a,\partial\Omega)\le R\},
$$
where ${\rm dist}(a,\partial\Omega)$ is the distance between $a$ and the boundary $\partial\Omega$ of 
the region $\Omega$. Clearly, one has 
$$
|\Omega_{\rm out}|\le{\rm Const.}|\partial\Omega|R.
$$
The second term in the right-hand side in the second equality of (\ref{AppIndSplt}) is estimated as  
\begin{align*}
&\frac{1}{|\Omega|}\left|\int_\Omega dv(a) 
\>{\rm Tr}\>\gamma^{(2n+1)}(P_{\rm F}-D_aP_{\rm F}D_a)^{2n+1}\chi_R^a(1-\chi_\Lambda)\right|\\
&=\frac{1}{|\Omega|}\Biggl|\int_\Omega dv(a) 
\>\sum_{u\in\ze^d\backslash\Lambda}\sum_{\alpha,\mu}
\langle\tilde{\zeta}_u^{\alpha,\mu},\gamma^{(2n+1)}(P_{\rm F}-D_aP_{\rm F}D_a)^{2n+1}\tilde{\zeta}_u^{\alpha,\mu}\rangle
\chi_R^a(u)\Biggr|\\
&\le \frac{{\rm Const.}}{|\Omega|}\int_{\Omega_{\rm out}} dv(a) 
\>\sum_{\substack{u\in\ze^d\backslash\Lambda:\\ |u-a|\le R}} \ 1\\
&\le \frac{{\rm Const.}}{|\Omega|}\int_{\Omega_{\rm out}} dv(a) R^d\le \frac{{\rm Const.}}{|\Omega|}
|\partial\Omega|R\times R^d.
\end{align*}
Hence the contribution is vanishing in the limit $\Omega\nearrow\re^d$ for a fixed $R$.  
Similarly, the second term in (\ref{AppTildIndSplt}) is vanishing in the limit $\Lambda\nearrow\ze^d$ 
for a fixed $R$. Thus, the statement of Lemma~\ref{lem:IndtildeInd} has been proved. 

\section{Proof of Theorem~\ref{thm:IndDaIndtheta}}
\label{AppenProofthm:IndDaIndtheta}

In this appendix, we prove the relation between the two indices in Theorem~\ref{thm:IndDaIndtheta}. 
For this purpose, we prepare the following two lemmas: 

\begin{lem}
\label{TrvarthetachiLambda}
Fix the position $a$ of the kink and the permutation $\sigma$. Then, we have 
$$
\lim_{\Lambda\nearrow\ze^d}{\rm Tr}\> 
\chi_\Lambda P_{\rm F}[\vartheta_a^{(\sigma_1)},P_{\rm F}]\\
\cdots[\vartheta_a^{(\sigma_{2n})},P_{\rm F}]
={\rm Tr}\> P_{\rm F}[\vartheta_a^{(\sigma_1)},P_{\rm F}]\\
\cdots[\vartheta_a^{(\sigma_{2n})},P_{\rm F}].
$$
\end{lem}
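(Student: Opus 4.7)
The plan is to show that the operator
$$
T_\sigma := P_{\rm F}[\vartheta_a^{(\sigma_1)},P_{\rm F}]\cdots[\vartheta_a^{(\sigma_{2n})},P_{\rm F}]
$$
is trace class, with trace equal to the absolutely convergent sum of its diagonal position-space matrix elements, and then to invoke strong convergence $\chi_\Lambda \to 1$ to conclude ${\rm Tr}(\chi_\Lambda T_\sigma) \to {\rm Tr}(T_\sigma)$. Once trace class is established, the lemma follows from a standard dominated-convergence argument in the orthonormal basis (\ref{tildezetabasis}).

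The first step would be to expand the diagonal matrix element $\langle\tilde\zeta_{u_0}^{\alpha_0,\mu_0},T_\sigma\tilde\zeta_{u_0}^{\alpha_0,\mu_0}\rangle$ in that basis, writing it as a sum over intermediate sites $u_1,\ldots,u_{2n}$ (and internal indices) of a product of $2n+1$ matrix elements of $P_{\rm F}$ and $2n$ step-function differences $\vartheta_a^{(\sigma_k)}(u_k)-\vartheta_a^{(\sigma_k)}(u_{k+1})$, with $u_{2n+1}=u_0$. The two key observations are: (i) by Assumption~\ref{Assumption} and (\ref{decayPF}), each $P_{\rm F}$-factor decays exponentially in $|u_k-u_{k+1}|$; and (ii) the step-function difference is bounded by $1$ and vanishes unless $u_k^{(\sigma_k)}$ and $u_{k+1}^{(\sigma_k)}$ lie on opposite sides of the hyperplane $x^{(\sigma_k)}=a^{(\sigma_k)}$, so at least one of $u_k^{(\sigma_k)},\, u_{k+1}^{(\sigma_k)}$ is within distance $|u_k-u_{k+1}|$ of $a^{(\sigma_k)}$. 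Since $\sigma$ is a permutation of $\{1,\ldots,2n\}$, every one of the $2n$ coordinate directions is constrained in this way exactly once.

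The second step is a Schur-type estimate: split each decay factor as $e^{-\kappa|u_k-u_{k+1}|} = e^{-\kappa'|u_k-u_{k+1}|}\cdot e^{-(\kappa-\kappa')|u_k-u_{k+1}|}$ with $0<\kappa'<\kappa$, using the first factors to perform the sum over $u_1,\ldots,u_{2n}$ and the second factors, together with the $2n$ hyperplane constraints, to produce an overall exponential bound
$$
\bigabs{\langle\tilde\zeta_{u_0}^{\alpha_0,\mu_0},T_\sigma\tilde\zeta_{u_0}^{\alpha_0,\mu_0}\rangle}\le \mathrm{Const}\cdot e^{-\kappa''|u_0-a|}.
$$
This yields $\sum_{u_0,\alpha_0,\mu_0}\bigabs{\langle\tilde\zeta_{u_0}^{\alpha_0,\mu_0},T_\sigma\tilde\zeta_{u_0}^{\alpha_0,\mu_0}\rangle}<\infty$. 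The same decay estimates, combined with the Aizenman--Graf-type block decomposition used in Appendix~\ref{traceclassA}, upgrade this to $\Vert T_\sigma\Vert_1<\infty$, so $T_\sigma$ is trace class. Then ${\rm Tr}(\chi_\Lambda T_\sigma)=\sum_{u_0\in\Lambda,\alpha_0,\mu_0}\langle\tilde\zeta_{u_0}^{\alpha_0,\mu_0},T_\sigma\tilde\zeta_{u_0}^{\alpha_0,\mu_0}\rangle$ converges to ${\rm Tr}\,T_\sigma$ by dominated convergence, which is the claim.

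The main obstacle is the combinatorial bookkeeping in the Schur-type estimate: one must check that the $2n$ hyperplane constraints, propagated through the chain of exponentially-decaying $P_{\rm F}$-matrix elements linking $u_0$ to each $u_k$, really do force decay in $|u_0-a|$, and not merely in distances among the intermediate sites. The crucial structural input is that $\sigma$ being a permutation of $\{1,\ldots,2n\}$ covers every coordinate direction, matching the spatial dimension. This is the analogue, for step functions, of the Dirac-operator estimate carried out in Appendix~\ref{traceclassA}, and the techniques there transfer almost verbatim once the hyperplane-concentration property of $[\vartheta_a^{(j)},P_{\rm F}]$ has been isolated.
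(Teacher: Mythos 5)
Your proposal is correct and follows essentially the same route as the paper's proof in Appendix~E: expand the trace in the position basis, observe that each commutator $[\vartheta_a^{(\sigma_k)},P_{\rm F}]$ has matrix elements supported where the $\sigma_k$-th coordinates straddle $a^{(\sigma_k)}$ (so that $|u^{(\sigma_k)}-a^{(\sigma_k)}|+|a^{(\sigma_k)}-v^{(\sigma_k)}|=|u^{(\sigma_k)}-v^{(\sigma_k)}|$), split the exponential decay of $P_{\rm F}$ into a part that controls the sums over intermediate sites and a part that, propagated by triangle inequalities through all $2n$ coordinate directions of the permutation, yields an overall bound $\mathrm{Const.}\,e^{-\kappa''|u_0-a|}$ on the diagonal entries. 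This gives absolute summability of the diagonal and hence convergence of the cutoff trace, exactly as in the paper.
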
 

\begin{proof}
Using the system of the complete orthonormal basis $\zeta_u^\alpha$, one has  
\begin{align*}
&{\rm Tr}\> 
\chi_\Lambda P_{\rm F}[\vartheta_a^{(\sigma_1)},P_{\rm F}]
\cdots[\vartheta_a^{(\sigma_{2n})},P_{\rm F}]\\
&=\sum_{u_1,\ldots,u_{2n}}\sum_{u_{2n+1}\in\Lambda}\sum_{\alpha_1,\ldots,\alpha_{2n+1}}
\langle\zeta_{u_{2n+1}}^{\alpha_{2n+1}},P_{\rm F}\zeta_{u_1}^{\alpha_1}\rangle
\langle\zeta_{u_1}^{\alpha_1},[\vartheta_a^{(\sigma_1)},P_{\rm F}]\zeta_{u_2}^{\alpha_2}\rangle\\
&\cdots \langle\zeta_{u_{2n}}^{\alpha_{2n}},[\vartheta_a^{(\sigma_{2n})},P_{\rm F}]\zeta_{u_{2n+1}}^{\alpha_{2n+1}}
\rangle.
\end{align*}
The matrix elements are computed as 
\begin{align*}
\langle\zeta_u^\alpha,[\vartheta_a^{(j)},P_{\rm F}]\zeta_v^\beta\rangle&=
\langle\zeta_u^\alpha,P_{\rm F}\zeta_v^\beta\rangle[\vartheta(u^{(j)}-a^{(j)})-\vartheta(v^{(j)}-a^{(j)}]\\
&=\langle\zeta_u^\alpha,P_{\rm F}\zeta_v^\beta\rangle\times
\begin{cases} \ \ 1, & \text{$u^{(j)}\ge a^{(j)}>v^{(j)}$};\\
-1, & \text{$v^{(j)}\ge a^{(j)}>u^{(j)}$};\\
\ \ 0, & \text{otherwise}.
\end{cases}
\end{align*}
Since one can easily show  
$$
|u^{(j)}-v^{(j)}|=|u^{(j)}-a^{(j)}+a^{(j)}-v^{(j)}|=|u^{(j)}-a^{(j)}|+|a^{(j)}-v^{(j)}|
$$
for $u^{(j)}, v^{(j)}$ satisfying the above conditions, $u^{(j)}\ge a^{(j)}>v^{(j)}$ or $v^{(j)}\ge a^{(j)}>u^{(j)}$,  
for the nonvanishing matrix elements, the following bound is valid: 
\begin{align*}
|\langle\zeta_u^\alpha,[\vartheta_a^{(j)},P_{\rm F}]\zeta_v^\beta\rangle|
&\le {\rm Const.}e^{-\kappa|u-v|/2}\\
&\times \exp[-\kappa\{|u^{(j)}-a^{(j)}|+|v^{(j)}-a^{(j)}|\}/(4n)]\\
&\times \exp\Bigl[-\kappa\sum_{k\ne j}\bigl|u^{(k)}-v^{(k)}\bigr|/(4n)\Bigr],
\end{align*}
where we have used the decay bound (\ref{decayPF}) for the projection $P_{\rm F}$ onto the Fermi sea. 
By using this bound, the product of the matrix elements is estimated as   
\begin{align*}
&\left|\langle\zeta_{u_{2n+1}}^{\alpha_{2n+1}},P_{\rm F}\zeta_{u_1}^{\alpha_1}\rangle
\langle\zeta_{u_1}^{\alpha_1},[\vartheta_a^{(\sigma_1)},P_{\rm F}]\zeta_{u_2}^{\alpha_2}\rangle
\cdots \langle\zeta_{u_{2n}}^{\alpha_{2n}},[\vartheta_a^{(\sigma_{2n})},P_{\rm F}]\zeta_{u_{2n+1}}^{\alpha_{2n+1}}
\rangle\right|\\
&\le\exp\Bigl[-\kappa|u_{2n+1}-u_1|/2-\kappa\sum_{j=1}^{2n}|u_{2n+1}^{(j)}-u_1^{(j)}|/(4n)\Bigr]\\
&\times \exp\Bigl[-\kappa|u_1-u_2|/2-\kappa|u_1^{(\sigma_1)}-a^{(\sigma_1)}|/(4n)
-\kappa\sum_{j_1\ne \sigma_1}|u_1^{(j_1)}-u_2^{(j_1)}|/(4n)\Bigr]\\
&\times \exp\Bigl[-\kappa|u_2-u_3|/2-\kappa|u_2^{(\sigma_2)}-a^{(\sigma_2)}|/(4n)
-\kappa\sum_{j_2\ne \sigma_2}|u_2^{(j_2)}-u_3^{(j_2)}|/(4n)\Bigr]\\
&\qquad\qquad\qquad\qquad\vdots\\
&\times \exp\Bigl[-\kappa|u_{2n}-u_{2n+1}|/2-\kappa|u_{2n}^{(\sigma_{2n})}-a^{(\sigma_{2n})}|/(4n)\Bigr]\\
&\times \exp\Bigl[-\kappa\sum_{j_{2n}\ne \sigma_{2n}}|u_{2n}^{(j_{2n})}-u_{2n+1}^{(j_{2n})}|/(4n)\Bigr]\\
&\le\exp\Bigl[-\kappa|u_{2n+1}-u_1|/2-\kappa|u_1-u_2|/2-\cdots -\kappa|u_{2n}-u_{2n+1}|/2\Bigr]\\
&\times \exp[-\kappa|u_{2n+1}-a|/(4n)],
\end{align*}
where we have used the inequalities, 
$$
|u_{2n+1}^{(\sigma_1)}-u_1^{(\sigma_1)}|+|u_1^{(\sigma_1)}-a^{(\sigma_1)}|
\ge |u_{2n+1}^{(\sigma_1)}-a^{(\sigma_1)}|, 
$$
$$
|u_{2n+1}^{(\sigma_2)}-u_1^{(\sigma_2)}|+|u_1^{(\sigma_2)}-u_2^{(\sigma_2)}|
+|u_2^{(\sigma_2)}-a^{(\sigma_2)}|\ge |u_{2n+1}^{(\sigma_2)}-a^{(\sigma_2)}|,
$$
$$
\vdots
$$
$$
|u_{2n+1}^{(\sigma_{2n})}-u_1^{(\sigma_{2n})}|+|u_1^{(\sigma_{2n})}-u_2^{(\sigma_{2n})}|+
\cdots+|u_{2n}^{(\sigma_{2n})}-a^{(\sigma_{2n})}|
\ge |u_{2n+1}^{(\sigma_{2n})}-a^{(\sigma_{2n})}|,
$$
and 
$$
\sum_{j=1}^{2n}|u_{2n+1}^{(j)}-a^{(j)}|\ge |u_{2n+1}-a|.
$$
Clearly, the statement of the lemma follows from the above inequality for the product of the matrix elements. 
\end{proof}

\begin{lem}
\label{lem:IndependKinkInd} 
Let $a=(a^{(1)},a^{(2)},\ldots,a^{(2n)})\in\re^d$, and $a'=a+(\Delta a,0,\ldots,0)$ with 
$\Delta a\in\re$. Then, one has 
$$
{\rm Ind}^{(2n)}(\vartheta_{a'},P_{\rm F})={\rm Ind}^{(2n)}(\vartheta_a,P_{\rm F}).
$$
\end{lem}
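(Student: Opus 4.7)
The plan is to reduce the general shift to an elementary one that crosses at most a single integer value of $a^{(1)}$, and then to show that no such elementary crossing changes the index. First I note that $\vartheta_a^{(1)}(x)=\vartheta(x^{(1)}-a^{(1)})$, viewed as a function on $\ze^d$, depends on $a^{(1)}\in\re$ only through the discrete datum of which integers exceed $a^{(1)}$, so $\vartheta_a^{(1)}$ is constant in $a^{(1)}$ on each interval $(m-1,m]$ between consecutive integers. Hence whenever $a^{(1)}$ and $a^{(1)}+\Delta a$ lie in a common such interval, the step functions coincide pointwise on $\ze^d$ and the two indices agree trivially. Any shift decomposes into finitely many elementary ones, each crossing exactly one integer $m$; for such a shift one has
$$
\vartheta_{a'}^{(1)}-\vartheta_a^{(1)}=-\chi_m,\qquad \chi_m(x):=\delta_{x^{(1)},m},
$$
and the proof reduces to showing invariance under each elementary crossing.

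Since $\vartheta_{a'}^{(j)}=\vartheta_a^{(j)}$ for $j\ge 2$ and in each permutation $\sigma$ of $\{1,\ldots,2n\}$ the factor $[\vartheta^{(1)}_a,P_{\rm F}]$ appears exactly once, a direct expansion yields
$$
{\rm Ind}^{(2n)}(\vartheta_{a'},P_{\rm F})-{\rm Ind}^{(2n)}(\vartheta_a,P_{\rm F})
= \frac{(2\pi i)^n}{n!}\sum_\sigma (-1)^\sigma\,{\rm Tr}\,P_{\rm F}\prod_{j=1}^{2n}[\mathring{\vartheta}^{(\sigma_j)},P_{\rm F}],
$$
with $\mathring{\vartheta}^{(1)}:=\chi_m$ and $\mathring{\vartheta}^{(j)}:=\vartheta_a^{(j)}$ for $j\ge 2$. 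Each individual trace is absolutely convergent thanks to the exponential decay of the matrix elements of $P_{\rm F}$ (Assumption~\ref{Assumption}) together with the localization of $\chi_m$ in the first coordinate, so the antisymmetric sum over $\sigma$ is unambiguous.

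The heart of the argument, and its main obstacle, is to show that this antisymmetric sum vanishes. The underlying mechanism is geometric: $\chi_m$ collapses the first coordinate onto a single hyperplane, so the residual object behaves as a pairing living on the $(2n-1)$-dimensional slice $\{x^{(1)}=m\}$, and antisymmetric traces of this form in odd dimension are constrained to vanish, since Chern-number analogues do not exist in odd degree. Concretely, I plan to expand each commutator $[\mathring{\vartheta}^{(\sigma_j)},P_{\rm F}]$ in the basis~\eqref{ONSzeta}, rewrite the antisymmetric sum over $\sigma$ as a determinant in the coordinate labels $u_1,\ldots,u_{2n}$ appearing in the trace, and exploit cyclic invariance of the trace together with the fact that $\chi_m(u)-\chi_m(v)$ is nonzero only when exactly one of $u^{(1)}$ and $v^{(1)}$ equals $m$. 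A careful bookkeeping of the resulting contributions then produces pairwise cancellation. The hard step is executing this combinatorial cancellation cleanly while justifying the required interchange of sums and traces under the decay bound~\eqref{decayPF}.
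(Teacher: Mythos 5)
Your reduction is sound as far as it goes: for each permutation $\sigma$ the difference of the two traces is the single trace in which $[\vartheta_a^{(1)},P_{\rm F}]$ is replaced by $[\vartheta_{a'}^{(1)}-\vartheta_a^{(1)},P_{\rm F}]$, and your observation that $\vartheta_{a'}^{(1)}-\vartheta_a^{(1)}$ is (up to sign) the indicator of finitely many hyperplanes $\{x^{(1)}=m\}$ is a clean variant of the paper's first step in Appendix~\ref{AppenProofthm:IndDaIndtheta}. But the entire content of the lemma is the vanishing of the resulting antisymmetrized trace, and that is precisely the step you leave as a plan (``a careful bookkeeping \dots produces pairwise cancellation''; ``the hard step is executing this \dots''). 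The heuristic you offer --- that the residual object lives on an odd-dimensional slice and odd-degree Chern-number analogues vanish --- is not a proof and is not the mechanism at work; worse, the statement ``these antisymmetrized pairings are insensitive to the kink position'' is essentially what is being proved, so an appeal to it is circular.

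What actually produces the cancellation in the paper, and is absent from your proposal, is the following. First, $\vartheta_{a'}^{(1)}-\vartheta_a^{(1)}$ is multiplied by a finite-volume cutoff $\chi_\Gamma$ so that $\Delta\vartheta_a^{(1)}$ is genuinely trace class (your $\chi_m$ is localized only in the first coordinate and is not trace class by itself, so the cyclic rearrangements below need this extra cutoff, removed at the end by $\Gamma\nearrow\ze^d$). Second, using $P_{\rm F}[\Delta\vartheta_a^{(1)},P_{\rm F}]=-P_{\rm F}\Delta\vartheta_a^{(1)}(1-P_{\rm F})$ together with $P_{\rm F}[\vartheta_a^{(i)},P_{\rm F}]P_{\rm F}=0$ and $(1-P_{\rm F})[\vartheta_a^{(i)},P_{\rm F}](1-P_{\rm F})=0$, the permutation placing the index $1$ in the first slot contributes $-{\rm Tr}\,\Delta\vartheta_a^{(1)}(1-P_{\rm F})[\vartheta_a^{(2)},P_{\rm F}]\cdots[\vartheta_a^{(2n)},P_{\rm F}]$, the one placing it in the last slot contributes $-{\rm Tr}\,\Delta\vartheta_a^{(1)}P_{\rm F}[\vartheta_a^{(2)},P_{\rm F}]\cdots[\vartheta_a^{(2n)},P_{\rm F}]$, and the two add to $-{\rm Tr}\,\Delta\vartheta_a^{(1)}[\vartheta_a^{(2)},P_{\rm F}]\cdots[\vartheta_a^{(2n)},P_{\rm F}]$ with no projection in front. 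Third, expanding these commutators and antisymmetrizing over the indices $2,\ldots,2n$, only the two fully alternating words $\Delta\vartheta^{(1)}\vartheta^{(2)}P_{\rm F}\vartheta^{(3)}P_{\rm F}\cdots$ and $\Delta\vartheta^{(1)}P_{\rm F}\vartheta^{(2)}P_{\rm F}\vartheta^{(3)}\cdots$ survive, and these cancel against each other under a cyclic relabeling of $2,\ldots,2n$ (an even permutation, so it costs no sign) combined with cyclicity of the trace; an analogous pairing disposes of the permutations that place the index $1$ in an interior slot. Until you carry out these steps, or supply a genuine substitute for them, the proof is not complete.
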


\begin{proof}
Consider 
\begin{align*}
&{\rm Tr}\thinspace P_{\rm F}[\vartheta_{a'}^{(1)},P_{\rm F}][\vartheta_a^{(2)},P_{\rm F}]\cdots[\vartheta_a^{(2n)},P_{\rm F}]
-{\rm Tr}\thinspace P_{\rm F}[\vartheta_{a}^{(1)},P_{\rm F}][\vartheta_a^{(2)},P_{\rm F}]\cdots[\vartheta_a^{(2n)},P_{\rm F}]\\
&={\rm Tr}\>P_{\rm F}[(\vartheta_{a'}^{(1)}-\vartheta_{a}^{(1)}),P_{\rm F}][\vartheta_a^{(2)},P_{\rm F}]
\cdots[\vartheta_a^{(2n)},P_{\rm F}].
\end{align*}
In the same way as in the proof of Lemma~\ref{TrvarthetachiLambda}, the right-hand side is written as  
\begin{multline*}
{\rm Tr}\;P_{\rm F}[(\vartheta_{a'}^{(1)}-\vartheta_{a}^{(1)}),P_{\rm F}][\vartheta_a^{(2)},P_{\rm F}]
\cdots[\vartheta_a^{(2n)},P_{\rm F}]\\
=\lim_{\Gamma\nearrow\ze^d}
{\rm Tr}\;P_{\rm F}[\Delta\vartheta_a^{(1)},P_{\rm F}][\vartheta_a^{(2)},P_{\rm F}]
\cdots[\vartheta_a^{(2n)},P_{\rm F}],
\end{multline*}
where 
$$
\Delta\vartheta_a^{(1)}:=(\vartheta_{a'}^{(1)}-\vartheta_{a}^{(1)})\chi_\Gamma
$$
with the characteristic function $\chi_\Gamma$ of the finite lattice $\Gamma\subset\ze^d$. For simplicity, 
we choose the lattice $\Gamma$ so that the distance between $\ze^d\backslash\Gamma$ and $\{a,a'\}$ is 
sufficiently large. 

Note that 
\begin{align*}
P_{\rm F}[\Delta\vartheta_a^{(1)},P_{\rm F}]
&=P_{\rm F}(\Delta\vartheta_a^{(1)}P_{\rm F}-P_{\rm F}\Delta\vartheta_a^{(1)})\\
&=P_{\rm F}(\Delta\vartheta_a^{(1)}P_{\rm F}-\Delta\vartheta_a^{(1)})=-P_{\rm F}\Delta\vartheta_a^{(1)}(1-P_{\rm F}). 
\end{align*}
Further, one has
$$
P_{\rm F}[\vartheta_a^{(i)},P_{\rm F}]P_{\rm F}=0
$$
and 
$$
(1-P_{\rm F})[\vartheta_a^{(i)},P_{\rm F}](1-P_{\rm F})=0. 
$$
Using these identities, we obtain 
\begin{align*}
&{\rm Tr}\;
P_{\rm F}[\Delta\vartheta_a^{(1)},P_{\rm F}][\vartheta_a^{(2)},P_{\rm F}] \cdots[\vartheta_a^{(2n)},P_{\rm F}]\\
&=-{\rm Tr}\;P_{\rm F}\Delta\vartheta_a^{(1)}(1-P_{\rm F})
[\vartheta_a^{(2)},P_{\rm F}]P_{\rm F}[\vartheta_a^{(3)},P_{\rm F}](1-P_{\rm F})\\
&\qquad\qquad\cdots(1-P_{\rm F})[\vartheta_a^{(2n)},P_{\rm F}]P_{\rm F}\\
&=-{\rm Tr}\;\Delta\vartheta_a^{(1)}(1-P_{\rm F})
[\vartheta_a^{(2)},P_{\rm F}][\vartheta_a^{(3)},P_{\rm F}]\cdots[\vartheta_a^{(2n)},P_{\rm F}],
\end{align*}
where we have used the property of the trace and the fact that $\chi_\Gamma$ is trace class. 

Next, consider 
\begin{equation}
\label{Trcontper}
-{\rm Tr}\;P_{\rm F}[\vartheta_a^{(2)},P_{\rm F}]\cdots[\vartheta_a^{(2n)},P_{\rm F}][\Delta\vartheta_a^{(1)},P_{\rm F}]
\end{equation}
which is derived from the above quantity with the permutation of the indices of $\vartheta_a^{(i)}$. 
The negative sign comes from the signature of the permutation. 
Note that 
\begin{align*}
[\Delta\vartheta_a^{(1)},P_{\rm F}]P_{\rm F}&=(\Delta\vartheta_a^{(1)}P_{\rm F}-P_{\rm F}\vartheta_a^{(1)})P_{\rm F}\\
&=(\Delta\vartheta_a^{(1)}-P_{\rm F}\vartheta_a^{(1)})P_{\rm F}=(1-P_{\rm F})\vartheta_a^{(1)}P_{\rm F}. 
\end{align*}
Therefore, in the same way as the above, we have 
\begin{align*}
&-{\rm Tr}\;P_{\rm F}[\vartheta_a^{(2)},P_{\rm F}]\cdots[\vartheta_a^{(2n)},P_{\rm F}]
[\Delta\vartheta_a^{(1)},P_{\rm F}]\\
&=-{\rm Tr}\;P_{\rm F}[\vartheta_a^{(2)},P_{\rm F}](1-P_{\rm F})\cdots P_{\rm F}[\vartheta_a^{(2n)},P_{\rm F}]
(1-P_{\rm F})
[\Delta\vartheta_a^{(1)},P_{\rm F}]P_{\rm F}\\
&=-{\rm Tr}\;P_{\rm F}[\vartheta_a^{(2)},P_{\rm F}](1-P_{\rm F})\cdots P_{\rm F}[\vartheta_a^{(2n)},P_{\rm F}]
(1-P_{\rm F})\Delta\vartheta_a^{(1)}P_{\rm F}\\
&=-{\rm Tr}\;P_{\rm F}[\vartheta_a^{(2)},P_{\rm F}]\cdots [\vartheta_a^{(2n)},P_{\rm F}]
\chi_\Gamma\Delta\vartheta_a^{(1)}P_{\rm F}\\
&=-{\rm Tr}\;\Delta\vartheta_a^{(1)}P_{\rm F}[\vartheta_a^{(2)},P_{\rm F}]\cdots [\vartheta_a^{(2n)},P_{\rm F}]
\chi_\Gamma\\
&=-{\rm Tr}\;\Delta\vartheta_a^{(1)}P_{\rm F}[\vartheta_a^{(2)},P_{\rm F}]\cdots [\vartheta_a^{(2n)},P_{\rm F}].
\end{align*}
By adding two quantities, we obtain 
\begin{multline*}
-{\rm Tr}\;\Delta\vartheta_a^{(1)}(1-P_{\rm F})[\vartheta_a^{(2)},P_{\rm F}]\cdots [\vartheta_a^{(2n)},P_{\rm F}]\\
-{\rm Tr}\;\Delta\vartheta_a^{(1)}P_{\rm F}[\vartheta_a^{(2)},P_{\rm F}]\cdots [\vartheta_a^{(2n)},P_{\rm F}]\\
=-{\rm Tr}\;\Delta\vartheta_a^{(1)}[\vartheta_a^{(2)},P_{\rm F}]\cdots [\vartheta_a^{(2n)},P_{\rm F}].
\end{multline*}
Because of the sum of the permutations in the expression of the index, the nonvanishing contributions 
in the right-hand side are given by 
$$
-{\rm Tr}\;\Delta\vartheta_a^{(1)}\vartheta_a^{(2)}P_{\rm F}\vartheta_a^{(3)}P_{\rm F}\cdots 
P_{\rm F}\vartheta_a^{(2n)}P_{\rm F}
$$
and 
$$
{\rm Tr}\;\Delta\vartheta_a^{(1)}P_{\rm F}\vartheta_a^{(2)}P_{\rm F}\vartheta_a^{(3)}\cdots P_{\rm F}\vartheta_a^{(2n)}.
$$
Owing to the sum of the permutations, the second contribution can be written as 
\begin{align*}
{\rm Tr}\;\Delta\vartheta_a^{(1)}P_{\rm F}\vartheta_a^{(3)}\cdots 
P_{\rm F}\vartheta_a^{(2n)}P_{\rm F}\vartheta_a^{(2)}&=
{\rm Tr}\;\vartheta_a^{(2)}\Delta\vartheta_a^{(1)}P_{\rm F}\vartheta_a^{(3)}\cdots 
P_{\rm F}\vartheta_a^{(2n)}P_{\rm F}\\
&={\rm Tr}\;\Delta\vartheta_a^{(1)}\vartheta_a^{(2)}P_{\rm F}\vartheta_a^{(3)}\cdots 
P_{\rm F}\vartheta_a^{(2n)}P_{\rm F}.
\end{align*}
This cancels out the first contribution. 

As to the rest of the cases, it is sufficient to treat  
\begin{align*}
&-{\rm Tr}\thinspace P_{\rm F}[\vartheta_a^{(2n)},P_{\rm F}][\vartheta_{a'}^{(1)},P_{\rm F}][\vartheta_a^{(2)},P_{\rm F}]
\cdots[\vartheta_a^{(2n-1)},P_{\rm F}]\\
&+{\rm Tr}\thinspace P_{\rm F}[\vartheta_a^{(2n)},P_{\rm F}][\vartheta_{a}^{(1)},P_{\rm F}][\vartheta_a^{(2)},P_{\rm F}]
\cdots[\vartheta_a^{(2n-1)},P_{\rm F}]\\
&=-{\rm Tr}\>P_{\rm F}[\vartheta_a^{(2n)},P_{\rm F}][(\vartheta_{a'}^{(1)}-\vartheta_{a}^{(1)}),P_{\rm F}][\vartheta_a^{(2)},P_{\rm F}]
\cdots[\vartheta_a^{(2n-1)},P_{\rm F}]
\end{align*}
without loss of generality. Therefore, we consider 
\begin{align*}
&-{\rm Tr}\>P_{\rm F}[\vartheta_a^{(2n)},P_{\rm F}][\Delta\vartheta_a^{(1)},P_{\rm F}][\vartheta_a^{(2)},P_{\rm F}]
\cdots[\vartheta_a^{(2n-1)},P_{\rm F}]\\
&=-{\rm Tr}\>P_{\rm F}[\vartheta_a^{(2n)},P_{\rm F}](1-P_{\rm F})[\Delta\vartheta_a^{(1)},P_{\rm F}][\vartheta_a^{(2)},P_{\rm F}]
\cdots[\vartheta_a^{(2n-1)},P_{\rm F}]\\
&=-{\rm Tr}\;(1-P_{\rm F})[\Delta\vartheta_a^{(1)},P_{\rm F}][\vartheta_a^{(2)},P_{\rm F}]
\cdots[\vartheta_a^{(2n-1)},P_{\rm F}][\vartheta_a^{(2n)},P_{\rm F}].
\end{align*}
The contribution (\ref{Trcontper}) can be written 
\begin{align*}
&-{\rm Tr}\;P_{\rm F}[\vartheta_a^{(2)},P_{\rm F}]\cdots[\vartheta_a^{(2n)},P_{\rm F}][\Delta\vartheta_a^{(1)},P_{\rm F}]\\
&=-{\rm Tr}\;P_{\rm F}[\vartheta_a^{(2)},P_{\rm F}]\cdots[\vartheta_a^{(2n)},P_{\rm F}]
(1-P_{\rm F})[\Delta\vartheta_a^{(1)},P_{\rm F}]\\
&=-{\rm Tr}\;(1-P_{\rm F})[\Delta\vartheta_a^{(1)},P_{\rm F}][\vartheta_a^{(2)},P_{\rm F}]\cdots[\vartheta_a^{(2n)},P_{\rm F}]
\end{align*}
Thus, the above contribution is already counted, and cancels out the first one. 
\end{proof}

\begin{proof}[Proof of Theorem~\ref{thm:IndDaIndtheta}]
Lemma~\ref{lem:IndependKinkInd} implies that the index ${\rm Ind}^{(2n)}(\vartheta_a,P_{\rm F})$ is independent of 
the position $a$ of the kink. 
Therefore, the statement of Theorem~\ref{thm:IndDaIndtheta} follows from 
the decay estimate in the proof of Lemma~\ref{TrvarthetachiLambda}. 
\end{proof}

\section{Proof of Theorem~\ref{thm:ChiralIndtheta}}
\label{proofTheorem:ChiralIndtheta}

In this appendix, we prove that the chiral index can be written in terms of the step functions as 
in Theorem~\ref{thm:ChiralIndtheta}.

In the same way as in the proof of Lemma~\ref{TrvarthetachiLambda}, one has 

\begin{lem}
Fix the position $a$ of the kink and the permutation $\sigma$. Then, we have
\begin{multline*}
\lim_{\Lambda\nearrow\ze^d}{\rm Tr}\; \chi_\Lambda SU[\vartheta_a^{(\sigma_1)},P_-][\vartheta_a^{(\sigma_2)},P_-]
\cdots[\vartheta_a^{(\sigma_{2n+1})},P_-]\\
={\rm Tr}\; SU[\vartheta_a^{(\sigma_1)},P_-][\vartheta_a^{(\sigma_2)},P_-]\cdots[\vartheta_a^{(\sigma_{2n+1})},P_-].
\end{multline*}
\end{lem}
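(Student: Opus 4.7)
The plan is to follow the proof of Lemma~\ref{TrvarthetachiLambda} verbatim, with two small modifications: $P_{\rm F}$ is replaced by $P_-$ (which still satisfies the exponential off-diagonal decay (\ref{decayPF}) under Assumption~\ref{Assumption} since $P_-$ plays the role of the Fermi projection for a chiral Hamiltonian), and the outermost factor is now $SU = S(P_+ - P_-) = S(1 - 2P_-)$, whose matrix elements in the basis $\zeta_u^\alpha$ of (\ref{ONSzeta}) inherit the same exponential decay because $S$ is block-diagonal in $u$ with uniformly bounded entries.

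First I would expand the trace in the basis $\{\zeta_u^\alpha\}$, obtaining a $(2n+2)$-fold sum over site–internal labels $(u_1,\alpha_1),\ldots,(u_{2n+2},\alpha_{2n+2})$, in which $\chi_\Lambda$ picks out $u_{2n+2}\in\Lambda$. For each commutator I would use
\[
\langle \zeta_u^\alpha,[\vartheta_a^{(j)},P_-]\zeta_v^\beta\rangle
= \langle \zeta_u^\alpha,P_-\zeta_v^\beta\rangle\,\bigl[\vartheta(u^{(j)}-a^{(j)})-\vartheta(v^{(j)}-a^{(j)})\bigr],
\]
so that the matrix element is nonzero only when $u^{(j)}$ and $v^{(j)}$ lie on opposite sides of $a^{(j)}$, in which case $|u^{(j)}-v^{(j)}|=|u^{(j)}-a^{(j)}|+|v^{(j)}-a^{(j)}|$. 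Combining this with $|\langle \zeta_u^\alpha,P_-\zeta_v^\beta\rangle|\le C e^{-\kappa|u-v|}$, I would split the exponential to produce three factors: overall decay $e^{-\kappa|u-v|/2}$, a factor $e^{-\kappa(|u^{(j)}-a^{(j)}|+|v^{(j)}-a^{(j)}|)/(4(2n+1))}$ isolating decay to the kink in the $j$-th coordinate, and a residual transverse decay $e^{-\kappa\sum_{k\ne j}|u^{(k)}-v^{(k)}|/(4(2n+1))}$, exactly as in the bound preceding the final estimate of Lemma~\ref{TrvarthetachiLambda}.

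Next I would multiply these bounds for the $SU$-factor and the $2n+1$ commutators and telescope: along the cyclic chain $u_{2n+2}\to u_1\to u_2\to\cdots\to u_{2n+1}\to u_{2n+2}$, each triangle inequality
\[
|u_{2n+2}^{(\sigma_i)}-u_1^{(\sigma_i)}|+\cdots+|u_{i-1}^{(\sigma_i)}-u_i^{(\sigma_i)}|+|u_i^{(\sigma_i)}-a^{(\sigma_i)}|\ \ge\ |u_{2n+2}^{(\sigma_i)}-a^{(\sigma_i)}|
\]
yields decay $e^{-\kappa'|u_{2n+2}^{(\sigma_i)}-a^{(\sigma_i)}|}$ in the $\sigma_i$-th coordinate. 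Because $\sigma$ is a permutation of $\{1,\ldots,2n+1\}$, these decays are obtained in every coordinate direction, and sum to $e^{-\kappa''|u_{2n+2}-a|/(4(2n+1))}$. The surviving exponentials $e^{-\kappa|u_{i-1}-u_i|/2}$ on each link then make the sum over $u_1,\ldots,u_{2n+1}$ finite uniformly in $u_{2n+2}$, and the final sum over $u_{2n+2}\in\ze^d$ converges absolutely because of $e^{-\kappa''|u_{2n+2}-a|/(4(2n+1))}$.

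Having absolute convergence, the characteristic function $\chi_\Lambda(u_{2n+2})$ converges pointwise and monotonically to $1$ as $\Lambda\nearrow\ze^d$, so dominated convergence delivers the stated equality, and in particular the untruncated trace is well defined and equals the limit. The main technical point, as in Lemma~\ref{TrvarthetachiLambda}, is to arrange the splitting constants so that both the link decay and the kink-to-site decay survive simultaneously; this is purely bookkeeping once one observes that the permutation $\sigma$ supplies a kink-decay factor in every one of the $2n+1$ coordinate directions, which is exactly enough to dominate the sum over $u_{2n+2}\in\ze^d$.
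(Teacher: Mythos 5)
Your proposal is correct and follows essentially the same route as the paper, which for this lemma simply states ``in the same way as in the proof of Lemma~\ref{TrvarthetachiLambda}'' and gives no further details. Your adaptation is the intended one: the exponential decay of the matrix elements of $P_-$ (playing the role of $P_{\rm F}$ for the chiral Hamiltonian) and of the locally supported factor $SU=S(1-2P_-)$, together with the kink-localization of each commutator $[\vartheta_a^{(\sigma_i)},P_-]$ in all $2n+1$ coordinate directions, yields the summable bound with the factor $e^{-\kappa''|u_{2n+2}-a|/(4(2n+1))}$ that justifies the limit $\Lambda\nearrow\ze^d$.
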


Therefore, it is enough to prove the following analogue of Lemma~\ref{lem:IndependKinkInd}. 

\begin{lem}
Let $a=(a^{(1)},a^{(2)},\ldots,a^{(2n+1)})\in\re^d$, and $a'=a+(\Delta a,0,\ldots,0)$ with 
$\Delta a\in\re$. Then, we have  
$$
{\rm Ind}^{(2n+1)}(\vartheta_{a'},S,U)={\rm Ind}^{(2n+1)}(\vartheta_a,S,U).
$$
\end{lem}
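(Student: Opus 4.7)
The plan is to mimic the proof of Lemma~\ref{lem:IndependKinkInd}, adapting the algebra to the chiral setting which has an odd number $2n+1$ of commutator factors and the extra $SU$ prefactor. As in that proof, introduce a truncated difference $\Delta\vartheta_a^{(1)} := (\vartheta_{a'}^{(1)} - \vartheta_a^{(1)})\chi_\Gamma$ with a finite box $\Gamma \subset \ze^d$ chosen so that the distance from $\ze^d\setminus\Gamma$ to $\{a,a'\}$ is large; this renders all relevant operators trace class, and the limit $\Gamma\nearrow\ze^d$ will be taken at the end using the same type of exponential-decay estimate as in the lemma just above.

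Since $\vartheta_{a'}^{(j)} = \vartheta_a^{(j)}$ for $j\ne 1$, the difference ${\rm Ind}^{(2n+1)}(\vartheta_{a'},S,U) - {\rm Ind}^{(2n+1)}(\vartheta_a,S,U)$ is a telescoping sum in which only slots with $\sigma_k = 1$ contribute, each producing a factor $[\Delta\vartheta_a^{(1)},P_-]$. Using cyclicity of the trace and the antisymmetry of $\sum_\sigma(-1)^\sigma$, I would collapse these $2n+1$ contributions to a canonical form
$$
C\sum_\tau(-1)^\tau\;{\rm Tr}\;SU\,[\Delta\vartheta_a^{(1)},P_-]\,[\vartheta_a^{(\tau_2)},P_-]\cdots[\vartheta_a^{(\tau_{2n+1})},P_-],
$$
with a combinatorial constant $C$ and $\tau$ a permutation of $\{2,\dots,2n+1\}$. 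The goal is to show that this expression vanishes.

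To do so I would deploy three structural identities: $SP_\pm = P_\mp S$ (from $SP_-S = P_+$), $[S,\vartheta_a^{(j)}] = 0$ (since $S$ is site-independent), and the standard off-diagonal decomposition $[\vartheta,P_-] = P_+\vartheta P_- - P_-\vartheta P_+$. A short computation gives the block-diagonal formula
$$
SU[\Delta\vartheta_a^{(1)},P_-] = P_-\Delta\vartheta_a^{(1)}\,SP_- + P_+\Delta\vartheta_a^{(1)}\,SP_+.
$$
Iterating $P_\pm[\vartheta,P_-] = \pm P_\pm\vartheta P_\mp$ through the remaining commutators collapses each trace to an alternating string of the schematic form ${\rm Tr}\,\Delta\vartheta_a^{(1)}\vartheta_a^{(\tau_2)}P_\mp\vartheta_a^{(\tau_3)}P_\pm\cdots P_\mp\vartheta_a^{(\tau_{2n+1})}SP_\pm$, exactly parallel to the reduction carried out in the proof of Lemma~\ref{lem:IndependKinkInd}; the odd parity of the number of commutators is precisely what makes the opening and closing projections match, and the $SU$ factor provides the twist that allows this.

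The main obstacle, echoing the even-dimensional case, will be organizing the resulting reduced traces into cancelling pairs under the antisymmetric sum over $\tau$. As in Lemma~\ref{lem:IndependKinkInd}, each nonvanishing contribution admits a partner obtained by cyclically rotating the product (legitimate at finite $\Gamma$), whose sign from $(-1)^\tau$ is opposite because the induced cyclic permutation is odd; the commutativity of step functions in distinct directions ensures the algebraic expressions agree. The extra $SU$ factor is neatly absorbed by the two complementary blocks $P_\pm\Delta\vartheta_a^{(1)}SP_\pm$, which pair with opposite signs thanks to $\{S,U\}=0$ and $SP_\pm=P_\mp S$. I expect this bookkeeping — tracking the signs coming from $(-1)^\tau$, from moving $S$ across an odd number of $P_\pm$ factors, and from the cyclic rearrangements, and verifying that everything cancels — to be the most delicate step. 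Once cancellation at fixed $\Gamma$ is established, the limit $\Gamma\nearrow\ze^d$ concludes the proof.
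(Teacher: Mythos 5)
Your proposal follows essentially the same route as the paper's proof in Appendix~\ref{proofTheorem:ChiralIndtheta}: the same truncation $\Delta\vartheta_a^{(1)}=(\vartheta_{a'}^{(1)}-\vartheta_a^{(1)})\chi_\Gamma$, the same reduction using the off-diagonal structure of $[\vartheta_a^{(j)},P_-]$ with respect to $P_\pm$ together with $SP_\pm=P_\mp S$, and the same pairwise cancellation of the reduced traces under the antisymmetrized permutation sum via cyclicity of the trace and the odd signature of the induced cycle on $\{2,\ldots,2n+1\}$. The only cosmetic difference is that the paper first splits $SU=SP_+-SP_-$ and reduces to a single type of term via $P_+=1-P_-$, whereas you keep $SU$ intact through a (correct) block-diagonal identity; the underlying algebra is identical.
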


\begin{proof}
To begin with, we note that 
\begin{align*}
&{\rm Tr}\;SU[\vartheta_{a}^{(1)},P_-][\vartheta_a^{(2)},P_-]\cdots[\vartheta_a^{(2n+1)},P_-]\\
&={\rm Tr}\;SP_+[\vartheta_{a}^{(1)},P_-][\vartheta_a^{(2)},P_-]\cdots[\vartheta_a^{(2n+1)},P_-]\\
&-{\rm Tr}\;SP_-[\vartheta_{a}^{(1)},P_-][\vartheta_a^{(2)},P_-]\cdots[\vartheta_a^{(2n+1)},P_-].
\end{align*}
The first term in the right-hand side can be written as 
\begin{align*}
&{\rm Tr}\;SP_+[\vartheta_{a}^{(1)},P_-][\vartheta_a^{(2)},P_-]\cdots[\vartheta_a^{(2n+1)},P_-]\\
&=-{\rm Tr}\;SP_+[\vartheta_{a}^{(1)},P_+][\vartheta_a^{(2)},P_+]\cdots[\vartheta_a^{(2n+1)},P_+]
\end{align*}
where we have used $P_-=1-P_+$. Thus, it is sufficient to handle one of the two types of terms.  

Consider 
\begin{align*}
&{\rm Tr}\;SP_-[\vartheta_{a'}^{(1)},P_-][\vartheta_a^{(2)},P_-]\cdots[\vartheta_a^{(2n+1)},P_-]\\
&-{\rm Tr}\;SP_-[\vartheta_{a}^{(1)},P_-][\vartheta_a^{(2)},P_-]\cdots[\vartheta_a^{(2n+1)},P_-]\\
&={\rm Tr}\;SP_-[(\vartheta_{a'}^{(1)}-\vartheta_{a}^{(1)}),P_-][\vartheta_a^{(2)},P_-]\cdots[\vartheta_a^{(2n+1)},P_-].
\end{align*}
In the same way as in the proof of Lemma~\ref{TrvarthetachiLambda}, one has 
\begin{align*}
&{\rm Tr}\;SP_-[(\vartheta_{a'}^{(1)}-\vartheta_{a}^{(1)}),P_-][\vartheta_a^{(2)},P_-]\cdots[\vartheta_a^{(2n+1)},P_-]\\
&=\lim_{\Gamma\nearrow\ze^d}{\rm Tr}\;SP_-[\Delta\vartheta_{a}^{(1)},P_-][\vartheta_a^{(2)},P_-]
\cdots[\vartheta_a^{(2n+1)},P_-],
\end{align*}
where 
$$
\Delta\vartheta_{a}^{(1)}:=(\vartheta_{a'}^{(1)}-\vartheta_{a}^{(1)})\chi_\Gamma. 
$$
As in the proof of  Lemma~\ref{lem:IndependKinkInd}, we have 
$$
P_-[\Delta\vartheta_{a}^{(1)},P_-]=-P_-\Delta\vartheta_{a}^{(1)}(1-P_-),
$$
$$
P_-[\vartheta_a^{(i)},P_-]P_-=0
$$
and 
$$
(1-P_-)[\vartheta_a^{(i)},P_-](1-P_-)=0. 
$$
By using these identities, we obtain 
\begin{align*}
&{\rm Tr}\;SP_-[\Delta\vartheta_{a}^{(1)},P_-][\vartheta_a^{(2)},P_-]\cdots[\vartheta_a^{(2n+1)},P_-]\\
&=-{\rm Tr}\;S\Delta\vartheta_{a}^{(1)}(1-P_-)[\vartheta_a^{(2)},P_-]\cdots[\vartheta_a^{(2n+1)},P_-],
\end{align*}
where we have used $SP_-=(1-P_-)S$, the property of the trace and the fact that $\chi_\Gamma$ is trace class. 

Next, consider 
\begin{equation}
\label{2ndTrSP-Commus}
{\rm Tr}\;SP_-[\vartheta_a^{(2)},P_-]\cdots[\vartheta_a^{(2n+1)},P_-][\Delta\vartheta_{a}^{(1)},P_-]
\end{equation}
which is derived from the above quantity with the permutation of the indices of $\vartheta_a^{(i)}$. 
Using the above identities, one has 
\begin{align*}
&{\rm Tr}\;SP_-[\vartheta_a^{(2)},P_-]\cdots[\vartheta_a^{(2n+1)},P_-][\Delta\vartheta_{a}^{(1)},P_-]\\
&=-{\rm Tr}\;S[\vartheta_a^{(2)},P_-]\cdots[\vartheta_a^{(2n+1)},P_-]P_-\Delta\vartheta_{a}^{(1)}(1-P_-)\\
&=-{\rm Tr}\;(1-P_-)S[\vartheta_a^{(2)},P_-]\cdots[\vartheta_a^{(2n+1)},P_-]P_-\chi_\Gamma\Delta\vartheta_{a}^{(1)}\\
&=-{\rm Tr}\;\Delta\vartheta_{a}^{(1)}(1-P_-)S[\vartheta_a^{(2)},P_-]\cdots[\vartheta_a^{(2n+1)},P_-]P_-\chi_\Gamma\\
&=-{\rm Tr}\;\Delta\vartheta_{a}^{(1)}SP_-[\vartheta_a^{(2)},P_-]\cdots[\vartheta_a^{(2n+1)},P_-]\\
&=-{\rm Tr}\;S\Delta\vartheta_{a}^{(1)}P_-[\vartheta_a^{(2)},P_-]\cdots[\vartheta_a^{(2n+1)},P_-]
\end{align*}
By adding two quantities, we have 
$$
-{\rm Tr}\;S\Delta\vartheta_{a}^{(1)}[\vartheta_a^{(2)},P_-]\cdots[\vartheta_a^{(2n+1)},P_-].
$$
Because of the sum of the permutations in the expression of the index, the nonvanishing contributions 
are given by 
$$
-{\rm Tr}\;S\Delta\vartheta_a^{(1)}\vartheta_a^{(2)}P_-\vartheta_a^{(3)}P_-\cdots 
P_-\vartheta_a^{(2n+1)}P_-
$$
and 
$$
-{\rm Tr}\;S\Delta\vartheta_a^{(1)}P_-\vartheta_a^{(2)}P_-\vartheta_a^{(3)}\cdots P_-\vartheta_a^{(2n+1)}.
$$
Owing to the sum of the permutations, the second contribution can be written as 
\begin{align*}
&{\rm Tr}\;S\Delta\vartheta_a^{(1)}P_-\vartheta_a^{(3)}\cdots P_-\vartheta_a^{(2n+1)}P_-\vartheta_a^{(2)}\\
&={\rm Tr}\;\vartheta_a^{(2)}S\Delta\vartheta_a^{(1)}P_-\vartheta_a^{(3)}\cdots P_-\vartheta_a^{(2n+1)}P_-\\
&={\rm Tr}\;S\Delta\vartheta_a^{(1)}\vartheta_a^{(2)}P_-\vartheta_a^{(3)}\cdots P_-\vartheta_a^{(2n+1)}P_-.
\end{align*}
This cancels out the first contribution. 

As to the rest of the cases, it is sufficient to treat  
\begin{align*}
&{\rm Tr}\thinspace SP_-[\vartheta_a^{(2n+1)},P_-][\vartheta_{a'}^{(1)},P_-][\vartheta_a^{(2)},P_-]
\cdots[\vartheta_a^{(2n)},P_-]\\
&-{\rm Tr}\thinspace SP_-[\vartheta_a^{(2n+1)},P_-][\vartheta_{a}^{(1)},P_-][\vartheta_a^{(2)},P_-]
\cdots[\vartheta_a^{(2n)},P_-]\\
&={\rm Tr}\>SP_-[\vartheta_a^{(2n+1)},P_-][(\vartheta_{a'}^{(1)}-\vartheta_{a}^{(1)}),P_-][\vartheta_a^{(2)},P_-]
\cdots[\vartheta_a^{(2n)},P_-]
\end{align*}
without loss of generality. Therefore, we consider 
\begin{align*}
&{\rm Tr}\>SP_-[\vartheta_a^{(2n+1)},P_-][\Delta\vartheta_a^{(1)},P_-][\vartheta_a^{(2)},P_-]
\cdots[\vartheta_a^{(2n)},P_-]\\
&={\rm Tr}\>SP_-[\vartheta_a^{(2n+1)},P_-](1-P_-)[\Delta\vartheta_a^{(1)},P_-][\vartheta_a^{(2)},P_-]
\cdots[\vartheta_a^{(2n)},P_-]\\
&=-{\rm Tr}\>(1-P_-)[\vartheta_a^{(2n+1)},P_-]S(1-P_-)[\Delta\vartheta_a^{(1)},P_-][\vartheta_a^{(2)},P_-]
\cdots[\vartheta_a^{(2n)},P_-]\\
&=-{\rm Tr}\;S(1-P_-)[\Delta\vartheta_a^{(1)},P_-][\vartheta_a^{(2)},P_-]
\cdots[\vartheta_a^{(2n)},P_-][\vartheta_a^{(2n+1)},P_-].
\end{align*}
The contribution (\ref{2ndTrSP-Commus}) can be written 
\begin{align*}
&{\rm Tr}\;SP_-[\vartheta_a^{(2)},P_-]\cdots[\vartheta_a^{(2n+1)},P_-][\Delta\vartheta_a^{(1)},P_-]\\
&={\rm Tr}\;SP_-[\vartheta_a^{(2)},P_-]\cdots[\vartheta_a^{(2n+1)},P_-]
P_-[\Delta\vartheta_a^{(1)},P_-]\\
&={\rm Tr}\;P_-[\Delta\vartheta_a^{(1)},P_-]SP_-[\vartheta_a^{(2)},P_-]\cdots[\vartheta_a^{(2n+1)},P_-]\\
&=-{\rm Tr}\;S(1-P_-)[\Delta\vartheta_a^{(1)},P_-][\vartheta_a^{(2)},P_-]\cdots[\vartheta_a^{(2n+1)},P_-].
\end{align*}
Thus, the above contribution is already counted, and cancels out the first one.
\end{proof}

\section{Chiral Invariant as a Linear Response Coefficient}
\label{sec:LR}

In this appendix, we derive the linear response coefficient in a general setting 
which includes interacting fermion systems. 
As an example, for the present noninteracting fermion system in one dimension,  
we show that the resulting expression of the imaginary part of the coefficient coincides 
with that of the topological invariant in the right-hand side of 
the chiral index formula (\ref{currentChernChiral}). 

Write $H_{\Lambda,N}$ for a $N$-fermion Hamiltonian on a finite lattice $\Lambda$, 
and consider a time-dependent Hamiltonian, 
$$
\tilde{H}_{\Lambda,N}(t):=H_{\Lambda,N}+\lambda_{\rm ext}V_{\Lambda,N}(t), 
$$
with the perturbation of the external potential, 
\begin{equation}
V_{\Lambda,N}(t)=V_{\Lambda,N}\alpha(t)e^{i\omega t}, 
\label{Wt}
\end{equation}
with the adiabatic function, 
$$
\alpha(t):=\begin{cases} 1, & \text{$t\ge 0$};\\
e^{\eta t}, & \text{$t<0$}.
\end{cases}
$$
Here the voltage difference $\lambda_{\rm ext}$ and the ${\rm AC}$ frequency $\omega$ are real parameters, 
and the adiabatic parameter $\eta$ is a small positive number.  
We switch on the potential $V_{\Lambda,N}$ at the initial time $t=-T_0$ 
with a large positive $T_0$, and measure the current at the time $t\ge 0$. 

The time-dependent Schr\"odinger equation is given by  
\[
i\frac{d}{dt}\Psi^{(N)}(t)=\tilde{H}_{\Lambda,N}(t)\Psi^{(N)}(t)
\]
for the wavefunction $\Psi^{(N)}(t)$ for the $N$ fermions. 
We denote the time evolution operator for the unperturbed Hamiltonian $H_{\Lambda,N}$ by 
$$
U_{\Lambda,N}(t,s):=\exp\bigl[-i(t-s)H_{\Lambda,N}\bigr]\quad
\text{for \ } 
t,s\in\re.
$$
We choose the initial vector $\Psi^{(N)}(-T_0)$ at $t=-T_0$ as 
\[
\Psi^{(N)}(-T_0)=U^{(\Lambda)}(-T_0,0)\Phi^{(N)}
\] 
with a vector $\Phi^{(N)}$. Then, the final vector $\Psi^{(N)}(t)$ 
is obtained as 
\begin{multline}
\Psi^{(N)}(t)=U_{\Lambda,N}(t,0)\Phi^{(N)}\\-i\lambda_{\rm ext}\int_{-T_0}^t ds\; U_{\Lambda,N}(t,s)
V_{\Lambda,N}(s)U_{\Lambda,N}(s,0)\Phi^{(N)}+o(\lambda_{\rm ext})
\label{PsiN}
\end{multline}
by using a perturbation theory \cite{Koma1}, where $o(\lambda_{\rm ext})$ denotes a vector $\Psi^{(N)}_R$ 
with the norm $\Vert\Psi_R^{(N)}\Vert$ satisfying $\Vert\Psi_R^{(N)}\Vert/\lambda_{\rm ext}\rightarrow 0$ 
as $\lambda_{\rm ext}\rightarrow 0$.  

We denote the $N$ fermion ground state of the unperturbed Hamiltonian $H_{\Lambda,N}$ 
by $\Phi_0^{(N)}$ with the energy eigenvalue $E_0^{(N)}$. 
We choose the initial vector as the ground-state vector $\Phi^{(N)}=\Phi_0^{(N)}$ with the norm $1$.  
Then, the ground-state expectation value of the local current operator 
$J_\ell$ at the time $t$ is given by 
\begin{equation}
\bigl\langle J_\ell\bigr\rangle_t:=
\bigl\langle\Psi^{(N)}(t),J_\ell \Psi^{(N)}(t)\bigr\rangle.
\label{expApproJ1}
\end{equation}

Using the linear perturbation (\ref{PsiN}), the expectation value is decomposed into 
three parts as  
\[
\bigl\langle J_\ell\bigr\rangle_t=
\bigl\langle J_\ell\bigr\rangle_t^{(0)}+
\bigl\langle J_\ell\bigr\rangle_t^{(1)}+o(\lambda_{\rm ext}),
\]
where 
\[
\bigl\langle J_\ell\bigr\rangle_t^{(0)}
=\big\langle\Phi_0^{(N)}, J_\ell\Phi_0^{(N)}\big\rangle,
\]
and
\begin{multline}
\bigl\langle J_\ell\bigr\rangle_t^{(1)}\\
=-i\lambda_{\rm ext}\int_{-T_0}^t ds\; \big\langle \Phi_0^{(N)},U_{\Lambda,N}(0,t)J_\ell
U_{\Lambda,N}(t,s)V_{\Lambda,N}(s)U_{\Lambda,N}(s,0)\Phi_0^{(N)}\big\rangle\\
+{\rm c}.{\rm c}.
\label{expApproJ11} 
\end{multline}
The first term $\bigl\langle J_\ell\bigr\rangle_t^{(0)}$ 
is the persistent current which is usually vanishing. 
For simplicity, we assume that the persistent current is vanishing. 
We are interested 
in the second term $\bigl\langle J_\ell\bigr\rangle_t^{(1)}$ 
which gives the linear response coefficient, i.e., the conductance.  
Using this and the definition (\ref{Wt}) of $V_{\Lambda,N}(t)$, 
the contribution (\ref{expApproJ11}) is written 
$$
\bigl\langle J_\ell\bigr\rangle_t^{(1)}
=-\sum_{j\ne 0}\frac{\lambda_{\rm ext}}{E_j^{(N)}-E_0^{(N)}+\omega}\;
\big\langle \Phi_0^{(N)},J_\ell\Phi_j^{(N)}\big\rangle
\big\langle \Phi_j^{(N)},V^{(\Lambda)}\Phi_0^{(N)}\big\rangle\; e^{i\omega t}+{\rm c}.{\rm c.}
$$
in the limit $T_0\rightarrow\infty$ and $\eta\rightarrow 0$. Here, $\Phi_j^{(N)}$ are 
the energy eigenvectors of the excited state for the unperturbed Hamiltonian $H_{\Lambda,N}$ 
with the eigenenergy $E_j^{(N)}$, $j\ge 1$. Therefore, the linear response coefficient of the ${\rm AC}$ 
conductance with the frequency $\omega$ is given by 
$$
g(\omega)=-\sum_{j\ne 0}\frac{1}{E_j^{(N)}-E_0^{(N)}+\omega}\;
\big\langle \Phi_0^{(N)},J_\ell\Phi_j^{(N)}\big\rangle
\big\langle \Phi_j^{(N)},V^{(\Lambda)}\Phi_0^{(N)}\big\rangle. 
$$
In particular, the imaginary part of the conductance in the zero frequency limit $\omega\rightarrow 0$ 
is given by 
$$
g_{\rm I}(0)=\sum_{j\ne 0}\frac{1}{E_0^{(N)}-E_j^{(N)}}\;
{\rm Im}\;\big\langle \Phi_0^{(N)},J_\ell\Phi_j^{(N)}\big\rangle
\big\langle \Phi_j^{(N)},V^{(\Lambda)}\Phi_0^{(N)}\big\rangle. 
$$

As an example, consider a noninteracting fermion system in one dimension. 
The $N$-fermion external potential $V_{\Lambda,N}$ is given by the $N$-fermion chiral operator 
which is determined by the chiral operator $S$ for single fermion.  
Then, the above conductance $g_{\rm I}(0)$ coincides with 
the Chern number in the right-hand side of (\ref{currentChernChiral}) except for the prefactor $2$. 

\section{Even Anti-Unitary Transformations}
\label{EvenALO}

Proposition~\ref{prop:EAUT} below is useful for dealing with the cases, BDI class in two dimensions, 
CII class in six dimensions, and D class, in one dimension. 

Consider an anti-linear transformation $\tilde{\Sigma}$ which is given by 
$$
\tilde{\Sigma}:=U^{\Sigma}K,
$$ 
where $U^{\Sigma}$ is a unitary operator and $K$ is complex conjugation. 

\begin{prop}
\label{prop:EAUT}
Let $A$ be a compact operator, and let $\tilde{\Sigma}$ be anti-linear operator satisfying $\tilde{\Sigma}^2=+1$. 
Suppose 
$$
\tilde{\Sigma}A\varphi=A\tilde{\Sigma}\varphi
$$
for any wavefunction $\varphi$. Then, we can choose the eigenvectors of $A$ with eigenvalue $\lambda\ne 0$ so that 
$$
A\varphi=\lambda\varphi\quad\mbox{and}\quad \tilde{\Sigma}\varphi=\varphi.
$$
\end{prop}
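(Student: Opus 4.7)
The plan is to reduce to a finite-dimensional problem via compactness of $A$, and then apply the standard ``realification'' trick for an anti-unitary involution on a finite-dimensional complex vector space.

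First I would fix an eigenvalue $\lambda\ne 0$ of $A$ and observe that, since $A$ is compact, the eigenspace $V_\lambda:={\rm ker}(A-\lambda)$ is finite dimensional. In all applications of this proposition the relevant operator $A$ is self-adjoint (this holds for $A=\gamma^{(2n+1)}(P_{\rm F}-D_aP_{\rm F}D_a)$ and its odd-dimensional analogue), so $\lambda\in\re$. Combining this with anti-linearity of $\tilde{\Sigma}$ and the intertwining hypothesis, one has $\tilde{\Sigma}(A-\lambda)\varphi=(A-\overline{\lambda})\tilde{\Sigma}\varphi=(A-\lambda)\tilde{\Sigma}\varphi$, so $\tilde{\Sigma}$ preserves $V_\lambda$.

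For the core step I would show that $V_\lambda$ is spanned by $\tilde{\Sigma}$-fixed vectors. Given any $\psi\in V_\lambda$, set
$$
\varphi_1:=\psi+\tilde{\Sigma}\psi,\qquad \varphi_2:=i\bigl(\psi-\tilde{\Sigma}\psi\bigr).
$$
Using $\tilde{\Sigma}(c\eta)=\overline{c}\,\tilde{\Sigma}\eta$ and $\tilde{\Sigma}^2=+1$, a one-line computation gives $\tilde{\Sigma}\varphi_j=\varphi_j$ for $j=1,2$. Since $\psi=\tfrac12(\varphi_1-i\varphi_2)$, every vector in $V_\lambda$ is a complex linear combination of $\tilde{\Sigma}$-invariant vectors; in particular each such $\varphi_j$ remains in $V_\lambda$ because $\tilde{\Sigma}$ preserves $V_\lambda$.

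To upgrade spanning to an actual basis of $\tilde{\Sigma}$-fixed eigenvectors I would iterate. Pick a normalized $\tilde{\Sigma}$-fixed $\varphi^{(1)}\in V_\lambda$ and consider $W:=\{\varphi^{(1)}\}^\perp\cap V_\lambda$. The key observation is that $W$ is again $\tilde{\Sigma}$-invariant, because $\tilde{\Sigma}=U^\Sigma K$ is anti-unitary and therefore satisfies $\langle\tilde{\Sigma}\alpha,\tilde{\Sigma}\beta\rangle=\overline{\langle\alpha,\beta\rangle}=\langle\beta,\alpha\rangle$; hence for $\psi\in W$,
$$
\langle \tilde{\Sigma}\psi,\varphi^{(1)}\rangle=\langle \tilde{\Sigma}\psi,\tilde{\Sigma}\varphi^{(1)}\rangle=\langle\varphi^{(1)},\psi\rangle=0.
$$
Repeating the realification construction inside $W$ and iterating finitely many times (possible because $\dim V_\lambda<\infty$) produces an orthonormal basis of $V_\lambda$ consisting of $\tilde{\Sigma}$-fixed eigenvectors of $A$ with eigenvalue $\lambda$, which is exactly the claim.

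The only genuinely new content is the algebraic realification step; everything else (finite dimensionality, invariance of $V_\lambda$ and of orthogonal complements) is a routine consequence of compactness, self-adjointness of $A$, and the anti-unitarity of $\tilde{\Sigma}$, which is built into the definition $\tilde{\Sigma}=U^\Sigma K$ with $U^\Sigma$ unitary. The most delicate point is ensuring that the iteration terminates with a basis, which is guaranteed by finite dimensionality of $V_\lambda$; the main potential pitfall is forgetting that $\tilde{\Sigma}$-invariance is preserved under taking orthogonal complements of $\tilde{\Sigma}$-fixed vectors, which is exactly the little computation displayed above.
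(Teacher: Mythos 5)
Your proof is correct and follows essentially the same route as the paper's: the core step is the identical realification $\varphi_1=\psi+\tilde{\Sigma}\psi$, $\varphi_2=i(\psi-\tilde{\Sigma}\psi)$ (the paper's $\varphi_\pm$), with your spanning observation $\psi=\tfrac12(\varphi_1-i\varphi_2)$ neatly subsuming the paper's separate treatment of the linearly dependent case $\tilde{\Sigma}\varphi=\kappa\varphi$ via a phase rotation. Your explicit remarks on the reality of $\lambda$ (needed for $\tilde{\Sigma}$ to preserve ${\rm ker}(A-\lambda)$, since anti-linearity sends $\lambda$ to $\overline{\lambda}$) and on extracting an orthonormal basis of fixed vectors are careful refinements that the paper leaves implicit.
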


\begin{proof}
Clearly, from the assumptions, we have that 
if $\varphi$ is an eigenvector of $A$, then $\tilde{\Sigma}\varphi$ is also an eigenvector of $A$ 
with the same eigenvalue.  

Consider first the case that the two vectors $\varphi$ and ${\tilde\Sigma}\varphi$ are linearly dependent. 
Namely, the following relation holds: 
$$
{\tilde\Sigma}\varphi=\kappa \varphi
$$
with a complex constant $\kappa\neq 0$. By multiplying the relation by ${\tilde \Sigma}$ and 
using ${\tilde \Sigma}^2\varphi=\varphi$, one has 
$$
\varphi=\overline{\kappa}{\tilde\Sigma}\varphi.
$$
Therefore, we have $\kappa\overline{\kappa}=1$. This implies that one can write $\kappa=e^{i\theta}$ 
with $\theta\in[0,2\pi)$. Then, the original relation is written 
$$
{\tilde\Sigma}\varphi=e^{i\theta} \varphi.
$$
Further, this can be written  
$$
{\tilde\Sigma}e^{i\theta/2}\varphi=e^{i\theta/2} \varphi.
$$
Consequently, we can choose the eigenvector $\varphi$ of $A$ so that it satisfies  
${\tilde \Sigma}\varphi=\varphi$. 

When the two vectors $\varphi$ and ${\tilde\Sigma}\varphi$ are linearly independent, 
then we choose 
$$ 
\varphi_+=\varphi+{\tilde\Sigma}\varphi
\quad \mbox{and} \quad 
\varphi_-=i(\varphi-{\tilde\Sigma}\varphi).
$$
These are linearly independent and satisfy 
$$
{\tilde\Sigma}\varphi_+=\varphi_+
\quad \mbox{and} \quad 
{\tilde\Sigma}\varphi_-=\varphi_-,
$$
where we have used ${\tilde\Sigma}^2\varphi=\varphi$. 
Thus, we can choose every eigenvector $\varphi$ of $A$ with nonzero eigenvalue 
to satisfy ${\tilde \Sigma}\varphi=\varphi$.  
\end{proof}

\section{Relations between the Indices}
\label{RelationIndices}

In this appendix, we recall a useful relation between the dimensions of the kernels of 
two Fredholm operators. As an application, we deal with   
the relation between the integer-valued index of DIII class and the $\ze_2$-valued index  
of AII class in the weak limit for the perturbation in three dimensions.

The following proposition is due to Gro{\ss}mann and Schulz-Baldes \cite{GSB}. 

\begin{prop}
\label{TTpProp}
Let $P$ and $E$ be projections. Set 
$$
\mathcal{T}=P(1-2E)P+1-P
$$
and 
$$
\mathcal{T}'=E(1-2P)E+1-E.
$$
Suppose that both of $\mathcal{T}$ and $\mathcal{T}'$ are Fredholm operators. Then, the following relation is valid: 
$$
{\rm dim}\;{\rm ker}\; \mathcal{T}={\rm dim}\;{\rm ker}\; \mathcal{T}'.
$$
\end{prop}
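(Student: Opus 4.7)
The plan is to construct an explicit linear isomorphism between $\ker\mathcal{T}$ and $\ker\mathcal{T}'$, from which the equality of dimensions follows immediately without having to invoke the Fredholm index machinery at all.

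First I would analyze $\ker\mathcal{T}$ concretely. Writing $\mathcal{T}\varphi=P(1-2E)P\varphi+(1-P)\varphi=0$ and noting that the two summands lie in the mutually orthogonal subspaces $P\mathcal{H}$ and $(1-P)\mathcal{H}$ respectively, each must vanish separately. This yields the two simple algebraic conditions
$$
P\varphi=\varphi,\qquad 2PE\varphi=\varphi.
$$
By the completely symmetric roles of $(P,E)$ in the definitions of $\mathcal{T}$ and $\mathcal{T}'$, the kernel of $\mathcal{T}'$ is likewise characterised by $E\psi=\psi$ and $2EP\psi=\psi$.

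With this characterisation in hand, I would introduce the candidate map $\Phi:\ker\mathcal{T}\to\ker\mathcal{T}'$ defined by $\Phi(\varphi):=E\varphi$, together with its prospective inverse $\Psi:\ker\mathcal{T}'\to\ker\mathcal{T}$ defined by $\Psi(\psi):=2P\psi$. To verify well-definedness of $\Phi$ one checks $E(E\varphi)=E\varphi$ trivially, and $2EP(E\varphi)=E(2PE\varphi)=E\varphi$ using the second kernel condition for $\mathcal{T}$; well-definedness of $\Psi$ is symmetric. Mutual inverseness is then immediate from the kernel conditions: $\Psi\Phi(\varphi)=2PE\varphi=\varphi$ and $\Phi\Psi(\psi)=2EP\psi=\psi$. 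Both maps being linear, they furnish a linear isomorphism $\ker\mathcal{T}\cong\ker\mathcal{T}'$, and the proposition follows.

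The only subtle point I anticipate is the orthogonal-decomposition step, which implicitly requires that $P$ be a self-adjoint projection so that the two summands $P(1-2E)P\varphi$ and $(1-P)\varphi$ are genuinely orthogonal; this is built into the standing convention that ``projection'' means self-adjoint idempotent, and the same remark applies to $E$ in order for $1-2E$ to be the self-adjoint unitary that makes the argument run symmetrically. Beyond this convention the proof is entirely algebraic, and the Fredholm hypothesis is used only to guarantee that the common kernel dimension is finite.
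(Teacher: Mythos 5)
Your proof is correct and follows essentially the same route as the paper: both identify $\ker\mathcal{T}$ by the conditions $P\varphi=\varphi$ and $2PE\varphi=\varphi$ (the paper via the rewriting $\mathcal{T}=1-2PEP$, you via the direct-sum decomposition), and both use the maps $\varphi\mapsto E\varphi$ and $\psi\mapsto P\psi$ between the kernels, which you merely normalize by the factor $2$ so that they become exact mutual inverses rather than inverses up to the scalar $\tfrac12$ as in the paper. No gap.
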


\begin{proof}
To begin with, we note that 
$$
\mathcal{T}=P-2PEP+1-P=1-2PEP.
$$
Similarly, one has 
$$
\mathcal{T}'=1-2EPE.
$$
 
Let $v_0\in{\rm ker}\; \mathcal{T}$. Clearly, one has $v_0=Pv_0$. Further, one has 
$$
0=\mathcal{T}v_0=(1-2PEP)v_0=v_0-2PEPv_0.
$$
Therefore, one obtains 
$$
PEPv_0=\frac{1}{2}v_0.
$$
Set $w_0=Ev_0$. Then, $w_0\ne 0$, and $w_0\in{\rm ker}\; \mathcal{T}'$. Actually, if $w_0=0$, then 
$$
\frac{1}{2}v_0=PEPv_0=PEv_0=Pw_0=0,
$$
where we have used $Pv_0=v_0$. This is a contradiction. Thus, $w_0\ne 0$. Further, 
\begin{align*}
\mathcal{T}'w_0&=(1-2EPE)w_0\\
&=w_0-2EPEw_0\\
&=w_0-2EPEv_0\\
&=w_0-2EPEPv_0=w_0-2E\cdot \frac{1}{2}v_0=0,
\end{align*}
where we have used $v_0=Pv_0$ and $PEPv_0=\frac{1}{2}v_0$. 

Similarly, let $w_0\in{\rm ker}\; \mathcal{T}'$. Then, $w_0=Ew_0$, and $u_0:=Pw_0\in {\rm ker}\; \mathcal{T}$.  
In addition, one has 
$$
u_0=Pw_0=PEv_0=PEPv_0=\frac{1}{2}v_0,
$$
where we have used $v_0=Pv_0$ and $PEPv_0=v_0/2$. The two maps, $E$ and $P$, on the kernels are invertible. 
Therefore, the two dimensionalities of those kernels coincide with each other. 
\end{proof}

In Proposition~\ref{TTpProp}, we choose $P=P_{\rm F}$ and $E=\mathcal{P}_{\rm D}=(1+D_a)/2$. 
Then, one has  
\begin{align}
{\rm dim}\;{\rm ker}\; \mathcal{T}&={\rm dim}\;{\rm ker}\;(P_{\rm F}-D_aP_{\rm F}D_a-1)\nonumber\\ 
&={\rm dim}\;{\rm ker}\; \mathcal{T}' 
={\rm dim}\;{\rm ker}\;(\mathcal{P}_{\rm D}-U\mathcal{P}_{\rm D}U-1),
\label{dimkerrelation}
\end{align}
where $U=(1-P_{\rm F})-P_{\rm F}=1-2P_{\rm F}$. 

As an application, let us consider DIII class in three dimensions, 
and deal with a weak perturbation which breaks the chiral symmetry 
but preserves the time-reversal symmetry. Due to the perturbation, DIII class changes to AII class. 
We want to obtain the relation between the integer-valued index of DIII class and the $\ze_2$-valued index  
of AII class in the weak limit for the perturbation. 

We recall the expression of the chiral index (\ref{IndOddFredT}).  
For DIII class, the integer-valued index is given by 
$$
{\rm Ind}^{(3)}(D_a,S,U)={\rm dim}\;{\rm ker}\; \mathfrak{T}_\chi-{\rm dim}\;{\rm ker}\; \mathfrak{T}_\chi^\ast,
$$
with the Fredholm operator, 
$$
\mathfrak{T}_\chi=\mathcal{P}_{\rm D}\mathcal{U}\mathcal{P}_{\rm D}+1-\mathcal{P}_{\rm D},
$$ 
where $\mathcal{U}$ is given by the off-diagonal matrix element of $U$ of (\ref{Umatu}). 
The parity of the right-hand side of the index can be written as 
\begin{align*}
& {\rm dim}\;{\rm ker}\; \mathfrak{T}_\chi-{\rm dim}\;{\rm ker}\; \mathfrak{T}_\chi^\ast\ \ \mbox{modulo} \ 2 \\
&={\rm dim}\;{\rm ker}\; \mathfrak{T}_\chi+{\rm dim}\;{\rm ker}\; \mathfrak{T}_\chi^\ast\ \ \mbox{modulo} \ 2\\
&={\rm dim}\;{\rm ker}\;(\mathcal{P}_{\rm D}-U\mathcal{P}_{\rm D}U-1)\ \ \mbox{modulo} \ 2,
\end{align*}
where we have used the expression (\ref{PD-UPDUmat}) of the operator $\mathcal{P}_{\rm D}-U\mathcal{P}_{\rm D}U$. 

On the other hand, the $\ze_2$ index of AII class in three dimensions is given by 
$$
{\rm Ind}_2^{(3)}(D_a,P_{\rm F})={\rm dim}\;{\rm ker}\;(P_{\rm F}-D_aP_{\rm F}D_a-1)\ \ \mbox{modulo}\ 2.
$$ 
Combining these observations with the above result (\ref{dimkerrelation}), we obtain the desired result, 
$$
{\rm Ind}_2^{(3)}(D_a,P_{\rm F})={\rm Ind}^{(3)}(D_a,S,U)\ \ \mbox{modulo} \ 2,
$$
in the weak limit of the perturbation.

\bigskip\bigskip

\noindent
\thanks{\textbf{Acknowledgement:} 
HK was supported in part by JSPS Grants-in-Aid for Scientific Research No. JP15K17719 and No. JP16H00985.
\bigskip\bigskip



\end{document}